\documentclass[a4paper]{amsart}

\usepackage{amssymb}
\usepackage{hyperref}
\usepackage{xcolor}
\usepackage{enumerate}

\usepackage{tikz}
\usetikzlibrary{arrows,automata,backgrounds,calc,positioning}

\tikzstyle{state} = [circle,draw, inner sep = 0, minimum size = 15pt]

\newtheorem{theorem}{Theorem}
\numberwithin{theorem}{section}
\newtheorem{definition}[theorem]{Definition}

\newtheorem{lemma}[theorem]{Lemma}
\newtheorem{corollary}[theorem]{Corollary}
\newtheorem{proposition}[theorem]{Proposition}
\newtheorem{remark}[theorem]{Remark}

\newcommand{\D}{\mathcal{D}}
\newcommand{\R}{\mathcal{R}}

\newcommand{\FT}{\mathrm{FT}}
\newcommand{\Runs}{\mathrm{Runs}}
\newcommand{\rev}{\mathsf{R}}

\newcommand{\eps}{\epsilon}

\newcommand{\last}{\mathrm{last}}
\newcommand{\wnd}{\mathrm{wnd}}
\newcommand{\mwl}{\mathrm{mwl}}

\newcommand{\Reg}{\mathbf{Reg}}
\newcommand{\Len}{\mathbf{Len}}
\newcommand{\LI}{\mathbf{LI}}
\newcommand{\RI}{\mathbf{RI}}
\newcommand{\ST}{\mathbf{ST}}
\newcommand{\PT}{\mathbf{PT}}
\newcommand{\PF}{\mathbf{PF}}
\newcommand{\SF}{\mathbf{SF}}
\newcommand{\LB}{\mathbf{LB}}
\newcommand{\RB}{\mathbf{RB}}

\newcommand{\SWA}{SWA}
\newcommand{\SWAs}{SWAs}

\title{Randomized sliding window algorithms for regular languages}

\author[M.~Ganardi]{Moses Ganardi}
\author[D.~Hucke]{Danny Hucke}
\author[M.~Lohrey]{Markus Lohrey}
\address{Universit\"at Siegen, Germany\\
  \texttt{\{ganardi,hucke,lohrey\}@eti.uni-siegen.de}}

\begin{document}

\maketitle

\begin{abstract}
         A sliding window algorithm receives a stream of symbols and has to output
	at each time instant a certain value which only depends on the last $n$ symbols.
	If the algorithm is randomized, then at each time instant it produces an incorrect output
	with probability at most $\eps$, which is a constant error bound.
	This work proposes a more relaxed definition of correctness which is parameterized by
	the error bound $\eps$ and the failure ratio $\phi$:
	A randomized sliding window algorithm is required to err with probability at most $\eps$
	at a portion of $1-\phi$ of all time instants of an input stream.
	
	This work continues the investigation of sliding window algorithms for regular languages.
	In previous works a trichotomy theorem was shown for deterministic algorithms:
	the optimal space complexity is either constant, logarithmic or linear
	in the window size.
	The main results of this paper concerns three natural settings
	(randomized algorithms with failure ratio zero and randomized/deterministic algorithms with bounded failure ratio)
	and provide natural language theoretic characterizations of the space complexity classes.
\end{abstract}

\section{Introduction}

{\em Sliding window algorithms} process an input sequence 
$a_1 a_2 \cdots a_m$ from left to right and have at time $t$ only direct access to the current symbol $a_t$. 
Moreover, at each time instant $t$ the algorithm is required to compute a value that depends on
the last $n$ symbols.
The value $n$ is called the {\em window size} and the last $n$ symbols form the {\em active window} at time $t$.
	
In many streaming applications, data items are outdated
after a certain time. The sliding window model is a simple way to model this. 
A typical application for sliding window algorithms is the analysis of a
time series as it may arise in medical monitoring, web tracking, or financial monitoring.
A detailed introduction into the sliding window model can be found in \cite[Chapter~8]{Aggarwal07}.
	
A general goal in the area of sliding window algorithms is to avoid the explicit storage of the window content, and, instead, to work in
considerably smaller space, e.g. polylogarithmic space with respect to the window length.
In the seminal paper of Datar et al.~\cite{DatarGIM02}, where the sliding window model was introduced,
the authors prove that the number of $1$'s in a $0/1$-sliding window of size $n$
can be maintained in space $\mathcal{O}(\frac{1}{\eps} \cdot \log^2 n)$ if one allows a multiplicative error of $1\pm \eps$.
Other algorithmic problems that were addressed in the extensive literature on sliding window streams
include the computation of statistical  data
(e.g. computation of  the variance and $k$-median \cite{BabcockDMO03}, and quantiles \cite{ArasuM04}),
optimal sampling from sliding windows \cite{BravermanOZ12},
the membership problem of regular languages \cite{GHL16}, computation of edit distances \cite{CLLPTZ11},
database querying (e.g. processing of join queries over sliding
windows \cite{GolabO03}) and graph problems (e.g. checking for connectivity and computation of matchings, spanners, and
minimum spanning trees~\cite{CrouchMS13}).
The reader can find further references in the surveys  \cite[Chapter~8]{Aggarwal07} and \cite{Braverman16}.
	
In our recent papers \cite{GHKLM18,GHL16} we studied the space complexity of deterministic sliding window algorithms
for regular languages.  Such an algorithm returns at every time instant $1$ (resp., $0$) if the active window belongs to a fixed regular language $L$. In \cite{GHL16} we proved that for every regular language $L$
the optimal space bound for a sliding window algorithm for $L$ is either constant, logarithmic or linear in the window size.
In \cite{GHKLM18} we also gave several characterizations
for these space classes: the class of regular languages that have a sliding window algorithm with space complexity
$\mathcal{O}(\log n)$ is the Boolean closure of all regular length languages 
and all regular left ideals, and the class of regular languages that have a sliding window algorithm with space complexity
$\mathcal{O}(1)$ is the Boolean closure of all regular length languages 
and all suffix-testable languages. The definitions of these language classes can be found in Section~\ref{sec-classes}.

In this paper, we extend our results from  \cite{GHKLM18,GHL16} to randomized sliding window algorithms, i.e.,
Monte-Carlo sliding window algorithms that can err with a small probability.
Since a sliding window algorithm produces an output after each input symbol,
there are different ways to interpret this correctness condition.
The maybe most natural answer to this question is
to require that after reading an arbitrary input word $a_1 a_2 \cdots a_m$, the algorithm gives an incorrect answer
to the question whether $a_{m-n+1} \cdots a_m \in L$ ($n$ is the window size, $L$ is the regular language under consideration) 
with probability at most $\eps$, where $\eps$ is a fixed constant strictly smaller than $1/2$. This ensures that
for every input stream and every time instant, one can be sure to get a correct answer with probability at least $1-\eps$.
This is certainly a natural requirement, but one may argue that it is not crucial if the algorithm produces a wrong answer with higher probability on a small proportion
of all time instants. This leads us to the more general definition of randomized sliding window algorithms: Fix two parameters
$\eps$ (the error probability) and $\phi$ (the failure ratio) with $0 \leq \eps < 1/2$ and $0 \leq \phi \leq 1$.
We say that a randomized sliding window algorithm (for a certain language $L$ and a window size $n$)
is $(\eps,\phi)$-correct if for every input stream, the portion of all time instants where the algorithm gives a wrong answer with probability larger than 
$\eps$ is bounded by $\phi$.
Using a standard probability amplification argument, one can
show that any error probability $\eps < 1/2$ can be reduced to any constant $\eps' > 0$ without increasing
the failure ratio. Thereby, the space only increases by a multiplicative constant (that depends on $\eps$ and $\eps'$).


Using the definition of $(\eps,\phi)$-correctness, this paper analyzes the space complexity of regular languages in the following natural cases:
\begin{enumerate}[(i)]
\item $(\eps,0)$-correct randomized sliding window algorithms, where $0 < \eps < 1/2$, i.e., at every time instant the randomized algorithm gives a correct answer with probability at least $1-\eps$.
\item $(\eps,\phi)$-correct randomized sliding window algorithms, where $0 < \eps < 1/2$ and $0 < \phi < 1$, i.e., there is a portion $\phi$ of time instants where the answer might be wrong with probability larger than $\eps$. Here we are interested in randomized algorithms where $\phi$ is  an arbitrarily small constant larger than zero.
\item $(0,\phi)$-correct deterministic sliding window algorithms, where $0 < \phi < 1$, i.e., the deterministic algorithm has the 
property that for every input stream only a $\phi$-portion of the produced outputs is wrong. Again we are interested in the case where $\phi$ is an arbitrarily small constant larger than zero.
\end{enumerate}
For each of these three settings we prove a main result that completely characterizes the space complexity of regular languages.
Figure~\ref{fig-survey} in Section~\ref{sec-main} shows the resulting space classes for each setting; the left column
shows the three classes for the deterministic setting studied in \cite{GHKLM18,GHL16}.
Below, we give a more detailed explanation of our main results.

Theorem~\ref{thm:quatrochotomy} deals with setting (i), i.e., $(\eps,0)$-correct randomized sliding window algorithms, where $0 \leq \eps < 1/2$. Note that the deterministic case
$\eps=0$ is considered in \cite{GHKLM18,GHL16}. For $\eps > 0$, Theorem~\ref{thm:quatrochotomy}  states
a space quatrochotomy (in contrast
to the space trichotomy for $\eps=0$): for every regular language the optimal space is either constant, doubly logarithmic, logarithmic or linear. 
For each of the four space classes we present a natural language theoretical characterization. 
It turns out that when going from the deterministic to the randomized setting, then the novelty is an improvement for some regular languages from logarithmic to doubly logarithmic space. 
The new doubly logarithmic space class is the Boolean closure
of regular suffix-free languages, suffix-testable languages, and regular length languages.

Let us mention that Tesson and Th\'erien \cite{TessonT05} proved a quatrochotomy
(resp., trichotomy) result for the randomized (resp., deterministic) communication complexity
for regular languages. This results resembles Theorem~\ref{thm:quatrochotomy} (resp., the 
trichotomy result proved in  \cite{GHL16}), but the language classes that appear in  \cite{TessonT05}
are different from the classes in our results and we do not see a deeper connection.

Theorem~\ref{thm:dichotomy} characterizes the space complexity of regular languages with respect to 
$(\eps,\phi)$-correct  randomized sliding window algorithms, where $0 \le \eps < 1/2$ and $0 < \phi < 1$ (setting (ii)).
We show that there is a subclass $\mathcal{C}$ of regular languages such that for 
every regular language $L \in \mathcal{C}$ and every $0 < \eps < 1/2$, $0 < \phi < 1$, there exists
a constant-space $(\eps,\phi)$-correct  randomized sliding window algorithm for $L$. On the other hand
for every regular language $L$ that does not belong to $\mathcal{C}$, there exists a threshold $\phi_0$ such that
there is no $(\eps,\phi)$-correct  randomized sliding window algorithm for $L$ that uses space $o(n)$ with $0 \leq \eps < 1/2$ and
$\phi \leq \phi_0$. The class $\mathcal{C}$ is characterized as the Boolean closure of regular left ideals, regular prefix-free languages, and regular length languages.

Theorem~\ref{thm:trichotomy} deals with setting (iii), i.e., the space complexity of regular languages with respect to 
$(0,\phi)$-correct deterministic sliding window algorithms. Similarly to Theorem~\ref{thm:dichotomy}, we are interested in the case where $0<\phi<1$ can be arbitrarily small.
We show that for every regular language the optimal space in this setting is either 
constant, logarithmic or linear.
The class of regular languages that need logarithmic space is the class $\mathcal{C}$ described above. The regular languages that need constant space is the Boolean closure of regular length languages, regular prefix-free languages, regular suffix-free languages and languages of the form $\Sigma^*L$, where $L$ is regular, prefix-free and suffix-free.

In Theorems~\ref{thm:quatrochotomy} and \ref{thm:dichotomy}, we consider randomized sliding window algorithms with 
a two sided error (analogously to the 
complexity class {\sf BPP}). Randomized sliding window algorithms with a one-sided error (analogously to the 
class {\sf RP})  can be motivated by applications, where all ``yes'' outputs have to be correct, but 
a small probability for a false negative answer is acceptable. In Section~\ref{sec-one-sided} we prove that for every
regular language the optimal space bound with respect to randomized sliding window algorithms with one-sided
error coincides (up to constant factors) with the optimal space bound in the deterministic setting
\cite{GHKLM18,GHL16} (which was discussed in the introduction). In other words: randomized sliding window algorithms
with a one-sided error can be derandomized.

Finally, in Section~\ref{sec-strict} we consider a more restricted notion of correctness for randomized sliding
window algorithms: Let us say that a randomized sliding window algorithm is strictly $\eps$-correct if for every
input stream $w$, the probability that the algorithm gives an incorrect output at some time instant is 
at most $\eps$. In other words: with probability $1-\eps$ all outputs produced while running
over $w$ are correct. This correctness notion is used for instance in \cite{BEFK16,DatarGIM02}.
Using a probabilistic argument we show that every randomized strictly $\eps$-correct sliding window algorithm
can be transformed into a  deterministic correct sliding window algorithm without increasing space. We show
this result not only for regular languages, but for all approximation problems, where an approximation problem
is formalized as a relation $\Pi \subseteq \Sigma^* \times \Omega$ where
$\Sigma$ is a finite alphabet and $\Omega$ is a (possibly infinite) set of output values.
If the active window is $w$, then the output $a$ of the algorithm is considered to be correct
if $(w,a) \in \Pi$.

\section{Preliminaries} \label{prel}

For integers $i, j \in \mathbb{N}$ let $[i,j] = \{ k \in \mathbb{N} \colon i \leq k \leq j \}$.
The set of all words over a finite alphabet $\Sigma$ is denoted by $\Sigma^*$.
The empty word is denoted by $\varepsilon$ whereas error probabilities are denoted by the lunate epsilon $\eps$.
The sets of words over $\Sigma$ of length exactly, at most and at least $n$ are denoted by
$\Sigma^n$, $\Sigma^{\le n}$ and $\Sigma^{\ge n}$, respectively.
Consider a word $w = a_1 a_2 \cdots a_m$.
The {\em reversal} of $w$ is defined as $w^\rev = a_m \cdots a_2 a_1$, and for a 
language $L$ we set $L^\rev = \{ w^\rev \colon w \in L \}$. For a non-empty
interval $[i,j] \subseteq [1,m]$ we define $w[i,j] = a_i a_{i+1} \cdots a_j$.
If $i > j$ we set $w[i,j] = \varepsilon$.
A {\em prefix} of $w$ is a word of the form $w[1,i]$ for some $0 \le i \le m$;
a {\em suffix} of $w$ is a word of the form $w[i,m]$ for some $1 \le i \le m+1$.

A language $L \subseteq \Sigma^*$ is {\em prefix-free} (resp., {\em suffix-free})  if there are no two words $x,y \in L$
with $x \neq y$ and $x$ is a prefix (resp., suffix) of $y$.
A language is {\em bifix-free} if it is both prefix- and suffix-free.

\subsection{Automata and regular languages} \label{sec-classes}

For general background in automata theory see \cite{HoUl79}.
A {\em deterministic finite automaton} (DFA) $A = (Q,\Sigma,q_0,\delta,F)$
consists of a finite set of states $Q$, a finite alphabet $\Sigma$,
an initial state $q_0 \in Q$, a transition function $\delta \colon Q \times \Sigma \to Q$
and a set of final states $F \subseteq Q$.
We inductively extend $\delta$ to a function $\delta \colon Q \times \Sigma^* \to Q$ as usual:
$\delta(q,\varepsilon) = q$ and $\delta(q,xa) = \delta(\delta(q,x),a)$ for all $q \in Q$, $x \in \Sigma^*$, $a \in \Sigma$.
If $P \subseteq Q$ is a set of states, then $L(A,P) = \{ w \in \Sigma^* \colon \delta(q_0,w) \in P \}$.
The language accepted by $A$ is $L(A) = L(A,F)$. 
A language is {\em regular} if it is accepted by a DFA.

Classes of languages are denoted by boldfaces letters.
In this paper we will deal with the following language classes:
\begin{itemize}
\item $\Reg$: the class of all {\em regular languages}.
\item $\Len$: the class of {\em regular length languages}, i.e., regular languages $L$ such that for all $n\in\mathbb{N}$
we have $\Sigma^n \subseteq L$ or $\Sigma^n \cap L=\emptyset$.\item $\LI$: the class of {\em regular left ideals}, i.e., languages of the form $\Sigma^* L$ where $L$ is regular.
\item $\RI$: the class of {\em regular right ideals}, i.e., languages of the form $L \Sigma^*$ where $L$ is regular.
We have  $\RI = \{ L^\rev \colon L \in \LI\}$.
\item $\ST$: the class of {\em suffix testable languages}, i.e., languages $L$ which are $k$-suffix testable for some $k \ge 0$.
A language is {\em $k$-suffix testable} if it is a Boolean combination of languages $\Sigma^* w$ where $|w| \le k$.
\item $\PT$: the class of {\em prefix testable languages} is $\{ L^\rev \colon L \in \ST\}$.
\item $\SF$: the class of {\em regular suffix-free languages}
\item $\PF$: the class of {\em regular prefix-free languages}
\item $\LB$: the class of {\em left ideals generated by regular bifix-free languages}, i.e.,
languages of the form $\Sigma^* L$ where $L \in \PF \cap \SF$.
\item $\RB$: the class of {\em right ideals generated by regular bifix-free languages} is $\{ L^\rev \colon L \in \LB \}$.
\end{itemize}
It is easy to see that every finite language is prefix testable and suffix testable. Moreover, prefix testable
and suffix testable languages are regular.

\begin{lemma}
	We have the relations $\langle \ST \rangle \subseteq \langle \LB \rangle \subseteq \langle \LI \rangle$
	and $\langle \SF \rangle \subseteq \langle \LI \rangle$.
\end{lemma}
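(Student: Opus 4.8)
The plan is to reduce all three inclusions to inclusions between the underlying \emph{generating} classes, exploiting that the Boolean-closure operator $\langle\cdot\rangle$ is monotone and idempotent: if $\mathcal{C}\subseteq\langle\mathcal{D}\rangle$ then $\langle\mathcal{C}\rangle\subseteq\langle\langle\mathcal{D}\rangle\rangle=\langle\mathcal{D}\rangle$. Hence it suffices to establish $\ST\subseteq\langle\LB\rangle$, $\LB\subseteq\langle\LI\rangle$, and $\SF\subseteq\langle\LI\rangle$. The first two are immediate from the definitions: a language in $\LB$ has the form $\Sigma^*L$ with $L\in\PF\cap\SF\subseteq\Reg$, hence is a regular left ideal, so $\LB\subseteq\LI\subseteq\langle\LI\rangle$; and a $k$-suffix testable language is by definition a Boolean combination of languages $\Sigma^*w$ with $|w|\le k$, where each singleton $\{w\}$ is finite, hence regular, and trivially prefix- and suffix-free, so $\Sigma^*w\in\LB$ and therefore every suffix testable language lies in $\langle\LB\rangle$.

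The only case that needs an argument is $\SF\subseteq\langle\LI\rangle$, and the idea is that a suffix-free language can be recovered from the left ideal it generates by deleting the words whose membership is ``witnessed by a proper suffix''. Concretely, for a regular suffix-free language $L$ I would prove the identity
$$L \;=\; \Sigma^*L \;\setminus\; \Sigma^+L ,$$
where $\Sigma^+L=\Sigma\Sigma^*L=\Sigma^*(\Sigma L)$; both $\Sigma^*L$ and $\Sigma^*(\Sigma L)$ are regular left ideals since $L$ and $\Sigma L$ are regular. For the verification, note that $w\in\Sigma^*L$ iff some suffix of $w$ lies in $L$, while $w\in\Sigma^+L$ iff some \emph{proper} suffix of $w$ lies in $L$. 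Suffix-freeness guarantees that each word has at most one suffix in $L$. Thus if $w\in L$, then $w$ itself is that unique suffix, so $w\in\Sigma^*L\setminus\Sigma^+L$; conversely, if $w\in\Sigma^*L\setminus\Sigma^+L$, the suffix of $w$ lying in $L$ is not proper, hence equals $w$, so $w\in L$. This shows $L\in\langle\LI\rangle$ and therefore $\langle\SF\rangle\subseteq\langle\LI\rangle$.

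I do not expect a genuine obstacle here; the only mild subtleties are getting the index shift right (i.e.\ that $w\in\Sigma^+L$ corresponds \emph{exactly} to a proper suffix of $w$ belonging to $L$, which is what makes the set difference single out $L$) and the degenerate case $L=\{\varepsilon\}$, which is still covered since then $\Sigma^*L\setminus\Sigma^+L=\Sigma^*\setminus\Sigma^+=\{\varepsilon\}$. Combining the three generator inclusions with monotonicity and idempotence of $\langle\cdot\rangle$ yields $\langle\ST\rangle\subseteq\langle\LB\rangle\subseteq\langle\LI\rangle$ and $\langle\SF\rangle\subseteq\langle\LI\rangle$, as claimed.
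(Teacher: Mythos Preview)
Your proof is correct and follows essentially the same approach as the paper: the paper also observes that each $\Sigma^*w\in\LB$ (since $\{w\}$ is bifix-free), that $\LB\subseteq\LI$ is immediate, and proves $\SF\subseteq\langle\LI\rangle$ via the identity $L=\Sigma^*L\setminus\Sigma^*(\Sigma L)$, which is your $\Sigma^*L\setminus\Sigma^+L$. Your write-up is somewhat more detailed (you spell out the monotonicity/idempotence reduction and the edge case $L=\{\varepsilon\}$), but there is no substantive difference.
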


\begin{proof}
	Notice that every language of the form $\Sigma^* w$ contained in $\LB$
	because $\{w\}$ is bifix-free.
	The containment $\LB \subseteq \LI$ is clear.
	Finally, if $L$ is suffix-free, then one can easily see that $L = \Sigma^*L \setminus (\Sigma^* (\Sigma L))$,
	which implies $\SF \subseteq \langle \LI \rangle$.
\end{proof}

We remark that one could have defined the classes $\LI$ and $\LB$ also differently
(and similarly, $\RI$ and $\RB$).
For any language $L$ we have $\Sigma^* L = \Sigma^* \text{min}(L)$ where $\text{min}(L)$
is the set of minimal words in $L$ with respect to the suffix relation.
Since $\text{min}(\cdot)$ preserves regularity and $\text{min}(L)$ is suffix-free, $\LI$ is also the class
of languages of the form $\Sigma^* L$ where $L$ is regular and suffix-free.
Similarly, $\LB$ is the class of languages of the form $\Sigma^* L$ where $L$ is regular and prefix-free.
Since our proofs related to $\LB$ in fact yield decomposition of the form $\Sigma^* L$
where $L$ is bifix-free, we decided to define $\LB$ as above.

A class of languages $\mathbf{A}$ over $\Sigma$ is {\em Boolean closed}
if $K,L \in \mathbf{A}$ implies $\Sigma^* \setminus L \in \mathbf{A}$ and $K \cup L \in \mathbf{A}$. 
If $\mathbf{A}_1, \dots, \mathbf{A}_n$ are classes of languages over some alphabet $\Sigma$,
then $\langle \mathbf{A}_1, \dots, \mathbf{A}_n \rangle$
denotes the Boolean closure of $\bigcup_{i=1}^n\mathbf{A}_i$,
i.e., the smallest Boolean closed class which contains $\bigcup_{i=1}^n\mathbf{A}_i$.

\subsection{Approximation problems}

An {\em approximation problem} is a relation $\Pi \subseteq \Sigma^* \times \Omega$ where
$\Sigma$ is a finite alphabet and $\Omega$ is a (possibly infinite) set of output values.
For a given input word $w \in \Sigma^*$ the set of admissible outputs
is $\{ a \in \Omega \colon (w,a) \in \Pi\}$.
Typical examples include:
\begin{itemize}
\item exact computation problems $\Pi \colon \Sigma^* \to \Omega$. Typical examples are
the mapping $c_1 \colon \{0,1\}^* \to \mathbb{N}$ with $c_1(w) = $ ``number of $1$'s in $w$",
or the characteristic function $\chi_L \colon \Sigma^* \to \{0,1\}$ of a language $L \subseteq \Sigma^*$.
\item approximate statistics $\Pi \subseteq \Sigma^* \times \mathbb{N}$. A typical example
would be the set of all pairs $(w,k)$ such that $(1-\eps) \cdot c_1(w) \leq k \leq (1+\eps) \cdot c_1(w)$
for some (small) $\eps > 0$.
\end{itemize}
In this paper we will focus on language membership problems,
where we identify a language $L \subseteq \Sigma^*$ with its characteristic function $\chi_L \colon \Sigma^* \to \{0,1\}$.
Only in Section~\ref{sec-strict} we will talk about general approximation problems.

\subsection{Probabilistic automata with output}

In the following we will introduce probabilistic automata \cite{Paz71,Rabin63} as a model of randomized streaming algorithms
which produce an output after each input symbol.
A {\em probabilistic automaton} $R = (Q,\Sigma,\iota,\rho,\omega)$
consists of a (possibly infinite) set of states $Q$, an alphabet $\Sigma$,
an initial state distribution $\iota \colon Q \to \{ p \in \mathbb{R} \colon 0 \leq p \leq 1\}$,
a transition probability function $\rho \colon Q \times \Sigma \times Q \to \{ p \in \mathbb{R} \colon 0 \leq p \leq 1\}$
and an output function $\omega \colon Q \to \Omega$ such that
\begin{enumerate}
\item $\sum_{q \in Q} \iota(q) = 1$,
\item $\sum_{q \in Q} \rho(p,a,q) = 1$ for all $p \in Q$, $a \in \Sigma$.
\end{enumerate}
If $\iota$ and $\rho$ map into $\{0,1\}$, then $R$ is a {\em deterministic automaton}.
If $\Omega = \{0,1\}$ we specify the set $F \subseteq Q$ of final states instead.
A {\em run} on a word $a_1 \cdots a_m \in \Sigma^*$ in $R$ is a sequence $\pi = (q_0,a_1,q_1,a_2,\dots,a_m,q_m)$
where $q_0, \dots, q_m \in Q$ and $\rho(q_{i-1},a_i,q_i) > 0$ for all $1 \le i \le m$.
We write runs in the usual way
\[
	\pi : q_0 \xrightarrow{a_1} q_1 \xrightarrow{a_2} \cdots \xrightarrow{a_m} q_m
\]
or also omit the intermediate states: $\pi: q_0 \xrightarrow{a_1 \cdots a_m} q_m$.
We extend $\rho$ to runs in the natural way:
If $\pi : q_0 \xrightarrow{a_1} q_1 \xrightarrow{a_2} \cdots \xrightarrow{a_m} q_m$ is a run in $R$
then $\rho(\pi) = \prod_{i=1}^n \rho(q_{i-1},a_i,q_i)$.
Furthermore we define $\rho_\iota(\pi) = \iota(q_0) \cdot \rho(\pi)$.
We denote by $\Runs(R,w)$ the set of all runs on $w$ in $R$
and denote by $\Runs(R,q,w)$ those runs on $w$ that start in $q \in Q$.
Usually we simply write $\Runs(w)$ and $\Runs(q,w)$.
Notice that for each $w \in \Sigma^*$ the function $\rho_\iota$ is a probability distribution on $\Runs(w)$
and for each $q \in Q$ the restriction of $\rho$ to $\Runs(q,w)$
is a probability distribution on $\Runs(q,w)$.

\section{Randomized streaming and sliding window algorithms}


A {\em randomized streaming algorithm} $(R,\mathrm{enc})$
consists of a probabilistic automaton $R= (Q,\Sigma,\iota,\rho,\omega)$ as 
above and an injective function $\mathrm{enc} \colon Q \to \{0,1\}^*$.
Usually, we will only refer to the underlying automaton $R$.
If $R$ is deterministic, we speak of a {\em deterministic streaming algorithm}.
The maximum number of bits stored in a run
$\pi: q_0 \xrightarrow{a_1} q_1 \xrightarrow{a_2} \cdots \xrightarrow{a_m} q_m$ is denoted by $\mathrm{space}(R,\pi)$, i.e.,
\[
	\mathrm{space}(R,\pi) = \max \{ |\mathrm{enc}(q_i)| : 0 \le i \le m\}.
\]
We are interested in two measures of space complexity:
\begin{itemize}
\item worst case space complexity: 
\[ \mathrm{space}(R,w) = \max \{ \mathrm{space}(R,\pi) : \pi \in \Runs(w), \rho_\iota(\pi) > 0 \}, \]
\item expected space complexity: 
\[ \mathrm{space}_\varnothing(R,w) = \sum_{\pi \in \Runs(w)} \rho_\iota(\pi) \cdot \mathrm{space}(R,\pi)
\]
\end{itemize}
Let $R= (Q,\Sigma,\iota,\rho,\omega)$ be a randomized streaming algorithm, let $\Pi \subseteq \Sigma^* \times \Omega$
be an approximation problem
and let $w = a_1a_2 \cdots a_m \in \Sigma^*$ be an input stream.
\begin{itemize}
\item A run $\pi: q_0 \xrightarrow{w} q_m$ is {\em correct for $\Pi$}
if $(w,\omega(q_m)) \in \Pi$.
The {\em error probability} of $R$ on $w$ for $\Pi$ is
\[
	\eps(R,w,\Pi) = \sum_{\pi \in N} \rho_\iota(\pi),
\]
where $N = \{ \pi \in \Runs(w) \colon \text{$\pi$ is not correct for $\Pi$} \}$.
\item Given an error bound $0 \le \eps \le 1$ the algorithm $R$ {\em fails} at time instant $t \in [0,m]$
if $\eps(R,a_1a_2 \cdots a_t,\Pi) > \eps$.
The set of time instants at which $R$ fails on $w$ is denoted by
\[
	\FT(R,w,\Pi,\eps) = \{ t \in [0,m] \colon \eps(R,a_1a_2 \cdots a_t,\Pi) > \eps \}.
\]
The {\em failure ratio} of $R$ on $w$ is defined as
\[
	\phi(R,w,\Pi,\eps) = \frac{1}{m+1} |\FT(R,w,\Pi,\eps)|.
\]
\item A run $\pi: q_0 \xrightarrow{a_1} q_1 \xrightarrow{a_2} \cdots q_{m-1} \xrightarrow{a_m} q_m$ is {\em strictly correct for $\Pi$}
if $(a_1 \cdots a_t,\omega(q_t)) \in \Pi$ for all $0 \le t \le m$.
The {\em strict error probability} of $R$ on $w$ for $\Pi$ is
\[
	\eps_*(R,w,\Pi) = \sum_{\pi \in N_*} \rho_\iota(\pi),
\]
where $N_* = \{ \pi \in \Runs(w) \colon \text{$\pi$ is not strictly correct for $\Pi$} \}$.
\end{itemize}

\subsection{Basic properties of randomized streaming algorithms}

Before we specialize streaming algorithms to sliding window algorithms, we state two simple general properties
of randomized streaming algorithms that will be used implicitly throughout this paper.
The following lemma states that the error probability can be reduced to any non-zero constant.
Thereby the space only increases by a constant factor. 

\begin{lemma}[probability amplification]
	\label{lem:amplification}
	Given a language $L \subseteq \Sigma^*$, a randomized streaming algorithm $R$ and
	error bounds $0 < \eps' < \eps < \frac{1}{2}$, one can construct a randomized streaming
	algorithm $R'$ such that
	$\phi(R',w,L,\eps') \le \phi(R,w,L,\eps)$
	and $\mathrm{space}(R',w) \le \ln(\frac{1}{\eps'}) \cdot \frac{1}{\mathrm{poly}(\eps)} \cdot \mathrm{space}(R,w)$ for all $w \in \Sigma^*$.
\end{lemma}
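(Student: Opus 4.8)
The plan is to reduce the error probability by the standard "majority of independent repetitions" trick, carefully phrased so that it works at every time instant simultaneously and so that it respects the failure-ratio bound.

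Concretely, given $R = (Q,\Sigma,\iota,\rho,\omega)$ with $\Omega = \{0,1\}$ and a target error $\eps'$, I would fix an odd integer $k = k(\eps,\eps')$ and let $R'$ be the $k$-fold product automaton: its state set is $Q^k$, the initial distribution and transition probabilities are the $k$-fold products of those of $R$, and the output function of a tuple $(q_1,\dots,q_k)$ is the majority value of $\omega(q_1),\dots,\omega(q_k)$. A run of $R'$ on $w$ is thus a $k$-tuple of independent runs of $R$ on $w$, each distributed according to $\rho_\iota$. Fix any $w$ and any prefix $u = a_1 \cdots a_t$ with $t \notin \FT(R,w,L,\eps)$, i.e. $\eps(R,u,L) \le \eps$. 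The majority output of $R'$ on $u$ is wrong only if at least $\lceil k/2 \rceil$ of the $k$ independent copies are wrong; since each copy is wrong with probability at most $\eps < 1/2$, a Chernoff (or Hoeffding) bound gives that this happens with probability at most $e^{-c(\eps) k}$ for a constant $c(\eps) > 0$ depending only on $\eps$. Choosing $k = \lceil \ln(1/\eps') / c(\eps) \rceil$ (rounded up to an odd integer) makes this at most $\eps'$, so $t \notin \FT(R',w,L,\eps')$. Hence $\FT(R',w,L,\eps') \subseteq \FT(R,w,L,\eps)$, which yields $\phi(R',w,L,\eps') \le \phi(R,w,L,\eps)$ for every $w$.

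For the space bound, a state of $R'$ is encoded by concatenating the $\mathrm{enc}$-codes of the $k$ components together with simple separators, so $|\mathrm{enc}'((q_1,\dots,q_k))| \le O(k) \cdot \max_i |\mathrm{enc}(q_i)|$; taking the maximum over a run of $R'$ (whose components are runs of $R$) gives $\mathrm{space}(R',w) \le O(k) \cdot \mathrm{space}(R,w)$. Since $k = O(\ln(1/\eps')/c(\eps))$ and $c(\eps) = \Omega(\mathrm{poly}(\eps))$ by the explicit form of the Chernoff exponent for bias bounded away from $1/2$ by $1/2 - \eps$, we get $\mathrm{space}(R',w) \le \ln(1/\eps') \cdot \frac{1}{\mathrm{poly}(\eps)} \cdot \mathrm{space}(R,w)$ as claimed.

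The main technical point — and the only place one has to be a little careful — is that correctness of $R'$ must be argued pointwise in $t$ rather than globally: the product construction only preserves the error guarantee at those prefixes where $R$ itself already satisfies the $\eps$-bound, and one must check that the failure set can only shrink, never grow. The Chernoff estimate itself is routine, as is the bookkeeping for the encoding; the slight subtlety in spelling out the constant $c(\eps)$ explicitly (so that it is genuinely $\mathrm{poly}(\eps)$ rather than merely positive) is where I would spend a sentence or two of actual computation.
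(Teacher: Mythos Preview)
Your proposal is correct and follows essentially the same approach as the paper: run $k$ independent copies of $R$ and output the majority, then apply a Chernoff bound at each prefix where $\eps(R,\cdot,L)\le\eps$ to conclude $\FT(R',w,L,\eps')\subseteq\FT(R,w,L,\eps)$. The paper's only extra detail is that it writes out the Chernoff exponent explicitly as $\frac{(\frac{1}{2}-\eps)^2}{2(1-\eps)}$, which immediately gives the required $k \ge \ln(1/\eps')\cdot \frac{2(1-\eps)}{(\frac{1}{2}-\eps)^2}$ and makes the $\frac{1}{\mathrm{poly}(\eps)}$ dependence concrete.
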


\begin{proof}
	The algorithm $R$ simulates $k$ (which will be fixed later) instances of the algorithm in parallel with independent random bits
	and outputs the majority vote.
	Consider a stream $w \in \Sigma^*$ and a prefix $v$ of $w$ such that $\eps(R,v,L) \le \eps$.
	By the Chernoff bound (see e.g. \cite[Chapter~4]{MiUp17}) we know
	\[
		\eps(R',v,L) \le \exp\left(-\frac{k(\frac{1}{2}-\eps)^2}{2(1-\eps)}\right).
	\]
	If we choose $k \ge \ln\big(\frac{1}{\eps'}\big) \cdot \frac{2(1-\eps)}{(\frac{1}{2}-\eps)^2}$
	we get $\eps(R',v,L) \le \eps'$.
	This implies $\phi(R',w,L,\eps') \le \phi(R,w,L,\eps)$.
\end{proof}

Let $\Pi_1, \dots, \Pi_k \subseteq \Sigma^* \times \Omega$ be approximation problems 
and $\tau \colon \Omega^k \to \Omega$ be a mapping.
Then $\tau(\Pi_1, \dots, \Pi_k)$ denotes the approximation problem
\[
	\{ (x,\tau(y_1, \dots, y_k)) \colon x \in \Sigma^*, (x,y_1) \in \Pi_1, \dots, (x,y_k) \in \Pi_k  \}.
\]
The following simple lemma will be mainly applied for the case that the $\Pi_i$ are languages
and $\tau$ is a boolean function.

\begin{lemma}
	\label{lem:comb-streaming-algo}
	Let $R_1, \dots, R_k$ be randomized streaming algorithms and let $\Pi = \tau(\Pi_1, \dots, \Pi_k)$.
	Then there exists a randomized streaming algorithm $R$ such that for all $w \in \Sigma^*$:
	\begin{itemize}
		\item $\mathrm{space}(R,w) \le 2 \cdot \sum_{i=1}^k \mathrm{space}(R_i,w)$
		\item $\phi(R,w,\Pi,\eps) \le \sum_{i=1}^k \phi(R_i,w,\Pi_i,\eps_i)$
		for all $0 \le \eps_1, \dots, \eps_k \le 1$ with $\eps = \sum_{i=1}^k \eps_i \le 1$.
	\end{itemize}
\end{lemma}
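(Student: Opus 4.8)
The plan is to let $R$ run all of $R_1, \dots, R_k$ in parallel on the same input stream, each with its own independent random bits, and to combine their outputs via $\tau$. Concretely, the state set of $R$ is (a subset of) $Q_1 \times \cdots \times Q_k$; the initial distribution is the product $\iota = \iota_1 \otimes \cdots \otimes \iota_k$; the transition probability on letter $a$ from $(p_1, \dots, p_k)$ to $(q_1, \dots, q_k)$ is the product $\prod_{i=1}^k \rho_i(p_i, a, q_i)$; and the output function is $\omega(q_1, \dots, q_k) = \tau(\omega_1(q_1), \dots, \omega_k(q_k))$. This is readily checked to satisfy the two normalization conditions for a probabilistic automaton, and a run of $R$ on $w$ is exactly a tuple $(\pi_1, \dots, \pi_k)$ of runs $\pi_i \in \Runs(R_i, w)$, with $\rho_\iota(\pi_1, \dots, \pi_k) = \prod_{i=1}^k \rho_{\iota,i}(\pi_i)$; that is, the distribution on $\Runs(R,w)$ is the product of the distributions on the $\Runs(R_i,w)$.

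For the space bound: we encode a state $(q_1, \dots, q_k)$ by concatenating the encodings $\mathrm{enc}_i(q_i)$ together with a fixed-length separator scheme (e.g.\ doubling each bit of a length-prefixed block, or simply a self-delimiting encoding) so that the total length is at most $2\sum_{i=1}^k |\mathrm{enc}_i(q_i)|$; taking the maximum over the positions along the run gives $\mathrm{space}(R,\pi) \le 2\sum_{i=1}^k \mathrm{space}(R_i,\pi_i)$, hence $\mathrm{space}(R,w) \le 2\sum_{i=1}^k \mathrm{space}(R_i,w)$. The factor $2$ is exactly the slack the statement allows, so no optimization is needed here; one just has to be slightly careful that the encoding remains injective.

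The heart of the argument is the failure-ratio bound, and this is where I expect the only real (if mild) subtlety. Fix a prefix $v$ of $w$. The error probability of $R$ on $v$ for $\Pi$ is the probability that the combined run is incorrect, i.e.\ that $(v, \tau(\omega_1(q_1), \dots, \omega_k(q_k))) \notin \Pi$. Since $\Pi = \tau(\Pi_1, \dots, \Pi_k)$, if every component run $\pi_i$ is correct for $\Pi_i$ on $v$ — meaning $(v,\omega_i(q_i)) \in \Pi_i$ for all $i$ — then by definition of $\tau(\Pi_1,\dots,\Pi_k)$ the tuple $(v, \tau(\omega_1(q_1),\dots,\omega_k(q_k)))$ lies in $\Pi$, so the combined run is correct. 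Contrapositively, if the combined run is incorrect then some $\pi_i$ is incorrect for $\Pi_i$, whence by independence and a union bound $\eps(R,v,\Pi) \le \sum_{i=1}^k \eps(R_i,v,\Pi_i)$. Now suppose $\eps_1, \dots, \eps_k$ satisfy $\eps = \sum_i \eps_i \le 1$, and let $t \in [0,m]$ be a time instant at which $R$ fails, i.e.\ $\eps(R, w[1,t], \Pi) > \eps$. By the inequality just established, $\sum_{i=1}^k \eps(R_i, w[1,t], \Pi_i) > \sum_{i=1}^k \eps_i$, so there must be at least one index $i$ with $\eps(R_i, w[1,t], \Pi_i) > \eps_i$, i.e.\ $t \in \FT(R_i, w, \Pi_i, \eps_i)$. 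Therefore $\FT(R,w,\Pi,\eps) \subseteq \bigcup_{i=1}^k \FT(R_i,w,\Pi_i,\eps_i)$, and dividing by $m+1$ yields $\phi(R,w,\Pi,\eps) \le \sum_{i=1}^k \phi(R_i,w,\Pi_i,\eps_i)$, as desired. The one point to watch is the direction of the implication between "all components correct" and "combined output admissible": it genuinely uses that $\Pi$ is \emph{defined} as $\tau(\Pi_1,\dots,\Pi_k)$ — there is no converse in general — but that is exactly the hypothesis, so the argument goes through cleanly.
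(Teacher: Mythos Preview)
Your proof is correct and follows essentially the same approach as the paper: construct $R$ as the parallel product of the $R_i$ with independent randomness and output $\tau$ applied componentwise, bound the space by $2\sum_i \mathrm{space}(R_i,w)$ via a self-delimiting encoding, and use the union bound to show $\FT(R,w,\Pi,\eps) \subseteq \bigcup_i \FT(R_i,w,\Pi_i,\eps_i)$. Your write-up is in fact more explicit than the paper's about why ``all components correct'' implies ``combined output admissible'' (namely, by the very definition of $\tau(\Pi_1,\dots,\Pi_k)$), which is the only point that deserves a word of justification.
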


\begin{proof}
	The algorithm $R$ simulates the $k$ algorithms $R_i$ in parallel with independent random bits.
	For this $2 \cdot \sum_{i=1}^k \mathrm{space}(R_i,w)$ bits are sufficient (the encodings of the states of the $R_i$
	have to separated, which explains the factor $2$).
	If $R_i$ outputs some value $y_i$ for $1 \le i \le k$,
	then $R$ outputs $\tau(y_1, \dots, y_k)$.
	Consider $0 \le \eps_1, \dots, \eps_k \le 1$ with $\eps = \sum_{i=1}^k \eps_i \le 1$.
	Let $w = a_1 \cdots a_m \in \Sigma^*$.
	If $\eps(R_i,a_1 \cdots a_t,\Pi_i) \le \eps_i$ for all $i \in [1,k]$
	then $\eps(R,a_1 \cdots a_t,\Pi) \le \eps$ by the union bound,
	and therefore $\FT(R,w,\Pi,\eps) \subseteq \bigcup_{i=1}^k \FT(R_i,w,\Pi_i,\eps_i)$.
	This implies
	\begin{eqnarray*}
		\phi(R,w,\Pi,\eps) &=& \frac{|\FT(R,w,\Pi,\eps)|}{m+1} \\
		&\le & \frac{|\bigcup_{i=1}^k \FT(R_i,w,\Pi_i,\eps_i)|}{m+1} \\
		&\le & \sum_{i=1}^k \frac{|\FT(R_i,w,\Pi_i,\eps_i)|}{m+1} \\
		&=& \sum_{i=1}^k \phi(R_i,w,\Pi_i,\eps_i),
	\end{eqnarray*}
	which proves the lemma.
\end{proof}

\subsection{Sliding window algorithms}

For a window length $n \ge 0$ and a stream $x \in \Sigma^*$
we define $\last_n(x)$ to be the suffix of $\square^n x$ of length $n$
where $\square \in \Sigma$ is a fixed alphabet symbol.
The word $\last_n(\varepsilon) = \square^n$ is also called the {\em initial window}.
Given an approximation problem $\Pi \subseteq \Sigma^* \times \Omega$ and a window length $n \ge 0$
we define the sliding window problem
\[
	\Pi_n = \{ (x,y) \in \Sigma^* \times \Omega \colon (\last_n(x),y) \in \Pi \}.
\]
Since we view a language $L \subseteq \Sigma^*$ as particular approximation problems,
the definition of $\Pi_n$ specializes to
\begin{equation} \label{def-L_n}
L_n = \{ x \in \Sigma^* \colon \last_n(x) \in L \} .
\end{equation}
A {\em randomized sliding window algorithm} ({\em randomized \SWA} for short) is a sequence $\R = (R_n)_{n \ge 0}$
of randomized streaming algorithms $R_n$ over the same alphabet $\Sigma$ and over the same set of output values $\Omega$.
If every $R_n$ is deterministic, we speak of a {\em deterministic \SWA}.
The {\em space complexity} of the randomized \SWA\  $\R = (R_n)_{n \ge 0}$
is the function
\[
	f(\R,n) = \sup \{ \mathrm{space}(R_n,u) : u \in \Sigma^* \}
\]
and its {\em expected space complexity} is the function
\[
	f_\varnothing(\R,n) = \sup \{ s_\varnothing(R_n,u) : u \in \Sigma^* \}.
\]
Clearly, if $R_n$ is finite, then one can always find a state encoding such that $f(\R,n) = \lfloor \log_2 |R_n| \rfloor$. 

\begin{definition}
Let $0 \le \eps \le 1$ be an error bound and $0 \le \phi \le 1$, let $\R = (R_n)_{n \ge 0}$ be a randomized \SWA,
and let $\Pi$ be an approximation problem.
\begin{itemize}
	\item We say that $\R$ is {\em $(\eps,\phi)$-correct for $\Pi$}
	if $\phi(R_n,w,\Pi_n,\eps) \le \phi$ for all $n \ge 0$ and $w \in \Sigma^{\ge n}$.
	The number $\eps$ is the {\em error probability} and $\phi$ is the {\em failure ratio} of $\R$.
	\item We say that $\R$ is {\em strictly $\eps$-correct for $\Pi$}
	if $\eps_*(R_n,w,\Pi_n) \le \eps$ for all $n \ge 0$ and $w \in \Sigma^{\ge n}$.
\end{itemize}
\end{definition}
Note that the definition of an $(\eps,\phi)$-correct randomized \SWA\ $\R$ for $\Pi$
also makes sense in the special case that $\R$ is deterministic and $\epsilon=0$.
A $(0,\phi)$-correct deterministic \SWA\ $\R = (R_n)_{n \ge 0}$ for $\Pi$ has the property that 
$R_n$ produces at most $\phi \cdot (m+1)$ many incorrect outputs when running on any
input word of length $m \geq n$.


Let us also emphasize that our model of sliding window algorithms is non-uniform in the 
sense that for every window length $n$ we have a separate algorithm. This makes lower
bounds stronger. On the other hand, in our upper bounds, the constructed sliding window
algorithms are uniform, in the sense that one has a single algorithm that is parameterized
by the window length. 

\subsection{Basic properties of sliding window algorithms}

The following lemma is a direct consequence of Lemma~\ref{lem:amplification}.
It shows that, if the error probability is strictly below $\frac{1}{2}$,
then reducing the error probability further increases the space complexity only by a constant factor.
This justifies an arbitrary choice of $\eps = \frac{1}{3}$.

\begin{lemma}[probability amplification]
	\label{lem:fs-ampli}
	Let $L \subseteq \Sigma^*$, $0 <\eps' < \eps <\frac{1}{2}$ and $0 \leq \phi \leq 1$.
	Given a randomized \SWA\ $\R$ which is $(\eps,\phi)$-correct for $L$,
	one can construct a randomized \SWA\ $\R'$ which is $(\eps',\phi)$-correct for $\Pi$ such that
	$f(\R',n) \le \ln(\frac{1}{\eps'}) \cdot \frac{1}{\mathrm{poly}(\eps)} \cdot f(\R,n)$. 
\end{lemma}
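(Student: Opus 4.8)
The plan is to derive Lemma~\ref{lem:fs-ampli} from Lemma~\ref{lem:amplification} by applying the latter separately to each streaming algorithm $R_n$ in the sequence $\R = (R_n)_{n \ge 0}$, and then collecting the resulting algorithms $R'_n$ into a new randomized \SWA\ $\R' = (R'_n)_{n \ge 0}$. The key point is that the space bound and the failure-ratio bound provided by Lemma~\ref{lem:amplification} hold uniformly for all input streams $w$, which is exactly what is needed to pass to the supremum in the definitions of $f(\R,n)$ and of $(\eps,\phi)$-correctness.

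Concretely, I would first fix $n \ge 0$ and apply Lemma~\ref{lem:amplification} to $L$, the streaming algorithm $R_n$, and the error bounds $\eps' < \eps < \tfrac12$, obtaining a streaming algorithm $R'_n$ with $\phi(R'_n,w,L,\eps') \le \phi(R_n,w,L,\eps)$ and $\mathrm{space}(R'_n,w) \le \ln(\tfrac{1}{\eps'}) \cdot \tfrac{1}{\mathrm{poly}(\eps)} \cdot \mathrm{space}(R_n,w)$ for every $w \in \Sigma^*$. The subtlety is that Lemma~\ref{lem:amplification} is stated for the plain language membership problem $L$, whereas $(\eps,\phi)$-correctness of a \SWA\ refers to the sliding window problems $L_n$; so I would instantiate Lemma~\ref{lem:amplification} with the language $L_n$ in place of $L$ (noting that $L_n \subseteq \Sigma^*$ is again an ordinary language, and that the majority-vote amplification construction is entirely oblivious to which language is being decided). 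Then $\phi(R'_n,w,(L_n),\eps') \le \phi(R_n,w,L_n,\eps)$ for all $w$, and since $\R$ is $(\eps,\phi)$-correct for $L$ we have $\phi(R_n,w,L_n,\eps) \le \phi$ for all $n \ge 0$ and $w \in \Sigma^{\ge n}$, hence $\phi(R'_n,w,L_n,\eps') \le \phi$ as well, i.e.\ $\R'$ is $(\eps',\phi)$-correct for $L$.

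For the space bound, taking the supremum over all $u \in \Sigma^*$ in $\mathrm{space}(R'_n,u) \le \ln(\tfrac{1}{\eps'}) \cdot \tfrac{1}{\mathrm{poly}(\eps)} \cdot \mathrm{space}(R_n,u)$ gives $f(\R',n) \le \ln(\tfrac{1}{\eps'}) \cdot \tfrac{1}{\mathrm{poly}(\eps)} \cdot f(\R,n)$ for every $n$, which is the claimed inequality; the multiplicative constant is independent of $n$, so the asymptotic growth of $f$ in $n$ is preserved. I do not anticipate a serious obstacle here — this is essentially a bookkeeping argument — the only thing to be careful about is the mild type mismatch between the language-level statement of Lemma~\ref{lem:amplification} and the sliding-window-level notion of correctness, which is resolved by the observation above that $L_n$ is itself just a language over $\Sigma$. (The reference to $\Pi$ in the statement should of course be read as $L$.)
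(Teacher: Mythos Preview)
Your proposal is correct and matches the paper's approach: the paper simply states that Lemma~\ref{lem:fs-ampli} is a direct consequence of Lemma~\ref{lem:amplification} and gives no further argument. Your observation that one must instantiate Lemma~\ref{lem:amplification} with the language $L_n$ rather than $L$ is the only detail worth noting, and you handle it correctly.
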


\begin{definition}
        Let  $\Pi$ be an approximation problem and $0 \leq \phi \leq 1$.
        \begin{itemize}
\item A {\em  randomized \SWA\ for $\Pi$} with failure ratio $\phi$
	is a randomized \SWA\ which is $(1/3,\phi)$-correct for $\Pi$.
\item If moreover $\phi=0$, then 
        we speak of a {\em randomized \SWA\ for $\Pi$}.  
\item A {\em  deterministic \SWA\ for $\Pi$} with failure ratio $\phi$
	is a deterministic \SWA\ which is $(0,\phi)$-correct for $\Pi$. 
\item If moreover $\phi=0$, then 
        we speak of a {\em deterministic \SWA\ for $\Pi$}.
\end{itemize}
\end{definition}

\begin{lemma}
	\label{lem:fs-expected-worst}
	Let $\R$ be a randomized \SWA\ which is $(\eps,\phi)$-correct for $\Pi$
	and let $\mu \ge 1$.
	Then there exists a randomized \SWA\ $\R'$ which is $(\eps + \frac{1}{\mu},\phi)$-correct for $\Pi$ such that
	$f(\R',n) \le \mu \cdot f_\varnothing(\R,n)$.
\end{lemma}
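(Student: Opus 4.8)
The plan is to turn the bound on \emph{expected} space into a bound on \emph{worst-case} space by truncation: we run $R_n$ but abort, producing a fixed default answer, as soon as the simulated state would need more than $b := \lfloor \mu \cdot f_\varnothing(\R,n)\rfloor$ bits. By Markov's inequality, on every prefix of the input stream the probability of having aborted is at most $1/\mu$; hence each individual error probability grows by at most $1/\mu$, which is exactly the allowed slack, and since the set of failing time instants can only shrink when the error bound is relaxed, the failure ratio $\phi$ is preserved.

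Concretely, I would define $R'_n$ as follows. Set $b = \lfloor \mu\cdot f_\varnothing(\R,n)\rfloor$ (the case $f_\varnothing(\R,n)=\infty$ makes the claim vacuous). Keep precisely those states $q$ of $R_n$ with $|\mathrm{enc}(q)| \le b$, add a fresh absorbing state $\bot$ whose output is an arbitrary fixed value (say $0$ for a language), and — possibly after re-encoding — give $\bot$ an encoding of length $\le b$; the only obstruction is the degenerate case in which every binary string of length $\le b$ already encodes a reachable state, which is easy to handle separately (e.g.\ $b=0$ forces a single reachable state, and the claim is then immediate). The transitions of $R'_n$ agree with those of $R_n$ whenever the target is a kept state; all transition probability (and all initial probability) leading to a discarded state is redirected to $\bot$, and $\bot$ loops on every letter. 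By construction every reachable state of $R'_n$ is encoded with at most $b$ bits, so $f(\R',n) \le b \le \mu\cdot f_\varnothing(\R,n)$.

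For correctness, fix $n$, a stream $w \in \Sigma^{\ge n}$, and a prefix $v = a_1\cdots a_t$ of $w$. The runs of $R'_n$ on $v$ split into those that never enter $\bot$ and those that do. The former are in a probability- and output-preserving bijection with the runs of $R_n$ on $v$ whose space stays within $b$ bits, so the incorrect ones among them contribute at most $\eps(R_n,v,\Pi_n)$ to the error of $R'_n$. The total $\rho_\iota$-mass of runs of $R'_n$ that enter $\bot$ equals $\Pr_{\pi\in\Runs(R_n,v)}[\mathrm{space}(R_n,\pi) > b] = \Pr[\mathrm{space}(R_n,\pi)\ge b+1]$, which by Markov's inequality is at most $s_\varnothing(R_n,v)/(b+1) \le f_\varnothing(\R,n)/(b+1) \le 1/\mu$, using $b+1 > \mu\cdot f_\varnothing(\R,n)$ (the case $f_\varnothing(\R,n)=0$ being trivial). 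Hence $\eps(R'_n,v,\Pi_n) \le \eps(R_n,v,\Pi_n) + \tfrac1\mu$.

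Consequently, if $t \notin \FT(R_n,w,\Pi_n,\eps)$ then $\eps(R'_n,a_1\cdots a_t,\Pi_n) \le \eps+\tfrac1\mu$, i.e.\ $\FT(R'_n,w,\Pi_n,\eps+\tfrac1\mu) \subseteq \FT(R_n,w,\Pi_n,\eps)$, so $\phi(R'_n,w,\Pi_n,\eps+\tfrac1\mu) \le \phi(R_n,w,\Pi_n,\eps) \le \phi$. Thus $\R' = (R'_n)_{n\ge0}$ is $(\eps+\tfrac1\mu,\phi)$-correct for $\Pi$ with $f(\R',n) \le \mu\cdot f_\varnothing(\R,n)$, as required. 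The only genuinely delicate points are the bookkeeping that equates the aborting mass of $R'_n$ on $v$ with the overflow probability of $R_n$ on $v$, and squeezing the sink state into the budget $b$; the rest is a direct application of Markov's inequality.
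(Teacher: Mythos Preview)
Your proof is correct and follows essentially the same approach as the paper: truncate $R_n$ at a space threshold proportional to $\mu\cdot f_\varnothing(\R,n)$, merge the overflowing states into an absorbing sink, and use Markov's inequality to bound the probability of ever reaching the sink by $1/\mu$, so each error probability increases by at most $1/\mu$ and the set of failing time instants can only shrink. Your treatment is in fact slightly more careful than the paper's on two points (the floor in the threshold and the question of fitting the sink's encoding into the budget), but the argument is the same.
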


\begin{proof}
	Fix an $n \ge 0$ and let $s = f_\varnothing(\R,n)$ be the expected space complexity on window length $n$.
	If $s = \infty$ then the statement is trivial (we can take $R'_n = R_n$). So, let us assume that $s$ is finite.
	Let $Q_{\ge}$ be the set of states in $R_n$ with encoding length $\ge \mu \cdot s$. 
	If $Q_{\ge} = \emptyset$, then $f(\R,n) \le \mu \cdot f_\varnothing(\R,n)$ already holds.
	If $Q_{\ge}$ is nonempty, let $R'_n$ be the algorithm obtained from $R_n$ by identifying all states $q \in Q_{\ge}$
	into a single state $q_\bot$ encoded using an unused bit string of minimal length, which is at most $\mu \cdot s$.
	On an input stream $w \in \Sigma^*$,
	the probability that $q_\bot$ is reached is
	\[
		\Pr_{\pi \in \Runs(R'_n,w)}[\pi \text{ contains } q_\bot] = \Pr_{\pi \in \Runs(R_n,w)}[\mathrm{space}(R_n,\pi) \ge \mu \cdot s] \le \frac{s}{\mu \cdot s} = \frac{1}{\mu}
	\]
	by Markov's inequality. Note that if an $R'_n$-run on $w$ is not correct, then (i) it must contain $q_\bot$ or (ii)
	it must be a non-correct $R_n$-run on $w$. Hence, a union bound yields
	\[
		\eps(R'_n,w,\Pi) \leq \frac{1}{\mu} + \eps(R_n,w,\Pi) .
	\]
	By taking the supremum over all $w \in \Sigma^*$ on both sides of the inequality, we obtain the lemma for the $\eps$-error.
\end{proof}
Lemma~\ref{lem:fs-ampli} and \ref{lem:fs-expected-worst} justify
to focus on the worst-case space complexity $f(\R,n)$ in the rest of the paper.
Moreover, we only consider randomized SWAs $\R = (R_n)_{n \ge 0}$
where every $R_n$ has a finite state set $Q_n$. This is justified by the fact that
for every language $L$ and every $n$ the language $L_n$ from \eqref{def-L_n} is regular
and hence can be accepted by a DFA. The space-optimal deterministic \SWA\ for a language
$L$ therefore consists of the minimal DFA for $L_n$ for every $n \geq 0$. 
For a fixed error probability $\epsilon<1/2$ a space-optimal randomized
 \SWA\ for $L$ consists of a minimal probabilistic finite automaton for $L_n$ with error probability  $\epsilon$
for every $n \geq 0$. This probabilistic finite automaton has an isolated
cut-point \cite{Rabin63} (meaning that there is a probability gap). 
Rabin has shown in \cite{Rabin63} that a  probabilistic finite automaton with a finite cut-point
can be transformed into an equivalent DFA with an exponential blow-up.
Hence, we get:

\begin{lemma}
Let $\R$ be a randomized \SWA\ for the language $L$. Then, there exists
a deterministic \SWA\ $\D$ for $L$ such that $f(\D,n) \in \mathcal{O}(2^{f(\R,n)})$.
\end{lemma}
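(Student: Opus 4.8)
The plan is to reduce the claim to a single observation about individual window lengths: it suffices to show that for each fixed $n$, a probabilistic finite automaton that recognizes $L_n$ with bounded error can be converted into a DFA for $L_n$ whose number of states is at most exponential in the number of states of the probabilistic automaton. Since $\R = (R_n)_{n \ge 0}$ is a randomized \SWA\ for $L$, by definition it is $(1/3, 0)$-correct for $L$, which means that for every $n \ge 0$ and every stream $w \in \Sigma^{\ge n}$ we have $\eps(R_n, w, L_n) \le 1/3$. In particular, reading $w$ as a word and looking only at the final output, $R_n$ accepts exactly $L_n$ in the sense of a probabilistic automaton with cut-point $1/2$ and isolation $1/6$: runs ending on a word $w \in L_n$ accept with probability $\ge 2/3$, and runs ending on $w \notin L_n$ accept with probability $\le 1/3$. (If $\square \notin L$ or similar degenerate cases force attention to short streams, note that $L_n$ is a regular language over $\Sigma^*$ and the cut-point behaviour is required for all $w$, not just $w \in \Sigma^{\ge n}$; but the \SWA\ definition only constrains $w \in \Sigma^{\ge n}$, so strictly one should first pad: every stream $w$ of length $< n$ has $\last_n(w) = \last_n(\square^n w)$, hence can be treated by prepending $\square$'s, or one simply restricts the DFA construction to the sub-automaton reachable on $\Sigma^{\ge n}$ and fixes short streams by a bounded table.)

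Next I would invoke Rabin's theorem directly. Rabin~\cite{Rabin63} proved that if a probabilistic finite automaton with $k$ states recognizes a language with an isolated cut-point, then that language is regular and is recognized by a DFA with at most $(1 + 2/\delta)^k$ states, where $\delta$ is the isolation radius (the size of the probability gap around the cut-point). In our setting $\delta$ is a fixed constant ($1/6$), so the DFA for $L_n$ has at most $c^{|Q_n|}$ states for an absolute constant $c$, where $Q_n$ is the state set of $R_n$. Defining $D_n$ to be this DFA and $\D = (D_n)_{n \ge 0}$, we get a deterministic \SWA\ for $L$: it is deterministic and correct (zero error) because each $D_n$ recognizes $L_n$ exactly. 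For the space bound, recall the paper's convention that for a finite automaton one may choose an encoding with $f(\R,n) = \lfloor \log_2 |Q_n| \rfloor$ for the randomized algorithm and likewise $f(\D,n) = \lfloor \log_2 |D_n| \rfloor$ for the deterministic one; then $f(\D,n) \le \log_2(c^{|Q_n|}) = |Q_n| \cdot \log_2 c \le 2^{f(\R,n)+1} \cdot \log_2 c \in \mathcal{O}(2^{f(\R,n)})$, using $|Q_n| \le 2^{f(\R,n)+1}$.

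One technical point to handle carefully is that the excerpt's space measure $f(\R,n)$ is defined via an arbitrary fixed encoding $\mathrm{enc}$, not necessarily the optimal $\lfloor \log_2 |Q_n|\rfloor$ one; but $R_n$ has a finite state set and $\mathrm{enc}$ is injective, so $|Q_n| \le 2^{f(\R,n)+1}$ holds for any injective encoding into $\{0,1\}^{\le f(\R,n)}$, which is all we need. A second subtlety: Rabin's isolated-cut-point hypothesis requires the gap to hold uniformly over all input words, whereas the \SWA\ correctness is only asserted on $w \in \Sigma^{\ge n}$. I would resolve this by noting $\last_n(w)$ depends only on the last $n$ symbols, so for any $w$ one has $\last_n(w) = \last_n(\square^n w)$ and $\square^n w \in \Sigma^{\ge n}$; feeding $R_n$ the padded stream $\square^n w$ shows the error bound also holds for short $w$ up to prepending a fixed word of length $n$, and by first driving $R_n$ deterministically along $\square^n$ (folding the initial distribution forward $n$ steps) we obtain a probabilistic automaton with the isolated cut-point property on all of $\Sigma^*$ and still $\mathcal{O}(|Q_n|)$ states. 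The main obstacle is thus purely this bookkeeping around short streams and the exact constant in Rabin's bound; the core of the argument — bounded-error probabilistic automaton $\Rightarrow$ exponential-size DFA, applied window-length by window-length — is immediate from \cite{Rabin63}.
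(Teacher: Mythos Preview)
Your proposal is correct and follows the same approach as the paper: both invoke Rabin's theorem that a probabilistic finite automaton with an isolated cut-point can be converted into an equivalent DFA with an at most exponential blow-up, applied for each window length $n$. The paper actually gives no separate proof beyond the sentence preceding the lemma, so your write-up is more detailed; in particular, your handling of the short-stream issue (prepending $\square^n$ to push the initial distribution forward so that the isolated cut-point holds on all of $\Sigma^*$) and the bookkeeping $|Q_n| \le 2^{f(\R,n)+1}$ are points the paper leaves implicit.
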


Throughout the paper we use the simple fact that
space complexity classes in the sliding window sense
are closed under Boolean combinations, which follows from Lemma~\ref{lem:comb-streaming-algo}:

\begin{lemma}
	Let $L$ be a Boolean combination of languages $L_1, \dots, L_k$.
	For each $i \in [1,k]$ let $\R_i$ be a randomized SWA for $L_i$ with failure ratio $\phi_i$.
	Then $L$ has a SWA $\R$ with failure ratio $\sum_{i=1}^k \phi_i$ and
	$f(\R,n) = \mathcal{O}(\sum_{i=1}^k f(\R_i,n))$.
\end{lemma}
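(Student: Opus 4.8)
The plan is to reduce this to the streaming-level combination Lemma~\ref{lem:comb-streaming-algo}, applied separately for each window length $n$, after a preliminary probability-amplification step on the components. Write the Boolean combination as $L = \tau(L_1,\dots,L_k)$ for a Boolean function $\tau\colon\{0,1\}^k\to\{0,1\}$. Since $\last_n$ is a single function applied to the input, it commutes with Boolean operations: viewing each language as its characteristic function, we have $L_n = \tau((L_1)_n,\dots,(L_k)_n)$ for every $n\ge 0$. Thus it suffices, for each $n$, to combine streaming algorithms for $(L_1)_n,\dots,(L_k)_n$ via $\tau$.

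The only real point to get right is the error budget. A randomized SWA $\R_i$ for $L_i$ with failure ratio $\phi_i$ is by definition $(1/3,\phi_i)$-correct, but Lemma~\ref{lem:comb-streaming-algo} forces the per-component error bounds $\eps_i$ to add up to the combined error bound, so to land at the target error $1/3$ each $\eps_i$ must be taken around $1/(3k)<1/3$, which the hypothesis does not directly provide. To fix this, I would first apply the SWA amplification Lemma~\ref{lem:fs-ampli} to each $\R_i$ with $\eps=1/3$ and $\eps'=1/(3k)$ (assuming WLOG $k\ge 2$; for $k=1$ the claim is immediate, possibly composing with $\tau=\mathrm{NOT}$), obtaining a $(1/(3k),\phi_i)$-correct randomized SWA $\R_i'$ — crucially, amplification leaves the failure ratio unchanged — with $f(\R_i',n)\le \ln(3k)\cdot\tfrac{1}{\mathrm{poly}(1/3)}\cdot f(\R_i,n)=\mathcal{O}(f(\R_i,n))$, the hidden constant being absolute since $k$ is fixed.

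Now, for each $n\ge 0$, apply Lemma~\ref{lem:comb-streaming-algo} to the streaming algorithms $(R_1')_n,\dots,(R_k')_n$, the approximation problems $(L_1)_n,\dots,(L_k)_n$, the combiner $\tau$, and error bounds $\eps_i=1/(3k)$, so that $\eps=\sum_{i=1}^k\eps_i=1/3\le 1$. This yields a streaming algorithm $R_n$ with $\mathrm{space}(R_n,w)\le 2\sum_{i=1}^k\mathrm{space}((R_i')_n,w)$ and $\phi(R_n,w,L_n,1/3)\le\sum_{i=1}^k\phi((R_i')_n,w,(L_i)_n,1/(3k))\le\sum_{i=1}^k\phi_i$ for all $w\in\Sigma^*$, in particular for all $w\in\Sigma^{\ge n}$. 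Setting $\R=(R_n)_{n\ge 0}$ therefore gives a $(1/3,\sum_{i}\phi_i)$-correct randomized SWA for $L$, i.e.\ a randomized SWA for $L$ with failure ratio $\sum_{i}\phi_i$, and $f(\R,n)=\sup_w\mathrm{space}(R_n,w)\le 2\sum_{i}f(\R_i',n)=\mathcal{O}(\sum_{i}f(\R_i,n))$, as required. I do not expect a genuine obstacle here; the subtle bookkeeping is only the need for the preliminary amplification to $1/(3k)$ and the observation that this step does not affect the failure ratio.
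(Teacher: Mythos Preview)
Your proposal is correct and follows the paper's approach: the paper simply states that the lemma ``follows from Lemma~\ref{lem:comb-streaming-algo}'' without further detail. You have correctly identified and filled in the one step the paper elides, namely that one must first amplify each $\R_i$ from error $1/3$ down to $1/(3k)$ (via Lemma~\ref{lem:fs-ampli}, which preserves the failure ratio) before applying the combination lemma, so that the summed error lands back at $1/3$.
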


\section{Main results} \label{sec-main}

\begin{figure}

\tikzstyle{class}=[minimum size = 20pt]

\begin{tikzpicture}
	\node [class] (1a) {$\Reg$};
	\node [class, below = 20pt of 1a] (1b) {$\langle \LI,\Len \rangle$};
	\node [class, below = 60pt of 1b] (1c) {$\langle \ST,\Len \rangle$};

	\node [class, right = 40pt of 1a] (2a) {$\Reg$};
	\node [class, below = 20pt of 2a] (2b) {$\langle \LI,\Len \rangle$};
	\node [class, below = 20pt of 2b] (2c) {$\langle \ST, \SF, \Len \rangle$};
	\node [class, below = 20pt of 2c] (2d) {$\langle \ST,\Len \rangle$};
	
	\node [class, right = 50pt of 2a] (3a) {$\Reg$};
	\node [class, below = 20pt of 3a] (3b) {$\langle \LI, \PF, \Len \rangle$};
	\node [class, below = 60pt of 3b] (3c) {$\langle \LB, \PF, \SF, \Len \rangle$};

	\node [class, right = 50pt of 3a] (4a) {$\Reg$};
	\node [class, below = 100pt of 4a] (4b) {$\langle \LI, \PF, \Len \rangle$};

	\draw (1a) -- (1b) -- (1c);
	\draw (2a) -- (2b) -- (2c) -- (2d);
	\draw (3a) -- (3b) -- (3c);
	\draw (4a) -- (4b);

	\node [class, left = 40pt of 1a] (0a) {$\mathcal{O}(n)$};
	\node [class, below = 20pt of 0a] (0b) {$\mathcal{O}(\log n)$};
	\node [class, below = 20pt of 0b] (0c) {$\mathcal{O}(\log \log n)$};
	\node [class, below = 20pt of 0c] (0d) {$\mathcal{O}(1)$};

	\node [class, above = 10pt of 0a] {complexity};
	\node [class, above = 10pt of 1a] {det. ($\phi = 0$)};
	\node [class, above = 10pt of 2a] {rand. ($\phi = 0$)};
	\node [class, above = 10pt of 3a] {det. ($\phi \to 0$)};
	\node [class, above = 10pt of 4a] {rand. ($\phi \to 0$)};

\end{tikzpicture}

\caption{\label{fig-survey} All language classes are defined in Section~\ref{sec-classes}.}
\end{figure}
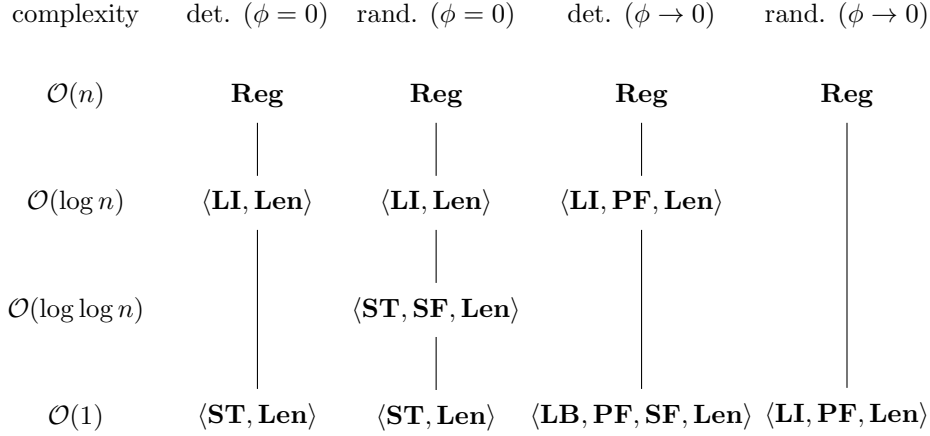

In this section we state the main results of this paper. 
We start with the randomized space complexity 
of regular languages in the sliding-window model with failure ratio
zero. This means that at every time instant the error probability
must be below $1/3$. The following
theorem gives a complete characterization. Points~\eqref{point-O(1)}
and \eqref{point-O(log)} have been shown already in \cite{GHL16}.

\begin{theorem}\label{thm:quatrochotomy}
	Let $L \subseteq \Sigma^*$ be a regular language.
	\begin{enumerate}
	\item \label{point-O(1)} If $L \in \langle \ST, \Len \rangle$,
	then $L$ has a deterministic \SWA\ $\R$ with $f(\R,n) = \mathcal{O}(1)$.
	\item \label{lower-loglog} If $L \notin \langle \ST, \Len \rangle$, then 
	$f(\R,n) \notin o(\log \log n)$ for every randomized \SWA\ $\R$ for $L$.
	\item \label{upper-loglog} If $L \in \langle \ST, \SF, \Len \rangle$,
	then $L$ has a randomized \SWA\ $\R$ with $f(\R,n) = \mathcal{O}(\log \log n)$.
	\item \label{lower-log} If $L \notin \langle \ST, \SF, \Len \rangle$, then $f(\R,n) \notin o(\log n)$
	for every randomized \SWA\ $\R$ for $L$.
	\item \label{point-O(log)} If $L \in \langle \LI, \Len \rangle$,
	then $L$ has a deterministic \SWA\ $\R$ with $f(\R,n) = \mathcal{O}(\log n)$.
	\item \label{lower-lin} If $L \notin \langle \LI, \Len \rangle$, then $f(\R,n) \notin o(n)$
	for every randomized \SWA\ $\R$ for $L$.
	\end{enumerate}
\end{theorem}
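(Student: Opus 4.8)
The six points split naturally into three upper bounds and three matching lower bounds, and my plan treats them in that order of logical dependency. Points~\eqref{point-O(1)} and \eqref{point-O(log)} are quoted from \cite{GHL16}, so nothing new is needed there beyond citing; one feeds the minimal DFA for $L_n$ in the appropriate form and observes the state count is bounded (constant for $\langle\ST,\Len\rangle$, linear in $n$ for $\langle\LI,\Len\rangle$). By the Boolean-closure lemma (last lemma of Section~3), it suffices to handle the \emph{generators} of each class: for \eqref{upper-loglog} the nontrivial case is a regular suffix-free language $L$, since $\ST$ and $\Len$ already live in the constant class. So the heart of the positive direction is: \emph{a regular suffix-free language has a randomized \SWA\ of space $\mathcal O(\log\log n)$.}

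For that construction I would use the standard suffix-sampling/fingerprinting idea. If $L$ is suffix-free and accepted by a DFA $A$, then $\last_n(x)\in L$ can hold for at most one suffix-length, and membership is a "positional" property: we need to detect whether the length-$n$ suffix of the stream, read forward, drives $A$ into a final state, with the crucial constraint that suffix-freeness forbids $A$ from having two nested accepted suffixes. The window content is read "from the wrong end" for a left-to-right automaton, so the algorithm must instead maintain, for the active window, the reversed run of $A$ — equivalently run the reverse automaton $A^\rev$ and track which state(s) it is in after exactly $n$ steps. The naive cost is $\Theta(\log n)$ to count modulo the cycle structure. To get down to $\mathcal O(\log\log n)$ one hashes the relevant counter: pick a random prime $p$ of bit-length $\Theta(\log\log n)$ and maintain the window position modulo $p$ together with the $O(1)$ bits of automaton state that matter; suffix-freeness guarantees that a \emph{false} membership can only be caused by a length collision, i.e. by the true suffix-length $\ell$ and a spurious $\ell'$ agreeing modulo $p$, and there are only polynomially many candidate $\ell'\le n$, so a union bound over random $p$ kills the error to below $1/3$ (after the amplification of Lemma~\ref{lem:fs-ampli}). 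I would phrase this cleanly by first reducing, via the reverse automaton and the structure of regular suffix-free languages, to the prototype language "the window has length exactly $\ell$ and ends in a fixed pattern" and solving that prototype by the prime-modulus trick.

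For the lower bounds \eqref{lower-loglog}, \eqref{lower-log}, \eqref{lower-lin} the plan is the usual fooling/communication argument adapted to randomized streaming: if $L\notin\langle\ST,\Len\rangle$ (resp.\ $\notin\langle\ST,\SF,\Len\rangle$, resp.\ $\notin\langle\LI,\Len\rangle$), one extracts from the syntactic structure of $L$ a large set of "hard instances" — pairs of prefixes $u_i$ that are pairwise distinguishable in the sliding-window sense by some common suffix $v$ of length $\le n$, meaning $\last_n(u_iv)\in L \not\Leftrightarrow \last_n(u_jv)\in L$ — and then observes that a randomized \SWA\ with error $<1/3$ at every time instant must, after reading $u_i$, be in a distribution over states that a one-way randomized protocol can use to distinguish the $u_i$'s; by the randomized one-way communication lower bound this forces $\log$ of the number of classes many bits. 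The three thresholds come from how many distinguishable classes the failure of each closure property yields: $\notin\langle\LI,\Len\rangle$ gives $2^{\Omega(n)}$ classes (so $\Omega(n)$ space), $\notin\langle\ST,\SF,\Len\rangle$ gives $n^{\Omega(1)}$ classes (so $\Omega(\log n)$), and $\notin\langle\ST,\Len\rangle$ gives $\Omega(\log n)$ classes (so $\Omega(\log\log n)$).

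The main obstacle, and the part I expect to require real work, is twofold: (a) pinning down the precise combinatorial characterization of regular suffix-free languages that makes the $\mathcal O(\log\log n)$ construction go through — one needs that the "accepting suffix length" is essentially unique and that its interaction with the cyclic structure of $A^\rev$ is tame enough that a single modular counter plus constant automaton state suffices; and (b) on the lower-bound side, showing that failure of $\langle\ST,\SF,\Len\rangle$-membership genuinely produces \emph{polynomially} many pairwise sliding-window-distinguishable prefixes rather than merely logarithmically many — this is where the specific algebraic obstruction (a non-suffix-free, non-suffix-testable "witness" pattern in the minimal automaton) has to be converted into a quantitative counting statement, and it is the step most likely to need a delicate case analysis on the structure of the minimal DFA of $L$.
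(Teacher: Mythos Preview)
Your overall architecture is right (Boolean closure reduces the upper bounds to the generators; communication complexity drives the lower bounds), and points \eqref{point-O(1)} and \eqref{point-O(log)} are indeed just citations. But there are two genuine gaps, one on each side.

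\textbf{Upper bound \eqref{upper-loglog}.} The prime-modulus counter alone does not work. Running the DFA $A$ for $L^\rev$ on suffixes, one tracks the unique length $\ell=\ell_w(q_0)$ with $\last_\ell(w)\in L$ (unique by suffix-freeness), and wants to test $\ell=n$. Your union bound assumes the spurious candidates $\ell'$ lie in $[0,n]$, but $\ell$ can be any value up to $|w|$ (or $\infty$): if $\ell=n+kp$ for arbitrarily large $k$, the modular test accepts and you have no control over the number of bad primes. The paper fixes this with a second, constant-space component: a Bernoulli ``threshold'' test that flips a biased coin (probability $\approx 1/(2n)$) at each step and rejects whenever $\ell\ge 2n$ with probability $\ge 1-1/e$. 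Only once $\ell$ is confined to $[0,2n)$ does $|\ell-n|<n$ have at most $O(\log n/\log\log n)$ prime factors, and then the random prime of size $O(\log n)$ succeeds. Your plan is missing this thresholding ingredient entirely.

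\textbf{Lower bounds \eqref{lower-loglog}, \eqref{lower-log}, \eqref{lower-lin}.} The heuristic ``$\log$ of the number of distinguishable classes'' is a deterministic fooling-set argument and does \emph{not} transfer to randomized streaming. Having $n$ pairwise-distinguishable prefixes only yields $\Omega(\log\log n)$ randomized bits in general (equality on $[n]$ has randomized one-way complexity $\Theta(\log\log n)$, not $\Theta(\log n)$). What the paper does instead is extract, from the automaton-theoretic obstruction in each case, a reduction to a \emph{specific} communication problem with a known randomized one-way lower bound: $\mathrm{IDX}_m$ for \eqref{lower-lin}, $\mathrm{GT}_m$ for \eqref{lower-log}, and $\mathrm{EQ}_m$ for \eqref{lower-loglog}. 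The witness words (e.g.\ $x^*u\subseteq L$, $z^*yx^*u\cap L=\emptyset$ with $|x|=|y|=|z|$ when $L\notin\langle\ST,\SF,\Len\rangle$) are engineered so that Alice encodes her input into a stream prefix, Bob appends a suffix encoding his input, and membership of the resulting window in $L$ computes exactly $\mathrm{GT}$ (resp.\ $\mathrm{EQ}$, $\mathrm{IDX}$). So the ``real work'' you anticipate in obstacle (b) is not counting classes but building these structured reductions; your current plan would only yield deterministic lower bounds.
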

In the previous theorem, we only talk about randomized algorithms with failure ratio zero.
If we allow an arbitrarily small non-zero failure ratio we get the following space dichotomy.

\begin{theorem} \label{thm:dichotomy}
	Let $L \subseteq \Sigma^*$ be a regular language.
	\begin{enumerate}
	\item \label{upper-O(1)-failure} If $L \in \langle \LI, \PF, \Len \rangle$ and $0 < \phi \leq 1$,
	then $L$ has a randomized \SWA\ with $f(\R,n) = \mathcal{O}(1)$ and failure ratio $\phi$.
	\item \label{lower-O(n)-failure} If $L \notin \langle \LI, \PF, \Len \rangle$, then there exists a failure ratio $0 < \phi \leq 1$
	such that $f(\R,n) \notin o(n)$ for every randomized \SWA\ $\R$ for $L$ with failure ratio $\phi$.
	\end{enumerate}
\end{theorem}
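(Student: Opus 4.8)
The plan is to prove the two parts separately, with the upper bound (1) resting on a Boolean-closure argument plus three base cases, and the lower bound (2) isolating a combinatorial obstruction in the minimal DFA of $L$ that forces linear space on a positive fraction of time instants.

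For part~(1), by the Boolean-closure lemma (the last lemma before this theorem) it suffices to give a constant-space randomized \SWA\ with arbitrarily small failure ratio $\phi$ for each of the three generators: regular length languages, regular left ideals, and regular prefix-free languages. Length languages need only a counter modulo the eventual period together with the exact count up to the threshold, and since the membership of $\last_n(x)$ in a length language depends only on $n$ (not on the stream content), a constant-space deterministic \SWA\ with failure ratio $0$ exists. Regular left ideals $\Sigma^*K$ are already known from \cite{GHL16,GHKLM18} to admit constant-space deterministic \SWAs (a left ideal is "eventually insensitive'' to the left end of the window: once a witnessing prefix appears it never disappears), so again $\phi=0$ works. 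The genuinely new case is a regular prefix-free language $L$: here membership of $\last_n(x)$ in $L$ requires that $\last_n(x)$ itself lies in $L$, and since $L$ is prefix-free, for any fixed stream $x$ of length $m\ge n$ there is \emph{at most one} time instant $t$ at which the active window $\last_n(a_1\cdots a_t)$ could possibly be in $L$ after the window has become "full'' — more precisely, once a prefix of the stream lands in $L$, no extension of it does. One should make this precise: there is a bounded number of time instants (bounded by a constant depending only on $L$, via the number of states of the minimal DFA and its structure on prefix-free languages) where the output "yes'' is ever correct, so a \SWA\ that simply always outputs "no'' errs on only $O(1)$ time instants out of $m+1$, giving failure ratio $O(1/m)\le\phi$ for all sufficiently long streams; short streams are handled by a finite table. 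Thus each generator has a constant-space \SWA\ of failure ratio at most $\phi/3$ (say), and the Boolean-closure lemma finishes part~(1).

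For part~(2), the strategy is the standard fooling-set / communication-complexity lower bound adapted to the failure-ratio setting, combined with a structural dichotomy: $L\notin\langle\LI,\PF,\Len\rangle$ must witness, in the minimal DFA $A$ of $L$, a pattern that cannot be explained by length information, left-ideal behaviour, or prefix-freeness. I expect the right statement is: if $L\notin\langle\LI,\PF,\Len\rangle$ then there exist words $u,v,w$ and a "reachable loop'' such that the language $L_n$ (from \eqref{def-L_n}) contains, for a positive density of window lengths $n$, an exponential-size fooling set inside $\Sigma^n$ — concretely, there are $2^{\Omega(n)}$ strings $y_1,\dots,y_N\in\Sigma^{n/3}$ and a distinguishing extension scheme so that a streaming algorithm reading a prefix that encodes $y_i$ and then a suffix chosen to test index $j$ must determine whether $i=j$. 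The key point for the failure ratio is that one builds a \emph{single long stream} $x$ in which this test is repeated at $\Theta(m)$ many time instants (e.g. by concatenating many blocks), so that an algorithm with failure ratio $\phi<\phi_0$ must be correct at $\Omega(m)$ of these test instants, and at each such instant a correct answer encodes one bit about which of $2^{\Omega(n)}$ possibilities the window holds; a counting/entropy argument over the run distribution then forces $\mathrm{space}(R_n,\cdot)=\Omega(n)$. The reduction from "not in the Boolean closure'' to "fooling set of positive density along the stream'' is the main obstacle: one has to analyse the syntactic structure of $L$ — presumably via the reachability and recurrence structure of $A$ and a characterization of when $L_n$ has a large antichain under the relevant equivalence — and show that failure of membership in $\langle\LI,\PF,\Len\rangle$ is exactly what permits such a family. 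I would look to \cite{GHL16,GHKLM18} for the analogous deterministic linear lower bound (non-membership in $\langle\LI,\Len\rangle$) and adapt it, the new ingredient being that throwing in $\PF$ removes precisely those languages whose fooling sets, while large, occur at only $O(1)$ time instants per stream (as in part~(1)) and hence cannot be amplified to positive density; conversely everything outside $\langle\LI,\PF,\Len\rangle$ retains a density-$\Omega(1)$ supply of hard instances.
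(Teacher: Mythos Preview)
Your plan for part~(2) is roughly aligned with the paper (structural witness words plus a communication-complexity reduction to a promise version of $\mathrm{IDX}$), though you do not identify the key intermediate notion the paper uses, namely that $L\notin\langle\LI,\PF,\Len\rangle$ is equivalent to the DFA for $L^\rev$ failing to be \emph{idempotently well-behaved}; this is what produces the explicit words $x,y_0,y_1,z_0,z_1,u$ of Lemma~\ref{lemma-LI-PF-Len} that drive the $\mathrm{IDX}$ reduction.

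However, part~(1) contains a genuine error. You write that regular left ideals ``are already known from \cite{GHL16,GHKLM18} to admit constant-space deterministic \SWAs\ (a left ideal is `eventually insensitive' to the left end of the window: once a witnessing prefix appears it never disappears).'' This is false: in a sliding window the witness \emph{does} disappear as it slides out. The canonical example $\Sigma^* a \Sigma^*$ is a left ideal, and deterministically it needs $\Theta(\log n)$ space (one must remember the position of the most recent $a$). Indeed, in the paper's deterministic trichotomy left ideals sit in the $\mathcal{O}(\log n)$ class, not the $\mathcal{O}(1)$ class.

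The paper's actual argument for left ideals is the most substantial part of the upper bound and is genuinely randomized: it introduces the \emph{Bernoulli algorithm} (Section~\ref{sec-bernoulli}), which maintains for each state $q$ of the reverse DFA a single random bit whose probability of being $1$ equals $(1-\beta(n))^{\ell_w(q)}$, where $\ell_w(q)$ is the distance to acceptance. With $\beta(n)=\ln(1/\eps)/n$ this bit correctly separates $\ell_w(q_0)\le \xi n$ from $\ell_w(q_0)>n$ with probability $\ge 1-\eps$, and a separate combinatorial lemma (Lemma~\ref{lemma-l_v(q_0)}) shows that the fraction of time instants with $\ell_w(q_0)\in[\xi n,n]$ is at most $\phi$ for $\xi$ close enough to $1$. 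This is the missing idea in your proposal. Your treatment of the prefix-free case is also slightly off (different windows are not prefixes of one another; the paper's Lemma~\ref{lem:overlap} bounds the number of hits by $|Q|$ per length-$n$ interval using a pigeonhole on DFA states at a common overlap point), but there your conclusion is salvageable; for left ideals it is not.
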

Finally, for deterministic SWAs with an arbitrarily small non-zero failure ratio we get a space trichotomy:

\begin{theorem} \label{thm:trichotomy}
	Let $L \subseteq \Sigma^*$ be a regular language.
	\begin{enumerate}
	\item \label{upper-O(1)-failure-det} If $L \in \langle \LB, \PF, \SF, \Len \rangle$ and $0 < \phi \leq 1$,
	then $L$ has a deterministic \SWA\ with $f(\R,n) = \mathcal{O}(1)$ and failure ratio $\phi$.
	\item \label{lower-O(log n)-failure-det} If $L \notin \langle \LB, \PF, \SF, \Len \rangle$, then there exists a failure ratio $0 < \phi \leq 1$
	such that $f(\R,n) \notin o(\log n)$ for
	every deterministic \SWA\ $\R$ for $L$ with failure ratio $\phi$.
	\item \label{upper-O(log n)-failure-det} If $L \in \langle \LI, \PF, \Len \rangle$ and $0 < \phi \leq 1$,
	then $L$ has a deterministic \SWA\ with $f(\R,n) = \mathcal{O}(\log n)$ and failure ratio $\phi$.
	\item \label{lower-O(n)-failure-det} If $L \notin \langle \LI, \PF, \Len \rangle$, then there exists a failure ratio $0 < \phi \leq 1$
	such that $f(\R,n) \notin o(n)$ for every deterministic \SWA\ $\R$ for $L$ with failure ratio $\phi$.
	\end{enumerate}
\end{theorem}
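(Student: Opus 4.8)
The theorem makes four claims; claim~\eqref{lower-O(n)-failure-det} is essentially free. Indeed, a $(0,\phi)$-correct deterministic \SWA\ is in particular a $(1/3,\phi)$-correct \SWA, hence a randomized \SWA\ for $L$ with failure ratio $\phi$ (the set of failing time instants is the same, since a deterministic algorithm errs with probability $0$ or $1$). So claim~\eqref{lower-O(n)-failure-det} is exactly Theorem~\ref{thm:dichotomy}\eqref{lower-O(n)-failure}. For the upper bounds \eqref{upper-O(1)-failure-det} and \eqref{upper-O(log n)-failure-det} I would use the Boolean‑closure lemma at the end of Section~3 to reduce to the generating classes, splitting the failure budget $\phi$ among them: $\Len$ is handled by the trivial algorithm (failure ratio $0$), and $\LI$ by the deterministic $\mathcal{O}(\log n)$‑space \SWA\ from Theorem~\ref{thm:quatrochotomy}\eqref{point-O(log)} (again failure ratio $0$). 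Hence claim~\eqref{upper-O(log n)-failure-det} reduces to a deterministic $\mathcal{O}(\log n)$‑space \SWA\ for each $L\in\PF$ with failure ratio $\phi$, and claim~\eqref{upper-O(1)-failure-det} to deterministic $\mathcal{O}(1)$‑space \SWAs\ for each $L\in\PF$, $L\in\SF$ and $L\in\LB$ with failure ratio $\phi$; the former is implied by the latter. So the genuinely new content is: (A) constant‑space deterministic \SWAs\ with failure ratio $\phi$ for the classes $\PF$, $\SF$, $\LB$, and (B) the $\Omega(\log n)$ lower bound \eqref{lower-O(log n)-failure-det}.

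\textbf{Upper bounds (A).} Fix $L$ in one of $\PF$, $\SF$, $\LB$, a minimal DFA $A$ for $L$, and $\phi>0$. The algorithm maintains only a bounded amount of information about the active window: roughly, the run of the window in a small automaton derived from $A$, but \emph{capped} by a constant $\ell=\ell(\phi,L)$, so that this run is tracked exactly only as long as it is ``short'', and as soon as it becomes long the algorithm collapses to an overflow state and answers ``reject''. The heart of the matter is a combinatorial lemma stating that, for $L$ in these classes, the set of time instants at which this under‑approximation differs from the correct answer has density $\mathcal{O}(1/\ell)+\mathcal{O}(1/n)$ on every input stream. This relies on the prefix/suffix/bifix‑free structure — for $L\in\SF$ every word has at most one suffix in $L$, and for $L=\Sigma^*K\in\LB$ the word witnessing membership lies in the bifix‑free language $K$, so long occurrences cannot cluster — together with a pumping argument in $A$. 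Choosing $\ell$ large enough makes this density at most $\phi$ for all $n$ above a threshold $n_0(\phi,L)$; for the finitely many $n<n_0$ the window length is a constant, so one simply uses the minimal DFA of $L_n$, which has $\mathcal{O}(1)$ states. Recombining the generators via the Boolean‑closure lemma proves \eqref{upper-O(1)-failure-det} and \eqref{upper-O(log n)-failure-det}.

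\textbf{Lower bound (B).} First I would establish a forbidden‑pattern characterization of $\langle\LB,\PF,\SF,\Len\rangle$ in terms of the minimal DFA of $L$, refining the characterizations of $\langle\ST,\Len\rangle$ and $\langle\LI,\Len\rangle$ from \cite{GHL16,GHKLM18}. Assume $L\notin\langle\LB,\PF,\SF,\Len\rangle$; by claim~\eqref{lower-O(n)-failure-det} we may also assume $L\in\langle\LI,\PF,\Len\rangle$ (otherwise we already have $\Omega(n)$, which is stronger). The characterization then yields a ``counting gadget'': a state $q$ reachable from the start by a word $x$, a nonempty loop word $y$ at $q$, and a completion word $z$, such that the $L$‑membership of $\last_n(xy^iz)$, as a function of $i$ over a suitable range, behaves like a threshold (step) function rather than like a function of small period. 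Consider the stream obtained by iterating the block $x\,y^N z$ with $N=\Theta(n)$. Correctness forces the answer to switch from reject to accept (or vice versa) at a precise point well inside each block. But a deterministic \SWA\ with fewer than $n$ states, while reading the $n$ consecutive copies of $y$ inside a block, runs through a state sequence that is eventually periodic with period at most $n$; such a sequence cannot agree with a step function over a segment of length $\Theta(n)$ except on a constant fraction of it, so the failure ratio of the algorithm on this stream is bounded below by an absolute constant $\phi_0$. Taking $\phi<\phi_0$ shows that every deterministic \SWA\ for $L$ with failure ratio $\phi$ uses at least $\log_2 n$ bits, i.e.\ $\Omega(\log n)$ space, which is claim~\eqref{lower-O(log n)-failure-det}.

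\textbf{Main obstacle.} The technical core is the structural side: pinning down the exact forbidden‑pattern characterization of $\langle\LB,\PF,\SF,\Len\rangle$ so that ``not in the class'' \emph{robustly} produces the counting gadget (robustly enough to survive the $\phi$‑fraction of allowed errors), and — dually — proving the density lemma behind the constant‑space upper bounds, where the prefix/suffix/bifix‑free hypothesis enters in an essential but delicate interplay with pumping in the minimal DFA. The remaining steps (the reductions to generators, the periodicity‑versus‑step‑function counting, and the boundary cases for small $n$) are routine once these two ingredients are in place.
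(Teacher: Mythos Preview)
Your overall structure matches the paper's: claim~\eqref{lower-O(n)-failure-det} is Theorem~\ref{thm:dichotomy}\eqref{lower-O(n)-failure}; claim~\eqref{upper-O(log n)-failure-det} follows from claim~\eqref{upper-O(1)-failure-det} together with the known $\mathcal{O}(\log n)$ deterministic \SWA\ for $\LI$; and both the remaining upper and lower bounds proceed exactly as you outline --- reduce to the generators via Boolean closure for the upper bound, and for the lower bound extract a forbidden pattern (the paper's Lemma~\ref{lem:lb-pf-sf-len}) that yields a threshold-counting gadget, then argue that a small ultimately-periodic state sequence cannot match a step function on $[0,m]$ except on a constant fraction of positions (this is precisely the paper's Lemma~\ref{lem:counter}).

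The one place your sketch diverges is in the concrete upper-bound algorithms. You describe a ``capped-run'' algorithm with a tunable constant $\ell=\ell(\phi,L)$ and a density bound $\mathcal{O}(1/\ell)+\mathcal{O}(1/n)$. The paper's constructions are simpler and dispense with $\ell$ entirely: for $L\in\PF$ (and dually $\SF$) it uses the \emph{one-state always-reject} automaton, and an overlap/pigeonhole lemma (Lemma~\ref{lem:overlap}) shows that in any interval of $n$ consecutive time instants at most $|Q|$ windows can lie in $L$, so the failure ratio of always-reject is at most $2|Q|/n$. For $L=\Sigma^*K\in\LB$ it simply runs the fixed DFA for $L$ itself, started in $\delta(q_0,\Box^n)$; this can only produce false positives, and the same overlap lemma (applied to the prefix-free $K$) bounds those by $|Q|$ per length-$n$ interval. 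Your approach points at the right density phenomenon, but the description is vague --- ``the run of the window'' is not something one can maintain in constant space without further explanation --- and the extra parameter $\ell$ is unnecessary once one sees these more direct constructions.
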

Note that point \eqref{lower-O(n)-failure-det} from Theorem~\ref{thm:trichotomy} is an immediate
corollary of  Theorem~\ref{thm:dichotomy}\eqref{lower-O(n)-failure}. Also note that 
point \eqref{upper-O(log n)-failure-det} from Theorem~\ref{thm:trichotomy} follows from 
Theorem~\ref{thm:trichotomy}\eqref{upper-O(1)-failure-det} and 
Theorem~\ref{thm:quatrochotomy}\eqref{point-O(log)}.

\begin{figure}

\tikzstyle{class}=[minimum size = 20pt]

\begin{tikzpicture}

	\node [class] (2a) {$\Reg$};
	\node [class, below = 15pt of 2a] (2b) {$\langle \LI,\PF,\Len \rangle$};
	\node [class, below left = 15pt and -20pt of 2b] (2c) {$\langle \LI, \Len \rangle$};
	\node [class, below right = 15pt and -35pt of 2b] (2f) {$\langle \LB, \PF, \SF, \Len \rangle$};
	\node [class, below right = 15pt and -20pt of 2c] (2d) {$\langle \ST,\SF,\Len \rangle$};
	\node [class, below = 15pt of 2d] (2e) {$\langle \ST,\Len \rangle$};

	\draw (2a) -- (2b) -- (2c) -- (2d) -- (2e);
	\draw (2b) -- (2f) -- (2d);
	
	\draw (2a) ++ (1.5,0) node {$a\Sigma^*$};
	\draw (2b) ++ (3,0) node {$a\{a,b\}^*c \cup \{a,b\}^*$};
	\draw (2c) ++ (-1.5,0) node {$\Sigma^*a\Sigma^*$};
	\draw (2d) ++ (2,0) node {$ab^*$};
	\draw (2e) ++ (2,0) node {$\Sigma^*a$};
	\draw (2f) ++ (2,0) node {$a^*b$};

\end{tikzpicture}

\caption{\label{class-containment} Examples for all occurring language classes where $\Sigma = \{a,b,c\}$.}
\end{figure}
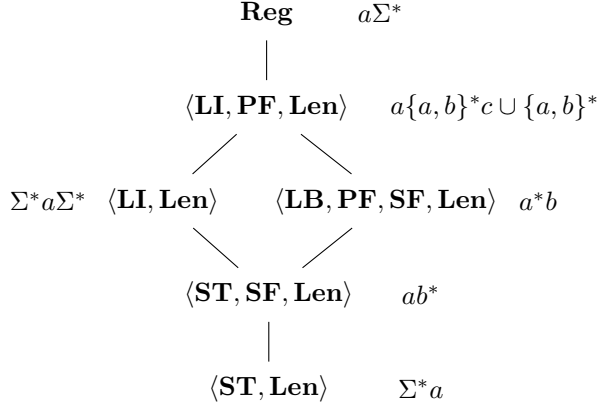

Figure~\ref{fig-survey} gives an overview on our main results.
Figure~\ref{class-containment} shows an inclusion diagram for the language classes in 
Figure~\ref{fig-survey}. One can show that the example languages in Figure~\ref{class-containment} 
witness the strictness of the inclusions.

\section{Upper bounds}

\subsection{Deterministic algorithms with arbitrarily small failure ratio} 

In this section, we prove the remaining upper bound \eqref{upper-O(1)-failure-det}
from Theorem~\ref{thm:trichotomy}. For this we consider the four base cases:
$L \in \LB$, $L \in \PF$,  $L \in \SF$, and $L \in \Len$. The case $L \in \Len$ is obvious:

\begin{lemma}
	\label{lem:length-lang}
	Every length language $L \subseteq \Sigma^*$ has a deterministic \SWA\ $\R$
	with $f(\R,n) = \mathcal{O}(1)$.
\end{lemma}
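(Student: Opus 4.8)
The plan is to exploit that, for a length language, the sliding window problem $L_n$ degenerates to a constant: the active window $\last_n(x)$ always has length exactly $n$, so whether it lies in $L$ does not depend on $x$ at all, only on $n$.

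Concretely, first I would set $S = \{ k \in \mathbb{N} : \Sigma^k \subseteq L \}$; by the definition of $\Len$, for each $k$ either $k \in S$ or $\Sigma^k \cap L = \emptyset$, so $L = \bigcup_{k \in S} \Sigma^k$. Next, fixing $n \ge 0$, I would recall that $\last_n(x)$ is by definition the length-$n$ suffix of $\square^n x$, hence $\last_n(x) \in \Sigma^n$ for every $x \in \Sigma^*$; therefore $\last_n(x) \in L$ iff $n \in S$, and the language $L_n = \{ x \in \Sigma^* : \last_n(x) \in L \}$ equals $\Sigma^*$ when $n \in S$ and equals $\emptyset$ otherwise. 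Finally, I would take $R_n$ to be the one-state deterministic streaming algorithm that loops on every input symbol and whose single state is final iff $n \in S$; this $R_n$ is correct for $L_n$, and a one-state automaton can be encoded with $0$ bits, so $f(\R,n) = \mathcal{O}(1)$.

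There is essentially no obstacle here; the only points worth checking are the elementary facts that $\last_n(x)$ has length exactly $n$ (immediate from its definition as the length-$n$ suffix of $\square^n x$) and that a single-state automaton decides each $L_n$, which the space-complexity definition permits. If one additionally wants the family $\R = (R_n)_{n \ge 0}$ to be uniform in $n$, one notes that $S$ is ultimately periodic since $L$ is regular, so the bit $[\,n \in S\,]$ is computable from $n$; this is not needed for the stated (non-uniform) bound.
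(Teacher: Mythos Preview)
Your proof is correct and is precisely the obvious argument the paper has in mind; indeed, the paper states this lemma without proof, and your observation that $L_n$ is either $\Sigma^*$ or $\emptyset$ (so a single-state automaton suffices) is the natural justification.
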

Next, we consider regular prefix-free languages.
Recall the definition of the word $w[i,j]$ for a word $w$ and a (possibly empty) interval $[i,j]$; see Section~\ref{prel}.
Let $S$ be a set of subintervals of $[1,m]$.
We call $S$ {\em overlapping} if $\bigcap_{I \in S} I \neq \emptyset$
and $S$ {\em increasing} if for all $[i,j], [k,\ell] \in S$, $j=\ell$ implies we have $i=k$;
see Figure~\ref{fig-overlapping} for an illustration.

\begin{figure}[h]
	\begin{tikzpicture}
		\draw[|-|] (0,1.2) -- (2,1.2);
		\draw[|-|] (1,0.9) -- (2.5,0.9);
		\draw[|-|] (-0.5,0.6) -- (3,0.6);
		\draw[-, thick] (-0.5,0.6) -- (3,0.6);
		\draw[|-|] (0.5,0.3) -- (3.5,0.3);
		\draw[|-|] (0,0) -- (4,0);
		\draw[-, thick] (1.5,0) -- (4,0);

		\node[fill, circle, inner sep = 1pt] (a) at (1.5,0.6) {};
		\node[fill, circle, inner sep = 1pt] (b) at (1.5,0) {};
		\draw[densely dotted] (a) -- (b);
	\end{tikzpicture}
	\caption{\label{fig-overlapping} An overlapping increasing set of words.}
\end{figure}
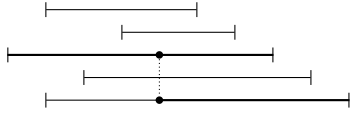

\begin{lemma}
	\label{lem:overlap}
	Let $L \in \PF$, $A = (Q,\Sigma,q_0,\delta,F)$ be a DFA for $L$, $w = a_1 \cdots a_m \in \Sigma^*$
	and $S$ be a set of subintervals of $[1,m]$, 
	which is increasing and overlapping.
	If $\{ w[i,j] \colon [i,j] \in S \} \subseteq L$ then $|S| \le |Q|$.
\end{lemma}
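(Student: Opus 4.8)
The plan is to order the intervals in $S$ by their right endpoints and show that the associated DFA-states after reading the corresponding words are pairwise distinct; since these states all lie in $Q$, this forces $|S| \le |Q|$. The key structural facts we will exploit are: (1) $S$ is increasing, so the right endpoint $j$ of an interval $[i,j] \in S$ determines $[i,j]$ uniquely — hence $|S|$ equals the number of distinct right endpoints, and we may write $S = \{[i_1,j_1], \dots, [i_s,j_s]\}$ with $j_1 < j_2 < \dots < j_s$; and (2) $S$ is overlapping, so there is a common point $p \in \bigcap_{[i,j] \in S} [i,j]$, meaning $i_k \le p \le j_k$ for every $k$.

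\begin{proof}
	Since $S$ is increasing, distinct intervals in $S$ have distinct right endpoints: if $[i,j],[k,\ell] \in S$ and $j = \ell$ then $i = k$, so the interval is determined by its right endpoint. Write $S = \{ I_1, \dots, I_s\}$ with $I_k = [i_k,j_k]$ and $j_1 < j_2 < \dots < j_s$, so that $s = |S|$. Since $S$ is overlapping, pick $p \in \bigcap_{k=1}^s I_k$, so $i_k \le p \le j_k$ for all $k$.

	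For $k \in [1,s]$ let $q_k = \delta(q_0, w[i_k,j_k])$. We claim that the states $q_1, \dots, q_s$ are pairwise distinct, which immediately gives $s \le |Q|$. Suppose for contradiction that $q_k = q_\ell$ with $k < \ell$, so $j_k < j_\ell$. Let $x = w[i_k, j_k]$ and $y = w[i_\ell, j_k]$; note $y$ is a prefix of $w[i_\ell, j_\ell]$ because $i_\ell \le p \le j_k < j_\ell$ (using that $p$ lies in $I_\ell$, hence $i_\ell \le p$, and $p \le j_k$). Set $z = w[j_k+1, j_\ell]$, which is nonempty since $j_k < j_\ell$, so that $w[i_\ell,j_\ell] = y z$ and also $w[i_k,j_\ell] = x z$ (again $i_k \le p \le j_k$ makes $x$ a prefix of $w[i_k,j_\ell]$... more precisely $w[i_k,j_k] = x$ and $w[j_k+1,j_\ell] = z$ concatenate to $w[i_k,j_\ell] = xz$). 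Now $x = w[i_k,j_k] \in L$ by hypothesis, and $w[i_\ell,j_\ell] \in L$ by hypothesis. Using $\delta(q_0,x) = q_k = q_\ell = \delta(q_0,y)$ we get
	\[
		\delta(q_0, x z) = \delta(q_k, z) = \delta(q_\ell, z) = \delta(q_0, y z) = \delta(q_0, w[i_\ell,j_\ell]).
	\]
	Hence $\delta(q_0, xz) \in F$, i.e.\ $xz \in L$. But $x \in L$ and $xz \in L$ with $z \neq \varepsilon$ contradicts the fact that $L$ is prefix-free. Therefore $q_1, \dots, q_s$ are pairwise distinct and $|S| = s \le |Q|$.
\end{proof}

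**Main obstacle.** The delicate point is the bookkeeping with the endpoints: one must verify that the overlap point $p$ really guarantees that the shorter word $x$ (read from $i_k$) and the word $y$ (read from $i_\ell$) are both \emph{prefixes} of the full words $w[i_k,j_\ell]$ and $w[i_\ell,j_\ell]$ respectively, so that appending the common tail $z = w[j_k+1,j_\ell]$ produces words that are genuine ``extensions'' in the sense needed to invoke prefix-freeness. The inequality chain $i_k \le p \le j_k < j_\ell$ (and likewise for $i_\ell$) is exactly what makes this work, and it is worth stating explicitly that it is here — and only here — that the overlapping hypothesis is used, while the increasing hypothesis is used to pass from ``$|S|$'' to ``number of distinct right endpoints''. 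Beyond that, the argument is a routine fooling/pumping argument against the DFA.
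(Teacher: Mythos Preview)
Your argument contains a genuine error. You define $q_k = \delta(q_0, w[i_k,j_k])$, the state reached after reading the \emph{entire} $k$-th word, and then in the displayed computation you write $q_\ell = \delta(q_0,y)$ with $y = w[i_\ell,j_k]$. But by your own definition $q_\ell = \delta(q_0, w[i_\ell,j_\ell])$, and since $j_k < j_\ell$ the word $y$ is only a proper prefix of $w[i_\ell,j_\ell]$; there is no reason for $\delta(q_0,y)$ to equal $q_\ell$. So the chain $\delta(q_0,x) = q_k = q_\ell = \delta(q_0,y)$ is unjustified, and the contradiction does not follow.

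Worse, the claim you are trying to establish --- that the states $q_1,\dots,q_s$ are pairwise distinct --- is simply false. Take $L=\{ab,ba\}$ (prefix-free), a DFA with a single final state $q_F$ reached by both words, $w=aba$, and $S=\{[1,2],[2,3]\}$. Then $S$ is increasing and overlapping (common point $2$), both $w[1,2]=ab$ and $w[2,3]=ba$ lie in $L$, yet $q_1=q_2=q_F$. The lemma still holds here ($|S|=2\le |Q|$), but your stronger intermediate claim does not. The right objects to compare are the states reached at the \emph{common point}: set $r_k=\delta(q_0,w[i_k,p])$. If $r_k=r_\ell$ with $j_k<j_\ell$, then appending $w[p+1,j_k]$ to both shows $w[i_\ell,j_k]\in L$, which is a proper prefix of $w[i_\ell,j_\ell]\in L$ --- the desired contradiction. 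This is exactly the route the paper takes.
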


\begin{proof}
	Assume that $|S| \ge |Q|+1$ and let $\ell \in \bigcap_{I \in S} I$.
	Thus, for every $[i,j] \in S$, $w[i,\ell]$ is a prefix of $w[i,j]$.
	By the pigeonhole principle there are two distinct intervalls
	$[\ell_1, r_1], [\ell_2, r_2] \in S$ such that
	$\delta(q_0, w[\ell_1,\ell]) = \delta(q_0, w[\ell_2,\ell])$.
	This implies $\delta(q_0, w[\ell_1,r_1]) = \delta(q_0, w[\ell_2,r_1])$.
	Since $w[\ell_1,r_1] \in L$, we also have $w[\ell_2,r_1] \in L$.
	Since $S$ is increasing, we have $r_1 \neq r_2$; assume w.l.o.g.~that $r_1 < r_2$.
	But then $w[\ell_2,r_1]$ is a proper prefix of $w[\ell_2,r_2]$.
	Since both $w[\ell_2,r_1]$ and $w[\ell_2,r_2]$  belong to $L$,
	this contradicts the prefix-freeness of $L$.
\end{proof}

\begin{theorem}
	\label{thm:prefix-free}
	Let $0 < \phi \le 1$ and let $L \in \PF$.
	Then $L$ has a deterministic \SWA\ with $f(\R,n) = \mathcal{O}(1)$ and failure ratio $\phi$.
\end{theorem}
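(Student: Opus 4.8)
The plan is to give a deterministic sliding window algorithm for $L \in \PF$ that uses only $\mathcal{O}(1)$ bits and errs at only a $\phi$-portion of time instants, for any fixed $\phi > 0$. The key structural fact is Lemma~\ref{lem:overlap}: if $A = (Q,\Sigma,q_0,\delta,F)$ is a DFA for the prefix-free language $L$ and $w = a_1 \cdots a_m$ is the stream read so far, then for a fixed right endpoint the number of suffixes of $w$ that lie in $L$ is bounded — more precisely, any increasing overlapping family of intervals $[i,j]$ with $w[i,j] \in L$ has size at most $|Q|$. Since prefix-freeness means no two factors ending at the same position can both be in $L$ unless they are equal, at each position $j$ there is \emph{at most one} length $\ell_j$ with $w[j-\ell_j+1,j] \in L$; and as $j$ increases, the starting positions $j - \ell_j + 1$ of these unique factors are distinct (they form an increasing set). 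So among any $|Q|+1$ consecutive right endpoints whose $L$-factors all share a common position, at most $|Q|$ can carry an $L$-factor — but more usefully, the positions $j$ that \emph{do} carry an $L$-factor have the property that any window of consecutive positions contains boundedly many of them once we restrict to those whose factor overlaps a fixed point. I would extract from this the statement: for every stream $w$ and every window length $n$, the set of time instants $t \in [0,m]$ at which $\last_n(a_1 \cdots a_t) \in L$ — equivalently at which the active window is in $L$ — is "sparse" in the sense that its density within any sufficiently long block is $O(1/n)$, hence globally at most some $c/n$ for a constant $c$ depending only on $|Q|$.

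The algorithm itself is the trivial one: always output $0$ (reject). The task then reduces to bounding the number of time instants $t$ where the correct answer is $1$, i.e. where $\last_n(a_1\cdots a_t) \in L$. First I would handle the initial-window technicality: for $t < n$ the active window is $\square^{n-t} a_1 \cdots a_t$, which has $\square^{\,\ge 1}$ as a prefix, so if $\square^k \notin L$ for some small $k$ we only have finitely many such bad early instants; in general one argues these contribute $O(n)$ worst case but we will see this is absorbed. The heart is $t \ge n$: here the active window is exactly $w[t-n+1,t]$, a factor of $w$ of length exactly $n$. The correct answer is $1$ iff $w[t-n+1,t] \in L$. Now apply Lemma~\ref{lem:overlap}: consider the set $S$ of all intervals $[t-n+1,t] \subseteq [1,m]$ with $w[t-n+1,t] \in L$ that occur as we vary $t$. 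All these intervals have the \emph{same length} $n$, so $S$ is automatically increasing (equal right endpoints force equal intervals). Any two of them with $|t - t'| < n$ overlap. Hence any sub-collection of $S$ whose right endpoints lie in a window of $n$ consecutive values is overlapping and increasing, so by Lemma~\ref{lem:overlap} has size at most $|Q|$. Partitioning $[n, m]$ into blocks of $n$ consecutive integers, each block contributes at most $|Q|$ bad instants, so $|\FT| \le |Q| \cdot \lceil (m+1)/n \rceil + O(n)$...

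Here is the obstacle I expect to be the main one, and how I would resolve it: the bound $|Q|\cdot(m/n) + O(n)$ is \emph{not} $o(m)$ — the $O(n)$ term from small $t$, and more importantly the failure ratio $|Q|/n$, is only small when $n$ is large, but we need correctness for \emph{all} $n$, including small $n$ where $|Q|/n$ may exceed $\phi$. The fix is that the $\mathcal{O}(1)$-space bound lets us afford, for each of the finitely many "small" window lengths $n < n_0(\phi) := \lceil |Q|/\phi \rceil$ (finitely many, hence a bounded amount of total information), the \emph{exact} deterministic algorithm: for fixed small $n$, the language $L_n$ from \eqref{def-L_n} is regular, accepted by a DFA whose size depends on $n$ and $|Q|$ but is a \emph{constant} since $n < n_0(\phi)$ is bounded — so $f(\R,n) = O(1)$ uniformly and failure ratio $0 \le \phi$ there. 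For $n \ge n_0(\phi)$, we additionally need the $O(n)$ early-instants term to be $\le \phi(m+1)$; but those early instants only matter when $m$ is itself small (of order $n/\phi$), and for such bounded-length streams we can again just run the exact DFA for $L_n$ — except its size now depends on $n$ unboundedly. The cleaner resolution: observe that for $t < n$ the window $\square^{n-t}a_1\cdots a_t$ lies in $L$ for at most $|Q|$ values of $t$ as well, by the same Lemma~\ref{lem:overlap} argument applied to the stream $\square^n w$ (all these windows end at the fixed "virtual" position $n$ and are nested, hence overlapping and increasing). So in fact $|\FT(R_n, w, L_n, 0)| \le |Q| \cdot \lceil (m+1)/n \rceil + |Q|$, and for $n \ge \lceil 2|Q|/\phi \rceil$ and $m \ge n$ this is at most $\phi \cdot (m+1)$. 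Combining the two regimes gives the theorem. I would write the argument in this order: (1) the always-reject algorithm and its space; (2) the Lemma~\ref{lem:overlap} counting bound for windows of fixed length, both for $t \ge n$ and for the initial-window instants; (3) the threshold choice $n_0 = \lceil 2|Q|/\phi\rceil$; (4) for $n < n_0$, fall back to the exact constant-size DFA for $L_n$. The single subtle point deserving care is step (2)'s application to the initial windows, where one works with the padded stream $\square^n w$ and checks the nestedness.
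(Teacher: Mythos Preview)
Your proposal is correct and follows essentially the same approach as the paper: use the always-reject one-state algorithm, bound the number of time instants with active window in $L$ via Lemma~\ref{lem:overlap} by partitioning $[0,m]$ into blocks of length at most $n$, and for the finitely many small window sizes $n < n_0(\phi)$ fall back to the exact DFA for $L_n$.

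Two minor remarks. First, the paper handles the cases $t<n$ and $t\ge n$ uniformly: working implicitly in the padded stream $y=\square^n x$, the active window at every time $t\in[0,m]$ is $y[t+1,t+n]$, so one partitions $[0,m]$ into $\lceil (m+1)/n\rceil$ blocks of size $\le n$ and applies Lemma~\ref{lem:overlap} once per block, obtaining directly $|Q|\cdot\lceil (m+1)/n\rceil$ with no additive $+|Q|$ term. Your separate treatment of $t<n$ is unnecessary but harmless. Second, your parenthetical justification for the $t<n$ case is misstated: the windows $\square^{n-t}a_1\cdots a_t$ do \emph{not} end at a fixed position and are \emph{not} nested---in the padded stream they are the intervals $[t+1,t+n]$ for $t\in[0,n-1]$, all of length $n$. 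They \emph{do} all contain position $n$, hence are overlapping, and have distinct right endpoints, hence increasing; so Lemma~\ref{lem:overlap} applies as you claim, just not for the reason you wrote.
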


\begin{proof}
	Let $\R = (R_n)_{n \ge 0}$ where $R_n$ is the one-state automaton which always rejects.
	Furthermore let $A = (Q,\Sigma,q_0,\delta,F)$ be a DFA for $L$.
	Let $n \ge 0$ be a window size and $x = a_1 \cdots a_m$ be an input stream where $m \ge n$.
	We have
	\[
		\phi(R_n,x,L_n,0) = \frac{1}{m+1} \cdot |\{ t \in [0,m] : \last_n(a_1 \cdots a_t) \in L \}| .
	\]
	Consider an interval $I = [i,j] \subseteq [0,m]$ of size $j-i+1 \leq n$.
	By Lemma~\ref{lem:overlap} there are at most $|Q|$ many indices $t \in I$ such that $\last_n(a_1 \cdots a_t) \in L$.
	Since we can partition $[0,m]$ into  $\lceil \frac{m+1}{n} \rceil$ many intervals of length at most $n$,
	we have
	\begin{eqnarray*}
		\frac{1}{m+1} \cdot |\{ i \in [0,m] : \last_n(a_1 \cdots a_i) \in L \}|
		&\le& \frac{|Q|}{m+1} \cdot \left\lceil \frac{m+1}{n} \right\rceil \\
		&\le & \frac{|Q|}{m+1} \cdot \left(\frac{m+1}{n} + 1\right) \\
		&=& \frac{|Q|}{n} + \frac{|Q|}{m+1} \le \frac{2|Q|}{n}.
	\end{eqnarray*}
	Since this number converges to $0$ for increasing $n$, there exists a window size $n_0$ such that for all $n \ge n_0$
	and all $x \in \Sigma^{\ge n}$ we have $\phi(R_n,L_n,x,0) \le \phi$.
	Finally, for all $n < n_0$ we replace the algorithm $R_n$ by a trivial deterministic DFA for $L_n$,
	to obtain the failure ratio for all window lengths.
	This concludes the proof. 
\end{proof}
The arguments from the proof of Theorem~\ref{thm:prefix-free}
can be used for suffix-free regular languages as well:

\begin{theorem}
	\label{thm:suffix-free}
	Let $0 < \phi \le 1$ and let $L \in \SF$.
	Then $L$ has a deterministic \SWA\ with $f(\R,n) = \mathcal{O}(1)$ and failure ratio $\phi$.
\end{theorem}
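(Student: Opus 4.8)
The plan is to imitate the proof of Theorem~\ref{thm:prefix-free}. I would take $\R = (R_n)_{n \ge 0}$ where each $R_n$ is the one-state automaton that always rejects, and bound the number of time instants at which it produces a wrong answer. Since $L$ is suffix-free, its reversal $L^\rev$ is prefix-free, so I would fix a DFA $B$ with state set $Q$ for $L^\rev$ and aim to show the following: on any input $x = a_1 \cdots a_m$ with $m \ge n$, within every block of time instants of length at most $n$ there are at most $|Q|$ instants $t$ with $\last_n(a_1 \cdots a_t) \in L$. The point is that the algorithm never reverses anything; the reversal happens only in the analysis, where it lets us invoke Lemma~\ref{lem:overlap} for the prefix-free language $L^\rev$.

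To set up the reversal cleanly I would first pad the stream: let $w = \square^n x$, so that $\last_n(a_1 \cdots a_t) = w[t+1, t+n]$ for every $t \in [0,m]$; this turns all windows (including the ``short'' ones for $t < n$) into genuine factors of one word. Passing to $v = w^\rev$, a short index computation gives $w[t+1,t+n] \in L$ iff $v[m - t + 1,\, m - t + n] \in L^\rev$. For a block $I$ of time instants with $|I| \le n$, the intervals $\{[m-t+1, m-t+n] : t \in I,\ v[m-t+1,m-t+n] \in L^\rev\}$ all have length $n$, so they have pairwise distinct endpoints and the set is trivially \emph{increasing} in the sense of Lemma~\ref{lem:overlap}; and it is \emph{overlapping} because all of them contain the point $m - t_0 + 1$, where $t_0$ is the least such $t$, using $t_0 \le t \le t_0 + n - 1$. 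Lemma~\ref{lem:overlap} applied to $v$ and $L^\rev$ then bounds the size of this set, hence the number of failing instants in $I$, by $|Q|$.

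From here the argument is verbatim that of Theorem~\ref{thm:prefix-free}: partitioning $[0,m]$ into $\lceil (m+1)/n \rceil$ blocks of length at most $n$ bounds the failure ratio of $R_n$ on $x$ by $\tfrac{|Q|}{m+1}\lceil \tfrac{m+1}{n}\rceil \le \tfrac{2|Q|}{n}$, which is at most $\phi$ once $n \ge n_0$ for a suitable $n_0$. For the finitely many lengths $n < n_0$ I would replace $R_n$ by the minimal DFA for $L_n$, which has constant size, so $f(\R,n) = \mathcal{O}(1)$ and the failure ratio is $\phi$ for every $n$. I expect the only real friction to be the index bookkeeping in the reversal step — tracking where a window of $w$ lands inside $v = w^\rev$ and re-checking the overlapping condition — together with not forgetting to pad by $\square^n$ so that the windows for $t < n$ are also covered; everything after that is a direct copy of the prefix-free counting.
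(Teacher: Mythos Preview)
Your proposal is correct and follows exactly the approach the paper takes: reverse the stream, apply Lemma~\ref{lem:overlap} to the prefix-free language $L^\rev$ with its DFA, and then copy the counting argument from Theorem~\ref{thm:prefix-free}. The paper's own proof is terser (it simply says ``the same argument as above shows\ldots'' without spelling out the index bookkeeping you carefully set up with the padding $w = \square^n x$ and the reversal $v = w^\rev$), but the substance is identical.
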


\begin{proof}
	Let $A = (Q,\Sigma,q_0,\delta,F)$ be a DFA for $L^\rev$.
	Again we consider a window size $n \ge 0$, an input stream $x = a_1 \cdots a_m$ of length at least $n$,
	and an interval $I  \subseteq [0,m]$ of size at most $n$.
	The same argument as above shows that there are at most $|Q|$ many indices $i \in I$
	such that $\last_n(a_1 \cdots a_i)^\rev \in L^\rev$, or equivalently $\last_n(a_1 \cdots a_i) \in L$.
	We can now conclude as in the proof of Theorem~\ref{thm:prefix-free}.
\end{proof}
Finally, we consider the remaining case of a language from $\LB$:
\begin{theorem}
	Let $0 < \phi \le 1$ and $L \in \LB$.
	Then $L$ has a deterministic \SWA\ with $f(\R,n) = \mathcal{O}(1)$ and failure ratio $\phi$.
\end{theorem}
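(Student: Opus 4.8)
The plan is to follow the template of Theorem~\ref{thm:prefix-free} and reduce to an algorithm that only remembers a bounded-length suffix of the stream. Write $L = \Sigma^* M$ with $M \in \PF \cap \SF$ regular, and dispose of the trivial cases $M = \emptyset$ and $M = \{\eps\}$, where $L$ is $\emptyset$ or $\Sigma^*$; thus every word of $M$ has length at least~$1$. Fix a minimal DFA $A = (Q,\Sigma,q_0,\delta,F)$ for $M$. For a constant $c$ of order $|Q|/\phi$ (fixed at the end) I would let $R_n$, for $n \ge c$, store $\last_c(a_1\cdots a_t)$ --- this costs only $c\lceil\log|\Sigma|\rceil = \mathcal{O}(1)$ bits --- and accept if and only if some suffix of this stored word lies in $M$; for the finitely many $n < c$ I would simply take the minimal DFA for $L_n$. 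Then $f(\R,n) = \mathcal{O}(1)$ is immediate, and the whole issue is to bound the failure ratio of $R_n$ for $n \ge c$.

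Fix $n \ge c$ and a stream $x = a_1\cdots a_m$ with $m \ge n$, and write $x_t = a_1\cdots a_t$. Since $c \le n$, the word $\last_c(x_t)$ is a suffix of $\last_n(x_t)$, so the suffixes of $\last_c(x_t)$ are exactly the suffixes of $\last_n(x_t)$ of length at most $c$; as $M$ is suffix-free, $\last_n(x_t)$ has at most one suffix in $M$. Hence $R_n$ never reports a false positive, and it is wrong at time $t$ precisely when the unique suffix $u \in M$ of $\last_n(x_t)$ exists and satisfies $|u| > c$. I would split these bad instants according to whether $u$ lies entirely inside $x_t$ (\emph{type A}: then $u = a_{t-|u|+1}\cdots a_t$ is an $M$-factor of $x$ of length $>c$ ending at $t$) or reaches into the $\square$-padding of $\last_n(x_t)$ (\emph{type B}: then $t < n$ and $u = \square^{|u|-t}a_1\cdots a_t$ with $|u|-t \ge 1$).

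For type A I would argue as in the proof of Theorem~\ref{thm:prefix-free}: in any block $J$ of $\lfloor c/2\rfloor+1$ consecutive time instants, a type-A instant $t\in J$ with $M$-factor $[s,t]$ of length $>c$ has $s < t-c+1 \le \min J$, so $[s,t]$ contains the leftmost type-A instant of $J$; thus these $M$-factors form an increasing overlapping family and Lemma~\ref{lem:overlap} bounds their number by $|Q|$. Partitioning $[0,m]$ into such blocks yields at most $\mathcal{O}(|Q|/c)\cdot(m+1)$ type-A instants. For type B the crucial point is that $u = \square^{|u|-t}a_1\cdots a_t \in M$ forces $x_t$ to be a suffix of a word of $M$, i.e.\ $x_t \in \mathrm{Suffixes}(M) := \{v : v \text{ is a suffix of some } u' \in M\}$. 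I would then show $\mathrm{Suffixes}(M)$ has no long chains for the prefix order: it equals $\bigcup_{q' \in Q} L(A_{q'})$ where $A_{q'}$ is $A$ started in $q'$, and since the minimal DFA of a prefix-free language sends every final state to a dead (rejecting sink) state under every letter, each $L(A_{q'})$ is prefix-free; hence any prefix-chain $v_1 \prec v_2 \prec \cdots$ inside $\mathrm{Suffixes}(M)$ has length at most $|Q|$. As the stream prefixes $x_0 \prec x_1 \prec \cdots \prec x_m$ that lie in $\mathrm{Suffixes}(M)$ form such a chain, there are at most $|Q|$ type-B instants in the whole run.

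Combining, the number of bad instants is at most $\mathcal{O}(|Q|/c)\cdot(m+1) + |Q|$; choosing $c = \Theta(|Q|/\phi)$ makes the first term at most $\phi(m+1)/2$, and since $m+1 \ge n+1 > c = \Theta(|Q|/\phi)$ the term $|Q|$ is likewise at most $\phi(m+1)/2$, so $R_n$ has failure ratio at most $\phi$; the finitely many $R_n$ with $n < c$ are exact. I expect the type-B analysis to be the real obstacle: it relies on the structural fact that $\mathrm{Suffixes}(M)$ cannot contain arbitrarily long prefix-chains, which is exactly where prefix-freeness of $M$ (and the shape of its minimal DFA) is used; the type-A estimate is a routine adaptation of the prefix-free case, and if one prefers to avoid the minimal DFA the same prefix-chain bound can be extracted from a pumping argument on any DFA for $\mathrm{Suffixes}(M)$.
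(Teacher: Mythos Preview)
Your argument is correct, but it follows a genuinely different route from the paper's proof.

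The paper takes the DFA $A$ for $L$ itself as the algorithm: $R_n$ is $A$ started in the state $\delta(q_0,\Box^n)$. Since $L$ is a left ideal, this can only produce \emph{false positives}: whenever $\last_n(x[1,t])\in L$, also $\Box^n x[1,t]\in L$, so $R_n$ accepts. A failure at time $t$ therefore forces a $K$-factor $y[i,n+t]$ of $y=\Box^n x$ of length strictly greater than $n$ (hence $i\le t$); for $t$ ranging over any block $J\subseteq[0,m]$ of size at most $n$ these intervals all share a common point, and a single application of Lemma~\ref{lem:overlap} bounds the number of failures in $J$ by a constant. There is no case distinction and no need to look at $\mathrm{Suffixes}(M)$ at all.

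Your construction is in a sense dual. By remembering only $\last_c$ and testing it for an $M$-suffix you obtain only \emph{false negatives}, and the analysis naturally splits according to whether the missed $M$-suffix lies entirely inside $x$ (your type~A, handled by Lemma~\ref{lem:overlap} on blocks of size $\Theta(c)$) or reaches into the $\Box$-padding (type~B). The type-B bound---that prefix-chains in $\mathrm{Suffixes}(M)=\bigcup_{q'\in Q}L(A_{q'})$ have length at most $|Q|$ because in the minimal DFA for a prefix-free language every $L(A_{q'})$ is itself prefix-free---is correct and exploits the prefix-freeness of $M$ in a different way than the paper does. The price is a two-case argument and an algorithm with roughly $|\Sigma|^{\Theta(|Q|/\phi)}$ states rather than the $|Q|$ states of the DFA for $L$; both counts are $\mathcal{O}(1)$, so this is immaterial for the theorem, but the paper's route is shorter.
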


\begin{proof}
	Let $A = (Q,\Sigma,q_0,\delta,F)$ be a DFA for $L$ and assume that $L = \Sigma^* K$ where $K \in \PF$. 
	Basically we will use the DFA $A$ itself as a sliding window algorithm.
	
	Let $n \ge 0$ be a window size and $x = a_1 \cdots a_m$ be an input stream where $m \ge n$.
 	Define $R_n = (Q,\Sigma,\delta(q_0,\Box^n),\delta, F)$.
	Setting $y := \Box^n x$, we have:
	\begin{equation}
		\label{eq:Rn-accept}
		R_n \text{ accepts } x[1,t] \iff y[1,n+t]  \in L \iff \exists i \in [1,n+t+1] :  y[i,n+t] \in K
	\end{equation}
	Clearly, if a window $\last_n(x[1,t])$ belongs to $L$
	then also $y[1,n+t]$ belongs to $L$
	because $\last_n(x[1,t])= y[t+1,n+t]$ is a suffix of $y[1,n+t]$ and $L$ is a left-ideal.
	Hence $R_n$ accepts $x[1,t]$. 
	That means that $R_n$ only makes false positive errors, i.e.,
	\[
		R_n \text{ fails at } t \in [0,m] \iff \last_n(x[1,t]) \notin L \text{ and } R_n \text{ accepts } x[1,t].
	\]
	If $R_n$ fails at time instant $t$ then there exists a number $i \in [1,n+t+1]$ such that $y[i,n+t] \in K$ by \eqref{eq:Rn-accept}.
	Furthermore we know that $i \leq t$ because, otherwise $y[i,n+t]$ would be a suffix
	of $\last_n(x[1,t]) = y[t+1,n+t]$ which does not belong to $L = \Sigma^* \cdot K$ by assumption.
	
	Now consider an interval $J = [k,\ell] \subseteq [0,m]$ of size at most $n$
	and let $\FT_J$ be the set of time instants $t \in J$ at which $R_n$ fails on $x$.
	Let $S$ be a set of intervals which contains for each $t \in\FT_J$
	exactly one interval $[i,n+t]$ such that $y[i,n+t] \in K$.
	Clearly, $S$ is increasing.
	Since each interval in $S$ has size at least $n+1$, the point $k$ is contained in each interval in $S$,
	i.e., $S$ is overlapping.
	By Lemma~\ref{lem:overlap} we know $|S| \le |Q|$ and hence $|\FT_J| \le |Q|$.
	The rest of proof follows the proof of Theorem~\ref{thm:prefix-free}.
\end{proof}

\subsection{The Bernoulli algorithm} \label{sec-bernoulli}

In this section, we introduce a randomized \SWA\ that will be used for the proof of
the upper bounds \eqref{upper-loglog} from Theorem~\ref{thm:quatrochotomy}
and \eqref{upper-O(1)-failure} from Theorem~\ref{thm:dichotomy}.

Consider an regular language $L \subseteq \Sigma^*$ and let $A = (Q,\Sigma,q_0,\delta,F)$
be a DFA for $L^\rev$.
For a stream $w \in \Sigma^*$ define the function $\ell_w \colon Q \to \mathbb{N} \cup \{ \infty \}$ by
\begin{equation}
 \ell_w(q) = \inf \{  k \in \mathbb{N} : \delta(q,\last_k(w)^\rev) \in F \}, \label{def-l_q(w)}
\end{equation}
where we set $\inf(\emptyset) = \infty$.
One can define a deterministic \SWA\ which stores the function $\ell_w$ on input stream $w$.
If a symbol $a \in \Sigma$ is read, we can determine
\[
	\ell_{wa}(q) = \begin{cases}
	0, & \text{if } q \in F, \\
	1 + \ell_w(\delta(q,a)), & \text{otherwise,}
	\end{cases}
\]
where $1 + \infty = \infty$. 

We will use the values $\ell_w(q)$ in case $L$ is a left ideal or suffix-free. In these cases, the value 
$\ell_w(q_0)$ can be used to decide whether $\last_n(w) \in L$:
\begin{itemize}
\item If $L$ is a left ideal, then $\last_n(w) \in L$ if and only if $\ell_{w}(q_0) \le n$.
\item If $L$ is suffix-free, then $\last_n(w) \in L$ if and only if $\ell_{w}(q_0) = n$.
\end{itemize}
Using a Bernoulli random variable, we define a randomized approximation
of the above deterministic \SWA.
Let $\beta \colon \mathbb{N} \to \mathbb{R}$ be a function such that
for some $n_0$, $0 \leq \beta(n) \leq 1$ for all $n \geq n_0$. Later, the
function $\beta$ will be instantiated by concrete functions.
In the following, we always set $x^{\infty} = 0$ for $0 \leq x < 1$.
We define the following constant-space randomized \SWA\ $\mathcal{B} = (B_n)_{n \ge 0}$
(which depends on the language $L$, the DFA $A$ and the function $\beta$), which we will call
the Bernoulli algorithm.
If $n < n_0$ let $B_n$ be the trivial deterministic streaming algorithm for $L_n$.
For $n \ge n_0$ the algorithm $B_n$ stores a Boolean flag for each state
in form of a function $b \colon Q \to \{0,1\}$. All flags $b(q)$ for $q \in F$ are fixed to $1$ forever.
For all other states $q \in Q \setminus F$ we define the initial value of the flag $b(q)$
as follows, where $\ell := \ell_{\varepsilon}(q) =
\inf \{  k \in \mathbb{N} : \delta(q,\Box^k) \in F \}
\in \mathbb{N} \cup \{\infty\}$:
\begin{equation} \label{eq-init}
b(q) := \begin{cases}
0  \text{ with probability } 1 - \left(1 - \beta(n)\right)^\ell \\
1 \text{ with probability } \left(1 - \beta(n)\right)^\ell .
\end{cases}
\end{equation}
For all states $q \in Q \setminus F$ we do the following upon
arrival of  a symbol $a \in \Sigma$:
\begin{equation} \label{eq-next}
b(q) := \begin{cases}
0 & \text{with probability } \beta(n)  \\
b(\delta(q,a)) & \text{with probability } 1-\beta(n) 
\end{cases}
\end{equation}
The algorithm accepts if and only if $b(q_0) = 1$. 

\begin{lemma} \label{lemma-error-prob1}
Let $n \ge n_0$, let $w \in \Sigma^*$ be an input stream and let $\ell = \ell_w(q_0)$. 
We have
\[
\Pr[B_n \text{ accepts } w] = (1 - \beta(n))^\ell .
\]
\end{lemma}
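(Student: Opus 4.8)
The claim is that the probability that $B_n$ accepts $w$ equals $(1-\beta(n))^{\ell}$ where $\ell = \ell_w(q_0)$. The natural approach is to track the flag $b(q_0)$ along the run and to understand exactly when it can still be $1$ at the end. First I would unfold the update rule \eqref{eq-next}: at each step, with probability $\beta(n)$ the flag is reset to $0$, and with probability $1-\beta(n)$ it copies the flag of the successor state $\delta(q_0,a)$. Iterating this over the stream $w = a_1 \cdots a_m$ and using the initialization \eqref{eq-init}, one sees that $b(q_0) = 1$ at the end if and only if there is a suffix of the ``extended stream'' $\Box^n w$ that leads from the relevant state into $F$ using only ``copy'' steps all the way, i.e. only the first $k$ of those copy-or-reset choices (for the appropriate $k$) came up ``copy''. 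More precisely, I would show by induction on the length of the processed stream that $b(q_0) = 1$ exactly when the most recent $\ell_w(q_0)$ update choices affecting the $q_0$-flag were all ``copy'' (probability $1-\beta(n)$ each), and these choices are independent; this gives the product $(1-\beta(n))^{\ell}$.

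**Key steps in order.** (1) Set up the right induction hypothesis: for every state $q$ and every processed prefix $v$ of $\Box^n w$ (with $|v| \ge n$ so that the initialization is fully "used up" in the $\Box^n$ block), $\Pr[b(q) = 1 \text{ after } v] = (1-\beta(n))^{\ell_v(q)}$, where $\ell_v(q)$ uses the recurrence $\ell_{va}(q) = 0$ if $q \in F$ and $1 + \ell_v(\delta(q,a))$ otherwise, matching the deterministic SWA's update. (2) Base case: after processing $v = \Box^n$ (or the empty stream, depending on how one sets it up), the initialization \eqref{eq-init} gives exactly $\Pr[b(q)=1] = (1-\beta(n))^{\ell_{\varepsilon}(q)}$ for $q \notin F$ and $1$ for $q \in F$, and $\ell_{\varepsilon}(q) = \inf\{k : \delta(q,\Box^k)\in F\}$, which is $0$ when $q \in F$; so the hypothesis holds, with the convention $x^{\infty}=0$. (3) Inductive step: reading a symbol $a$, if $q \in F$ the flag is $1$ and $\ell_{va}(q) = 0$ so $(1-\beta(n))^0 = 1$; if $q \notin F$, by \eqref{eq-next} and independence of the new coin from the history, $\Pr[b(q)=1] = (1-\beta(n)) \cdot \Pr[b(\delta(q,a))=1 \text{ after } v] = (1-\beta(n)) \cdot (1-\beta(n))^{\ell_v(\delta(q,a))} = (1-\beta(n))^{1 + \ell_v(\delta(q,a))} = (1-\beta(n))^{\ell_{va}(q)}$, handling $\ell = \infty$ via $1 + \infty = \infty$ and $x^{\infty} = 0$. (4) Finally instantiate $q = q_0$ and $v = \Box^n w$, note $\ell_{\Box^n w}(q_0) = \ell_w(q_0)$ by the definition \eqref{def-l_q(w)} of $\ell_w$ via $\last_k$, and recall that $B_n$ accepts iff $b(q_0)=1$, giving the statement.

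**Main obstacle.** The delicate point is bookkeeping around the initial window $\Box^n$ and the $\inf = \infty$ / $x^{\infty} = 0$ conventions: one must be careful that the function $\ell_v$ appearing in the induction is evaluated on the extended stream $\Box^n w$ (so that $\last_k$ is meaningful for all $k \le n$) and that the base of the induction is taken at a point where the random initialization \eqref{eq-init} has already been set but before any $w$-symbol is read, namely after the $\Box^n$ prefix — or, equivalently, one folds the $\Box^n$ block into the deterministic recurrence and starts the induction at $v = \varepsilon$ with the initialization probabilities. A secondary subtlety is justifying that the single fresh coin flipped in \eqref{eq-next} is independent of the entire past (including all initializations and all previous flips for all states), so that the factorization into a product of $(1-\beta(n))$ terms is legitimate; this is immediate from the construction but should be stated. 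Everything else is a routine unwinding of the two update equations.
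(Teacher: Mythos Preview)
Your proposal is correct and follows essentially the same approach as the paper: prove by induction on the length of the processed stream that for every state $q$, the flag $b(q)$ equals $1$ with probability $(1-\beta(n))^{\ell_w(q)}$, using the initialization \eqref{eq-init} for the base case and the update rule \eqref{eq-next} together with the recurrence $\ell_{wa}(q) = 1 + \ell_w(\delta(q,a))$ for the inductive step, then instantiate $q=q_0$.

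One minor remark: your detour through the extended stream $\Box^n w$ and the identity $\ell_{\Box^n w}(q_0)=\ell_w(q_0)$ is unnecessary. The paper simply runs the induction on the actual input $w$, starting at $w=\varepsilon$; the initialization probabilities in \eqref{eq-init} are already set up so that $\Pr[b(q)=1]=(1-\beta(n))^{\ell_\varepsilon(q)}$ holds at the start, because $\ell_\varepsilon(q)=\inf\{k:\delta(q,\Box^k)\in F\}$ by definition of $\last_k$. So your alternative framing (``fold the $\Box^n$ block into the recurrence and start at $v=\varepsilon$'') is exactly what the paper does, and there is no bookkeeping obstacle.
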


\begin{proof}
Let us define the $\{0,1\}$-valued random variable $B_{q,w}$ for 
$q \in Q$ and $w \in \Sigma^*$ as the value of the flag $b(q)$  after reading
the input stream $w$. 
We show the following statement for all $q \in Q$ and $w \in \Sigma^*$ with $k = \ell_w(q)$:
\begin{equation} \label{eq-Prob-B=1}
\Pr[B_{q,w} = 1]  = \left(1 - \beta(n)\right)^k
\end{equation}
This implies the statement of the lemma.

We prove \eqref{eq-Prob-B=1} by induction on $|w|$. Let $\ell = \ell_{w}(q)$.
First assume that $w=\varepsilon$. 
By the initialization of the flags $b(q)$ in \eqref{eq-init} we have
\[
\Pr[B_{q,\varepsilon} = 1]  = \left(1 - \beta(n)\right)^\ell
\]
Note that this is also true for $q \in F$ since $\ell_w(q) = 0$.

Let us now assume that $w = w'a$. Let $q' = \delta(q,a)$ and $\ell' = \ell_{w'}(q')$.
If $q \in F$ then 
\[
\Pr[B_{q,w} = 1]  = 1 = \left(1 - \beta(n)\right)^\ell
\]
since $\ell_w(q) = 0$. Now assume that $q \not\in F$ and thus $\ell = \ell'+1$.
We get by induction and \eqref{eq-next}
\begin{eqnarray*}
\Pr[B_{q,w} = 1] 
&=&  \Pr[B_{p,w'} = 1] \cdot \left( 1-\beta(n)  \right) \\
&=& \left(1 - \beta(n)\right)^{\ell'} \cdot \left( 1-\beta(n)  \right) \\
&=& \left(1 - \beta(n)\right)^{\ell}.
\end{eqnarray*}
This concludes the proof of the lemma.
\end{proof}


\subsection{Zero-failure randomized algorithms} 

In this section we prove the remaining upper bound
\eqref{upper-loglog} from Theorem~\ref{thm:quatrochotomy}
(recall that \eqref{point-O(1)}
and \eqref{point-O(log)} have been shown in \cite{GHL16}):

\begin{theorem}
If $L \in \langle \ST, \SF, \Len \rangle$,
then $L$ has a randomized \SWA\ $\R$ with $f(\R,n) = \mathcal{O}(\log\log n)$.
\end{theorem}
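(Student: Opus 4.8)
By the closure of sliding window space complexity classes under Boolean combinations (the last lemma of Section~3), it suffices to give $\mathcal{O}(\log\log n)$-space randomized SWAs for the three base cases $L \in \ST$, $L \in \SF$, and $L \in \Len$. The case $L \in \Len$ is already handled in constant space by Lemma~\ref{lem:length-lang}, and $L \in \ST$ is handled in constant space by Theorem~\ref{thm:quatrochotomy}\eqref{point-O(1)} (since $\ST \subseteq \langle \ST, \Len\rangle$). So the only genuine work is the case $L \in \SF$: I must produce a randomized SWA for a regular suffix-free language using $\mathcal{O}(\log\log n)$ bits.

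\textbf{The suffix-free case via the Bernoulli algorithm.} Here I invoke Section~\ref{sec-bernoulli}. Let $A = (Q,\Sigma,q_0,\delta,F)$ be a DFA for $L^\rev$. Recall that, since $L$ is suffix-free, $\last_n(w) \in L$ if and only if $\ell_w(q_0) = n$, where $\ell_w(q_0)$ is defined in \eqref{def-l_q(w)}. The deterministic SWA storing the whole function $\ell_w$ needs $\mathcal{O}(\log n)$ bits, so the point is to approximate it probabilistically. The plan is to instantiate the Bernoulli algorithm $\mathcal{B} = (B_n)_{n\ge 0}$ with a parameter $\beta(n)$ chosen so that a single run, which uses only $\mathcal{O}(1)$ bits (one flag per state), gives useful information about whether $\ell_w(q_0) = n$, and then to run $\Theta(\log n)$ independent copies in parallel — giving total space $\mathcal{O}(\log\log n)$ — combining their acceptance frequencies to distinguish $\ell = n$ from $\ell \ne n$ with error probability below $1/3$ at every time instant.

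\textbf{Choosing $\beta$ and amplifying.} By Lemma~\ref{lemma-error-prob1}, a single copy $B_n$ accepts $w$ with probability exactly $(1-\beta(n))^{\ell}$ where $\ell = \ell_w(q_0)$. I would take $\beta(n) = c/n$ for a suitable constant $c$, so that $(1-\beta(n))^{\ell} \approx e^{-c\ell/n}$; this value is close to $e^{-c}$ when $\ell = n$, but it is noticeably different when $\ell$ differs from $n$ — the subtlety being that $\ell$ can be anywhere in $\{0,1,\dots,\infty\}$, not just near $n$, and I only need to separate $\ell = n$ from all other values. Since $\ell_w(q_0)$ is determined by the (bounded-state) automaton $A$ and the window content, one checks that on streams $w \in \Sigma^{\ge n}$ the quantity $\ell_w(q_0)$ is either exactly $n$ (when $\last_n(w)\in L$) or, when $\last_n(w)\notin L$, is bounded away from $n$ in the relevant multiplicative sense — here is where suffix-freeness of $L$ is essential, because it forces $\ell$ to take the single exact value $n$ on ``yes'' instances. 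Given the resulting constant gap between the acceptance probabilities $p_{\text{yes}}$ and $p_{\text{no}}$, a standard Chernoff/amplification argument over $k = \Theta(\log n)$ independent copies (each contributing $\mathcal{O}(1)$ bits, with the number of ``accepting'' copies stored as a counter in $\mathcal{O}(\log k) = \mathcal{O}(\log\log n)$ bits) drives the error below $1/3$ uniformly over all time instants, and for the finitely many window sizes $n < n_0$ one uses the exact deterministic DFA for $L_n$.

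\textbf{Main obstacle.} The crux is the quantitative analysis of $\ell_w(q_0)$ for suffix-free $L$: showing that on an arbitrary stream $w$ with $|w|\ge n$, either $\last_n(w)\in L$ and then $\ell_w(q_0)=n$ exactly, or $\last_n(w)\notin L$ and then $\ell_w(q_0)$ lies in a range that makes $(1-\beta(n))^{\ell_w(q_0)}$ separated from $(1-\beta(n))^{n}$ by a constant — taking proper care of the cases $\ell = \infty$ (accept-probability $0$, easy) and of small $\ell$ (where $(1-\beta(n))^\ell$ is close to $1$). Once that separation is in hand, everything else is the routine parallel-repetition bookkeeping already packaged in Lemma~\ref{lem:comb-streaming-algo} and Lemma~\ref{lem:fs-ampli}.
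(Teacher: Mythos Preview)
Your reduction to the suffix-free case is the same as the paper's, and invoking the Bernoulli algorithm is the right starting point. But the core of your plan has a genuine gap, and there is also a space-accounting slip.

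\textbf{The space count.} Running $k = \Theta(\log n)$ independent copies of a constant-space algorithm costs $\Theta(\log n)$ bits, not $\Theta(\log\log n)$: each copy must maintain its own flag vector $b \colon Q \to \{0,1\}$, and these states are not compressible into a single counter while the stream is being processed. The $\mathcal{O}(\log k)$ you mention is only the cost of the \emph{output} counter, not of the $k$ running states.

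\textbf{The separation does not exist.} The real obstruction is that your hoped-for constant gap between $(1-\beta(n))^{n}$ and $(1-\beta(n))^{\ell}$ on ``no'' instances is false. Take $L = ab^*$ over $\{a,b\}$, which is suffix-free. With $\Box = a$ and stream $w = b^{\,n-2}$ one has $\last_n(w) = a^2 b^{\,n-2} \notin L$, yet $\ell_w(q_0) = n-1$. So $\ell$ can equal $n-1$ (and similarly $n+1$) on ``no'' instances, and then $(1-c/n)^{\ell}$ differs from $(1-c/n)^{n}$ by only $\Theta(1/n)$. Distinguishing two coins with bias gap $\Theta(1/n)$ requires $\Omega(n)$ samples, so no amount of amplification within $o(\log n)$ space rescues this. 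Suffix-freeness guarantees $\ell = n$ exactly on ``yes'' instances, but it says nothing about how far $\ell$ is from $n$ on ``no'' instances.

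\textbf{What the paper does instead.} The paper uses the Bernoulli algorithm only as a coarse threshold test: with $\beta(n) = 1/(2n)$ a single copy already rejects with probability $\ge 0.6$ whenever $\ell \ge 2n$. The remaining task --- distinguishing $\ell = n$ from $\ell \ne n$ in the bounded range $\ell < 2n$ --- is handled by a completely different idea: pick a random prime $p$ among the first $3k$ primes where $k$ is minimal with $\prod_{i=1}^k p_i \ge n$ (so $p = \mathcal{O}(\log n \cdot \log\log n)$), and maintain $\ell_w(q) \bmod p$ for every $q \in Q$, accepting iff $\ell_w(q_0) \equiv n \pmod p$. Since $|\ell - n| \le n$ has at most $k$ prime factors, at most $k$ of the $3k$ primes fail to witness $\ell \ne n$, giving error $\le 1/3$. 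This modulo-counting is what costs $\mathcal{O}(\log\log n)$ bits; the Bernoulli part is constant space and only prevents wraparound when $\ell \ge 2n$.
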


\begin{proof}
Since languages in $\ST \cup \Len$ have constant space deterministic \SWAs\
(Theorem~\ref{thm:quatrochotomy}, point \eqref{point-O(1)}),
it suffices to prove the statement for $L \in \SF$.
Let $A = (Q,\Sigma,q_0,\delta,F)$ be a DFA for $L^\rev \in \PF$.
Since the case $L = \emptyset$ is trivial,
we can assume that $A$ contains at least one final state which is reachable from $q_0$.
Furthermore, since $L^\rev$ is prefix-free, any run in $A$ from $q_0$ contains at most one final state.
Therefore, we can assume that $F$ contains exactly one final state $q_F$, and all outgoing transitions
from $q_F$ lead to a sink state.

Recall the function $\ell_w \colon Q \to \mathbb{N} \cup \{\infty\}$ defined in  \eqref{def-l_q(w)}.
Notice that $\last_n(w) \in L$ if and only if $\ell_{w}(q_0) = n$ for all $w \in \Sigma^*$.
Our randomized \SWA\ $\R = (R_n)_{n \ge 0}$ consists of two parts:
a constant-space Bernoulli algorithm $\mathcal{B} = (B_n)_{n \ge 0}$ (see Section~\ref{sec-bernoulli}) 
which rejects with high probability whenever $\ell_{w}(q_0) \geq 2n$
and a modulo-counting algorithm $\mathcal{M} = (M_n)_{n \ge 0}$ which maintains $\ell_w$ modulo a random prime number
with $\mathcal{O}(\log \log n)$ bits.

The threshold algorithm is our Bernoulli algorithm $\mathcal{B} = (B_n)_{n \ge 0}$ 
from Section~\ref{sec-bernoulli} for the function $\beta(n) = 1/(2n)$ and $n_0 = 1$.
From Lemma~\ref{lemma-error-prob1} we get the following for all
$w \in \Sigma^*$ with $\ell = \ell_w(q_0)$. 
\begin{equation} \label{eq-bernoulli-accept}
\Pr[\text{$B_n$ accepts $w$}] = \left(1 - \frac{1}{2n}\right)^\ell .
\end{equation}
Let us now explain the modulo-counting algorithm $M_n$.
Let $p_i$ be the $i$-th prime number and let $s(m)$ be the product
of all prime numbers $\le m$. It is known that $\ln(s(m)) > m \cdot (1 - 1/\ln m)$ for $m \geq 41$
\cite[3.16]{RoScho62} and $p_i < i \cdot  (\ln i + \ln\ln i)$ for $i \geq 6$ \cite[3.13]{RoScho62}.
Let $k$ be the first natural number such that $\prod_{i=1}^k p_i \ge n$.
By the above bounds we get $k \in \mathcal{O}(\log n)$ and $p_{3k} \in \mathcal{O}(\log n \cdot \log \log n)$.
The algorithm $M_n$ initially picks a random prime $p \in \{p_1, \dots, p_{3k}\}$,
which is stored throughout the run using $\mathcal{O}(\log \log n)$ bits.
Then, after reading  $w \in \Sigma^*$, $M_n$ stores for every $q \in Q$ a bit telling whether $\ell_w(q) < \infty$ and, if the latter holds,
the values $\ell_w(q) \bmod p$ using $\mathcal{O}(|Q| \cdot \log \log n)$ bits. The algorithm accepts if and only if $\ell_w(q_0) \equiv n \bmod p$.

The combined algorithm $R_n$ accepts if and only if both $B_n$ and $T_n$ accept. 
Let us bound the error probability on an input stream $w \in \Sigma^*$ with $\ell = \ell_w(q_0)$.

\medskip
\noindent
{\em Case 1.} $\ell = n$, i.e., $\last_n(w) \in L$. Then  
$M_n$ accepts $w$ with probability 1. Moreover, by \eqref{eq-bernoulli-accept}, $B_n$ accepts 
$w$ with probability 
\[
\left(1 - \frac{1}{2n}\right)^n \geq 0.6 
\]
for $n \geq 12$
(note that $(1 - \frac{1}{2n})^n$ converges to $1/\sqrt{e} \approx 0.60653$ from below).
Hence, $R_n$ accepts with probability at least $0.6$.

\medskip
\noindent
{\em Case 2.} $\ell \geq 2n$ and hence $\last_n(w) \notin L$.
Then $B_n$ rejects with probability 
\[
1 - \left(1 - \frac{1}{2n}\right)^{\ell} \geq 1 - \left(1 - \frac{1}{2n}\right)^{2n} \geq 1 - 1/e \geq 0.6 .
\]
Here, we use the well-known inequality $(1-\frac{1}{y})^y \le e^{-1}$ for all $y \ge 1$.
Hence, $R_n$ also rejects with probability at least $0.6$.

\medskip
\noindent
{\em Case 3.} $\ell < 2n$ and $\ell \neq n$, and thus $\last_n(w) \notin L$.
Since $\ell - n \in [-n,n]$ and any product of at least $k+1$ pairwise distinct primes
exceeds $n$, the number $\ell - n \neq 0$ has at most $k$ prime factors.
Therefore, $M_n$ (and thus $R_n$) rejects with probability at least $2/3$.
\end{proof}

\subsection{Randomized algorithms with arbitrarily small non-zero failure ratio} 

In this section we prove \eqref{upper-O(1)-failure} from Theorem~\ref{thm:dichotomy}.
Since languages in $\PF \cup \Len$ have constant space deterministic \SWAs\ with 
arbitrarily small non-zero failure ratio (Lemma~\ref{lem:length-lang} and Theorem~\ref{thm:prefix-free}), 
it suffices to present a constant-space randomized \SWA\ with an arbitrarily small failure ratio $\phi>0$ for every regular left ideal.

For this subsection let $L \subseteq \Sigma^*$ be a regular left ideal.
Let $A = (Q,\Sigma,q_0,\delta,F)$ be the minimal DFA for $L^\rev$.
Since the case $L = \emptyset$ is trivial, we can assume that $L \neq \emptyset$.
It is easy to see that $F$ contains a single state $q_F$ from which all outgoing transitions lead back to $q_F$.
Recall the function $\ell_w \colon Q \to \mathbb{N} \cup \{\infty\}$ defined in \eqref{def-l_q(w)}.
Since $L$ is a left ideal, we have: $\last_n(w) \in L$ if and only if $\ell_w(q_0) \leq n$:
The following lemma says that the portion of prefixes of an input stream, where
$\ell_{w}(q_0)$ is close to $n$, is small:

\begin{lemma} \label{lemma-l_v(q_0)}
Let $0 < \xi < 1$. Let $n \ge 0$ be a window size and  $w \in \Sigma^{\geq n}$ be an input stream.
Then the number of prefixes $v$ of $w$
such that $\lceil \xi n\rceil \le \ell_{v}(q_0) \le n$ is at most
\[
	\frac{(1-\xi+\frac{1}{n}) \cdot |Q|}{\xi} \cdot (|w|+1+\xi n).
\]
\end{lemma}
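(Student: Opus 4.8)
The plan is to count, for a fixed window size $n$ and input stream $w = a_1 \cdots a_m$, how many prefixes $v = a_1 \cdots a_t$ can have $\ell_v(q_0)$ lying in the band $[\lceil \xi n\rceil, n]$. The key observation is the recurrence $\ell_{w a}(q) = 0$ if $q \in F$ and $\ell_{wa}(q) = 1 + \ell_w(\delta(q,a))$ otherwise, stated already in the excerpt. Reading this off for $q = q_0$: as we extend the stream by one symbol, $\ell_v(q_0)$ either drops to a small value (when the automaton $A$, run on the reversed window, would hit $q_F$ from a nearby state) or it increases by exactly $1$ relative to the value of $\ell$ at a successor state. So along any \emph{maximal} run of consecutive time instants during which $\ell_v(q_0)$ stays finite and does not ``reset'', the quantity $\ell_v(q_0)$ is essentially tracking $\ell_{v'}(\delta(q_0,\cdot))$ for a fixed state, shifted by the elapsed time, hence it is strictly increasing by $1$ at each step. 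The upshot is that inside any such block the value of $\ell_v(q_0)$ visits the interval $[\lceil \xi n\rceil, n]$ for at most $n - \lceil \xi n \rceil + 1 \le (1-\xi) n + 1$ consecutive steps.

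\textbf{Key steps.} First I would make precise the notion of a \emph{phase}: partition $[0,m]$ into maximal intervals on which $\ell_v(q_0)$ increases by exactly $1$ at each step (equivalently, the intervals between consecutive ``resets'', where a reset at time $t$ means either $\ell_{a_1\cdots a_t}(q_0) = 0$, i.e. $q_0 \in F$ after the reversal bookkeeping, or more generally the value fails to be the predecessor's value plus one). Because $A$ is the minimal DFA for $L^{\rev}$ and $L$ is a left ideal, $F = \{q_F\}$ is absorbing, so a reset can be triggered only via a bounded-length detour through $q_F$; I would argue that between two consecutive resets the stream must have length at least roughly $\xi n / |Q|$ in order for the value of $\ell_v(q_0)$ to have climbed from below $\xi n$ up into the band — here the $|Q|$ enters because $\ell_v(q_0) \le |Q| \cdot (\text{distance in symbols to the next accepting configuration})$ is false in general, but the correct statement is that $\ell_v(q_0)$ grows by $1$ per symbol within a phase, so a phase whose values reach $\ge \lceil \xi n\rceil$ must have length $\ge \lceil \xi n \rceil$. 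Second, within each such phase at most $(1-\xi)n + 1 + O(1)$ of the time instants land in the band $[\lceil \xi n\rceil, n]$, since those are exactly the steps where the running counter is in that length-$((1-\xi)n)$ window. Third, I would bound the number of phases: a phase that contributes anything must have length $\ge \lceil \xi n\rceil$, hence there are at most $\lceil (m+1)/(\xi n) \rceil$ such phases — and here the factor $|Q|$ must be inserted because the true lower bound on a contributing phase's length is $\lceil \xi n\rceil / |Q|$ rather than $\lceil \xi n\rceil$, owing to the fact that the per-symbol increments of $\ell_v(q_0)$ are $+1$ only after one uniformizes over the finitely many states; more carefully, one tracks the vector $(\ell_v(q))_{q\in Q}$ and uses that each coordinate changes by a bounded rule, so the contribution is spread over at most $|Q|$ interleaved monotone blocks. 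Multiplying the per-phase bound $(1-\xi)n + O(1)$ by the number of contributing phases $\lceil (m+1)\,|Q| / (\xi n)\rceil$ and simplifying gives the claimed bound $\frac{(1-\xi + 1/n)\,|Q|}{\xi}\,(|w| + 1 + \xi n)$.

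\textbf{Main obstacle.} The subtle point — and the one I expect to be the real work — is correctly accounting for the $|Q|$ factor and the additive $\xi n$ slack in $|w| + 1 + \xi n$. The recurrence for $\ell_v(q_0)$ is clean, but $\ell_v(q_0)$ can be finite yet fail to increase monotonically because the value depends, through $\delta(q_0, a)$, on \emph{which} symbol arrives, so a single ``phase'' in my naive sense can be chopped up by the automaton structure. The honest argument must track all of $(\ell_v(q))_{q\in Q}$, observe that for each individual $q$ the sequence $t \mapsto \ell_{a_1\cdots a_t}(q)$ either jumps to $0$ (when $q \in F$, impossible here since $F$ is a single absorbing state unreachable in that way) or satisfies $\ell_{v a}(q) = 1 + \ell_v(\delta(q,a))$, and then chain these across states to show that any time the $q_0$-value sits in the band, one can ``charge'' it to a nearby earlier step at some state where the value was below $\lceil \xi n\rceil$, with the charging map at most $|Q|$-to-one over any interval of length $\xi n$. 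Converting this combinatorial charging into the exact stated constant is the delicate calculation; I would handle the boundary effects (the leading $\square^n$ of $\last_n$, and prefixes shorter than $n$) by the additive $\xi n$ term, exactly as the prefix-free and suffix-free lemmas handle their partition remainders.
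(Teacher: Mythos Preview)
Your proposal has a genuine gap, and you have in fact identified it yourself: the ``phase'' decomposition does not work. The recurrence is $\ell_{va}(q_0) = 1 + \ell_v(\delta(q_0,a))$, not $1 + \ell_v(q_0)$, so the sequence $t \mapsto \ell_{w[1,t]}(q_0)$ is \emph{not} eventually monotone between resets; it can jump arbitrarily depending on which state $\delta(q_0,a)$ happens to be. Consequently there is no well-defined notion of a maximal interval on which the value increases by exactly $1$ per step, and your claim that a contributing phase must have length at least $\lceil \xi n\rceil$ (or even $\lceil \xi n\rceil/|Q|$) has no foundation. The charging argument you sketch as a repair --- ``charge each hit to a nearby earlier step at some state where the value was below $\lceil \xi n\rceil$, with the charging map at most $|Q|$-to-one over any interval of length $\xi n$'' --- is not a proof: you never specify the charge, and the $|Q|$-to-one bound is asserted rather than derived.

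The paper's argument is quite different and avoids tracking the dynamics of $\ell_v$ altogether. It partitions $[0,|w|]$ into blocks of length at most $\lceil \xi n\rceil + 1$ and proves directly, by a double pigeonhole, that each block contains at most $|Q|\cdot(n-\lceil \xi n\rceil + 1)$ hits. The point is this: if a single block contained more, then some value $\ell \in [\lceil \xi n\rceil,n]$ would be attained by more than $|Q|$ hits in that block. For each such hit at time $j$, look at $\last_\ell(w[1,j])$; since the block has width at most $\lceil \xi n\rceil \le \ell$, all of these suffixes overlap in a common position of $\Box^n w$, and the portions up to that common position form a nested chain of words. Pigeonholing on the state reached by $A$ on the reversal of those nested words forces two of them to reach the same state, contradicting that both have the same minimal $\ell$. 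The factor $|Q|$ thus comes from a pigeonhole on states, not from interleaving $|Q|$ monotone sequences. I recommend abandoning the phase picture and proving the per-block bound directly in this way.
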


\begin{proof}
Let us say that a prefix $v$ of $w$ is a {\em hit}, if $\lceil \xi n\rceil \le \ell_{v}(q_0) \le n$.
Consider an interval $I = [i, i']$ with $0 \leq i \leq i' \leq  |w|$
and $i'-i \leq \lceil \xi n \rceil$.
With each position $j \in I$ we associate the prefix $w[1,j]$. 

\medskip
\noindent
{\em Claim:} The set $V := \{ w[1,j] \colon j \in I \}$ contains at most $|Q|  \cdot (n - \lceil \xi n \rceil + 1)$
hits. 

\medskip
\noindent
In order to get a contradiction, 
assume that there are more than $|Q| \cdot (n - \lceil \xi n \rceil + 1)$ 
many hits in $V$. Since $\lceil \xi n\rceil \le \ell_{v}(q_0) \le n$ for every hit $v$,
and the interval $[\lceil \xi n\rceil, n] $ contains $n - \lceil \xi n \rceil + 1$ many different values,
there is a subset $U \subseteq V$ and some $\ell \in [\lceil \xi n\rceil, n] $ such that
(i) $|U| > |Q|$ and (ii) $\ell_{v}(q_0) = \ell$ for all $v \in U$.
Let $U = \{ w[1,j_1], w[1,j_2], \ldots, w[1,j_k] \}$, where $k > |Q|$.
Consider the words $u_1 = \last_\ell(w[1,j_1]), 
u_2 = \last_\ell(w[1, j_2]),  \ldots, u_k = \last_\ell(w[1, j_k])$. Since $j_k - j_1 \leq \lceil \xi n \rceil$
and $\ell \geq \lceil \xi n \rceil$, we have $\ell - j_k +j_1 \geq 0$.
Hence, we can consider the words $v_1 = \last_{\ell-j_k+j_1}(w[1, j_1]), 
v_2 = \last_{\ell-j_k+j_2}(w[1, j_2]),  \ldots, v_k = \last_\ell(w[1, j_k])$. 
Clearly, $v_j$ is a suffix of $u_j$. Moreover, the words $v_j$
all start in the same position of $\Box^n w$, i.e., every $v_j$ is a prefix of 
$v_{j'}$ for $j  \leq j'$. Consider now the state $q_j = \xi(q_0,v_j^\rev)$
for $1 \leq j \leq k$. Since $k > |Q|$ there exist $j < j'$ such that
$q_j = q_{j'}$. But this would imply that $\ell_{v_j}(q_0) <  \ell_{v_{j'}}(q_0)$,
which contradicts $\ell_{v_j}(q_0) = \ell =  \ell_{v_{j'}}(q_0)$. This concludes
the proof of the above claim.

Now we can finish the proof of the lemma: We divide the interval $[0,|w|]$ into intervals of size
$\lceil \xi n \rceil+1$ plus one last interval of possibly shorter length.
This yields 
\[
\left\lceil \frac{|w|+1}{\lceil \xi n \rceil+1} \right\rceil
\]
many intervals. In each of these intervals we find at most $|Q|  \cdot (n - \lceil \xi n \rceil + 1)$ many hits
by the above claim. Hence, the total number of hits is bounded by
\begin{eqnarray*}
\left\lceil \frac{|w|+1}{\lceil \xi n \rceil+1} \right\rceil \cdot |Q|  \cdot (n - \lceil \xi n \rceil + 1) 
 & \leq &
\left( \frac{|w|+1}{\xi n} +1\right) \cdot |Q|  \cdot (n - \xi n + 1) \\
& = & (|w|+1+\xi n) \cdot |Q| \cdot \frac{n - \xi n + 1}{\xi n} \\
& = & (|w|+1+\xi n) \cdot |Q| \cdot \frac{1 - \xi  + \frac{1}{n}}{\xi} .
\end{eqnarray*}
This concludes the proof of the lemma.
\end{proof}
We now define the function $\beta \colon \mathbb{N} \to \mathbb{R}$ by 
\[
\beta_\eps(n) = \frac{\ln(1/\eps)}{n} .
\]
Hence, for all $n \geq \ln(1/\eps)$ we have $0 < \beta_\eps(n) \leq 1$. 
We then consider the Bernoulli algorithm $\mathcal{B^{\eps}} = (B^{\eps}_n)_{n \geq 0}$ for $\beta_\eps$
defined in Section~\ref{sec-bernoulli}. Lemma~\ref{lemma-error-prob1}
gives the following guarantees for the error probability, where
$n \ge \ln(1/\eps)$, $w \in \Sigma^*$ and $k = \ell_w(q_0)$. 
\begin{equation} \label{eq-guarantees}
\eps(B_n,w,L_n) = \begin{cases}
1 - \left(1 - \frac{\ln(1/\eps)}{n}\right)^k & \text{ if } k \le n, \\
\left(1 - \frac{\ln(1/\eps)}{n}\right)^k & \text{ if } k > n.
\end{cases}
\end{equation}

\begin{lemma} \label{lemma-error-prob2}
For every $0 <  \xi <1$ there exists $0 < \eps < \frac{1}{2}$
and $n_0 \ge 1$ such that for all $n \geq n_0$ the following holds:
If $w \in \Sigma^*$ and $\ell_w(q_0) \not\in [\lceil \xi n\rceil, n]$, then $\eps(B^{\eps}_n,w,L_n) \le \eps$.
\end{lemma}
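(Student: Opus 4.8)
My plan is to split on which side of the forbidden interval $[\lceil \xi n\rceil, n]$ the number $k := \ell_w(q_0)$ lies, bound $\eps(B^{\eps}_n,w,L_n)$ in each case using the explicit formula \eqref{eq-guarantees}, and — crucially — to choose the error bound $\eps$ \emph{close to} $\tfrac12$ rather than close to $0$. Write $c := \ln(1/\eps)$. The choice of $\eps$ is the only nonobvious step: the map $\eps \mapsto \eps^{\xi}+\eps$ is continuous and takes the value $2^{-\xi}+\tfrac12 > 1$ at $\eps = \tfrac12$, where the hypothesis $\xi < 1$ enters via $2^{-\xi} > \tfrac12$; hence there is $\eps \in (0,\tfrac12)$ with $\eps^{\xi}+\eps > 1$. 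I fix such an $\eps$ and set $\eta := \tfrac12(\eps^{\xi}+\eps-1) > 0$.

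For the case $k > n$ (including $k = \infty$, where $\eps(B^{\eps}_n,w,L_n)=0$ and there is nothing to prove) I would use the elementary inequality $\ln(1-y) < -y$ for $0 < y < 1$: for $n > c$ this gives $(1-c/n)^{n} < e^{-c} = \eps$, and since the base lies in $(0,1)$ and $k > n$ we get $\eps(B^{\eps}_n,w,L_n) = (1-c/n)^{k} \le (1-c/n)^{n} < \eps$. This handles the ``$\ell_w(q_0)$ too large'' side for every $n > c$.

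For the case $k < \lceil \xi n\rceil$ note that $k \le \lceil \xi n\rceil - 1 \le \xi n \le n$, so by \eqref{eq-guarantees} and monotonicity of $t \mapsto (1-c/n)^{t}$,
\[
\eps(B^{\eps}_n,w,L_n) = 1-(1-c/n)^{k} \le 1-(1-c/n)^{\xi n} = 1-\bigl((1-c/n)^{n}\bigr)^{\xi}.
\]
Since $(1-c/n)^{n} \to e^{-c} = \eps$ as $n \to \infty$, also $\bigl((1-c/n)^{n}\bigr)^{\xi} \to \eps^{\xi}$, so there is $n_1$ with $(1-c/n)^{\xi n} \ge \eps^{\xi}-\eta$ for all $n \ge n_1$. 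Then, for $n \ge n_1$,
\[
\eps(B^{\eps}_n,w,L_n) \le 1-\eps^{\xi}+\eta = \tfrac12\bigl(1-\eps^{\xi}+\eps\bigr) \le \eps,
\]
the last inequality being exactly $\eps^{\xi}+\eps \ge 1$. Taking $n_0 := \max\{\lceil c\rceil+1,\,n_1\}$ — which also guarantees $0 < \beta_\eps(n) \le 1$ — completes the argument.

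I do not expect a genuine obstacle: the one conceptual point is recognizing that $\eps$ must be chosen near $\tfrac12$, the intuition being that on a positive instance with $\ell_w(q_0)$ as large as $\xi n$ the Bernoulli test errs with probability $\approx 1-\eps^{\xi}$, which stays below $\eps$ precisely when $\eps^{\xi}+\eps > 1$; and that $\xi<1$ is exactly what makes this possible. Everything else is a routine limit computation, the only care being to keep the two thresholds (the $n > c$ needed in the large-$k$ case and the $n \ge n_1$ needed in the small-$k$ case) and combine them into a single $n_0$.
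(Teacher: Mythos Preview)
Your proof is correct and follows essentially the same approach as the paper: choose $\eps$ so that $\eps^{\xi}+\eps>1$ (equivalently $1-\eps<\eps^{\xi}$), then handle the case $k>n$ via $(1-c/n)^{n}\le e^{-c}=\eps$ and the case $k<\lceil\xi n\rceil$ via the limit $(1-c/n)^{\xi n}\to\eps^{\xi}$. The only cosmetic difference is that you introduce the auxiliary slack $\eta=\tfrac12(\eps^{\xi}+\eps-1)$ before passing to the limit, whereas the paper directly picks $n_1$ with $(1-c/n)^{\xi n}\ge 1-\eps$; the two arguments are logically equivalent.
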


\begin{proof}
For a given $0 < \xi <1$ we consider the function
\begin{align*}
	g(x) =  x^\xi + x - 1
\end{align*}
for $0 < x < \frac{1}{2}$. Since $g$ is continuous, $\lim_{x \to \frac{1}{2}^-} g(x) > 0$
and $\lim_{x \to 0^+} g(x) = -1$
there exists $0 < \eps < \frac{1}{2}$
such that $g(\eps) > 0$, or equivalently $1-\eps < \eps^\xi$.
Choose such an $\eps$. 

Next, we determine the  number $n_0$ from the lemma.
Since
\begin{eqnarray*}
\lim_{n \to \infty} \left(1-\frac{\ln(1/\eps)}{n}\right)^{\xi n} &=& 
	\lim_{n \to \infty} \left(1-\frac{1}{n/\ln(1/\eps)}\right)^{\xi \cdot \ln(1/\eps) \cdot n/\ln(1/\eps)} \\
	&=& e^{-\xi \cdot \ln(1/\eps)} = \eps^\xi > 1-\eps
\end{eqnarray*}
for all $0 < \eps <\frac{1}{2}$,
there exists a natural number $n_1$ such that for all $n \ge n_1$ we have
\begin{equation} \label{eq-1-eps}
1 - \eps \le \left(1-\frac{\ln(1/\eps)}{n}\right)^{\xi n}.
\end{equation}
Let $n_0 = \max\{n_1, \lceil  \ln(1/\eps) \rceil \}$.
 
We now show that the error probabilities from \eqref{eq-guarantees} is bounded by $\eps$
whenever $n \geq n_0$ and $k = \ell_w(q_0) \not\in [\lceil \xi n\rceil, n]$.

\medskip
\noindent
{\em Case 1.} $k > n$: We use the well-known inequality $(1-\frac{1}{y})^y \le e^{-1}$ for all $y \ge 1$.
Equation~\eqref{eq-guarantees} together with $n/\ln(1/\eps) \geq 1$
yields the following bound on the error probability:
\[
\bigg(1 - \frac{\ln(1/\eps)}{n}\bigg)^k \leq \bigg(1 - \frac{\ln(1/\eps)}{n}\bigg)^{n} 
=  \bigg(1 - \frac{1}{n/\ln(1/\eps)}\bigg)^{n} 
\leq e^{-\ln(1/\eps)} = \eps.
\]
{\em Case 2.} $k < \lceil \xi n\rceil$, i.e., $k \leq \xi n$:
Again by \eqref{eq-guarantees} the error probability is bounded by
\[
	1 - \left(1 - \frac{\ln(1/\eps)}{n} \right)^k \le 1 - \left(1 - \frac{\ln(1/\eps)}{n} \right)^{\xi n} \le \eps ,
\]
where the last inequality follows from \eqref{eq-1-eps}.
\end{proof}

\begin{theorem}
Let $L$ be a regular left ideal and $0 < \phi < 1$.
Then $L$ has a randomized \SWA\ $\mathcal{R}$ with $f(\mathcal{R},n) = \mathcal{O}(1)$ and failure ratio $\phi$.
\end{theorem}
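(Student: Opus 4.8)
The plan is to instantiate the Bernoulli algorithm $\mathcal{B}^{\eps} = (B^{\eps}_n)_{n \ge 0}$ with an error parameter $\eps$ supplied by Lemma~\ref{lemma-error-prob2} and to read off the failure-ratio bound from Lemma~\ref{lemma-l_v(q_0)}. Let $A = (Q,\Sigma,q_0,\delta,F)$ be the minimal DFA for $L^\rev$ as fixed in this subsection. Given $0 < \phi < 1$, I would first choose $0 < \xi < 1$ so close to $1$ that $\frac{2(1-\xi)|Q|}{\xi} < \phi$; this is possible since the left-hand side tends to $0$ as $\xi \to 1^-$. For this $\xi$, Lemma~\ref{lemma-error-prob2} yields $0 < \eps < \frac12$ and $n_0 \ge 1$ such that for every $n \ge n_0$ and every $w \in \Sigma^*$ with $\ell_w(q_0) \notin [\lceil \xi n\rceil, n]$ we have $\eps(B^{\eps}_n, w, L_n) \le \eps$.

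Next I would fix a window length $n \ge n_0$ and an input stream $w \in \Sigma^{\ge n}$ and bound the failure ratio of $B^{\eps}_n$ on $w$. By the contrapositive of Lemma~\ref{lemma-error-prob2}, every time instant $t$ at which $B^{\eps}_n$ fails with error threshold $\eps$ satisfies $\lceil \xi n\rceil \le \ell_{w[1,t]}(q_0) \le n$; hence $\FT(B^{\eps}_n, w, L_n, \eps)$ is contained in the set of prefixes counted in Lemma~\ref{lemma-l_v(q_0)}, so $|\FT(B^{\eps}_n, w, L_n, \eps)| \le \frac{(1-\xi+\frac1n)|Q|}{\xi}(|w|+1+\xi n)$. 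Dividing by $|w|+1$ and using $\xi n < n \le |w| < |w|+1$, so that $|w|+1+\xi n \le 2(|w|+1)$, gives $\phi(B^{\eps}_n, w, L_n, \eps) \le \frac{2(1-\xi+\frac1n)|Q|}{\xi}$. As $n \to \infty$ the right-hand side converges to $\frac{2(1-\xi)|Q|}{\xi} < \phi$, so there is $n_1 \ge n_0$ such that $\phi(B^{\eps}_n, w, L_n, \eps) \le \phi$ for all $n \ge n_1$ and all $w \in \Sigma^{\ge n}$.

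Then I would assemble the final algorithm $\mathcal{R} = (R_n)_{n \ge 0}$ by taking $R_n = B^{\eps}_n$ for $n \ge n_1$ and letting $R_n$ be a DFA for the regular language $L_n$ for each of the finitely many $n < n_1$; the latter contribute failure ratio $0$ and only a constant amount of space, while each $B^{\eps}_n$ uses $|Q|$ bits, so $f(\mathcal{R},n) = \mathcal{O}(1)$. This $\mathcal{R}$ is $(\eps,\phi)$-correct for $L$. If $\eps \le \frac13$ it is already a randomized \SWA\ for $L$ with failure ratio $\phi$ (a smaller error threshold only enlarges the failure set); otherwise apply the probability amplification of Lemma~\ref{lem:fs-ampli} with target error $\frac13 < \eps$, which keeps the space constant. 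Combined with Lemma~\ref{lem:length-lang}, Theorem~\ref{thm:prefix-free} and closure under Boolean combinations, this also finishes Theorem~\ref{thm:dichotomy}\eqref{upper-O(1)-failure}.

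I expect no real obstacle here: essentially all the work has been front-loaded into Lemmas~\ref{lemma-l_v(q_0)} and~\ref{lemma-error-prob2}, and the remaining task is bookkeeping. The two points to watch are getting the dependency chain right (first $\xi$ depending only on $\phi$ and $|Q|$, then $\eps$ and $n_0$ from Lemma~\ref{lemma-error-prob2}, then $n_1$), and observing that the interval $[\lceil \xi n\rceil, n]$ on which the Bernoulli estimator may be unreliable is exactly the ``hit'' interval counted by Lemma~\ref{lemma-l_v(q_0)}; the mildly delicate part is just checking that $\frac{2(1-\xi)|Q|}{\xi}$ can be pushed below $\phi$ uniformly in $n$ and $w$, after which the $\frac1n$ correction term is harmless for large $n$.
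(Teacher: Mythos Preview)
Your proposal is correct and follows essentially the same approach as the paper: instantiate the Bernoulli algorithm via Lemma~\ref{lemma-error-prob2}, bound the failure set by the hit count of Lemma~\ref{lemma-l_v(q_0)}, and handle finitely many small window sizes separately. Your bookkeeping is in fact slightly cleaner than the paper's---you make the dependency chain $\xi \to (\eps,n_0) \to n_1$ explicit, and you also spell out the final amplification step to bring the error bound down to $1/3$ (required by the definition of ``randomized \SWA\ with failure ratio $\phi$''), which the paper leaves implicit.
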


\begin{proof}
Let us fix a failure ratio $0 < \phi < 1$ and
let $0 <\xi<1$, which will be defined later (depending on $\phi$).
Let $\eps$ and $n_0$ be the numbers from Lemma~\ref{lemma-error-prob2}.
Let $\mathcal{B}^{\eps} = (B^{\eps}_n)_{n \ge 0}$ be the randomized \SWA\ described above.
Let $n \ge n_0$ be a window size and $w \in \Sigma^{\ge n}$ be an input stream.
Consider the set $P(w)$ of all prefixes $v$ of $w$ such that
$\ell_v(q_0) \in [\lceil \xi n\rceil, n] $. By Lemma~\ref{lemma-error-prob2} 
the algorithm $B_n^{\eps}$ errs on each prefix $v \notin P(w)$ with probability at most $\eps$,
i.e.,
\[
	\phi(B_n^{\eps},w,L_n,\eps) \le \frac{|P(w)|}{|w|+1}.
\]
Moreover, by Lemma~\ref{lemma-l_v(q_0)} we have
\[
|P(w)| \leq \frac{(1-\xi+\frac{1}{n}) \cdot |Q|}{\xi} \cdot (|w|+1+\xi n).
\]
We therefore get
\begin{eqnarray} 
	\phi(B_n^{\eps},w,L_n,\eps) &\le&  \frac{(1-\xi+\frac{1}{n}) \cdot |Q|}{\xi} \cdot \left(1+\frac{\xi n}{|w|+1}\right) \nonumber \\
	&\le& \left(1-\xi+\frac{1}{n}\right) \cdot |Q| \cdot \left(1+\frac{1}{\xi}\right). \label{prob-P(w)}
\end{eqnarray}
Note that if $\xi$ converges to $1$, then the probability \eqref{prob-P(w)}
tends towards $2|Q|/n$.
Hence we can choose numbers $n_1 \ge n_0$ and $0 < \xi < 1$ such that
for all $n \ge n_1$ the probability \eqref{prob-P(w)} 
is smaller than our fixed failure ratio $\phi$.

Finally for window sizes $n < n_1$ we can use the optimal deterministic sliding-window algorithms
for $L$ and window size $n$. The space complexity of the resulting algorithm is a constant that
depends only on $\phi$.
\end{proof}

\section{Lower bounds}

In this section, we prove the lower bounds from our three main results Theorem~\ref{thm:quatrochotomy}--\ref{thm:trichotomy}.
In all cases with one exception, we apply the same proof strategy. We first show that if a regular language does not belong to the language class
under consideration then there exist certain witness words. These witness words can then be used to apply known lower
bounds from communication complexity by deriving a randomized communication protocol from 
a randomized \SWA. This is in fact a standard technique for obtaining lower bounds for streaming algorithms.
In the next section, we present the necessary background from communication complexity; see \cite{KuNi97}
for a detailed introduction.  

\subsection{Communication complexity}

We need a promise version of randomized one-way communication complexity; see also \cite{KremerNR99}.
Consider a function $f \colon D \to \{0,1\}$ where $D \subseteq X \times Y$ for some finite sets $X$ and $Y$.
A {\em randomized one-way (communication) protocol} $P=(a,b)$ consists of  functions
$a \colon X\times R_a \to \{0,1\}^*$ and $b \colon \{0,1\}^* \times Y \times R_b \to\{0,1\}$, where
$R_a$ (resp., $R_b$) is the finite set of random choices of Alice (resp., Bob).
The {\em cost} of $P$ is the maximum number of bits transmitted by Alice, i.e.
\[
	\mathrm{cost}(P) = \max_{x \in X, r_a \in R_a} |a(x,r_a)|. 
\]
Moreover, probability distributions are given on $R_a$ (resp., $R_b$).
Alice computes from her input $x \in X$ and a random choice $r_a \in R_a$
the value $a(x,r_a)$ and sends it to Bob.
Using this value, his input $y \in Y$ and a random choice $r_b \in R_b$
he outputs $b(a(x,r_a),y,r_b)$. 
The random choices $r_a \in R_a, r_b \in R_b$ are chosen independently from their respective
distributions.
The protocol $P$ {\em computes} $f$ if 
for all $(x,y) \in D$ we have
\begin{equation} \label{eq-rcc}
\Pr_{r_a \in R_a, r_b \in R_b}[P(x,y) \neq f(x,y)] \le \frac{1}{3}.
\end{equation}
where $P(x,y)$ is the random variable $b(a(x,r_a),y,r_b)$.
Note that for $(x,y) \not\in D$, there is no requirement for the probability that Bob outputs
$1$; for instance, this probability can be $1/2$.

A deterministic one-way protocol with cost $s$ is a randomized one-way protocol with cost $s$, where $R_a$ and $R_b$ 
are singleton sets.

The (worst case) {\em randomized one-way communication complexity} $C(f)$ of $f$ is the minimal cost among all one-way randomized protocols that compute $f$ (with an arbitary number of random bits).
The choice of the constant $\frac{1}{3}$ in \eqref{eq-rcc} is arbitrary in the sense that changing the constant
to any $\eps < 1/2$ only changes the cost $C(f)$ by a fixed constant (depending on $\eps$), see \cite[p.~30]{KuNi97}.

In this paper we will use established lower bounds on the following functions
for $n \ge 1$:
\begin{itemize}
\item the {\em index function} $\mathrm{IDX}_n \colon \{0,1\}^n\times \{1,\dots,n\} \to \{0,1\}$ where
\begin{align*}
	\mathrm{IDX}_n(a_1 \cdots a_n,i) = a_i .
\end{align*}

\item the {\em greater-than function} $\mathrm{GT}_n \colon \{1,\dots,n\}\times \{1,\dots,n\}\to \{0,1\}$ where
\begin{align*}
&\mathrm{GT}_n(i,j)= \begin{cases} 1, & \text{if $i>j$}, \\ 0, & \text{if $i \le j$.} \end{cases}
\end{align*}

\item the {\em equality function} $\mathrm{EQ}_n \colon \{1,\dots,n\}\times \{1,\dots,n\}\to \{0,1\}$ where
\begin{align*}
&\mathrm{EQ}_n(i,j)= \begin{cases} 1, & \text{if $i=j$}, \\ 0, & \text{if $i \neq j$.} \end{cases}
\end{align*}
\end{itemize}

\begin{theorem} \label{thm:coco} The following hold:
\begin{enumerate}
\item\label{index} $C(\mathrm{IDX}_n) \in \Theta(n)$ \cite[Theorem~3.7]{KremerNR99}
\item\label{gt} $C(\mathrm{GT}_n) \in \Theta(\log n)$ \cite[Theorem~3.8]{KremerNR99}
\item\label{eq} $C(\mathrm{EQ}_n) \in \Theta(\log\log n)$  \cite{KuNi97} 
\end{enumerate}
\end{theorem}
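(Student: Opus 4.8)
The plan is to recall the three classical arguments behind the cited bounds; since Theorem~\ref{thm:coco} only collects known results, I would simply reconstruct each of them. The \emph{upper bounds} are short. For $\mathrm{IDX}_n$ Alice sends her whole string $a_1\cdots a_n$. For $\mathrm{GT}_n$ Alice sends the $\lceil\log_2 n\rceil$-bit binary encoding of $i$ and Bob compares. For $\mathrm{EQ}_n$ the plan is the fingerprinting protocol: Alice picks a uniformly random prime $p$ among the first $\lceil 3\log_2 n\rceil$ primes and sends $(p,\, i \bmod p)$; the $t$-th prime is $\mathcal{O}(t\log t) = \mathcal{O}(\log n\cdot\log\log n)$, so this costs $\mathcal{O}(\log\log n)$ bits, and Bob accepts iff $i \equiv j \pmod p$. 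Correctness holds because, for $i \ne j$, the nonzero number $i-j$ with $|i-j| < n$ has at most $\log_2 n$ distinct prime factors, so $\Pr[p \mid i-j] \le 1/3$.

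For the \emph{lower bound} $C(\mathrm{IDX}_n) = \Omega(n)$ I would run an information-theoretic argument: take $X$ uniform over $\{0,1\}^n$ and let $M$ be Alice's message; from $M$ (and Bob's randomness) Bob recovers each $X_i$ with probability $\ge 2/3$, so Fano's inequality gives $H(X_i\mid M) \le H(1/3) < 1$, hence $H(X\mid M) \le \sum_{i=1}^n H(X_i\mid M) \le n\, H(1/3)$, and therefore $\mathrm{cost}(P) \ge \log_2(\#\text{messages}) \ge H(M) \ge I(X;M) = n - H(X\mid M) = \Omega(n)$. For $C(\mathrm{GT}_n) = \Omega(\log n)$ I would invoke the round-elimination argument of the cited reference; the intuition is that a single message of $o(\log n)$ bits cannot localize $i$ finely enough to track the value of $\mathrm{GT}_n(i,j)$ for $j$ near $i$. (Note there is no reduction from $\mathrm{IDX}$ here, since $\mathrm{GT}_n$ is strictly easier.)

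The only genuinely delicate point, and where I expect the main work, is $C(\mathrm{EQ}_n) = \Omega(\log\log n)$; here I would argue directly about the private-coin protocol rather than via combinatorial rectangles. After a constant-factor amplification reducing the error to a small constant $\eps$, let $P$ have cost $c$, so Alice's message always lies in a set of size $N \le 2^{c+1}$. For an input $x$ let $\mu_x$ be the distribution of Alice's message, for a message $m$ let $\alpha(m,y)$ be Bob's acceptance probability, and set $S_x = \{ m : \alpha(m,x) \ge 1/2\}$. Correctness forces $\mu_x(S_x) \ge 1-2\eps$ and, applying the error bound to the pair $(x',x)$ with $\mathrm{EQ}_n(x',x)=0$, $\mu_{x'}(S_x) \le 2\eps$ for all $x' \ne x$; hence the $n$ distributions $\mu_x$ are pairwise at total variation distance $\ge 1-4\eps = \Omega(1)$ in the probability simplex over $\le 2^{c+1}$ points. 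A packing/volume estimate then shows that such a simplex contains at most $2^{\mathcal{O}(N)}$ points that are pairwise $\Omega(1)$-separated, so $n \le 2^{\mathcal{O}(2^{c})}$, i.e.\ $c = \Omega(\log\log n)$. The obstacle — and the reason $\log\log n$ rather than $\log n$ appears — is precisely this geometric packing bound in the simplex, together with the bookkeeping that the bounded quantity is the length of Alice's message over \emph{all} random strings, not just for one fixed random string.
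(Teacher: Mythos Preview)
Your reconstruction is essentially sound, but note that the paper itself gives \emph{no} proof of Theorem~\ref{thm:coco}: it is stated purely as a collection of cited results from \cite{KremerNR99} and \cite{KuNi97}, with only the remark afterward explaining how the $\Theta(\log\log n)$ bound for $\mathrm{EQ}_n$ on $\{1,\dots,n\}$ follows from the more common $\Theta(\log n)$ bound for $\mathrm{EQ}'_n$ on $\{0,1\}^n$. So you are doing strictly more than the paper does.

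On the substance of your sketches: the upper bounds are all correct and standard (trivial for $\mathrm{IDX}_n$ and $\mathrm{GT}_n$, random-prime fingerprinting for $\mathrm{EQ}_n$; your parameter choice and prime-counting estimates are fine). The information-theoretic lower bound for $\mathrm{IDX}_n$ via Fano is a clean standard route. For $\mathrm{GT}_n$ you simply defer to the round-elimination argument of \cite{KremerNR99}, which is exactly what the paper does anyway, so there is nothing to criticize. Your packing argument for the $\Omega(\log\log n)$ lower bound on $\mathrm{EQ}_n$ is correct: the derivation of $\mu_x(S_x)\ge 1-2\eps$ and $\mu_{x'}(S_x)\le 2\eps$ via Markov is right, and a volume/discretization bound in the $(N{-}1)$-simplex with $N\le 2^{c+1}$ indeed gives at most $2^{\mathcal{O}(N\log N)}$ pairwise $\Omega(1)$-separated points, hence $n\le 2^{\mathcal{O}(2^{c}\cdot c)}$ and $c=\Omega(\log\log n)$. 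One small presentational point: you could avoid the geometric packing entirely by invoking Newman's theorem in the contrapositive direction (a private-coin protocol of cost $c$ yields a public-coin protocol of cost $c$, and public-coin $\mathrm{EQ}'$ on $\{0,1\}^m$ needs $\Omega(1)$ bits deterministically after fixing coins, but the real content---that private coins cost an additive $\Theta(\log m)$---is precisely what your packing argument captures), so either route is fine.
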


\begin{remark}
Usually, the equality function is defined as the function 
$\mathrm{EQ}'_n \colon \{0,1\}^n \times \{0,1\}^n\to \{0,1\}$ with 
$\mathrm{EQ}'_n(u,v) = 1$ if and only if $u = v$ for all $u,v \in \{0,1\}^n$.
It is well known that $C(\mathrm{EQ}'_n) \in \Theta(\log n)$ which implies
$C(\mathrm{EQ}_n) \in \Theta(\log\log n)$.
\end{remark}
We need a promise version of point (1) from Theorem~\ref{thm:coco}.
In the following lemma we consider the restriction 
$\mathrm{IDX}_n|_D \colon D \to \{0,1\}$ for a subset $D \subseteq \{0,1\}^n \times \{1, \dots, n\}$.
\begin{lemma}[promise version of IDX]
	\label{lem:promise-idx}
	Let $f = \mathrm{IDX}_n|_D$ be a promise version of $\mathrm{IDX}_n$
	where $D \subseteq \{0,1\}^n \times \{1, \dots, n\}$ satisfies
	$|\{ i \in [1, n] : (x,i) \in D \}| \ge \frac{7}{8} n$ for all $x \in \{0,1\}^n$.
	Then $C(f) \in \Omega(n)$ holds.
\end{lemma}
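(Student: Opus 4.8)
The plan is to reduce the full index function $\mathrm{IDX}_N$ (for a suitable $N = \Theta(n)$) to the promise version $f = \mathrm{IDX}_n|_D$, so that a low-cost protocol for $f$ would yield a low-cost protocol for $\mathrm{IDX}_N$, contradicting Theorem~\ref{thm:coco}\eqref{index}. The key observation is that the promise guarantees that for every $x \in \{0,1\}^n$, at least $\frac{7}{8}n$ of the indices are "legal"; equivalently, the set of "forbidden" indices $B_x = \{ i : (x,i) \notin D \}$ has size at most $\frac{n}{8}$. I want to find, for a large set of strings $x$, a common block of indices that is legal for all of them simultaneously, so that I can embed an unrestricted index instance into that block.

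Concretely, first I would fix $N = \lceil n/4 \rceil$ (say) and partition $[1,n]$ into $4$ contiguous blocks of length roughly $N$. The plan is to argue, via a counting/averaging argument, that there is one block $[l,r]$ and a large family of strings whose forbidden set avoids that block — but this does not quite work directly because the adversary can spread forbidden indices. A cleaner approach: I would use a random shift. Given an unrestricted instance $(u,j) \in \{0,1\}^N \times \{1,\dots,N\}$ of $\mathrm{IDX}_N$, Alice and Bob use public (shared) randomness to pick a uniformly random embedding $\sigma$ of $\{1,\dots,N\}$ into $\{1,\dots,n\}$ — for instance a random injection, or a random offset together with a fixed spacing — and Alice pads $u$ into a string $x \in \{0,1\}^n$ that agrees with $u$ on the image of $\sigma$ (and is, say, $0$ elsewhere). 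Then Bob queries index $\sigma(j)$. The embedded instance is answered correctly by the protocol for $f$ provided $(x,\sigma(j)) \in D$. Since $\sigma(j)$ is (marginally) a uniformly random element of $\{1,\dots,n\}$ and $|B_x| \le n/8$, the probability that $\sigma(j)$ is forbidden is at most $1/8$, so conditioning on the good event the protocol errs with probability at most $\frac13 + \frac18 < \frac12$; a standard constant-factor amplification then pushes this back below $\frac13$, still at cost $O(C(f))$.

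The main technical point — and the step I expect to require the most care — is making the reduction work with \emph{private} randomness, since the definition of randomized one-way protocols in the excerpt gives Alice and Bob only private coins $R_a, R_b$. The random embedding $\sigma$ needs to be known to Bob to query the right index, so naively this is public randomness. One fix is to have Alice choose $\sigma$ privately and prepend a description of $\sigma(j)$... but Alice does not know $j$. The cleaner fix: have Alice choose $\sigma$ privately and transmit nothing extra about it, and instead have Bob choose $\sigma$ privately; but then Alice does not know which positions to fill with $u$. The resolution is to use Newman's theorem (public randomness can be simulated by private randomness at an additive $O(\log(\text{input size})) = O(\log n)$ cost in communication and a small increase in error), which is acceptable since we only care about an $\Omega(n)$ lower bound. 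Alternatively, and perhaps more simply, one avoids randomness in the embedding altogether by a direct counting argument: since each $x$ forbids at most $n/8$ indices, a pigeonhole/probabilistic-method argument over the choice of a length-$N$ arithmetic-progression block shows that for \emph{every} $x$ there is \emph{some} block that is entirely legal, but the block may depend on $x$, which breaks the reduction — so one really does need the averaged/randomized version plus Newman, or a more clever deterministic embedding.

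I would therefore structure the proof as: (1) state the reduction from $\mathrm{IDX}_N$ with $N = \Theta(n)$; (2) describe the randomized embedding and show the error stays below $\frac12$ using $|B_x| \le n/8$ and a union bound; (3) invoke Newman's theorem (or constant amplification followed by it) to convert public to private randomness at additive cost $O(\log n)$ and to restore error $\le \frac13$; (4) conclude $C(\mathrm{IDX}_N) \le C(f) + O(\log n)$, hence $C(f) \ge C(\mathrm{IDX}_N) - O(\log n) \in \Omega(N) = \Omega(n)$ by Theorem~\ref{thm:coco}\eqref{index}. The only real subtlety is bookkeeping the constants so that $\frac13 + \frac18$ (and whatever amplification introduces) stays safely below $\frac12$; everything else is routine.
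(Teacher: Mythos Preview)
Your reduction has a real gap: the bound $\Pr[\sigma(j)\in B_x]\le 1/8$ does not follow from ``$\sigma(j)$ is marginally uniform'' alone, because with zero padding the string $x$ depends on $\sigma$, so $\sigma(j)$ and $B_x$ are correlated and an adversarial $D$ can exploit this. Concretely, take $D$ that excludes exactly the pairs $(e_k,k)$ for all $k\in[1,n]$, where $e_k\in\{0,1\}^n$ is the string with a single $1$ at position $k$; then $|B_{e_k}|=1\le n/8$ and $B_x=\emptyset$ for all other $x$, so the hypothesis holds. Now feed in $u=10^{N-1}$ and $j=1$: your zero-padded string is $x=e_{\sigma(1)}$ and the query is $\sigma(1)$, so $(x,\sigma(j))=(e_{\sigma(1)},\sigma(1))\notin D$ with probability $1$, and the $f$-protocol may answer arbitrarily. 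The fix is to also draw a public mask $r\in\{0,1\}^n$, set $x_{\sigma(k)}=u_k\oplus r_{\sigma(k)}$ and $x_i=r_i$ off the image, and have Bob XOR the reply with $r_{\sigma(j)}$; then for each fixed $u$ the pair $(\sigma,x)$ is uniform and independent, so indeed $\Pr[\sigma(j)\in B_x]\le 1/8$, and your error bookkeeping, amplification, and Newman go through.

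For comparison, the paper avoids any reduction (and hence Newman) and argues directly: amplify a protocol for $f$ to error $\le 1/8$, fix the coins via the easy direction of Yao to get a deterministic protocol $P$ that errs on at most a $1/8$ fraction of the uniform distribution on $D$, and set $\nu(x)=P(x,1)\cdots P(x,n)$ (with arbitrary values where $(x,i)\notin D$). One gets $\mathbb{E}_x[\Delta(x,\nu(x))]\le n/8+n/8=n/4$, so by Markov at least $2^n/3$ strings lie in a Hamming ball of radius $3n/8$ around some $\nu(x)$; the binary-entropy volume bound then forces $|\{\nu(x):x\}|\ge 2^{\Omega(n)}$, i.e.\ the cost is $\Omega(n)$.
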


\begin{proof}
        Let $P' = (a,b)$ be a randomized one-way protocol, which computes 
        $f$ and has error probability $\leq 1/8$.
        Let $R_a$ (resp., $R_b$) be the set of random choices of Alice (resp., Bob).
        For $(x,i) \in D$
        consider the $\{0,1\}$-valued random variable 
        \[ X_{x,i} = |b(a(x,r_a),i,r_b) - \mathrm{IDX}_n(x,i)| . \]
        Its expectation is the error probability. Hence, for all $(x,i) \in D$
        we have
        \[
        E[X_{x,i}] \leq \frac{1}{8} .
        \]
        By taking the average over all $(x,i) \in D$ and using linearity of expectation, we obtain
	\[
		E\bigg[\frac{1}{|D|} \sum_{(x,i) \in D} X_{x,i}\bigg] = 
		\frac{1}{|D|} \sum_{(x,i) \in D} E[X_{x,i}] \le \frac{1}{8}.
	\]
	Hence, there exist random choices $r_a \in R_a, r_b \in R_b$ such that
         \[
		\frac{1}{|D|} \sum_{(x,i) \in D}|b(a(x,r_a),i,r_b) - \mathrm{IDX}_n(x,i)| \le \frac{1}{8}.
	\]
        	Let $P = (a(\cdot,r_a), b(\cdot,\cdot,r_b))$ be the deterministic one-way protocol obtained from $P'$ by fixing Alice's (resp., Bob's) random
	choice to $r_a$ (resp., $r_b$). 
	We then have	
	\begin{equation}
		\Pr_{(x,i) \in D}[P(x,i) \neq \mathrm{IDX}_n(x,i)] \le \frac{1}{8}
		\label{eq:hard-dist}
	\end{equation}
	where we consider the uniform distribution on $D$. The above argument is of course nothing else than
	the easy direction of Yao's min-max principle (see also  \cite[Theorem~3.20]{KuNi97}).
	
	Let $\nu \colon \{0,1\}^n \to \{0,1\}^n$
	be the function $\nu(x) = P(x,1) P(x,2) \cdots P(x,n)$
	where we interpret $P(x,i) = 0$ for all $(x,i) \notin D$.
	If $s$ is the cost of $P$ (which is equal to the cost of $P'$),
	then $V = \{ \nu(x) \colon x \in \{0,1\}^n \}$ has size at most $2^s$.
	Let $\Delta(x,y)$ denote the Hamming distance between two words $x,y \in \{0,1\}^n$,
	i.e., the number of positions where $x$ and $y$ differ.
	For each $x = x_1 x_2 \cdots x_n \in \{0,1\}^n$ we have
	\[
		\Delta(x,\nu(x)) =  \sum_{i=1}^n |P(x,i)-x_i|
 \le \sum_{i : (x,i) \in D} |P(x,i)-x_i| + \frac{n}{8}.
	\]
	By summing over all words $x \in \{0,1\}^n$ we obtain
	\begin{align*}
		\sum_{x \in \{0,1\}^n} \Delta(x,\nu(x)) &\le \sum_{(x,i) \in D} |P(x,i)-x_i| + \frac{2^n \cdot n}{8}\\
		& \stackrel{\eqref{eq:hard-dist}}{\le} \frac{|D|}{8} + \frac{2^n \cdot n}{8} \\ &\le \frac{2^n \cdot n}{8} + \frac{2^n \cdot n}{8} \\
		& = \frac{2^n \cdot n}{4}.
	\end{align*}
	This implies that the expected Hamming distance $\Delta(x,\nu(x))$ for a randomly picked word $x \in \{0,1\}^n$
	(under the uniform distribution on $\{0,1\}^n$) is bounded by
	\[
		E_{x \in \{0,1\}^n}[\Delta(x,\nu(x))] \le \frac{n}{4}.
	\]
	Applying Markov's inequality we get
	\[
		\Pr_{x \in \{0,1\}^n} [\Delta(x,\nu(x)) \ge \frac{3n}{8}] \le \frac{n}{4} \cdot \frac{8}{3n} = \frac{2}{3}
	\]
	and therefore
	\[
		\Pr_{x \in \{0,1\}^n} [\Delta(x,\nu(x)) \le \frac{3n}{8}] \ge \Pr_{x \in \{0,1\}^n} [\Delta(x,\nu(x)) < \frac{3n}{8}] \ge \frac{1}{3}.
	\]
	This means that there exists a set $U \subseteq \{0,1\}^n$ of size at least $\frac{2^n}{3}$
	such that for each $x \in U$ there exists a word $y \in V$ (namely $\nu(x)$) with $\Delta(x,y) \le \frac{3n}{8}$.
	Denote by $B_r(y) = \{ x \in \{0,1\}^n \colon \Delta(x,y) \le r \}$
	the ball of radius $r$. It is known that
	\[
		|B_{\eps n}(y)| = \sum_{i=0}^{\lfloor \eps n \rfloor} {n \choose i} \le 2^{H(\eps)n}
	\]
	where $0 < \eps < \frac{1}{2}$ and $H(\eps) = -\eps \log_2 \eps - (1-\eps) \log_2 (1-\eps)$
	is the {\em binary entropy function} \cite[Lemma~2.3.5]{Gray90}.
	Since
	\[
		U \subseteq \bigcup_{y \in V} B_{\frac{3n}{8}}(y)
	\]
	we know that $\frac{2^n}{3} \le |U| \le |V| \cdot |B_{\frac{3n}{8}}|$
	and therefore
	\[
		2^s \ge |V| \ge \frac{1}{3} \cdot 2^{(1-H(3/8))n}.
	\]
	Since $0 < H(3/8) \approx  0.954 < 1$ we conclude $s = \Omega(n)$.
\end{proof}

\subsection{Randomized lower bounds for failure ratio zero}

In this section  we prove the lower bounds in \eqref{lower-loglog}, \eqref{lower-log} and \eqref{lower-lin} from Theorem~\ref{thm:quatrochotomy}. 

\subsubsection{Linear lower bound}

We start with the proof of \eqref{lower-lin} from Theorem~\ref{thm:quatrochotomy}, which extends our linear
space lower bound from the deterministic setting \cite{GHL16} to the randomized setting.
We will need the following property:
\begin{lemma}\label{lemma:linearprop}
If $L \in \Reg \setminus \langle \LI, \Len \rangle$,
then there are $x_0,u_0,x_1,u_1,u\in\Sigma^*$ with the following properties:
\begin{itemize}
\item $|x_0|=|x_1| \ge 1$ and $|u_0|=|u_1| \ge 1$,
\item $u_0\{x_0u_0,x_1u_1\}^*u\;\cap L = \emptyset$ and
\item $u_1\{x_0u_0,x_1u_1\}^*u\;\subseteq L$.
\end{itemize}
\end{lemma}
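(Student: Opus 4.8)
The plan is to extract the witness words from the structure of the minimal DFA $A = (Q,\Sigma,q_0,\delta,F)$ for $L$, using the assumption $L \notin \langle \LI, \Len \rangle$ to locate a ``bad'' configuration. First I would recall the standard structural characterization: if $L$ is \emph{not} in $\langle \LI, \Len \rangle$, then $L$ fails to be recognized by any Boolean combination of left ideals and length languages, and one knows (e.g.\ from the deterministic setting in \cite{GHL16}) that this is equivalent to the existence of two states $p_0, p_1 \in Q$ that are ``confusable from the right'': there is a word $v$ with $\delta(q_0,v) = p_0$ or a word reaching $p_1$, such that $p_0$ and $p_1$ agree on all length-classes (i.e.\ for every $k$, $\delta(p_0,w)$ and $\delta(p_1,w)$ are either both final or both nonfinal for all $w$ of length $k$ — \emph{except} that they are distinguished by some specific word). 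The precise statement I would use is that there exist states $p_0 \neq p_1$, reached from $q_0$ by words of equal length, a word $u$ with $\delta(p_0,u) \notin F$ and $\delta(p_1,u) \in F$ (a distinguishing suffix), and loops at $p_0$ and $p_1$ of equal length on which the final-status is preserved — this is what will produce the pumping words $x_0 u_0$ and $x_1 u_1$.

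The key steps, in order: (1) Use the non-membership hypothesis to find, via a pigeonhole/Ramsey-type argument on the transition monoid, two words $y_0, y_1$ of equal length with $\delta(q_0, y_0) = p_0$, $\delta(q_0, y_1) = p_1$, such that $p_0$ and $p_1$ are not ``length-equivalent plus left-ideal-equivalent''; concretely, there is a suffix $u$ distinguishing them while no length language separates them. (2) By iterating and again applying pigeonhole to the finitely many states, replace $y_0, y_1$ by words of the form $y_i = z x_i u_i$ where reading $x_i u_i$ from $p_i$ returns to $p_i$ (an idempotent power in the transition monoid), and $|x_0 u_0| = |x_1 u_1|$, $|x_0| = |x_1| \geq 1$, $|u_0| = |u_1| \geq 1$ — the split into $x_i$ and $u_i$ is chosen so that $u_1$ alone (appended after the loop base point) already forces acceptance while $u_0$ forces rejection, and this persists under any interleaving of the two loop words because each loop word fixes the relevant state. (3) Set the common suffix $u$ to be the distinguishing word from step (1), and verify the two inclusion/exclusion statements: reading $u_1 (x_{i_1} u_{i_1}) \cdots (x_{i_k} u_{i_k}) u$ from the appropriate start state stays in the orbit of $p_1$ and ends in $F$, whereas with $u_0$ in front it stays in the orbit of $p_0$ and ends outside $F$; the length-matching $|x_0 u_0| = |x_1 u_1|$ guarantees that no length language can account for the difference, which is why this configuration can only exist when $L \notin \langle \LI, \Len \rangle$.

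The main obstacle I expect is step (1)–(2): turning the abstract statement ``$L$ is not a Boolean combination of left ideals and length languages'' into the concrete combinatorial witness with all the length constraints ($|x_0| = |x_1|$, $|u_0| = |u_1|$, both $\geq 1$, and $|x_0 u_0| = |x_1 u_1|$) satisfied simultaneously. The natural tool is the syntactic/transition monoid: left-ideal-ness corresponds to a structural condition on right-stable configurations, and the failure of it yields a factorization pattern, but one has to be careful to synchronize lengths — the usual trick is to pass to a common multiple of loop lengths and pad with idempotent powers, then carve out the $x_i$/$u_i$ split at a point where the final-status has already been decided for one branch but not the other. The actual inclusion/exclusion verifications in step (3) are then routine induction on the number of $\{x_0 u_0, x_1 u_1\}$-blocks, using that each block acts as the identity on the pair $(p_0,p_1)$ (or rather on the relevant state), so I would not dwell on them.
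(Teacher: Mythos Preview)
The paper does not actually prove this lemma here: it cites \cite{GHKLM18} for the characterization $\langle \LI,\Len\rangle = \{$regular $L$ with a $O(\log n)$ deterministic \SWA$\}$ and then defers the construction of the witness words entirely to \cite[proof of Theorem~9]{GHL16}. So you are attempting to reconstruct an argument that the present paper treats as a black box.

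Your outline has a genuine gap. You propose to find two states $p_0,p_1$ reached from $q_0$ by $u_0,u_1$ respectively, and loop words $x_iu_i$ with $\delta(p_i,x_iu_i)=p_i$. But the lemma requires
\[
u_0\{x_0u_0,x_1u_1\}^*u\cap L=\emptyset
\quad\text{and}\quad
u_1\{x_0u_0,x_1u_1\}^*u\subseteq L,
\]
so in particular $u_0\,x_1u_1\,u\notin L$ and $u_1\,x_0u_0\,u\in L$. For this you need the \emph{crossed} loops to behave correctly: $x_1u_1$ must also fix (the relevant behaviour at) $p_0$, and $x_0u_0$ must fix $p_1$. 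Your step~(2) only arranges $\delta(p_i,x_iu_i)=p_i$ for each $i$ separately, and your step~(3) just asserts that ``each block acts as the identity on the pair $(p_0,p_1)$'' without saying how this follows. Passing to idempotent powers in the transition monoid does not give this for free: an idempotent $e$ satisfies $e\circ e=e$, but two different idempotents need not fix each other's images.

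The construction in the cited works avoids this difficulty by working with a DFA $A$ for $L^\rev$ and the well-behaved SCC characterization (Theorem~\ref{thm:wb} in the present paper): since $L\notin\langle\LI,\Len\rangle$, $A$ has a reachable SCC $C$, a state $q\in C$, and equal-length words $v_0,v_1$ with $\delta(q,v_0),\delta(q,v_1)\in C$ of different $F$-status. Because $C$ is an SCC, there are return words $w_0,w_1$ with $\delta(q,v_iw_i)=q$; after equalizing $|w_0|=|w_1|$ by padding with loops one gets, in $A$, a \emph{single} state $q$ at which \emph{both} $v_0w_0$ and $v_1w_1$ loop. Reversing (set $u_i=v_i^\rev$, $x_i=w_i^\rev$, $u=$ the reverse of a word reaching $q$) then yields exactly the statement of the lemma, and the crossed-loop property is automatic because both loops live at the same state in the reverse automaton. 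If you want to make your sketch work, this is the missing idea: anchor both loop words at one state of the DFA for $L^\rev$, rather than at two separate states of the DFA for $L$.
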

\begin{proof}
By \cite{GHKLM18} the class of regular languages which admit a deterministic \SWA\ using space $\mathcal{O}(\log n)$ is exactly $\langle \LI, \Len \rangle$. Using this characterization, the words with the properties above
are constructed in \cite[proof of Theorem~9]{GHL16}. 
\end{proof}

\begin{theorem}
If $L \in \Reg \setminus \langle \LI, \Len \rangle$,
then every randomized \SWA\ $\R$ for $L$ satisfies $f(\R,n) \notin o(n)$.
\end{theorem}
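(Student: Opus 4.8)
The plan is to derive a randomized one-way communication protocol for the index function $\mathrm{IDX}_k$ from a hypothetical space-efficient randomized \SWA\ for $L$, and then to invoke Theorem~\ref{thm:coco}\eqref{index}. First I apply Lemma~\ref{lemma:linearprop} to obtain words $x_0,u_0,x_1,u_1,u$ with $|x_0|=|x_1|\ge 1$, $|u_0|=|u_1|\ge 1$, $u_0\{x_0u_0,x_1u_1\}^*u\cap L=\emptyset$ and $u_1\{x_0u_0,x_1u_1\}^*u\subseteq L$. Write $p=|x_0|=|x_1|$, $d=|u_0|=|u_1|$ and $c=p+d=|x_0u_0|=|x_1u_1|$, so $1\le d<c$. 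For every $N\ge 1$ I consider the window length $n=Nc+d+|u|$ and set $k=N$.

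Now fix a randomized \SWA\ $\R=(R_n)_{n\ge0}$ for $L$ and build a protocol $P$ for $\mathrm{IDX}_k$. On input $a_1\cdots a_k\in\{0,1\}^k$, Alice runs $R_n$ with her private randomness on $s_A=(x_{a_1}u_{a_1})(x_{a_2}u_{a_2})\cdots(x_{a_k}u_{a_k})$, which has length $kc=Nc$, and sends the encoding of the state she has reached; this message has length at most $f(\R,n)$. On input $i\in\{1,\dots,k\}$, Bob continues the simulation with his private, independent randomness on $s_B=(x_0u_0)^i\,u$ and outputs the answer that $R_n$ produces after $s_As_B$. The stream $s_As_B$ has length $(N+i)c+|u|\ge n$, so its active window is its suffix of length $n$; a short index computation shows this suffix begins exactly at the first letter of the factor $u_{a_i}$ inside block $i$ of $s_A$ (namely at position $(i-1)c+p+1$), so the window content equals
\[
  u_{a_i}\,(x_{a_{i+1}}u_{a_{i+1}})\cdots(x_{a_k}u_{a_k})\,(x_0u_0)^i\,u
  \;\in\; u_{a_i}\{x_0u_0,x_1u_1\}^*u .
\]
By the two properties of the witness words, this word lies in $L$ when $a_i=1$ and outside $L$ when $a_i=0$; that is, $s_As_B\in L_n$ iff $\mathrm{IDX}_k(a_1\cdots a_k,i)=1$. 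Since $\R$ is $(1/3,0)$-correct, $R_n$ errs at the final time instant with probability at most $1/3$, hence $P$ computes $\mathrm{IDX}_k$ with error at most $1/3$ and cost at most $f(\R,n)$. Therefore $f(\R,n)\ge C(\mathrm{IDX}_k)$, and Theorem~\ref{thm:coco}\eqref{index} gives $f(\R,n)=\Omega(k)=\Omega(N)=\Omega(n)$. This holds for all window lengths $n$ of the form $Nc+d+|u|$; prepending to $s_A$ a fixed filler word of length $n\bmod c$ (which is pushed out of the window and does not change its content) lets one treat every sufficiently large $n$, so $f(\R,n)\notin o(n)$.

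The only delicate point is that the active window has a fixed length $n$, while the position of the decisive factor $u_{a_i}$ inside $s_A$ depends on Bob's index $i$; Bob's trick of inserting exactly $i$ filler blocks $(x_0u_0)$ before the terminal factor $u$ keeps the window content inside $u_{a_i}\{x_0u_0,x_1u_1\}^*u$ (so the reduction is to the full $\mathrm{IDX}_k$, not a promise version) while making the window length independent of $i$. Verifying the resulting position arithmetic is the one calculation one must carry out with care; everything else is routine bookkeeping about the probabilistic automaton and the communication model.
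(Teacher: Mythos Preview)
Your proof is correct and follows essentially the same approach as the paper: both apply Lemma~\ref{lemma:linearprop}, have Alice encode her bitstring as $(x_{a_1}u_{a_1})\cdots(x_{a_k}u_{a_k})$, have Bob append $(x_0u_0)^i u$, and reduce from $\mathrm{IDX}_k$ via Theorem~\ref{thm:coco}\eqref{index}. Your extra paragraph about filler words to cover all residues of $n$ modulo $c$ is unnecessary (a lower bound on an arithmetic progression of window lengths already rules out $o(n)$), but it does no harm.
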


\begin{proof}
Consider a randomized \SWA\ $\R=(R_n)_{n \ge 0}$ for $L$. 
Consider the words $x_0,u_0,x_1,u_1,u\in\Sigma^*$ described in Lemma~\ref{lemma:linearprop}.
Note that the length of all these words only depends on the language $L$ and therefore is independent from the window length.
Let $m \ge 1$ be an arbitrary integer.
Using this information, we describe a randomized one-way communication protocol for $\mathrm{IDX}_m$.

Let $\alpha = \alpha_1 \cdots \alpha_m \in\{0,1\}^m$ be Alice's input and $i\in\{1,\dots,m\}$ be Bob's input.
Let $n =|x_0u_0|\cdot m+|u_0|+|u|\in\Theta(m)$.
Alice and Bob use their random choices in order to simulate the random choices
of the probabilistic automaton $R_{n}$ on certain words.
We define the word
\[
	w_\alpha = x_{\alpha_1} u_{\alpha_1} x_{\alpha_2} u_{\alpha_2} \cdots x_{\alpha_m} u_{\alpha_m}.
\]
Since $n=|x_0u_0|\cdot m+|u_0|+|u|$, $|x_0|=|x_1|$ and $|u_0|=|u_1|$ we have
\[
	\last_n(w_\alpha (x_0u_0)^i u) = u_{\alpha_i} x_{\alpha_{i+1}} u_{\alpha_{i+1}} \cdots x_{\alpha_m} u_{\alpha_m} (x_0u_0)^{i}u,
\]
which belongs to $L$ if and only if $\alpha_i = 1$.
This results in the following protocol $P_m$ for $\mathrm{IDX}_m$:
Alice simulates $R_n$ on $w_\alpha$ and sends the reached state to Bob.
If $1 \le i \le m$ is Bob's input, he continues the run in $R_n$ with the word $(x_0u_0)^i u$.
The algorithm $R_n$ then accepts (resp., rejects) with probability $2/3$ if $\alpha_i = 1$ (resp., $\alpha_i = 0$).
The cost of the protocol $P_m$ is bounded by $f(\R,n)$ (the maximal encoding length of reachable states in $R_n$).
By Theorem~\ref{thm:coco} we have
\[
	f(\R,|x_0u_0|\cdot m+|u_0|+|u|) = f(\R,n) \ge \mathrm{cost}(P_m) \in \Omega(m)
\]
and therefore $f(\R,n) \notin o(n)$.
\end{proof}

\subsubsection{Logarithmic lower bound}

Next we prove point \eqref{lower-log} from Theorem~\ref{thm:quatrochotomy}.
For that, we need the following automaton property.
Let $A = (Q,\Sigma,q_0,\delta,F)$ be a DFA in the following. 
A state $q$ is {\em trivial} if $\delta(q,x) \neq q$ for all $x \in \Sigma^+$, otherwise it is {\em non-trivial}.

\begin{definition}[synchronized state pair]\label{def:looping}
A pair $(p,q)\in Q\times Q$ of states is called {\em synchronized}
if there exist words $x,y,z\in\Sigma^*$ with $|x|=|y|=|z| \geq 1$ such that $\delta(p,x)=p$, $\delta(p,y)=q$ and $\delta(q,z)=q$.
A pair $(p,q)$ is called {\em reachable} from a state $r$ if $p$ is reachable from $r$.
A state pair $(p,q)$ is called {\em $F$-consistent} if either $\{p,q\} \cap F = \emptyset$.
or $\{p,q\} \subseteq F$.
\end{definition}

We remark that synchronized state pairs have no connection to the notion of synchronizing words.

\begin{lemma}
	\label{lem:factorial-length}
	A state pair $(p,q)$ is synchronized
	if and only if $p$ and $q$ are non-trivial
	and there exists a word $y \in \Sigma^+$ such that $|Q|!$ divides $|y|$ and $\delta(p,y) = q$
\end{lemma}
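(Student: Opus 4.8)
The plan is to prove both directions of the equivalence in Lemma~\ref{lem:factorial-length}.

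\emph{From synchronized to the divisibility statement.} Suppose $(p,q)$ is synchronized, witnessed by words $x,y,z$ with $|x|=|y|=|z| = d \geq 1$, $\delta(p,x)=p$, $\delta(p,y)=q$, $\delta(q,z)=q$. Since $\delta(p,x^k) = p$ for all $k \ge 1$ and $|x| \ge 1$, the state $p$ is non-trivial; likewise $q$ is non-trivial via $z$. For the word of length divisible by $|Q|!$, the idea is to pad $y$ on the left with a power of $x$ and on the right with a power of $z$: for any $a,b \ge 0$ the word $w_{a,b} := x^a y z^b$ satisfies $\delta(p, w_{a,b}) = q$ and has length $(a+1+b)d$. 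It remains to choose $a,b$ so that $|Q|!$ divides $(a+1+b)d$. Since $d = |x| = |y| = |z|$ and $d$ can be assumed $\le |Q|$ (if $d > |Q|$ we can first shrink the period: along the loop $\delta(p,x)=p$ of length $d > |Q|$ some state repeats, giving a shorter loop at $p$ — but this needs care to keep $|x|=|y|=|z|$; alternatively just observe $d$ divides $d!$ which divides $|Q|!$ when $d \le |Q|$). Actually the cleanest route: $|Q|!$ is divisible by $d$ whenever $1 \le d \le |Q|$, and since the loop $\delta(p,x)=p$ visits at most $|Q|$ states we may assume $|x| \le |Q|$, hence $d \le |Q|$ and $d \mid |Q|!$; then pick $N$ a multiple of $|Q|!/d$ with $N \ge 1$ and set $a + 1 + b = N$, e.g. $a = N-1$, $b = 0$. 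Then $|w_{a,0}| = Nd$ is a multiple of $|Q|!$, and $|w_{a,0}| \ge d \ge 1$, so $y := w_{a,0}$ works.

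\emph{From the divisibility statement to synchronized.} Conversely, suppose $p,q$ are non-trivial and there is $y \in \Sigma^+$ with $|Q|! \mid |y|$ and $\delta(p,y) = q$. Since $p$ is non-trivial there is $x' \in \Sigma^+$ with $\delta(p,x') = p$, and similarly $z' \in \Sigma^+$ with $\delta(q,z') = q$. The task is to make the three loop/transfer words have equal length. Set $L := \mathrm{lcm}(|x'|,|z'|)$; then $x := (x')^{L/|x'|}$ loops at $p$ with $|x| = L$, and $z := (z')^{L/|z'|}$ loops at $q$ with $|z| = L$. Now I need a transfer word from $p$ to $q$ of length exactly $L$. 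Take $\hat y := x^a y z^b$ for suitable $a,b \ge 0$; this transfers $p$ to $q$ and has length $(a+b)L + |y|$. We want $(a+b)L + |y| = cL$ for some integer $c$, i.e. $L \mid |y|$. This holds because $|y|$ is a multiple of $|Q|!$ and $L = \mathrm{lcm}(|x'|,|z'|)$ divides $|Q|!$ (as $|x'|, |z'| \le |Q|$ — each simple loop at a state has length at most $|Q|$, and we may take $x', z'$ to be such simple loops; more directly, $\mathrm{lcm}$ of numbers each $\le |Q|$ divides $|Q|!$). So write $|y| = kL$; then $\hat y := y$ already has length $kL = |x| \cdot k / 1$... — more precisely take $a + b$ so that $(a+b)L + |y|$ equals a common value with $|x|$ and $|z|$; since all of $|x|, |z|$ equal $L$ and $|y|$ is a multiple of $L$, simply padding is unnecessary: $\hat y := y$ has length a multiple of $L$, say $mL$ with $m \ge 1$. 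Then replace $x$ by $x^m$ and $z$ by $z^m$ so that $|x| = |y| = |z| = mL \ge 1$, $\delta(p,x) = p$, $\delta(p,y) = q$, $\delta(q,z) = q$. Hence $(p,q)$ is synchronized.

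\emph{Main obstacle.} The routine but slightly delicate point is the bookkeeping that forces the three words $x,y,z$ to have a \emph{common} length: one must reduce loop words to length at most $|Q|$ (using that a shortest loop visits no state twice), so that their lengths divide $|Q|!$, and then take appropriate powers and pads. I would state once and for all the observation that every non-trivial state carries a loop of length in $[1,|Q|]$, and that any common multiple of finitely many integers in $[1,|Q|]$ divides $|Q|!$; after that both directions are straightforward padding arguments. One should also double-check the degenerate reading of $F$-consistency is irrelevant here (the lemma makes no reference to $F$), and that $y \in \Sigma^+$ in the hypothesis indeed gives $|y| \ge 1$, so $|Q|! \mid |y|$ forces $|y| \ge |Q|!$, which is harmless.
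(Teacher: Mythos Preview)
Your converse direction is essentially the paper's argument (take short loops at $p$ and $q$ of length $\le |Q|$, note these lengths divide $|Q|!$ and hence divide $|y|$, then pump them to length $|y|$); the detour through $\mathrm{lcm}$ is harmless.

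Your forward direction, however, is overcomplicated and has a small gap. You want $(a+1+b)\,d$ to be a multiple of $|Q|!$ and try to arrange $d \mid |Q|!$ by claiming ``we may assume $|x|\le |Q|$''. But the witness words satisfy $|x|=|y|=|z|=d$; you cannot simply shrink the $x$-loop without breaking this equality, and you yourself flag that ``this needs care''. Your alternative (``$d$ divides $d!$ which divides $|Q|!$ when $d\le |Q|$'') is still conditional on $d\le |Q|$, so the case $d>|Q|$ is left unresolved.

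The paper sidesteps all of this with a one-line observation: take the word $x^{|Q|!-1}y$. It sends $p$ to $q$ and has length $(|Q|!-1)\,k + k = |Q|!\cdot k$, which is trivially a multiple of $|Q|!$ regardless of $k$. No need for $d\le |Q|$, no need for $d\mid |Q|!$, no $z$-padding. In your notation this is simply $a+1+b = |Q|!$; you were searching for a divisor when any multiple of $|Q|!$ would do.
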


\begin{proof}
	Let $x,y,z\in\Sigma^+$ with $|x|=|y|=|z|=k$ such that $\delta(p,x)=p$, $\delta(p,y)=q$, and $\delta(q,z)=q$.
	Then $p$ and $q$ are non-trivial and we have $\delta(p,x^{|Q|!-1}y) = q$
	where $x^{|Q|!-1}y$ has length $(|Q|!-1) \cdot k + k = |Q|! \cdot k$.

	Conversely, assume that $p$ and $q$ are non-trivial
	and there exists a word $y \in \Sigma^+$ such that $|Q|!$ divides $|y|$ and $\delta(p,y) = q$.
	Since the states $p$ and $q$ are non-trivial,
	there are words $x$ and $z$ of length at most $|Q|$ with $\delta(p,x) = p$ and $\delta(q,z) = q$.
	These words can be pumped up to have length $|y|$.
\end{proof}

Let $Q = T \cup N$ be the partition of the state set into the set $T$ of trivial states and the set $N$ of non-trivial states.
A function $\beta \colon \mathbb{N} \to \{0,1\}$ is {\em $k$-periodic} if $\beta(i) = \beta(i+k)$ for all $i \in \mathbb{N}$.

\begin{lemma}
	\label{lem:periodic}
	Assume that every synchronized pair in $A$ which is reachable from $q_0$ is $F$-consistent.
	Then for every word $v \in \Sigma^*$ of length at least $|Q|! \cdot (|T|+1)$
	there exists a $|Q|!$-periodic function $\beta_v \colon \mathbb{N} \to \{0,1\}$ such that the following holds:
	If $w \in v\Sigma^*$ and $\delta(q_0,w) \in N$, then we have
	$w \in L$ iff $\beta(|w|) = 1$.
\end{lemma}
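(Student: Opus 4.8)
The plan is to define $\beta_v$ so that it depends only on the residue $|w| \bmod |Q|!$, and then to reduce the lemma to the following single assertion: for any two words $w,w' \in v\Sigma^*$ with $\delta(q_0,w),\delta(q_0,w') \in N$ and $|w| \equiv |w'| \pmod{|Q|!}$ one has $w \in L \iff w' \in L$. Granting this, I would set $\beta_v(n)=1$ iff there exists some $w \in v\Sigma^*$ with $|w| \equiv n \pmod{|Q|!}$, $\delta(q_0,w) \in N$ and $w \in L$. This is visibly $|Q|!$-periodic, and the assertion above shows it computes membership correctly for every relevant $w$: if $w \in L$ then $w$ itself witnesses $\beta_v(|w|)=1$, and if $w \notin L$ then no witness of the right length can exist, so $\beta_v(|w|)=0$.

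The heart of the argument is a pigeonhole observation on the run of $A$ on the fixed prefix $v$. Write this run as $q_0 = r_0 \to r_1 \to \cdots \to r_{|v|}$. A trivial state has no return cycle, so it labels at most one position of any run; hence at most $|T|$ of the positions $0,1,\dots,|v|$ carry a state from $T$. On the other hand, for every residue $\rho \in \{0,\dots,|Q|!-1\}$ the set of positions in $[0,|v|]$ congruent to $\rho$ modulo $|Q|!$ has size at least $|T|+1$, precisely because $|v| \ge |Q|!\cdot(|T|+1)$. Consequently, for every $\rho$ there is a position $P \in [0,|v|]$ with $P \equiv \rho \pmod{|Q|!}$ and $r_P =: s \in N$; moreover $s = \delta(q_0, v[1,P])$ is reachable from $q_0$.

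Now fix $w,w' \in v\Sigma^*$ with $\delta(q_0,w)=:q \in N$, $\delta(q_0,w')=:q' \in N$ and $|w| \equiv |w'| \equiv \rho \pmod{|Q|!}$, and choose $P$ and $s$ as above for this $\rho$. Since $P \le |v|$ and both $w,w'$ extend $v$, both runs agree with the run on $v$ up to position $P$, so both pass through $s$ at position $P$. The suffix path from $s$ to $q$ has length $|w| - P \equiv \rho - \rho \equiv 0 \pmod{|Q|!}$, and similarly for the suffix path from $s$ to $q'$. If $|w| > P$, this suffix path is labelled by a word in $\Sigma^+$ whose length is a positive multiple of $|Q|!$, so by Lemma~\ref{lem:factorial-length} (applied to the non-trivial states $s$ and $q$) the pair $(s,q)$ is synchronized; being reachable from $q_0$, it is $F$-consistent by hypothesis, hence $s \in F \iff q \in F$. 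If $|w| = P$ then $w = v$ and $s = q$, so the same conclusion is trivial. Running the identical argument for $w'$ gives $s \in F \iff q' \in F$, whence $q \in F \iff q' \in F$, i.e.\ $w \in L \iff w' \in L$.

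The one genuinely inventive step is the choice of the intermediate state $s$: it has to lie inside the common prefix $v$ (so that it sits on both runs) and at a position whose residue modulo $|Q|!$ equals that of $|w|$, so that the remaining portion of each run from $s$ back to the accepting state has length divisible by $|Q|!$ and Lemma~\ref{lem:factorial-length} becomes applicable; the bound $|v| \ge |Q|!(|T|+1)$ is exactly what forces such a non-trivial state to exist for every residue. The remaining points — the degenerate case $P = |w|$ and the well-definedness and periodicity of $\beta_v$ — are routine.
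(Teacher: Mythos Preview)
Your proof is correct and follows essentially the same approach as the paper: both use pigeonhole on the run over $v$ to locate, for each residue class modulo $|Q|!$, a non-trivial state $s$ at a position with that residue, and then invoke Lemma~\ref{lem:factorial-length} to show that $(s,q)$ is a synchronized pair reachable from $q_0$, hence $F$-consistent. The only cosmetic difference is the order of exposition: the paper defines $\beta_v$ directly from the non-trivial states appearing along $v$ and then verifies the conclusion, whereas you first isolate the uniformity assertion (membership of $w$ depends only on $|w| \bmod |Q|!$) and derive $\beta_v$ from it afterward.
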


\begin{proof}
	Let $v = a_1 a_2\cdots a_k$  with
	$k \ge |Q|! \cdot (|T|+1)$, and consider the run
	\begin{equation}
		\label{eq:prefix-run}
		q_0 \xrightarrow{a_1} q_1 \xrightarrow{a_2} \cdots \xrightarrow{a_k} q_k
	\end{equation}
	of $A$ on $v$.
	Clearly, each trivial state can occur at most once in the run.
	First notice that for each $0 \le i \le |Q|!-1$ at least one of the states in
	\[
		Q_i = \{ q_{i+j|Q|!} : 0 \le j \le |T| \}
	\]
	is non-trivial because otherwise the set would contain $|T|+1$ pairwise distinct trivial states.
	Furthermore, we claim that the non-trivial states in $Q_i$ are either all final or all non-final:
	Take two non-trivial states $q_{i+j_1|Q|!}$ and $q_{i+j_2|Q|!}$ with $j_1 < j_2$.
	Since we have a run of length $(j_2-j_1)|Q|!$ from $q_{i+j_1|Q|!}$ to $q_{i+j_2|Q|!}$,
	the states form a synchronized pair by Lemma~\ref{lem:factorial-length}.
	Hence, by assumption the two states are $F$-consistent.
	
	Now define $\beta_v \colon \mathbb{N} \to \{0,1\}$ by
	\[
		\beta_v(m) = \begin{cases}
		1 & \text{if the states in } Q_{m \bmod |Q|!} \cap N \text{ are final}, \\
		0 & \text{if the states in } Q_{m \bmod |Q|!} \cap N \text{ are non-final},
		\end{cases}
	\]
	which is well-defined by the remarks above.
	Clearly $\beta_v$ is $|Q|!$-periodic.
	
	Let $w = a_1 \cdots a_m \in v\Sigma^*$ be a word of length $m \ge k$.
	The run of $A$ on $w$ prolongs the run in \eqref{eq:prefix-run}:
	\[
		q_0 \xrightarrow{a_1} q_1 \xrightarrow{a_2} \cdots \xrightarrow{a_m} q_m.
	\]
	Assume that $q_m \in N$.
	As argued above, there is a position $0 \le i \le k$
	such that $i \equiv m \pmod{|Q|!}$ and $q_i \in N$.
	Hence $(q_i,q_m)$ is a synchronized pair by Lemma~\ref{lem:factorial-length}
	which is $F$-consistent by assumption.
	Therefore $w \in L$ iff $q_m \in F$ iff $q_i \in F$ iff $\beta_v(|w|) = 1$.
\end{proof}

\begin{lemma}\label{lem:sync-pair}
Assume that every synchronized pair in $A$ which is reachable from $q_0$ is $F$-consistent.
Then $L(A)$ belongs to $\langle \PT, \PF, \Len \rangle$.
\end{lemma}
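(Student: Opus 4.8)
The plan is to decompose $L(A)$ according to whether the run of $A$ on the input word ends in a trivial state or a non-trivial one, and in the latter case to refine further by the length-$k$ prefix of the word, where $k = |Q|! \cdot (|T|+1)$, $T$ is the set of trivial states, and $N = Q \setminus T$. I will freely use the notation $L(A,P) = \{w \colon \delta(q_0,w) \in P\}$ from the preliminaries.

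First I would handle the words ending in a trivial state; this step uses neither the hypothesis nor Lemma~\ref{lem:periodic}. For $q \in T$ the language $L(A,\{q\})$ is prefix-free: if $u$ and $uv$ both reach $q$ and $v \ne \varepsilon$, then $\delta(q,v)=q$, contradicting the triviality of $q$. Hence $L(A,T) = \bigcup_{q\in T} L(A,\{q\})$ is a finite union of regular prefix-free languages and therefore lies in $\langle \PF \rangle$; consequently $L(A,N) = \Sigma^* \setminus L(A,T)$ is also in $\langle \PF \rangle$, and $\{w\in L(A)\colon \delta(q_0,w)\in T\} = \bigcup_{q\in F\cap T} L(A,\{q\}) \in \langle \PF \rangle$.

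Next I would treat $\{w \in L(A) \colon \delta(q_0,w)\in N\}$. The finitely many such $w$ of length $<k$ form a finite, hence prefix testable, language. For $|w|\ge k$, writing $v$ for the length-$k$ prefix of $w$, Lemma~\ref{lem:periodic}---which is exactly where the $F$-consistency hypothesis enters---provides a $|Q|!$-periodic function $\beta_v \colon \mathbb{N}\to\{0,1\}$ such that $w\in L(A) \iff \beta_v(|w|)=1$ whenever $\delta(q_0,w)\in N$. Since $\Sigma^k$ is finite, this gives
\[
\{w\in L(A) \colon \delta(q_0,w)\in N,\ |w|\ge k\} \;=\; \bigcup_{v\in\Sigma^k}\Big( v\Sigma^* \,\cap\, L(A,N) \,\cap\, \{w \colon \beta_v(|w|)=1\}\Big),
\]
where $v\Sigma^*\in\PT$ (being the reversal of the suffix testable language $\Sigma^* v^\rev$), $L(A,N)\in\langle\PF\rangle$ by the previous step, and $\{w\colon \beta_v(|w|)=1\}\in\Len$ since it is defined by a periodic condition on the word length. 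Hence this set lies in $\langle \PT,\PF,\Len\rangle$, and taking the union with the two $\langle\PF\rangle$-pieces and the finite short-word piece above yields $L(A)\in\langle\PT,\PF,\Len\rangle$.

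The conceptual work is essentially all contained in Lemma~\ref{lem:periodic}; the only genuinely new ingredient is the observation that runs ending in trivial states carve out prefix-free pieces, and the main thing to get right is the bookkeeping of the finitely many periodic functions $\beta_v$ (one per length-$k$ prefix) together with the routine checks that each of the three factors in the displayed union lies in its intended base class.
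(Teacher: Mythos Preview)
Your proof is correct and follows essentially the same approach as the paper: decompose $L(A)$ according to whether the run ends in a trivial or non-trivial state, handle the trivial part via prefix-freeness of each $L(A,\{q\})$ for $q\in T$, and for the non-trivial part split off the short words and apply Lemma~\ref{lem:periodic} to each length-$k$ prefix. Your displayed identity $v\Sigma^*\cap L(A,N)\cap\{w:\beta_v(|w|)=1\}$ is exactly the paper's $(v\Sigma^*\cap\{w:\beta_v(|w|)=1\})\setminus L(A,T)$ rewritten using $L(A,N)=\Sigma^*\setminus L(A,T)$.
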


\begin{proof}
Let $F_N = N \cap F$ and $F_T = T \cap F$. We decompose $L$ into
\[
	L = L(A, F_N) \cup \bigcup_{q \in F_T} L(A, \{q\}).
\]
First observe that $L(A, \{q\}) \in \PF$ for all $q \in F_T$
because a trivial state $q$ can occur at most once in a run of $A$.

It remains to show  that  $L(A, F_N)$ belongs to $\langle \PT,\PF,\Len \rangle$.
Using the threshold $k=|Q|! \cdot (|T|+1)$,
we distinguish between words of length at most $k-1$ and words of length at least $k$,
and group the latter set by their prefix of length $k$, i.e.,
\[
	L(A, F_N) = (L(A, F_N) \cap \Sigma^{\le k-1})\cup\bigcup_{v\in\Sigma^{k}}(L(A, F_N) \cap v\Sigma^*).
\]
The first part $L(A, F_N) \cap \Sigma^{\le k-1}$ is finite and thus prefix testable.
To finish the proof, we will show that $L(A, F_N) \cap v\Sigma^* \in  \langle \PT, \PF, \Len \rangle$
for each $v\in\Sigma^k$.
Let $v \in \Sigma^k$ and let $\beta_v \colon \mathbb{N} \to \{0,1\}$ be the $|Q|!$-periodic function from Lemma~\ref{lem:periodic}.
We know
\[
	L(A, F_N) \cap v\Sigma^* = (v\Sigma^* \cap \{ w \in \Sigma^* : \beta(|w|) = 1 \}) \setminus L(A, T).
\]
The language $\{ w \in \Sigma^* : \beta(|w|) = 1 \}$ is 
a regular length language, $v \Sigma^*$ is prefix testable and $L(A, T) \in \langle \PF \rangle$.
\end{proof}

The following lemma is an immediate consequence of Lemma~\ref{lem:sync-pair}.

\begin{lemma}\label{lemma:logprop-prefix}
If $L \in \Reg \setminus \langle \PT, \PF, \Len \rangle$,
then there exist $u,x,y,z\in\Sigma^*$ with $|x|=|y|=|z| \ge 1$ such that one of the following cases holds:
\begin{itemize}
\item $ux^* \subseteq L\;\text{ and }\;ux^*yz^* \cap L = \emptyset$ 
\item $ux^* \cap L= \emptyset\;\text{ and }\;ux^*yz^* \subseteq L$.
\end{itemize}
\end{lemma}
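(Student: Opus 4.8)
The plan is to obtain this lemma as the contrapositive of Lemma~\ref{lem:sync-pair}. First I would fix an arbitrary DFA $A = (Q,\Sigma,q_0,\delta,F)$ with $L(A) = L$. Since $L \notin \langle \PT, \PF, \Len \rangle$, Lemma~\ref{lem:sync-pair} tells us that its hypothesis must fail, i.e.\ there is a synchronized pair $(p,q)$ reachable from $q_0$ that is \emph{not} $F$-consistent. By Definition~\ref{def:looping}, non-$F$-consistency means that neither $\{p,q\} \cap F = \emptyset$ nor $\{p,q\} \subseteq F$; equivalently, exactly one of $p,q$ lies in $F$ (and in particular $p \neq q$).

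Next I would unpack the two remaining notions from Definition~\ref{def:looping}. ``Reachable from $q_0$'' yields a word $u \in \Sigma^*$ with $\delta(q_0,u) = p$. ``Synchronized'' yields words $x,y,z \in \Sigma^*$ with $|x| = |y| = |z| \ge 1$ and $\delta(p,x) = p$, $\delta(p,y) = q$, $\delta(q,z) = q$; these are exactly the words claimed in the statement, and the length constraints transfer verbatim. Combining them, for all $i,j \ge 0$ we get $\delta(q_0, u x^i) = p$ and $\delta(q_0, u x^i y z^j) = \delta(q, z^j) = q$, so $u x^* \subseteq L(A,\{p\})$ and $u x^* y z^* \subseteq L(A,\{q\})$.

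Finally I would split into the two cases according to which of $p,q$ is final. If $p \in F$ and $q \notin F$, then $u x^* \subseteq L$ while every word in $u x^* y z^*$ reaches the non-final state $q$, whence $u x^* y z^* \cap L = \emptyset$; this is the first alternative. Symmetrically, if $p \notin F$ and $q \in F$, then $u x^* \cap L = \emptyset$ and $u x^* y z^* \subseteq L$, which is the second alternative.

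There is essentially no obstacle here: all of the real content sits in Lemma~\ref{lem:sync-pair}, and the remaining argument is pure bookkeeping. The only points deserving a moment's care are reading off ``exactly one of $p,q$ is final'' correctly from the definition of $F$-consistency (which also forces $p \neq q$, so that the synchronized-pair witnesses are nondegenerate) and checking that $u$, $x$, $y$, $z$ as chosen satisfy both the membership inclusions and the required length conditions.
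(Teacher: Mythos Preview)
Your proposal is correct and follows exactly the approach the paper intends: the paper simply states that the lemma is an immediate consequence of Lemma~\ref{lem:sync-pair}, and your argument spells out precisely this contrapositive-and-unpack step. There is nothing to add.
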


	\begin{figure}
			\centering
			\begin{tikzpicture}[semithick,->,>=stealth]
			\node[state, initial left, initial text={}] (q0) {$q_0$};
			\node[state, double, right = 30pt of q0] (p) {$q_1$};
			\node[state, right = 30pt of p] (q) {$q_2$};
			
			\draw (q0) edge node[above] {\footnotesize $u$} (p);
			\draw [loop above] (p) edge node[above] {\footnotesize $x$} (p);
			\draw (p) edge node[above] {\footnotesize $y$} (q);
			\draw [loop above] (q) edge node[above] {\footnotesize $z$} (q);
			\end{tikzpicture}
		\caption{Forbidden pattern for $\langle \PT, \PF, \Len \rangle$ where $|x| = |y| = |z| \ge 1$
		(there is a symmetric case, where $q_1 \notin F$ and $q_2 \in F$).}
		\label{fig:pt-pf-len}
	\end{figure}

Note that $L \in  \langle \PT, \PF, \Len \rangle$ if and only if $L^\rev \in  \langle \ST, \SF, \Len \rangle$.
By applying Lemma~\ref{lemma:logprop-prefix} to $L^\rev$ we get:

\begin{lemma}\label{lemma:logprop}
If $L \in \Reg \setminus \langle \ST, \SF, \Len \rangle$,
then there exist $u,x,y,z\in\Sigma^*$ with $|x|=|y|=|z| \ge 1$ such that one of the following cases holds:
\begin{itemize}
\item $x^*u\subseteq L\;\text{ and }\;z^*yx^*u\cap L = \emptyset$ 
\item $x^*u\cap L=\emptyset\;\text{ and }\;z^*yx^*u\subseteq L$.
\end{itemize}
\end{lemma}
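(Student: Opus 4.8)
The plan is to derive this lemma from Lemma~\ref{lemma:logprop-prefix} by a reversal argument. The first ingredient is the elementary observation that taking the Boolean closure of a family of language classes commutes with reversal: since $w \mapsto w^\rev$ is an involutive bijection of $\Sigma^*$, it commutes with complementation, union and intersection, so $\langle \mathbf{A}_1, \dots, \mathbf{A}_k \rangle^\rev = \langle \mathbf{A}_1^\rev, \dots, \mathbf{A}_k^\rev \rangle$, where $\mathbf{A}^\rev = \{ K^\rev : K \in \mathbf{A}\}$. Since $\PT = \ST^\rev$ by definition, $\PF = \SF^\rev$ (a regular language is prefix-free iff its reversal is suffix-free), and $\Len^\rev = \Len$ (length languages are closed under reversal), this gives $\langle \PT, \PF, \Len \rangle = \langle \ST, \SF, \Len \rangle^\rev$. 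As $\Reg$ is also closed under reversal, $L \in \Reg \setminus \langle \ST, \SF, \Len \rangle$ implies $L^\rev \in \Reg \setminus \langle \PT, \PF, \Len \rangle$.

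Next I would apply Lemma~\ref{lemma:logprop-prefix} to the regular language $L^\rev$. This produces words $u', x', y', z' \in \Sigma^*$ with $|x'| = |y'| = |z'| \ge 1$ such that either (a) $u'(x')^* \subseteq L^\rev$ and $u'(x')^* y'(z')^* \cap L^\rev = \emptyset$, or (b) $u'(x')^* \cap L^\rev = \emptyset$ and $u'(x')^* y'(z')^* \subseteq L^\rev$.

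Finally, put $u = (u')^\rev$, $x = (x')^\rev$, $y = (y')^\rev$, $z = (z')^\rev$, so that $|x| = |y| = |z| \ge 1$. Since the reversal of a concatenation reverses the order of the factors and individually reverses each factor, one has, for all $m, n \ge 0$,
\[
	\bigl( u'(x')^m \bigr)^\rev = x^m u \quad\text{and}\quad \bigl( u'(x')^m y'(z')^n \bigr)^\rev = z^n\, y\, x^m u ,
\]
hence $\bigl( u'(x')^* \bigr)^\rev = x^* u$ and $\bigl( u'(x')^* y'(z')^* \bigr)^\rev = z^* y x^* u$. Using $w \in L^\rev \iff w^\rev \in L$, a containment $X \subseteq L^\rev$ is equivalent to $X^\rev \subseteq L$, and $X \cap L^\rev = \emptyset$ is equivalent to $X^\rev \cap L = \emptyset$. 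Thus case (a) turns into $x^* u \subseteq L$ together with $z^* y x^* u \cap L = \emptyset$, and case (b) turns into $x^* u \cap L = \emptyset$ together with $z^* y x^* u \subseteq L$ — precisely the two alternatives claimed.

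The whole argument is routine; the only step worth a sentence of care is the commutation of the Boolean closure with reversal, which is immediate because reversal is an involutive bijection of $\Sigma^*$. So I do not expect any genuine obstacle here.
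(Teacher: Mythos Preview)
Your proof is correct and follows exactly the paper's approach: the paper simply states that the lemma follows by applying Lemma~\ref{lemma:logprop-prefix} to $L^\rev$, and you have spelled out this reversal argument in full detail.
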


\begin{theorem}\label{lowerlognrandom}
If $L \in \Reg \setminus \langle \ST, \SF, \Len \rangle$,
then every randomized \SWA\ $\R$ for $L$
satisfies $f(\R,n) \notin o(\log n)$.
\end{theorem}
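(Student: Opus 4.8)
The plan is to reduce the greater-than function $\mathrm{GT}_N$ to the sliding-window membership problem for $L$ and then invoke $C(\mathrm{GT}_N) \in \Omega(\log N)$ from Theorem~\ref{thm:coco}\eqref{gt}. First I would extract the witness words: since $L \notin \langle \ST, \SF, \Len \rangle$, Lemma~\ref{lemma:logprop} provides $u,x,y,z \in \Sigma^*$ with $k := |x| = |y| = |z| \ge 1$ such that either (a) $x^*u \subseteq L$ and $z^*yx^*u \cap L = \emptyset$, or (b) $x^*u \cap L = \emptyset$ and $z^*yx^*u \subseteq L$; the numbers $|u|$ and $k$ depend only on $L$, not on the window length. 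I treat case (a); case (b) is entirely symmetric, since there Bob complements his output and thereby computes $\mathrm{GT}_N$ as well.

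Next, fix $N \ge 1$ and set $n := kN + |u|$. For Alice's input $i \in \{1,\dots,N\}$ let $\sigma_A(i) := z^N y\, x^{i-1}$, and for Bob's input $j \in \{1,\dots,N\}$ let $\sigma_B(j) := x^{N-j} u$. Putting $c := N-1+i-j \ge 0$, the stream $\sigma_A(i)\sigma_B(j)$ ends in $z^N y\, x^{c} u$ and has length greater than $n$, so $\last_n(\sigma_A(i)\sigma_B(j))$ is the length-$n$ suffix of $z^N y\, x^{c} u$. The key identity, which I would verify by a short case analysis, is
\[
\last_n(\sigma_A(i)\sigma_B(j)) =
\begin{cases}
x^N u & \text{if } i > j,\\
z^{j-i}\, y\, x^{N-1-(j-i)}\, u & \text{if } i \le j,
\end{cases}
\]
using that for $i>j$ one has $c \ge N$ (so the suffix stays inside the $x$-block), and for $i \le j$ one has $0 \le c \le N-1$ and $0 \le j-i \le N-1$ (so the suffix consists of $u$, the $c$ copies of $x$, the whole marker block $y$, and $j-i$ copies of $z$, all available in the stream). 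By the properties of the witness words this gives $\last_n(\sigma_A(i)\sigma_B(j)) \in L$ iff $i > j$, i.e.\ iff $\mathrm{GT}_N(i,j) = 1$.

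Finally I would turn a randomized \SWA\ $\R = (R_n)_{n \ge 0}$ for $L$ into a randomized one-way protocol for $\mathrm{GT}_N$: Alice simulates $R_n$ on $\sigma_A(i)$, using her private random bits for the coin tosses of $R_n$, and sends the encoding of the reached state to Bob; since every state reachable with positive probability on $\sigma_A(i)$ has encoding length at most $f(\R,n)$, the cost is at most $f(\R,n)$. Bob continues the run of $R_n$ from the received state on $\sigma_B(j)$ and outputs $\omega$ of the final state. As $|\sigma_A(i)\sigma_B(j)| \ge n$ and $\R$ is $(1/3,0)$-correct, at the final time instant the output equals $\chi_L(\last_n(\sigma_A(i)\sigma_B(j))) = \mathrm{GT}_N(i,j)$ with probability at least $2/3$, so this is a valid protocol for $\mathrm{GT}_N$. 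Hence $f(\R,n) \ge C(\mathrm{GT}_N) \in \Omega(\log N)$, and because $n = kN + |u| = \Theta(N)$ this yields $f(\R,n) \notin o(\log n)$.

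I do not expect a genuine obstacle here. The only point that needs care is the window-extraction identity above: one must check that the marker block $y$ is never split in the middle (which holds because the $x$-count changes in steps of $k = |y|$) and that enough copies of $z$ and of $x$ are present in the stream. The index shift by one is chosen precisely so that the case $i = j$ produces the suffix $y\, x^{N-1} u \in z^* y x^* u$ and hence lands on the rejecting side, which removes the only potential boundary case.
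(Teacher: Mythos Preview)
Your proposal is correct and follows essentially the same approach as the paper: both extract the witness words from Lemma~\ref{lemma:logprop}, build from $R_n$ a randomized one-way protocol for $\mathrm{GT}$ by letting Alice feed a $z^N y x^{\,\bullet}$-prefix and Bob append $x^{\,\bullet} u$, and then invoke the $\Omega(\log N)$ lower bound from Theorem~\ref{thm:coco}\eqref{gt}. Your indexing (Alice sends $x^{i-1}$, Bob appends $x^{N-j}$) is just a relabelling of the paper's choice (Alice sends $x^{m-i}$, Bob appends $x^{j}$), which swaps which of the two cases $i>j$ / $i\le j$ lands in $x^*u$; the window-extraction identity and the conclusion are the same.
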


\begin{proof}
Consider the words $u,x,y,z\in\Sigma^*$ described in Lemma~\ref{lemma:logprop}.
Let $\R=(R_n)_{n \ge 0}$ be a randomized \SWA\ for $L$.
Let $m \ge 0$. We describe a randomized one-way protocol $P_m$ for $\mathrm{GT}_m$:
Let $i \in \{1,\dots, m\}$ be the input of Alice and $j\in\{1,\dots, m\}$ be the input of Bob.
Let $n=|x|\cdot m+|u| \in \Theta(m)$.
Alice starts by running the probabilistic automaton $R_n$ on $z^myx^{m-i}$
using her random bits in order to simulate the random choices of $R_n$.
Afterwards, she sends the encoding of the reached state to Bob.
Bob then continues the run of $R_n$ from the transmitted state with the word $x^j u$.
Hence, $R_n$ is simulated on the word $w := z^myx^{m-i} x^{j}u = z^myx^{m-i+j}u$.
We have
\[
	\last_n(w) = \begin{cases}
	z^{i-1-j}yx^{m-i+j}u, & \text{if } i > j, \\
	x^m u, & \text{if } i \le j.
	\end{cases}
\]
By Lemma~\ref{lemma:logprop}, $\last_n(w)$ belongs to $L$ in exactly one of the two cases
$i>j$ and $i \leq j$. Hence Bob can distinguish these two cases with probability at least $2/3$.
It follows that the  protocol computes $\mathrm{GT}_m$ and its cost is bounded by $f(\R,n)$.
By Theorem~\ref{thm:coco} (point~\ref{gt}) we have
\[
	f(\R,|x|\cdot m +|u|) = f(\R,n) \ge \mathrm{cost}(P_m) \in \Omega(\log m)
\]
and therefore $f(\R,n) \notin o(\log n)$.
\end{proof}

\subsubsection{Doubly logarithmic lower bound}

Finally, we need to prove the doubly logarithmic lower bound from point 
\eqref{lower-loglog} in Theorem~\ref{thm:quatrochotomy}.

\begin{lemma}\label{lemma:classC}
Let $A = (Q,\Sigma,q_0,\delta,F)$ be a DFA such that
$\delta(q_0,vx)$ and $\delta(q_0,vy)$ are $F$-consistent
for all $v \in \Sigma^{|Q|}$ and $x,y \in \Sigma^*$ with $|x| = |y|$.
Then $L(A) \in \langle \PT, \Len \rangle$.
\end{lemma}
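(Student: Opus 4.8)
The plan is to decompose $L(A)$ by word length: long words are handled through their prefix of length $|Q|$ together with their length, and the finitely many short words form a prefix testable set.

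First I would extract the content of the hypothesis. For a fixed $v \in \Sigma^{|Q|}$, whether or not $\delta(q_0,vx) \in F$ depends only on $|x|$, so there is a well-defined function $\beta_v \colon \mathbb{N} \to \{0,1\}$ with $\beta_v(k)=1$ if and only if $\delta(q_0,vx) \in F$ for some (equivalently, every) $x \in \Sigma^k$. Picking any letter $a \in \Sigma$ (which exists whenever $\Sigma^{|Q|} \neq \emptyset$), we have $\beta_v(k)=1$ iff $\delta(q_0,v a^k) \in F$, and since $Q$ is finite the sequence of states $(\delta(q_0,v a^k))_{k \ge 0}$ is ultimately periodic with pre-period and period at most $|Q|$. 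Hence $\beta_v$ is ultimately periodic, so the set $\{\, |Q|+k : k \ge 0,\ \beta_v(k)=1 \,\} \subseteq \mathbb{N}$ is ultimately periodic, and therefore the language
\[
L_v \;:=\; \{\, w \in \Sigma^{\ge |Q|} : \beta_v(|w|-|Q|)=1 \,\}
\]
is a regular length language, i.e.\ $L_v \in \Len$.

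Next I would assemble the decomposition. Every word of length at least $|Q|$ has a unique prefix $v \in \Sigma^{|Q|}$, and writing $w = vx$ one gets $w \in L(A)$ iff $\delta(q_0,vx)\in F$ iff $\beta_v(|x|)=1$ iff $w \in L_v$; hence $L(A) \cap v\Sigma^* = v\Sigma^* \cap L_v$. Splitting off the finitely many words of length below $|Q|$,
\[
L(A) \;=\; \big(L(A) \cap \Sigma^{\le |Q|-1}\big) \;\cup\; \bigcup_{v \in \Sigma^{|Q|}} \big(v\Sigma^* \cap L_v\big).
\]
Here $L(A) \cap \Sigma^{\le |Q|-1}$ is finite and hence prefix testable; each $v\Sigma^*$ with $|v|=|Q|$ is prefix testable by the definition of $\PT$; and each $L_v \in \Len$. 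Since the right-hand side is a finite Boolean combination of such languages, $L(A) \in \langle \PT, \Len \rangle$.

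The only step requiring genuine care is the definition of $\beta_v$: it is precisely the hypothesis that makes it well defined (independence of the chosen representative $x \in \Sigma^k$), and ultimate periodicity then comes for free from finiteness of $Q$. Everything else is the routine length bookkeeping also used in the proof of Lemma~\ref{lem:sync-pair}, so I do not expect a real obstacle here.
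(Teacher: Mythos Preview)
Your proof is correct and follows essentially the same decomposition as the paper: split off short words as a finite (hence prefix testable) set, and for each $v\in\Sigma^{|Q|}$ express $L\cap v\Sigma^*$ as $v\Sigma^*$ intersected with a regular length language. The only cosmetic difference is that you prove regularity of $L_v$ via ultimate periodicity of $\beta_v$, whereas the paper obtains it for free by observing that $v^{-1}L$ is a (regular) quotient of $L$ and then writes $L\cap v\Sigma^* = \Sigma^{|Q|}(v^{-1}L)\cap v\Sigma^*$.
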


\begin{proof}
	Let $L = L(A)$.
	We decompose $L$ as
	\[
		L = (L \cap \Sigma^{\le |Q|-1}) \cup \bigcup_{v\in\Sigma^{|Q|}}(L \cap v\Sigma^*).
	\]
	The language $L\cap \Sigma^{\le |Q|-1}$ is finite and thus prefix testable.
	It remains to show that $L \cap v\Sigma^*$ belongs to $\langle \PT, \Len \rangle$ for each $v \in \Sigma^{|Q|}$.
	Consider the regular language
	\[
		v^{-1}L = \{ w \in \Sigma^* : vw \in L \},
	\]
	which is the set of all words which are accepted from $\delta(q_0,v)$.
	Since the states $\delta(q_0,vx)$ and $\delta(q_0,vy)$ are $F$-consistent
	for all words $x,y \in \Sigma^*$ of the same length,
	the language $v^{-1}L$ is a length language. Moreover,
	\[
		L \cap v\Sigma^* = v (v^{-1} L) = 
		(\Sigma^{|Q|} (v^{-1}L)) \cap v\Sigma^*,
	\]
	which belongs to $\langle \PT, \Len \rangle$.
\end{proof}

\begin{lemma}
	\label{lem:pt-len}
	Let $A = (Q,\Sigma,q_0,\delta,F)$ be a DFA for $L$.
	If $L \in \Reg \setminus \langle \PT, \Len \rangle$,
	then there exist words $u,x,y \in \Sigma^*$ and states $q_1, q_2 \in Q$ such that
	$|x| = |y| \geq 1$, $\delta(q_0,u) = q_1$, $\delta(q_1,x) = q_1$, $\delta(q_1,y) = q_2$
	and the pair $(q_1,q_2)$ is not $F$-consistent.
\end{lemma}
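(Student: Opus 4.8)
The plan is to prove this by contraposition of Lemma~\ref{lemma:classC}. That lemma states that if, for the given DFA $A$, all pairs $\delta(q_0,vx),\delta(q_0,vy)$ with $v\in\Sigma^{|Q|}$ and $|x|=|y|$ are $F$-consistent, then $L=L(A)\in\langle\PT,\Len\rangle$. Since we assume $L\notin\langle\PT,\Len\rangle$, there must exist $v\in\Sigma^{|Q|}$ and $x,y\in\Sigma^*$ with $|x|=|y|$ such that $\delta(q_0,vx)$ and $\delta(q_0,vy)$ are \emph{not} $F$-consistent. Necessarily $|x|=|y|\ge 1$ here: if $x=y=\varepsilon$ then both states equal $\delta(q_0,v)$ and hence are trivially $F$-consistent. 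Write $k=|x|=|y|\ge 1$ and $p=\delta(q_0,v)$, so that $\delta(p,x)$ and $\delta(p,y)$ have different finality.

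Next I would extract a loop. The run of $A$ on $v$ passes through $|Q|+1$ states, so by the pigeonhole principle it revisits some state; this yields a factorization $v=u_0 c w$ with $|c|\ge 1$ such that, setting $q_1:=\delta(q_0,u_0)$, we have $\delta(q_1,c)=q_1$ (a non-trivial looping state reachable from $q_0$) and $\delta(q_1,w)=p$. Thus $\delta(q_1,wx)=\delta(q_0,vx)$ and $\delta(q_1,wy)=\delta(q_0,vy)$ differ in finality, so $q_1$ is $F$-consistent with at most one of them; pick the other one, say $q_2:=\delta(q_1,\gamma)$ with $\gamma\in\{wx,wy\}$, so that $(q_1,q_2)$ is not $F$-consistent. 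At this point we already have $q_1$ looping and reachable, a word $\gamma$ from $q_1$ to $q_2$, and $(q_1,q_2)$ not $F$-consistent; what is still missing is a self-loop at $q_1$ together with a path $q_1\to q_2$ of the \emph{same} length, as the statement demands.

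This last point is the main obstacle. Padding $\gamma$ on the left by copies of $c$ produces paths $q_1\to q_2$ of every length in $|\gamma|+|c|\cdot\mathbb{N}$, and powers of $c$ give self-loops of every length in $|c|\cdot\mathbb{N}_{\ge 1}$, and these two progressions meet only if $|c|$ divides $|\gamma|$, which need not hold. To get around this I would not use the loop inside $v$ directly, but instead think of reading blocks equal to $x$ or to $y$: the maps $q\mapsto\delta(q,x)$ and $q\mapsto\delta(q,y)$ define an automaton over a two-letter block alphabet in which every transition rescales lengths by the factor $k$, and in which $p$ is a ``bad'' vertex (its two out-neighbours differ in finality). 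Starting from this bad vertex and iterating, one finds by pigeonhole a cycle of blocks at some reachable vertex $q_1$; since a cycle of $t$ blocks and a $t$-block path obtained from it both have $A$-length $tk$, the resulting two words automatically have equal length, which dissolves the divisibility problem. The delicate part — and the step I expect to require the most care, presumably via iterating the block-cycle and exploiting divisibility of the (bounded) cycle lengths by $|Q|!$ — is to guarantee that the swap of one block can be placed so that the path actually lands on a state whose finality differs from $q_1$; in other words, that the length-$1$ finality divergence at the bad vertex can be transported onto a genuine loop. Once such states $q_1,q_2$ and equal-length words $x,y$ have been produced, renaming them yields exactly the assertion of the lemma, with $u$ any word leading from $q_0$ to $q_1$.
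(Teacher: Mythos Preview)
Your opening moves match the paper exactly: invoke Lemma~\ref{lemma:classC} in contraposition to get $v\in\Sigma^{|Q|}$ and equal-length $s_0,s_1$ with $\delta(q_0,vs_0)\notin F$, $\delta(q_0,vs_1)\in F$, then pigeonhole on the run over $v$ to extract a non-trivial loop. You also correctly isolate the remaining obstacle, namely that the loop length and the length of the path to the bad state need not be compatible.

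Where you go astray is in the resolution. Your block-automaton idea is not wrong in spirit, but it is unfinished and, as you yourself note, the ``delicate part'' of transporting the finality discrepancy onto a genuine cycle is not carried out; indeed it is not clear how to do so without essentially reproducing the simpler argument below. The paper dissolves the divisibility problem by a direct trick that avoids any auxiliary automaton. In your notation: after absorbing the post-loop suffix into $\gamma$ (so that $\delta(q_1,\gamma_0)\notin F$ and $\delta(q_1,\gamma_1)\in F$ with $|\gamma_0|=|\gamma_1|$), first replace the loop word $c$ by a power $c^i$ so that $|c|\ge|\gamma_0|$. Now \emph{split the loop}: write $c=c_1c_2$ with $|c_1|=|\gamma_0|$, and move the base point to $q_1':=\delta(q_1,c_1)$. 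Then $c_2c_1$ is a self-loop at $q_1'$ of length $|c|$, while $c_2\gamma_0$ and $c_2\gamma_1$ are paths from $q_1'$ of the \emph{same} length $|c_2|+|\gamma_0|=|c|$, landing in states of opposite finality. Whichever of these two endpoints differs in finality from $q_1'$ furnishes the required $q_2$, with $u=u_0c_1$, $x=c_2c_1$, and $y\in\{c_2\gamma_0,c_2\gamma_1\}$.

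So the gap is not conceptual but tactical: you reached for a structural detour where a one-line pumping-and-splitting manoeuvre suffices.
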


\begin{proof}
	By Lemma~\ref{lemma:classC} there exist words $s \in \Sigma^{|Q|}$, $s_0,s_1 \in \Sigma^*$
	with $|s_0| = |s_1|$ such that $\delta(q_0,s s_0) \notin F$ and $\delta(q_0,s s_1)\in F$.
	Since the run of $A$ on $s$ visits $|Q|+1$ states, one of them is visited twice, 
	i.e., there is a factorization $s = vwz$ with $|w|\geq 1$ and a state $p \in Q$ such that
	$\delta(q_0,v) = p = \delta(p,w)$.
	We redefine $s_0 := z s_0$ and $s_1 := z s_1$.
	We can assume that $|w| \ge |s_0|=|s_1|$, otherwise we replace $w$ by $w^i$
	for some integer $i$ with $i \cdot |w| \ge |s_0|$.
	Factorize $w = w_1 w_2$ such that $|w_1| = |s_0| = |s_1|$.
	Let $q_1 := \delta(p,w_1)$.
	If $q_1 \in F$ we choose the words $u=v w_1$, $x=w_2 w_1$ and $y=w_2 s_0$;
	if $q_1 \notin F$ we choose the words $u=v w_1$, $x=w_2 w_1$ and $y=w_2 s_1$.
\end{proof}

	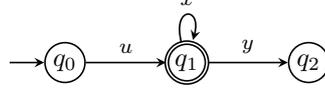
\begin{figure}
			\centering
			\begin{tikzpicture}[semithick,->,>=stealth]
			\node[state, initial left, initial text={}] (q0) {$q_0$};
			\node[state, double, right = 30pt of q0] (p) {$q_1$};
			\node[state, right = 30pt of p] (q) {$q_2$};
			
			\draw (q0) edge node[above] {\footnotesize $u$} (p);
			\draw [loop above] (p) edge node[above] {\footnotesize $x$} (p);
			\draw (p) edge node[above] {\footnotesize $y$} (q);
			\end{tikzpicture}
		\caption{Forbidden pattern for $\langle \PT, \Len \rangle$ where $|x| = |y| \ge 1$
		(there is a symmetric case, where $q_1 \notin F$ and $q_2 \in F$).}
		\label{fig:pt-len}
	\end{figure}

\begin{lemma}\label{lemma:loglogprop-prefix}
	If $L \in \Reg \setminus \langle \PT, \Len \rangle$, then
	there exist words $u,x,y,z \in \Sigma^*$ with $|x|=|y|=|z| \ge 1$
	such that one of the following cases holds:
	\begin{itemize}
		\item $ux^* \subseteq L$\; and \;$ux^*yz^* \cap L = \emptyset$ 
		\item $ux^* \cap L= \emptyset$\; and \;$ux^*yz^* \subseteq L$
		\item $u x^*\subseteq L$\; and \; $ux^*y\cap L = \emptyset$\; and \;$ux^*y z^+\subseteq L$
		\item $u x^*\cap L=\emptyset$\; and \;$ux^*y\subseteq L$\; and \;$ux^*y z^+ \cap L =\emptyset$
	\end{itemize} 
\end{lemma}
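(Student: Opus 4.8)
The plan is to bootstrap from Lemma~\ref{lem:pt-len}, which already provides the forbidden pattern: words $u,x,y\in\Sigma^*$ and states $q_1,q_2\in Q$ with $|x|=|y|\ge 1$, $\delta(q_0,u)=q_1$, $\delta(q_1,x)=q_1$, $\delta(q_1,y)=q_2$, and $(q_1,q_2)$ not $F$-consistent, i.e.\ exactly one of $q_1,q_2$ is final. By the evident symmetry under swapping ``final'' and ``non-final'' I would first treat the case $q_1\in F$, $q_2\notin F$ (which will yield the first and third alternative of the lemma); the case $q_1\notin F$, $q_2\in F$ is completely analogous and yields the second and fourth alternative.

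What is still missing is the word $z$, together with the requirement $|x|=|y|=|z|$. Set $k=|x|=|y|$ and $N=|Q|!$, and invoke the standard fact that the transformation of $Q$ induced by $x^{N}$ is idempotent, so that $\delta(q,x^{Nj})=\delta(q,x^{N})$ for every $q\in Q$ and every $j\ge 1$ (this is the same ``universal period'' phenomenon used in Lemma~\ref{lem:factorial-length}). Now put $x':=x^{N}$, $y':=x^{N-1}y$ and $z:=x^{N}$; all three words have the common length $Nk\ge 1$, and $\delta(q_1,x')=q_1$ while $\delta(q_1,y')=\delta(q_1,y)=q_2$ since $x^{N-1}$ fixes $q_1$. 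Writing $r:=\delta(q_2,x^{N})$, one checks for all $i\ge 0$ and $j\ge 1$ that $\delta(q_0,u(x')^{i})=q_1$, $\delta(q_0,u(x')^{i}y')=q_2$, and $\delta(q_0,u(x')^{i}y'z^{j})=\delta(q_2,x^{Nj})=r$. Hence $u(x')^{*}\subseteq L$ and $u(x')^{*}y'\cap L=\emptyset$; if $r\in F$ then in addition $u(x')^{*}y'z^{+}\subseteq L$, which is the third alternative (with witnesses $u,x',y',z$), whereas if $r\notin F$ then $u(x')^{*}y'z^{*}\cap L=\emptyset$ (the $z^{0}$-part lands in $q_2\notin F$, the $z^{j}$-parts in $r\notin F$), which is the first alternative. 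The symmetric hypothesis produces the second and fourth alternatives in exactly the same way.

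The point that forces the detour through the power $x^{|Q|!}$ — and which I regard as the only real obstacle — is that a short word iterated from $q_2$ generally runs around a cycle of states rather than settling at a fixed point, so $\delta(q_2,z^{j})$ would not be constant in $j$ and $ux^{*}yz^{+}$ would be neither contained in nor disjoint from $L$. Replacing $x$ by $x^{|Q|!}$ collapses that cycle, at the price of having to pad $x$ and $y$ to the common length $Nk$; padding $y$ on the \emph{left} by $x^{N-1}$ rather than on the right is what keeps its endpoint at $q_2$. Beyond this length bookkeeping the argument is a routine verification, so I do not anticipate further difficulties.
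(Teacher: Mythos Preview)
Your argument is correct and follows the same overall strategy as the paper: start from the forbidden pattern of Lemma~\ref{lem:pt-len}, treat one $F$-consistency case (the other being symmetric), and manufacture the missing loop word $z$ by iterating until a fixed point is reached, padding so that $|x|=|y|=|z|$.

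The one genuine difference is the choice of the word that is iterated from $q_2$. The paper iterates $y$: it follows the sequence $q_2 \xrightarrow{y} p_1 \xrightarrow{y} p_2 \xrightarrow{y} \cdots$, picks $k$ so that $q_3:=p_k$ satisfies $\delta(q_3,y^k)=q_3$, and takes the witnesses $u$, $x^k$, $x^{k-1}y$, $y^k$. You instead iterate $x$, using the clean observation that the transformation induced by $x^{|Q|!}$ is idempotent on $Q$, so $r:=\delta(q_2,x^{|Q|!})$ is automatically a fixed point of that transformation; your witnesses are $u$, $x^{|Q|!}$, $x^{|Q|!-1}y$, $x^{|Q|!}$. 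Both routes land in exactly the same two alternatives (first or third, resp.\ second or fourth under the symmetric hypothesis). Your version has the mild advantage of not needing to compute an explicit pre-period and period for a particular orbit, at the cost of a larger (but irrelevant) blow-up in word length.
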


\begin{proof}
	Let $A= (Q,\Sigma,q_0,\delta,F)$ be a DFA for $L$.
	Let $u,x,y$ be the words and $q_1,q_2$ the states from Lemma~\ref{lem:pt-len}.
	Without loss of generality we assume $q_1 \in F$ and $q_2 \notin F$. 
	Consider the state sequence
	\[
		q_2 = p_0 \xrightarrow{y} p_1 \xrightarrow{y} p_2 \xrightarrow{y} \cdots
	\]
	which is ultimately periodic, i.e. there exists $t \ge 0$ and $d \ge 1$ such that
	$p_i = p_{i+d}$ for all $i \ge t$.
	Let $k = (t+1)d$ and define $q_3 = p_k$.
	Then we have $\delta(q_2,y^k) = q_3 = \delta(q_3,y^k)$.
	To summarize, we have the following run in $A$:
	\[
		q_0 \xrightarrow{u} q_1 \xrightarrow{x^k} q_1 \xrightarrow{x^{k-1} y} q_2 \xrightarrow{y^k} q_3 \xrightarrow{y^k} q_3
	\]
	We have $u (x^k)^* \subseteq L$.
	If $q_3 \notin F$ then $u (x^k)^* (x^{k-1} y)(y^k)^* \cap L = \emptyset$, and the first case from the lemma holds.
	If $q_3 \in F$ then $u (x^k)^* (x^{k-1} y) \cap L = \emptyset$ 
	and $u (x^k)^* (x^{k-1} y)(y^k)^+ \subseteq L$, and the third case from the lemma holds.
\end{proof}

Lemma~\ref{lemma:loglogprop-prefix} applied to $L^\rev$ yields:

\begin{lemma}\label{lemma:loglogprop}
	If $L \in \Reg \setminus \langle \ST, \Len \rangle$,
	there exist words $u,x,y,z \in \Sigma^*$ with $|x|=|y|=|z| \ge 1$
	such that one of the following cases holds:
	\begin{itemize}
		\item $x^*u \subseteq L$\; and \;$z^*yx^*u \cap L = \emptyset$ 
		\item $x^*u \cap L= \emptyset$\; and \;$z^*yx^*u \subseteq L$
		\item $x^*u \subseteq L$\; and \; $yx^*u\cap L = \emptyset$\; and \;$z^+yx^*u\subseteq L$
		\item $x^*u \cap L=\emptyset$\; and \;$yx^*u\subseteq L$\; and \;$z^+yx^*u \cap L =\emptyset$
	\end{itemize} 
\end{lemma}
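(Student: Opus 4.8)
The plan is to derive this lemma from Lemma~\ref{lemma:loglogprop-prefix} by a pure reversal argument, as the phrasing ``applied to $L^\rev$'' already suggests. First I would record the language-theoretic fact that the class $\langle \ST, \Len \rangle$ is the reversal of $\langle \PT, \Len \rangle$, i.e.\ $L \in \langle \ST, \Len \rangle$ if and only if $L^\rev \in \langle \PT, \Len \rangle$. This holds because the map $K \mapsto K^\rev$ is an involution of $2^{\Sigma^*}$ that commutes with complementation (inside $\Sigma^*$) and with union, hence with taking Boolean closures, combined with the definitional identities $\PT = \{ K^\rev \colon K \in \ST \}$ (so also $\ST = \{ K^\rev \colon K \in \PT \}$) and $\Len^\rev = \Len$ (length languages are symmetric). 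Consequently, from $L \in \Reg \setminus \langle \ST, \Len \rangle$ we obtain $L^\rev \in \Reg \setminus \langle \PT, \Len \rangle$.

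Next I would apply Lemma~\ref{lemma:loglogprop-prefix} to the language $L^\rev$, obtaining words $u',x',y',z' \in \Sigma^*$ with $|x'| = |y'| = |z'| \ge 1$ that satisfy one of the four ``prefix'' cases for $L^\rev$. Then I would set $u = (u')^\rev$, $x = (x')^\rev$, $y = (y')^\rev$, $z = (z')^\rev$, which preserves the length constraints $|x| = |y| = |z| \ge 1$. Using $(vw)^\rev = w^\rev v^\rev$ and $(v^*)^\rev = (v^\rev)^*$ one checks $(u'(x')^*)^\rev = x^* u$, $(u'(x')^* y' (z')^*)^\rev = z^* y x^* u$, $(u'(x')^* y')^\rev = y x^* u$, and $(u'(x')^* y' (z')^+)^\rev = z^+ y x^* u$; and for every $K \subseteq \Sigma^*$ we have $K \subseteq L^\rev \iff K^\rev \subseteq L$ and $K \cap L^\rev = \emptyset \iff K^\rev \cap L = \emptyset$. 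Applying these equivalences termwise turns each of the four cases of Lemma~\ref{lemma:loglogprop-prefix} for $L^\rev$ into exactly the corresponding case of the present lemma for $L$, which finishes the proof.

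The only step that needs any attention is checking that reversal interacts correctly with the Boolean closure operator, i.e.\ that $\langle \ST, \Len \rangle^\rev = \langle \PT, \Len \rangle$; but since reversal is a bijection of $2^{\Sigma^*}$ commuting with all Boolean connectives, this is immediate and presents no genuine obstacle. Everything else is a mechanical translation of the set inclusions under reversal, so I do not anticipate any difficulty.
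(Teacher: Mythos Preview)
Your proposal is correct and takes exactly the same approach as the paper, which simply states that the lemma follows by applying Lemma~\ref{lemma:loglogprop-prefix} to $L^\rev$. You have filled in the details of this reversal argument carefully and accurately.
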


\begin{theorem}
If $L \in \Reg \setminus \langle \ST, \Len \rangle$,
then every randomized \SWA\ $\R$ for $L$ satisfies $f(\R,n) \notin o(\log\log n)$.
\end{theorem}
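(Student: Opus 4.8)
The plan is to reuse the template of the two preceding lower bound proofs: assuming $\R=(R_n)_{n\ge0}$ is a randomized \SWA\ for $L$, we build from $\R$ a cheap randomized one-way communication protocol for a function of known communication complexity. By Lemma~\ref{lemma:loglogprop}, from $L\in\Reg\setminus\langle\ST,\Len\rangle$ we obtain words $u,x,y,z\in\Sigma^*$ with $|x|=|y|=|z|\ge1$ falling into one of four cases, all phrased via membership of windows of shape $x^*u$, $y\,x^*u$ and $z^+y\,x^*u$ in $L$. We use a single word family: for $m\ge1$ and inputs $i,j\in\{1,\dots,m\}$ put
\[
	w_{i,j}\;=\;z^m\,y\,x^{m-i}\cdot x^{j}u\;=\;z^m\,y\,x^{m-i+j}u ,
\]
where Alice (holding $i$) writes the prefix $z^m y x^{m-i}$ and Bob (holding $j$) appends $x^j u$, and we take window length $n=|x|\cdot m+|y|+|u|\in\Theta(m)$.

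First I would compute $\last_n(w_{i,j})$, which is pure symbol bookkeeping using $|x|=|y|=|z|$. The claim is
\[
	\last_n(w_{i,j})\;=\;\begin{cases}
		x^{m+1}u, & i<j,\\
		y\,x^{m}u, & i=j,\\
		z^{\,i-j}\,y\,x^{\,m-i+j}u, & i>j,
	\end{cases}
\]
each a word of length exactly $n$: when $i<j$ the window has consumed all of $z^m$, then the factor $y$, then $j-i-1$ of the $x$'s; when $i=j$ it has consumed precisely $z^m$; and when $i>j$ it begins strictly inside the block $z^m$. Hence in each of the four cases of Lemma~\ref{lemma:loglogprop} the predicate ``$\last_n(w_{i,j})\in L$'' is a fixed Boolean function of $(i,j)$. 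In the first two cases $z^*y\,x^*u\cap L$ is either empty or all of $z^*y\,x^*u$, so this predicate equals $[i<j]$ or its negation and computing it amounts to computing $\mathrm{GT}_m$ (these two cases are moreover literally the cases of Lemma~\ref{lemma:logprop}, so the proof of Theorem~\ref{lowerlognrandom} applies verbatim and even yields the stronger $o(\log n)$ bound). In the last two cases $y\,x^*u\cap L=\emptyset$ while $z^+y\,x^*u\subseteq L$, or the complementary situation, so $\last_n(w_{i,j})\notin L$ holds exactly when $i=j$, and computing the predicate amounts to computing $\mathrm{EQ}_m$.

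This gives the protocol $P_m$: Alice simulates $R_n$ on her prefix using her random bits and sends the reached state (encoded in at most $f(\R,n)$ bits) to Bob, who continues the simulation of $R_n$ on $x^j u$ with his own random bits and outputs the appropriate Boolean value. Since $|w_{i,j}|\ge n$, the $(1/3,0)$-correctness of $\R$ bounds the error probability by $1/3$, so $P_m$ computes $\mathrm{GT}_m$ (cases 1 and 2) or $\mathrm{EQ}_m$ (cases 3 and 4) at cost $\mathrm{cost}(P_m)\le f(\R,n)$. Since $C(\mathrm{EQ}_m)\in\Theta(\log\log m)$ and $C(\mathrm{GT}_m)\in\Theta(\log m)$ by Theorem~\ref{thm:coco}, in every case $f(\R,n)\ge\mathrm{cost}(P_m)\in\Omega(\log\log m)=\Omega(\log\log n)$, i.e.\ $f(\R,n)\notin o(\log\log n)$.

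I expect the only delicate point to be the choice of $w_{i,j}$ itself: one must engineer the three-way split so that the ``forbidden'' shape $y\,x^*u$ becomes the active window precisely when $i=j$ (rather than for an interval of $j$'s), since this is exactly what upgrades the reduction from $\mathrm{GT}_m$ to $\mathrm{EQ}_m$ and thereby pins down the doubly logarithmic bound. The remaining verifications — the length arithmetic for $\last_n(w_{i,j})$, the inequality $|w_{i,j}|\ge n$ needed to invoke correctness of $\R$, and the (routine) observation that the function $(i,j)\mapsto[i<j]$ has the same one-way complexity as $\mathrm{GT}_m$ — are straightforward.
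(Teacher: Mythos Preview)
Your proposal is correct and follows essentially the same approach as the paper: invoke Lemma~\ref{lemma:loglogprop}, dispatch the first two cases to Theorem~\ref{lowerlognrandom}, and for the remaining two cases build a one-way protocol for $\mathrm{EQ}_m$ by simulating $R_n$ on $z^m y x^{m-i+j}u$ (the paper uses Alice's part $z^m y x^{i}$ and Bob's part $x^{m-j}u$, giving $z^m y x^{m+i-j}u$, which is your construction with $i$ and $j$ swapped). Your window-length computation and case analysis are correct, and since $\mathrm{EQ}_m$ is symmetric the relabeling is immaterial.
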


\begin{proof}
We apply Lemma~\ref{lemma:loglogprop} to the language $L$.
If $L$ satisfies one of the first two conditions (which are the conditions from Lemma~\ref{lemma:logprop})
then by the argument from the proof of Theorem~\ref{lowerlognrandom} we get $f(\R,n)\notin o(\log n)$. 

Let us assume that the fourth condition holds (the third is analogous).
Let $\R=(R_n)_{n \ge 0}$ be a randomized \SWA\ for $L$.
Let $m \ge 1$.
We describe a randomized one-way protocol $P_m$ for $\mathrm{EQ}_m$:
Let $i \in \{1,\dots, m\}$ be the input of Alice and $j \in \{1,\dots, m\}$ be the input of Bob.
Let $n=|x|\cdot (m+1) +|u| \in \Theta(m)$.
Alice now runs the probabilistic automaton $R_n$ on $z^m y x^i$.
Afterwards, she sends the encoding of the reached state to Bob.
Bob continues the run of $R_n$ from the transmitted state with the word $x^{m-j} u$
and outputs $1$ iff $R_n$ accepts.
Hence, $R_n$ is simulated on the word $z^m y x^{i-j+m} u$. We have
\[
	\last_n(z^m y x^{i-j+m} u) =
	\begin{cases}
	z^{j-i} yx^{i-j+m} u, & \text{if } i < j, \\
	yx^m u, & \text{if } i = j, \\
	x^{m+1} u, & \text{if } i > j.
	\end{cases}
\]
Therefore $\last_n(z^m y x^{i-j+m} u) \in L$ iff $i = j$.
It follows that this protocol computes $\mathrm{EQ}_m$.
By Theorem~\ref{thm:coco} (point~\ref{eq}) we have
\[
	f(\R,|x|\cdot (m+1) +|u|) = f(\R,n) \ge \mathrm{cost}(P_m) \in \Omega(\log \log m)
\]
and therefore $f(\R,n) \notin o(\log \log n)$.
\end{proof}

\subsection{Randomized lower bounds for arbitrarily small failure ratio}

In this section we prove the lower bound \eqref{upper-O(1)-failure} 
from Theorem~\ref{thm:dichotomy}.

Viewing a DFA $A$ as a directed graph with the vertex set $Q$ and arcs $(q,\delta(q,a))$
for all $q \in Q$, $a \in \Sigma$, we can talk about strongly connected components of $A$.
In other words, a {\em strongly connected component (SCC)} of $A$ is an inclusion maximal set of states $S \subseteq Q$
such that for all $p,q \in S$ there exists a word $x \in \Sigma^*$ with $\delta(p,x) = q$.
An SCC $S$ is {\em maximal} if for all $q \in S$ and $x \in \Sigma^*$ we have $\delta(q,x) \in S$.
An SCC $S$ is {\em trivial} if $S = \{q\}$ for a trivial state $q$.

\begin{definition}
Let $A = (Q,\Sigma,q_0,\delta,F)$ be a DFA.
\begin{itemize}
\item An SCC $C \subseteq Q$ is {\em well-behaved} if for all $q \in C$
	and $u,v \in \Sigma^*$ with $|u|=|v|$ and $\delta(q,u), \delta(q,v) \in C$ we have:
	$\delta(q,u) \in F$ if and only if $\delta(q,v) \in F$.
\item $A$ is {\em well-behaved} if every SCC which is reachable from $q_0$ is well-behaved. 
\item A state $q \in Q$ is called {\em positively idempotent} if there exists a word $x \in \Sigma^+$
such that $\delta(q_0,x) = \delta(q,x) = q$.
\item $A$ is {\em idempotently well-behaved} if every SCC which is reachable from a positively idempotent state
is well-behaved.
\end{itemize}
\end{definition}

\begin{theorem}[\cite{GHKLM18}] \label{thm:wb}
A DFA recognizes a language in $\langle \RI, \Len \rangle$ if and only if
it is well-behaved.
\end{theorem}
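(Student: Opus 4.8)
The plan is to prove the two directions of the equivalence separately; throughout fix $A=(Q,\Sigma,q_0,\delta,F)$, and write $\langle\RI\rangle$ for the Boolean closure of $\RI$ alone. The basic ingredient for the direction ``well-behaved $\Rightarrow L(A)\in\langle\RI,\Len\rangle$'' is the observation that for \emph{every} DFA $B$ and every SCC $S$ of $B$ one has $L(B,S)\in\langle\RI\rangle$: the run of $w$ ends in $S$ if and only if it visits $S$ and visits no SCC that is reachable from but distinct from $S$, and ``the run visits the state set $X$'' is precisely the right ideal $L(B,X)\Sigma^{*}$, so $L(B,S)=L(B,S)\Sigma^{*}\cap\overline{L(B,X)\Sigma^{*}}$ with $X$ the union of all SCCs strictly above $S$. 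Since $L(A)=\bigcup_{C}L(A,C\cap F)$, the union ranging over SCCs $C$ of $A$ with $C\cap F\neq\emptyset$, it suffices to place each $L(A,C\cap F)$ into $\langle\RI,\Len\rangle$. For a trivial SCC $C=\{c\}\subseteq F$ this is just $L(A,C)\in\langle\RI\rangle$. For a non-trivial $C$ I would invoke well-behavedness: viewing $C$ as a strongly connected graph with period $g$ (the gcd of its cycle lengths), $C$ splits into classes $C_{0},\dots,C_{g-1}$ with all edges going $C_{a}\to C_{a+1\bmod g}$, and since from any $q\in C$ every sufficiently long walk of the appropriate length class reaches \emph{every} vertex of the target class, well-behavedness forces each $C_{a}$ to lie entirely inside or entirely outside $F$; hence $C\cap F=\bigcup_{a\in S}C_{a}$ for some $S\subseteq\{0,\dots,g-1\}$.

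To finish this direction I would move to the product automaton $A'=A\times\mathbb{Z}_{g}$ that additionally tracks $|w|\bmod g$. Its SCCs lying over $C$ are exactly $g$ copies $\widetilde C_{0},\dots,\widetilde C_{g-1}$, indexed by the invariant $\sigma=(i-[r])\bmod g$ of a state $(r,i)$ (with $[r]$ the class of $r$ in $C$); a short computation then yields
\[
  L(A,C_{a})=L(A',C_{a}\times\mathbb{Z}_{g})=\bigcup_{\sigma=0}^{g-1}\bigl(L(A',\widetilde C_{\sigma})\cap\{w:|w|\equiv a+\sigma\ (\mathrm{mod}\ g)\}\bigr).
\]
Each $L(A',\widetilde C_{\sigma})$ lies in $\langle\RI\rangle$ by the SCC observation applied to $A'$, and the congruence condition is a regular length language, so $L(A,C_{a})\in\langle\RI,\Len\rangle$; summing over $a\in S$ and over all $C$ gives $L(A)\in\langle\RI,\Len\rangle$. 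I expect this product-automaton bookkeeping, together with the period-class analysis feeding it, to be the most technical part of the whole argument.

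For the converse I would argue the contrapositive. If $A$ is not well-behaved there are an SCC $C$ reachable from $q_{0}$, a state $q\in C$, and words $u,v$ with $|u|=|v|\geq 1$, $\delta(q,u),\delta(q,v)\in C$, $\delta(q,u)\in F$, and $\delta(q,v)\notin F$. Then $C$ is non-trivial, and since $|u|=|v|$ the states $\delta(q,u)$ and $\delta(q,v)$ sit in the same period class of $C$; choosing $\zeta,\zeta'$ with $\delta(\delta(q,u),\zeta)=q=\delta(\delta(q,v),\zeta')$ and padding $u\zeta,\,v\zeta'$ by loops inside $C$, one obtains cycles $e_{1},e_{2}$ at $q$ of any common, sufficiently large length $L$ divisible by the period, with $e_{1}$ beginning with $u$ and $e_{2}$ beginning with $v$. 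Now assume for contradiction that $L(A)=\tau(K_{1},\dots,K_{r},M_{1},\dots,M_{m})$ with all $K_{i}\in\RI$ and all $M_{j}$ of period dividing some $P_{0}$. The crucial move is to choose $L$ \emph{after} $P_{0}$ is known, namely as a large common multiple of $P_{0}$ and the period of $C$. Fixing $p$ with $\delta(q_{0},p)=q$ and forming the infinite word $p(e_{1}e_{2})^{\omega}$, consider its nested chain of prefixes $A_{0}\subsetneq B_{0}\subsetneq A_{1}\subsetneq B_{1}\subsetneq\cdots$, where $A_{j}=p(e_{1}e_{2})^{j}u$ and $B_{j}=p(e_{1}e_{2})^{j}e_{1}v$. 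By construction $A_{j}\in L(A)$ and $B_{j}\notin L(A)$ for all $j$, whereas all of these words have length $\equiv|p|+|u|\pmod{P_{0}}$ and hence the same membership in every $M_{j}$. Since each $K_{i}$ is a right ideal, membership in $K_{i}$ is monotone along a prefix chain and therefore eventually constant; so for large $j$ the words $A_{j}$ and $B_{j}$ have identical profiles with respect to $K_{1},\dots,K_{r},M_{1},\dots,M_{m}$, which forces $\tau$ to return both $1$ and $0$ -- a contradiction. The main obstacle in this direction is precisely this length alignment, which is why it is essential to be free to choose the loop length $L$ as a multiple of the a posteriori period $P_{0}$.
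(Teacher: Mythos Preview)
The paper does not give its own proof of this statement: Theorem~\ref{thm:wb} is quoted from \cite{GHKLM18} and simply cited. So there is no paper-internal proof to compare against; I can only evaluate your argument on its own.

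Your proof is correct in both directions and the strategy is sound. A couple of small points are worth tightening:

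\begin{itemize}
\item In the forward direction, the key structural fact you use --- that for \emph{any} DFA $B$ and \emph{any} SCC $S$ of $B$ one has $L(B,S)\in\langle\RI\rangle$, via
\[
  L(B,S)=L(B,S)\Sigma^{*}\setminus L(B,X)\Sigma^{*},
\]
with $X$ the union of SCCs strictly above $S$ --- is correct and does the heavy lifting. The period-class argument (each $C_a$ lies entirely in or entirely out of $F$) follows from well-behavedness exactly as you sketch: given $p_1,p_2\in C_a$ and any $q\in C$, for all large enough $\ell$ in the right residue class there are length-$\ell$ paths $q\to p_1$ and $q\to p_2$ inside $C$, so well-behavedness applies. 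The product with $\mathbb{Z}_g$ is done per SCC (different SCCs may have different periods), which is fine since you are placing each $L(A,C_a)$ separately.
\item In the converse, your phrase ``$M_j$ of period dividing some $P_0$'' should read \emph{eventually periodic} with eventual period dividing $P_0$: regular length languages need not be periodic from $0$. This is harmless --- you are already taking $j$ large, so just also take $j$ large enough that $|A_j|,|B_j|$ exceed the common preperiod threshold. With that adjustment, the prefix-chain monotonicity of right-ideal membership plus the length alignment $|A_j|\equiv|B_j|\pmod{P_0}$ yields the contradiction exactly as you say.
\end{itemize}

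Overall this is a clean self-contained proof; nothing essential is missing.
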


\begin{lemma}
	\label{lem:before-idempotent}
	Let $A= (Q,\Sigma,q_0,\delta,F)$ be a DFA in which all states which are reachable from positively idempotent states
	are non-final. Then $L = L(A)$ is a finite union of regular suffix-free languages.
\end{lemma}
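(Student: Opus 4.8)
The plan is to reformulate the hypothesis as a combinatorial restriction on $L = L(A)$ — roughly, that no word of $L$ begins with a high power — and then to show that this restriction forces the proper-suffix chains inside $L$ to have bounded length, which is exactly the condition needed to cut $L$ into finitely many regular suffix-free pieces.

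\textbf{Step 1: from positively idempotent states to forbidden prefixes.}
Set $n = |Q|$. First I would record that for every $x \in \Sigma^+$ the state $\delta(q_0,x^{n!})$ is positively idempotent: iterating the map $q \mapsto \delta(q,x)$ from $q_0$ runs through a tail of fewer than $n$ states and then a cycle whose length divides $n!$, so $\delta(q_0,x^{n!}) = \delta(q_0,x^{2n!})$, and the word $x^{n!}$ witnesses positive idempotence of $\delta(q_0,x^{n!})$. It follows that no $w \in L$ has a prefix of the form $v^{n!}$ with $v \in \Sigma^+$: such a prefix would put the positively idempotent state $\delta(q_0,v^{n!})$ on the accepting run of $w$, and then $\delta(q_0,w) \in F$ would be reachable from a positively idempotent state, contradicting the hypothesis.

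\textbf{Step 2: suffix chains in $L$ are bounded.}
Fix a DFA $\bar A$ for $L^\rev$ with $m$ states. Suppose $x_0 \prec x_1 \prec \dots \prec x_k$ is a chain of words of $L$ under the proper-suffix order; I claim $k < m$. Otherwise $x_0^\rev,\dots,x_k^\rev$ are $k+1 > m$ pairwise distinct prefixes all lying in $L^\rev$, so the run of $x_k^\rev$ in $\bar A$ reaches the same (accepting) state at the ends of $x_a^\rev$ and $x_b^\rev$ for some $a<b$; writing $x_b^\rev = x_a^\rev c$ with $c \in \Sigma^+$, this state carries a $c$-loop, hence $x_a^\rev c^j \in L^\rev$ and therefore $(c^\rev)^j x_a \in L$ for every $j \ge 0$. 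Taking $j = n!$ produces a word of $L$ with a $v^{n!}$-prefix where $v = c^\rev \neq \varepsilon$, contradicting Step 1. So every proper-suffix chain in $L$ has fewer than $m$ elements, with $m$ depending only on $A$.

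\textbf{Step 3: the decomposition.}
For $w \in L$ let $\mathrm{rank}(w)$ be the maximal length of a proper-suffix chain of $L$ ending at $w$; by Step 2 this lies in $\{1,\dots,m-1\}$, and $L_j := \{w \in L : \mathrm{rank}(w) = j\}$ partition $L$. Each $L_j$ is suffix-free, since $w \prec w'$ with $w,w' \in L$ forces $\mathrm{rank}(w') \ge \mathrm{rank}(w)+1$. Each $L_j$ is also regular: putting $R_1 = L$ and $R_j = L \cap \Sigma^+ R_{j-1}$, one checks by induction on $j$ that $R_j = \{w \in L : \mathrm{rank}(w) \ge j\}$ and that every $R_j$ is regular, whence $L_j = R_j \setminus R_{j+1}$ is regular. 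Therefore $L = \bigcup_{j=1}^{m-1} L_j$ is a finite union of regular suffix-free languages.

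I expect the crux to be Step 1: the hypothesis only speaks about reachability from positively idempotent states, and the key point is that this is precisely strong enough to rule out every ``heavy'' prefix in $L$ at once, which is what powers the pumping argument of Step 2. Steps 2 and 3 are then routine closure arguments for regular languages, the only mild care being to keep the bound $m$ uniform (it depends on $A$ alone, not on the particular chain).
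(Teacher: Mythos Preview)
Your proof is correct. Both you and the paper route through a DFA $\bar A$ for $L^\rev$ and use the same pumping idea (a loop in $\bar A$ at an accepting state yields $(c^\rev)^j x_a \in L$ for all $j$, and a high power of $c^\rev$ lands in a positively idempotent state), but the decompositions are organized differently. The paper simply writes $L^\rev = \bigcup_{q \in F'} L(\bar A,\{q\})$ and shows each $L(\bar A,\{q\})$ is prefix-free directly---two words $x,xy$ ending at the same $q$ give a $y$-loop, and then $(y^\rev)^k x^\rev \in L$ for a suitable $k$ contradicts the hypothesis---so reversing yields $|F'|$ suffix-free regular pieces of $L$. Your route first extracts a uniform forbidden-prefix statement (Step~1), uses it to bound proper-suffix chains in $L$ (Step~2), and then slices $L$ by chain rank (Step~3). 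The paper's version is shorter and gives a canonical decomposition tied to $\bar A$; yours makes the combinatorial content (bounded suffix chains) explicit, which is a pleasant reformulation. One harmless off-by-one: from $k<m$ you get chains of at most $m$ elements, so ranks lie in $\{1,\dots,m\}$ and the union in Step~3 should run up to $j=m$.
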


\begin{proof}
	Let $A' = (Q',\Sigma,q_0',\delta',F')$ be a DFA for $L^\rev$.
	We have
	\[
		L^\rev = \bigcup_{q \in F'} L(A',\{q\}).
	\]
	Let $q \in F'$.
	We claim that for every state $q \in F'$ the language $L(A',\{q\})$ is prefix-free.
	Assume that $x,xy \in L(A',\{q\})$ with $y \neq \varepsilon$, i.e. $\delta(q_0',x) = \delta(q_0',xy) = q$
	and hence also $\delta(q_0',xy^i) = q$ for all $i \ge 0$.
	Now consider the run of $A$
	\[
		q_0 \xrightarrow{y^\rev} q_1 \xrightarrow{y^\rev} q_2 \xrightarrow{y^\rev} \cdots
	\]
	which is ultimately periodic.
	Let $t \ge 0$ and $d \ge 1$ be such that $q_i = q_{i+d}$ for all $i \ge t$.
	For $k = (t+1)d$ the state $q_k$ is positively idempotent because $\delta(q_0,(y^\rev)^k) = q_k = \delta(q_k,(y^\rev)^k)$.
	However, $xy^k \in L^\rev$ and therefore $(y^\rev)^kx^\rev \in L$,
	which means that a final state is reachable from $q_k$, contradiction.
\end{proof}

\begin{theorem} \label{thm-IWB}
Let $A$ be an idempotently well-behaved DFA.
Then $L = L(A)$ belongs to $\langle \RI, \SF, \Len \rangle$. 
\end{theorem}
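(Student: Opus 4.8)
The plan is to split $L = L(A)$ according to whether the run of $A$ on an input ever enters the part of the automaton that sits ``behind'' a positively idempotent state, and to handle the two resulting languages with Theorem~\ref{thm:wb} and Lemma~\ref{lem:before-idempotent} respectively.

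Concretely, let $\mathrm{PI} \subseteq Q$ be the set of positively idempotent states of $A$ and let $R \subseteq Q$ be the set of all states reachable from some state in $\mathrm{PI}$. The first thing to observe is that $R$ is closed under the transition function, so it behaves like a trap region: a run of $A$ visits a state of $R$ if and only if it ends in a state of $R$. Since $F = (F\cap R)\cup(F\setminus R)$ is a disjoint decomposition, we get $L = L(A,F\cap R) \cup L(A,F\setminus R)$, and because $\langle\RI,\SF,\Len\rangle$ is Boolean closed it suffices to place each of the two languages into this class.

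For $L(A,F\cap R)$ I would show that the DFA $A_1 = (Q,\Sigma,q_0,\delta,F\cap R)$ is well-behaved and apply Theorem~\ref{thm:wb}. The key point is that every SCC of $A$ is either contained in $R$ or disjoint from $R$: since $R$ is a trap and an SCC is strongly connected, a single state of the SCC lying in $R$ forces the whole SCC into $R$. If an SCC is disjoint from $R$ it contains no final state of $A_1$ and is therefore trivially well-behaved in $A_1$; if an SCC $C$ is contained in $R$, then $C$ is reachable from a positively idempotent state and hence well-behaved in $A$ by the hypothesis on $A$, and since on the states of $C$ membership in $F\cap R$ coincides with membership in $F$, the SCC $C$ remains well-behaved in $A_1$. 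Thus $A_1$ is well-behaved and $L(A,F\cap R) \in \langle\RI,\Len\rangle \subseteq \langle\RI,\SF,\Len\rangle$.

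For $L(A,F\setminus R)$ I would note that this language is recognized by the DFA $A_2$ obtained from $A$ by declaring every state in $R$ non-final (again using that $R$ is a trap, so ending outside $R$ is the same as never entering $R$). Since $A_2$ has the same initial state and transition function as $A$, its positively idempotent states are exactly those of $A$, and all states reachable from them lie in $R$, hence are non-final in $A_2$. So $A_2$ satisfies the hypothesis of Lemma~\ref{lem:before-idempotent}, whence $L(A,F\setminus R) = L(A_2)$ is a finite union of regular suffix-free languages and in particular lies in $\langle\SF\rangle \subseteq \langle\RI,\SF,\Len\rangle$. Combining the two parts via Boolean closure finishes the proof. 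The only slightly delicate point I anticipate is the well-behavedness check for $A_1$ in the case of an SCC contained in $R$: one has to match the assumption ``every SCC reachable from a positively idempotent state is well-behaved'' against the definition of $R$, and verify that restricting the final set to $F\cap R$ does not spoil well-behavedness inside $R$. Everything else is routine bookkeeping.
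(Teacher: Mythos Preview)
Your proposal is correct and follows essentially the same approach as the paper: both split $L$ as $L(A,F\cap R)\cup L(A,F\setminus R)$ (the paper writes $F_{\mathrm{id}}$ for your $F\cap R$), handle the first piece via Theorem~\ref{thm:wb} and the second via Lemma~\ref{lem:before-idempotent}. In fact you supply the argument for why $A_1$ is well-behaved (the SCC case distinction based on $R$ being a trap), which the paper simply asserts without justification.
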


\begin{proof}	
	Consider an idempotently well-behaved DFA $A= (Q,\Sigma,q_0,\delta,F)$ for $L$.
	Let $F_{\mathrm{id}} \subseteq F$ be the set of final states which
	are reachable from a positively idempotent state.
	The DFA $B = (Q,\Sigma,q_0,\delta,F_{\mathrm{id}})$ is well-behaved. Theorem~\ref{thm:wb}
	implies that $L(B) \in \langle \RI, \Len \rangle$.
	In the DFA $C = (Q,\Sigma,q_0,\delta,F \setminus F_{\mathrm{id}})$ all final states are not reachable
	from positively idempotent states.
	Lemma~\ref{lem:before-idempotent} implies that $L(C) \in \langle \SF \rangle$. 
	Hence $L = L(B) \cup L(C) \in \langle \RI, \SF, \Len \rangle$.
\end{proof}

	\begin{figure}
			\centering
			\begin{tikzpicture}[semithick,->,>=stealth]
			\node[state, initial left, initial text={}] (q0) {$q_0$};
			\node[state, right = 30pt of q0] (p) {$p$};
			\node[state, right = 30pt of p] (q) {$q$};
			\node[state, above right = 25pt and 30pt of q] (qn) {};
			\node[state, double, below right = 25pt and 30pt of q] (qp) {};

			\draw (q0) edge node[above] {\footnotesize $x$} (p);
			\draw [loop above] (p) edge node[above] {\footnotesize $x$} (p);
			\draw (p) edge node[above] {\footnotesize $u$} (q);
			\draw [bend left = 15] (q) edge node[above left] {\footnotesize $y_0$} (qn);
			\draw [bend left = 15] (qn) edge node[below right] {\footnotesize $z_0$} (q);
			\draw [bend right = 15] (q) edge node[below left] {\footnotesize $y_1$} (qp);
			\draw [bend right = 15] (qp) edge node[above right] {\footnotesize $z_1$} (q);
			\end{tikzpicture}
		\caption{Forbidden pattern for $\langle \RI, \SF, \Len \rangle$ where $|x| \ge 1$ and $|y_0| = |y_1|$.}
		\label{fig:iwb}
	\end{figure}
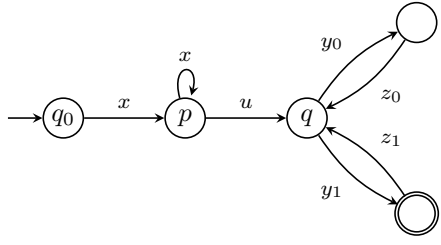

\begin{lemma} \label{lemma-RI-SF-Len}
If $L \in \Reg \setminus \langle \RI, \SF, \Len \rangle$, then there exist words $x,y_0,y_1,z_0,z_1 \in \Sigma^+, u \in \Sigma^*$
such that
\begin{itemize}
\item $|y_0|=|y_1|$, $|x| = |y_0z_0| = |y_1z_1|$,
\item $\{x\}^+ u \{y_0z_0\}^* y_0 \cap L = \emptyset$,
\item $\{x\}^+ u \{y_1z_1\}^* y_1 \subseteq L$.
\end{itemize}
\end{lemma}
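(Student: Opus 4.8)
The plan is to argue by contraposition from Theorem~\ref{thm-IWB}. Fix a DFA $A = (Q,\Sigma,q_0,\delta,F)$ for $L$. Since $L = L(A) \notin \langle \RI, \SF, \Len \rangle$, the DFA $A$ is not idempotently well-behaved, so there is a positively idempotent state $p$, witnessed by a word $x_0 \in \Sigma^+$ with $\delta(q_0,x_0) = \delta(p,x_0) = p$ (hence $\delta(q_0,x_0^i) = p$ for all $i \ge 1$), and an SCC $C$ reachable from $p$ which is not well-behaved. Unfolding the definition, there are a state $q \in C$ and words $s,t$ with $|s| = |t|$, $q_s := \delta(q,s) \in C$, $q_t := \delta(q,t) \in C$, and exactly one of $q_s,q_t$ final; without loss of generality $q_s \notin F$ and $q_t \in F$. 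A trivial SCC is vacuously well-behaved, so $C$ is nontrivial and in particular contains a cycle; moreover $|s| = |t| \ge 1$, since $s = t = \varepsilon$ would make $q$ both final and nonfinal. Finally fix $u_0 \in \Sigma^*$ with $\delta(p,u_0) = q$, which exists because $q \in C$ and $C$ is reachable from $p$.

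Next I would use the standard fact on walk lengths in strongly connected digraphs: letting $g \ge 1$ be the gcd of the lengths of all cycles in $C$, there is a threshold $N_0$ such that for all $r,r' \in C$ and all $\ell \ge N_0$ with $\ell$ congruent modulo $g$ to the length of some fixed $r \to r'$ walk, there is an $r \to r'$ walk of length exactly $\ell$ (this follows from the fact that the cycle lengths at a fixed base vertex of $C$, whose gcd is $g$, generate a numerical semigroup containing all sufficiently large multiples of $g$, combined with two fixed bridging walks). Since $s$ is a $q \to q_s$ walk of length $|s|$, every $q_s \to q$ walk has length $\equiv -|s| \pmod g$, and similarly every $q_t \to q$ walk has length $\equiv -|t| = -|s| \pmod g$. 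Now choose $\ell$ to be a common multiple of $|x_0|$ and $g$ large enough that $\ell - |s| \ge \max(N_0,1)$.

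To finish, set $x := x_0^{\ell/|x_0|}$ (so $|x| = \ell$ and $\delta(q_0,x^k) = p$ for every $k \ge 1$), $u := u_0$, $y_0 := s$, $y_1 := t$, and choose $z_0$ to be a $q_s \to q$ walk and $z_1$ a $q_t \to q$ walk, both of length $\ell - |s|$; these exist by the previous paragraph since $\ell - |s| \ge N_0$ and $\ell - |s| \equiv -|s| \pmod g$. Then $y_0 z_0$ and $y_1 z_1$ are cycles at $q$ of length $\ell$, so $|x| = |y_0 z_0| = |y_1 z_1| = \ell$ and $|y_0| = |y_1| = |s|$, and all of $x,y_0,y_1,z_0,z_1$ lie in $\Sigma^+$ while $u \in \Sigma^*$. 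For all $k \ge 1$, $j \ge 0$ and $i \in \{0,1\}$ we have $\delta(q_0, x^k u) = q$, then $\delta(q,(y_i z_i)^j) = q$ since $y_i z_i$ loops at $q$, and finally $\delta(q,y_i) = q_s$ if $i = 0$ and $= q_t$ if $i = 1$; hence $\{x\}^+ u \{y_0 z_0\}^* y_0 \cap L = \emptyset$ and $\{x\}^+ u \{y_1 z_1\}^* y_1 \subseteq L$, as claimed. The main obstacle is purely the length bookkeeping: one must force $|x|$, $|y_0 z_0|$ and $|y_1 z_1|$ to be equal while keeping $x$ a power of the idempotent loop $x_0$ and $y_i z_i$ cycles at $q$ through the prescribed (non)final states, and this is precisely why $\ell$ is taken to be a large common multiple of $|x_0|$ and $g$ and why the walk-length lemma for strongly connected digraphs is invoked.
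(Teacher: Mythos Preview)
Your proof is correct. The overall strategy coincides with the paper's: both invoke Theorem~\ref{thm-IWB} to obtain a positively idempotent state $p$ (with loop word) and a non-well-behaved SCC $C$ reachable from $p$, extract from $C$ a state $q$ together with equal-length words $y_0,y_1$ leading to a non-final and a final state inside $C$, and then adjust lengths so that $|x|=|y_0z_0|=|y_1z_1|$.

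The difference lies only in the length-equalization step. The paper simply fixes \emph{some} return paths $z_0,z_1$ from $\delta(q,y_0),\delta(q,y_1)$ back to $q$ (which exist because $C$ is an SCC) and then pumps: the loop $y_iz_i$ at $q$ is replaced by a power of itself, and $x$ by a power of itself, with the common length being the product $|x|\cdot|y_0z_0|\cdot|y_1z_1|$. This is entirely elementary. You instead invoke the Frobenius-type fact that in a strongly connected digraph with cycle-gcd $g$, walks of every sufficiently large length in the correct residue class exist, and use it to manufacture $z_0,z_1$ of a prescribed length $\ell-|s|$. This is a perfectly valid route, just heavier machinery than needed: once you have \emph{any} cycles $y_0z_0$ and $y_1z_1$ at $q$, replacing $z_i$ by $(z_iy_i)^{k_i-1}z_i$ lets you realize $y_iz_i'$ of length $k_i|y_iz_i|$ for any $k_i\ge 1$, and powers suffice to hit a common multiple.
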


\begin{proof}
Let $A= (Q,\Sigma,q_0,\delta,F)$ be a DFA for $L$.
By Theorem~\ref{thm-IWB} $A$ is not idempotently well-behaved.
Hence, $A$ contains
a positively idempotent state $p$ and an SCC that is reachable from $p$ and
not well-behaved. Therefore, there exist words $x,y_0,y_1,z_0,z_1 \in \Sigma^+, u \in \Sigma^*$
	where $|y_0|=|y_1|$
	and a state $q \in Q$ such that
	\begin{itemize}
		\item $\delta(q_0, x) = p = \delta(p, x)$,
		\item $\delta(p,u) = q = \delta(q,y_0z_0) = \delta(q,y_1z_1)$,
		\item $\delta(q,y_0) \notin F$ and $\delta(q,y_1) \in F$.
	\end{itemize}
	This implies $\{x\}^+ u \{y_0z_0\}^* y_0 \cap L = \emptyset$  and $\{x\}^+ u \{y_1z_1\}^* y_1 \subseteq L$.
	Finally, we can ensure that $|x| = |y_0z_0| = |y_1z_1|$ by
	replacing $z_0$ by $(z_0y_0)z_0^{|x| \cdot |y_1z_1|-1}$, $z_1$ by $(z_1y_1)z_1^{|x| \cdot |y_0z_0|-1}$
	and $x$ by $x^{|y_0z_0| \cdot |y_1z_1|}$.
\end{proof}
Using Lemma~\ref{lemma-RI-SF-Len} for the language $L^\rev$ yields:

\begin{lemma} \label{lemma-LI-PF-Len}
If $L \in \Reg \setminus \langle \LI, \PF, \Len \rangle$, then  there exist words $x,y_0,y_1,z_0,z_1 \in \Sigma^+, u \in \Sigma^*$
such that
\begin{itemize}
\item $|y_0|=|y_1|$, $|x| = |z_0y_0| = |z_1y_1|$,
\item $y_0 \{z_0y_0\}^* u \{x\}^+ \cap L = \emptyset$,
\item $y_1 \{z_1y_1\}^* u \{x\}^+ \subseteq L$.
\end{itemize}
\end{lemma}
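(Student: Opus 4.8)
The plan is to obtain Lemma~\ref{lemma-LI-PF-Len} from Lemma~\ref{lemma-RI-SF-Len} by a reversal argument, as announced in the sentence preceding the statement. The first step is to record that the map $M \mapsto M^\rev$ is an involution on languages over $\Sigma$ that commutes with the relevant operations: $\Reg$ is closed under reversal; a regular language is prefix-free if and only if its reversal is suffix-free (since $x$ is a prefix of $y$ iff $x^\rev$ is a suffix of $y^\rev$), so $\{ M^\rev \colon M \in \PF \} = \SF$; regular length languages are closed under reversal, so $\{ M^\rev \colon M \in \Len \} = \Len$; and $\{ M^\rev \colon M \in \LI \} = \RI$ is already noted in Section~\ref{sec-classes}. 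Since reversal also commutes with union and complement, it follows that $\{ M^\rev \colon M \in \langle \RI, \SF, \Len \rangle \} = \langle \LI, \PF, \Len \rangle$. Hence, if $L \in \Reg \setminus \langle \LI, \PF, \Len \rangle$, then $L^\rev \in \Reg \setminus \langle \RI, \SF, \Len \rangle$.

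The second step is to apply Lemma~\ref{lemma-RI-SF-Len} to the language $L^\rev$. This yields words $x', y_0', y_1', z_0', z_1' \in \Sigma^+$ and $u' \in \Sigma^*$ with $|y_0'| = |y_1'|$, $|x'| = |y_0' z_0'| = |y_1' z_1'|$, $\{x'\}^+ u' \{y_0' z_0'\}^* y_0' \cap L^\rev = \emptyset$, and $\{x'\}^+ u' \{y_1' z_1'\}^* y_1' \subseteq L^\rev$. Now set $x = (x')^\rev$, $y_0 = (y_0')^\rev$, $y_1 = (y_1')^\rev$, $z_0 = (z_0')^\rev$, $z_1 = (z_1')^\rev$, and $u = (u')^\rev$; all of these still lie in $\Sigma^+$ and $u$ in $\Sigma^*$, and the length equalities $|y_0| = |y_1|$ and $|x| = |z_0 y_0| = |z_1 y_1|$ follow immediately since reversal preserves length and $|z_i y_i| = |y_i' z_i'|$.

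The third and final step is to take reversals of the two inclusions. Using $(AB)^\rev = B^\rev A^\rev$, $(A^*)^\rev = (A^\rev)^*$ and $(A^+)^\rev = (A^\rev)^+$, the language $\{x'\}^+ u' \{y_0' z_0'\}^* y_0'$ reverses to $y_0 \{z_0 y_0\}^* u \{x\}^+$ (note $(y_0' z_0')^\rev = z_0 y_0$), and similarly $\{x'\}^+ u' \{y_1' z_1'\}^* y_1'$ reverses to $y_1 \{z_1 y_1\}^* u \{x\}^+$. Since $M \cap L^\rev = \emptyset$ iff $M^\rev \cap L = \emptyset$ and $M \subseteq L^\rev$ iff $M^\rev \subseteq L$, we conclude $y_0 \{z_0 y_0\}^* u \{x\}^+ \cap L = \emptyset$ and $y_1 \{z_1 y_1\}^* u \{x\}^+ \subseteq L$, which is exactly the claim. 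No genuine obstacle arises; the only thing to watch is the bookkeeping of how reversal redistributes the factors inside the Kleene star, which is precisely why the $y_i z_i$ blocks of Lemma~\ref{lemma-RI-SF-Len} appear as $z_i y_i$ blocks here and the trailing $y_i$ becomes a leading $y_i$.
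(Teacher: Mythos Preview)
Your proof is correct and follows exactly the approach the paper takes: the paper simply states that the lemma is obtained by applying Lemma~\ref{lemma-RI-SF-Len} to $L^\rev$, and you have spelled out this reversal argument carefully and accurately, including the bookkeeping of how $y_i z_i$ becomes $z_i y_i$ under reversal.
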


\begin{theorem}
	Let $L \in \Reg \setminus \langle \LI, \PF, \Len \rangle$.
	Then there exists $0 < \phi < 1$ such that
	any randomized \SWA\ $\R$ for $L$ 
	satisfies $f(\R,n) \notin o(n)$.
\end{theorem}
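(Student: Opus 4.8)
The plan is to reduce a promise version of the index function $\mathrm{IDX}_m$ to $\R$, using a \emph{single} input stream in which every bit of Alice's input is exposed by the active window at a distinct, well-spread time instant; the failure-ratio bound then guarantees that $\R$ answers correctly (with error $\le 1/3$) at almost all of these instants, which suffices to reconstruct the input.

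First I would record a slightly strengthened form of the witness words of Lemma~\ref{lemma-LI-PF-Len}. Inspecting its proof, which goes through Lemma~\ref{lemma-RI-SF-Len} and the automaton constructed there, one sees that the words $x,y_0,y_1,z_0,z_1\in\Sigma^+$, $u\in\Sigma^*$ can be chosen with $|y_0|=|y_1|$, $|x|=|z_0y_0|=|z_1y_1|$ and, writing $v_c := z_c y_c$, with the property that for all $b,c_1,\dots,c_k\in\{0,1\}$ and all $j\ge1$,
\[
  y_b\, v_{c_1} v_{c_2} \cdots v_{c_k}\, u\, x^{j} \in L \iff b = 1 .
\]
Intuitively, after reading $u$ one is in a state at which each block $z_cy_c$ loops, so only the leading block $y_b$ matters. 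Put $\kappa := 3|x|+|u|+1$ and $\phi := \frac{1}{8\kappa}$; both depend only on $L$. I claim that every $(1/3,\phi)$-correct randomized \SWA\ $\R$ for $L$ satisfies $f(\R,n)\notin o(n)$.

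Assume towards a contradiction that $f(\R,n)\in o(n)$ for such an $\R=(R_n)_{n\ge0}$. Fix $m\ge1$, let $n := |y_0|+|u|+(m+1)|x|\in\Theta(m)$, and for $\alpha=\alpha_1\cdots\alpha_m\in\{0,1\}^m$ consider
\[
  w_\alpha \;:=\; z_{\alpha_1}y_{\alpha_1}\, z_{\alpha_2}y_{\alpha_2}\cdots z_{\alpha_m}y_{\alpha_m}\, u\, x^{m+1},
\]
of length $(2m+1)|x|+|u|$. Since all blocks $z_cy_c$ and $x$ have the common length $|x|$, a direct position count shows that for each $i\in[1,m]$ the active window of size $n$ at time $t_i := (m{+}1{+}i)|x|+|u|$ begins exactly at the occurrence of $y_{\alpha_i}$ in $w_\alpha$ and equals $y_{\alpha_i}\, v_{\alpha_{i+1}}\cdots v_{\alpha_m}\, u\, x^{\,i+1}$, which by the boxed equivalence lies in $L_{n}$ iff $\alpha_i=1$. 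If $\R$ has failure ratio $\phi$ then on the stream $w_\alpha$ at most $\phi(|w_\alpha|+1)\le\phi\kappa m = m/8$ time instants are failing, so the set $D_\alpha := \{\, i\in[1,m] : t_i\notin\FT(R_n,w_\alpha,L_n,1/3)\,\}$ satisfies $|D_\alpha|\ge\tfrac78 m$ for every $\alpha$.

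Finally I would turn this into a randomized one-way protocol for $\mathrm{IDX}_m|_D$ with $D := \{(\alpha,i): i\in D_\alpha\}$: Alice, on input $\alpha$, runs $R_n$ on $z_{\alpha_1}y_{\alpha_1}\cdots z_{\alpha_m}y_{\alpha_m}$ using her private randomness and sends the encoding of the reached state, which costs at most $f(\R,n)$ bits; Bob, on input $i$, continues the run through $u\,x^{m+1}$ with fresh randomness and outputs whether $R_n$ accepts at time $t_i$. For $(\alpha,i)\in D$ this errs with probability $\le1/3$, and $D$ satisfies the hypothesis of Lemma~\ref{lem:promise-idx}; hence $f(\R,n) \ge C(\mathrm{IDX}_m|_D)$, which is in $\Omega(m)=\Omega(n)$, contradicting $f(\R,n)\in o(n)$. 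I expect the main work to be the two bookkeeping points flagged above: (i) justifying the ``mixed'' form of the forbidden pattern, which requires revisiting the constructions in Lemmas~\ref{lemma-RI-SF-Len} and~\ref{lemma-LI-PF-Len} rather than only their statements, and (ii) choosing the window size $n$ so that, \emph{simultaneously} for all $i$, the active window at time $t_i$ is exactly the intended test word $y_{\alpha_i} v_{\alpha_{i+1}}\cdots v_{\alpha_m} u x^{\,i+1}$.
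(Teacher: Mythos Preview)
Your proposal is correct and follows essentially the same approach as the paper: encode $\alpha$ via blocks $z_{\alpha_j}y_{\alpha_j}$, read off $\alpha_i$ from the active window after appending $u x^{i}$ (you use $x^{i+1}$, a harmless shift), bound the bad indices via the failure ratio, and invoke Lemma~\ref{lem:promise-idx}. Your explicit remark that the ``mixed'' pattern $y_b v_{c_1}\cdots v_{c_k} u x^{j}\in L\iff b=1$ must be extracted from the \emph{proof} of Lemma~\ref{lemma-RI-SF-Len} (via the loop $\delta(q,y_0z_0)=\delta(q,y_1z_1)=q$) rather than its statement is well taken; the paper uses this mixed form without comment when asserting $\last_n(w_\alpha u x^i)\in L\iff\alpha_i=1$.
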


\begin{proof}
	By Lemma~\ref{lemma-LI-PF-Len} there exist words $x,y_0,y_1,z_0,z_1 \in \Sigma^+, u \in \Sigma^*$
         such that
         \begin{itemize}
         \item $|y_0|=|y_1|$, $k:= |x| = |z_0y_0| = |z_1y_1|$,
         \item $y_0 \{z_0y_0\}^* u \{x\}^+ \cap L = \emptyset$,
         \item $y_1 \{z_1y_1\}^* u \{x\}^+ \subseteq L$.
         \end{itemize}
	For each $m \ge 1$ and $\alpha = \alpha_1 \cdots \alpha_m \in \{0,1\}^m$ define the
	word
	\[
		w_\alpha = z_{\alpha_1}y_{\alpha_1} \cdots z_{\alpha_m}y_{\alpha_m},
	\]
	of length $km$. Since $|w_\alpha u x^m| = \mathcal{O}(m)$ there exist numbers $c > 0$ and $m_0 \ge 0$ such that for all $m \ge m_0$ 
	and $\alpha \in \{0,1\}^m$ we have
	$N := |w_\alpha u x^m|+1 = 2km + |u|+1 \le cm$.
	
	Let $\phi = \frac{1}{8c}$ and
         suppose that $\R = (R_n)_{n \ge 0}$ is a randomized SWA for $L$.
         Let $m \ge m_0$ and
	choose the window length $n = km + |u| + |y_0| = km + |u| + |y_1| \in \Theta(m)$.
	Notice that for all $\alpha \in \{0,1\}^m$ and $i \in [1,m]$ we have
	\[
		\last_n(w_\alpha u x^i) = y_{\alpha_i} z_{\alpha_{i+1}}y_{\alpha_{i+1}} \cdots z_{\alpha_m}y_{\alpha_m} u x^i
	\]
	and this window belongs to $L$ if and only if $\alpha_i = 1$.
	For each $\alpha \in \{0,1\}^m$ we define the set
	\[
		I(\alpha) = \{ i \in [1,m] : \eps(R_n,w_\alpha u x^i,L_n) \le 1/3 \}.
	\]
	If we only consider the prefixes of the form $w_\alpha u x^i$ for $i \in [1,m]$ we get the bound
	\[
		\frac{1}{8c} \ge \phi(R_n,w_\alpha u x^m,L_n,1/3) \ge \frac{m-|I(\alpha)|}{N} \ge \frac{m-|I(\alpha)|}{cm}
		\ge \frac{1}{c} - \frac{|I(\alpha)|}{cm}
	\]
	and therefore $|I(\alpha)| \ge \frac{7}{8}m$.
		
	Now we can use communication complexity and apply Lemma~\ref{lem:promise-idx} to prove the linear lower bound for $f(\R,n)$.
	Define
	\[
		D = \{ (\alpha,i) \colon \alpha \in \{0,1\}^m, i \in I(\alpha) \},
	\]
	which satisfies the condition of Lemma~\ref{lem:promise-idx}.
	Based on the algorithm $R_n$ we define a protocol for $\mathrm{IDX}_n|_D$.
	Consider the following procedure, where $\alpha = \alpha_1 \cdots \alpha_m \in \{0,1\}^m$ is the input for Alice,
	$i \in [1,m]$ is the input for Bob and $(\alpha,i)\in D$:
	Alice inputs the word $w_\alpha$ together with here random choice into $R_n$
	and sends an encoding of the reached state to Bob.
	Then Bob continues the simulation of $R_n$ from the received state and inputs the word $ux^i$
	and his random choice. He outputs $1$ if $R_n$ accepts $w_\alpha ux^i$, otherwise he outputs $0$.
	Since $(\alpha,i) \in D$, i.e., $i \in I(\alpha)$, we have $\eps(R_n,w_\alpha u x^i,L_n) \le 1/3$.
	Moreover, $w_\alpha u x^i \in L_n$ if and only if $\alpha_i = 1$.
	Hence, the probability that Bob outputs a wrong answer is bounded by $1/3$.
	By Lemma~\ref{lem:promise-idx} this implies that the above protocol has cost $\Theta(m) = \Theta(n)$, which in 
	turn implies that $f(\R,n) \notin o(n)$.
\end{proof}

\subsection{Deterministic lower bounds for arbitrarily small non-zero failure ratio}

In this section we prove the lower bound~\ref{lower-O(log n)-failure-det}
from Theorem~\ref{thm:trichotomy}. This is the only lower bound that is not
shown by a communication complexity argument. Instead, we use a simple 
combinatorial result on the ability of DFAs to count up to some threshold.

\subsubsection{Threshold counting}

In the following we show that, intuitively speaking, in order to count up to a threshold $n$
one needs $n$ states, even if a certain failure ratio $< 1/2$ is allowed.
A {\em counter with threshold $n$ and failure ratio $\phi < 1$}
is a deterministic streaming algorithm $A$ with
\[
	\phi(A,a^m, \{a^i \mid i \geq n \},0) \le \phi
\]
for all $m \ge n$. In other words: For every $m \ge n$ the number of $i \in [0,m]$ such that
$a^i \in L(A) \Leftrightarrow i \ge n$ does not hold is bounded by $\phi \cdot (m+1)$.

\begin{lemma}
	\label{lem:counter}
	Every counter with threshold $n$ and failure ratio $\phi < 1/2$ has at least $(1-2\phi) \cdot n$ many states.
\end{lemma}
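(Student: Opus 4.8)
The plan is to argue by contradiction: suppose $A$ is a counter with threshold $n$ and failure ratio $\phi < 1/2$ that has strictly fewer than $(1-2\phi)\cdot n$ states. Run $A$ on the word $a^m$ for a large $m$ (to be chosen later), and consider the sequence of states $q_0, q_1, \dots, q_m$ visited, where $q_i = \delta(q_0, a^i)$. Since $A$ is a deterministic streaming algorithm over a unary alphabet, this state sequence is eventually periodic: there exist a threshold $t \ge 0$ and a period $d \ge 1$, both bounded by the number of states, such that $q_i = q_{i+d}$ for all $i \ge t$. In particular the acceptance behavior of $A$ on $a^i$ is eventually $d$-periodic in $i$.

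The key observation is that a $d$-periodic $0/1$ pattern cannot match the ``threshold pattern'' $[i \ge n]$ (which is $0$ below $n$ and $1$ from $n$ on) on too many positions, once we look far enough to the right. More precisely: consider an interval of the form $[t + kd, t + (k+1)d - 1]$ for $k$ large, so that the whole interval lies at indices $\ge n$; on such an interval the target pattern is constantly $1$, so $A$ errs at every index $i$ in the interval where $q_i$ is rejecting. By periodicity the number of rejecting indices in each such length-$d$ block is the same constant, call it $r$. If $r = 0$, then $A$ accepts $a^i$ for all $i \ge t$; but since $A$ has $<(1-2\phi)n \le n$ states, we have $d < n$ as well, hmm — more carefully, I need $A$ to also err on small indices. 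So I would instead balance two effects: on the one hand, for the pattern to be correct on indices $\ge n$ almost always, the periodic tail must be ``mostly $1$''; on the other hand, $A$ must reject $a^0, a^1, \dots, a^{n-1}$, i.e.\ all $n$ of these indices should be rejecting, but a periodic-with-small-period mostly-$1$ sequence can only be rejecting on few of them.

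Concretely, here is the counting I would carry out. Let $s$ be the number of states, so $d \le s$ and $t \le s$ (in fact $t + d \le s$). Choose $m$ much larger than $n$ and $s$. Among the indices in $[0,m]$, split into the ``tail'' part $I_{\ge} = [\max(t,n), m]$ where the target is constantly $1$, and a bounded ``head'' part of size $O(s + n)$. On $I_{\ge}$, the number of errors is (number of length-$d$ periodic blocks) $\times\, r$, which is at least $\bigl(\frac{|I_\ge|}{d} - 1\bigr) r \ge \frac{(m+1) r}{d} - O(s+n)$. For the failure ratio bound $\phi(A, a^m, \cdot, 0) \le \phi$ to hold for all large $m$, dividing by $m+1$ and letting $m \to \infty$ forces $r/d \le \phi$, i.e.\ $r \le \phi d \le \phi s$. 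Now count errors on the head: $A$ must output $0$ on $a^0, \dots, a^{n-1}$ to be correct there, but by periodicity (and since $n$ is large compared to $s \ge t+d$) the indices in $[t, n-1]$ that are rejecting number at most $\lceil (n-t)/d\rceil \cdot r \le \frac{n}{d} r + r \le \phi n + \phi s$. Hence $A$ is \emph{correct} (outputs $0$) on at most $\phi n + \phi s + t \le \phi n + O(s)$ of the indices in $[0, n-1]$, so it errs on at least $n - \phi n - O(s) = (1-\phi)n - O(s)$ of them. These are errors that also count toward the failure ratio on $a^m$, together with the $\ge \frac{(m+1)r}{d} - O(s+n)$ tail errors, which when $r > 0$ is a positive fraction; even when $r = 0$, the head alone gives $\ge (1-\phi)n - O(s)$ errors. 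Taking $m$ just slightly larger than $n$, say $m = \Theta(n)$ with $m \ge n$, the failure ratio on $a^m$ is then at least roughly $\frac{(1-\phi)n - O(s)}{m+1}$; I would then optimize the choice of $m$ (the natural choice is $m = n$ itself, or a small constant multiple) to make this exceed $\phi$ whenever $s < (1-2\phi)n$, yielding the contradiction.

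The main obstacle is the bookkeeping of the two competing error sources and a clean choice of $m$: one needs $m$ small enough that the forced head-errors in $[0,n-1]$ constitute more than a $\phi$-fraction of $[0,m]$, yet the constraint ``$\phi(A,a^m,\cdot,0)\le\phi$ for \emph{all} $m\ge n$'' is what pins down $r \le \phi d$ in the first place (that uses large $m$). So the argument really is: use large $m$ to learn the periodic tail is mostly-accepting ($r\le \phi d$), then use $m = n$ (or a carefully chosen small multiple) to derive the contradiction from the $n$ forced rejections in the head. I expect the cleanest route is actually to take $m = n$ directly and argue: of the indices $0,\dots,n-1$ the automaton must reject all of them to be fully correct, of the indices $n,\dots,m$ (here just the single index $n$, or more if $m>n$) it must accept; counting how many of $0,\dots,n-1$ a sequence that is $d$-periodic past $t$ with $r$ rejections per period can reject, versus how many rejections the correctness on the (large-$m$) tail permits, gives the bound $(1-2\phi)n$ after collecting the $t \le s$, $d \le s$ slack terms — and one checks these slack terms are absorbed because if $s \ge n$ there is nothing to prove.
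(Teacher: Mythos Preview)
Your approach is essentially the paper's: exploit the eventually periodic run (tail index $t$, period $d$, with $t+d\le |Q|$), use large $m$ to constrain the periodic part, then count errors in $[0,n-1]$ against the failure-ratio bound at $m\approx n$. The paper organizes the computation via a dichotomy that sidesteps the $O(s)$ bookkeeping you flag as the obstacle: either at least $d/2$ of the periodic states are non-final, in which case for large $m$ the failure ratio tends to at least $1/2>\phi$ (contradiction); or strictly more than $d/2$ are final, in which case the $k_0-1$ full periods inside $[t,n-1]$ (with $k_0$ minimal such that $t+k_0 d\ge n$) already contribute more than $(k_0-1)\cdot d/2\ge (n-t-d)/2$ errors, and comparing with $\phi$ yields $(1-2\phi)n<t+d\le|Q|$ in one line. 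Your route via $r/d\le\phi$ is also valid (and in fact gives the slightly stronger $|Q|>\tfrac{1-2\phi}{1-\phi}\,n$ if carried through), but the paper's crude $d/2$ split is what makes the constant $(1-2\phi)$ fall out without chasing slack terms.
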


\begin{proof}
	Take a counter $A = (Q,\{a\},q_0,\delta,F)$ with threshold $n$ and failure ratio $\phi < 1/2$, and
	consider the run
	\[
		q_0 \xrightarrow{a} q_1 \xrightarrow{a} q_2 \xrightarrow{a} \dots
	\]
	which is ultimately periodic.
	Let $t \ge 0$ and $d \ge 1$ be minimal such that $q_i = q_{i+d}$ for all $i \ge t$.
	Clearly we have $t+d \le |Q|$.
	Let $k_0 \ge 0$ be minimal such that $t+k_0d \ge n$.
	If $k_0 \in \{0,1\}$, then we are done because $|Q| \ge t+d \ge n$.
	Hence we can assume $k_0 \ge 2$ and by minimality $t+(k_0-1)d < n$.
	
	\medskip
	\noindent
	{\em Case 1.} Assume $|\{ q_t, \dots, q_{t+d-1} \} \setminus F| \ge d/2$.
	For all $k \ge k_0$ we have:
	\[
		\FT(A,a^{t+kd-1}, \{a^i \mid i \geq n \},0) \supseteq \bigcup_{j \in [k_0,k-1]} \{ i \in [t+jd,t+(j+1)d-1] : q_i \notin F \}
	\]
	and hence
	\[
		|\FT(A,a^{t+kd-1}, \{a^i \mid i \geq n \},0)| \ge (k-k_0) \cdot \frac{d}{2} \ge \frac{kd-(n+d-t)}{2} = \frac{t+kd}{2}-\frac{n+d}{2}.
	\]
	Dividing both sides by $t+kd$ yields
	\[
		\phi(A,a^{t+kd-1}, \{a^i \mid i \geq n \},0) \ge \frac{1}{2} - \frac{n+t}{2(t+kd)}.
	\]
	By taking $k$ large enough, we get $\phi(A,a^{t+kd-1}, \{a^i \mid i \geq n \},0) \ge \phi$, which is a contraction.
	
	\medskip
	\noindent
	{\em Case 2.} Assume $|\{ q_t, \dots, q_{t+d-1} \} \setminus F| < d/2$,
	or equivalently $|\{ q_t, \dots, q_{t+d-1} \} \cap F| > d/2$.
	We have
	\[
		\FT(A,a^{t+(k_0-1)d-1}, \{a^i \mid i \geq n \},0) \supseteq \bigcup_{k \in [1,k_0-1]} \{ i \in [t+(k-1)d,t+kd-1] : q_i \in F \}
	\]
	and hence
	\[
		|\FT(A,a^{t+(k_0-1)d-1}, \{a^i \mid i \geq n \},0)| > (k_0-1) \cdot \frac{d}{2} \ge \frac{n-t-d}{2}.
	\]
	Since $t+(k_0-1)d < n$, dividing both sides by $n$ yields: 
	\[
		\phi(A,a^{t+(k_0-1)d-1}, \{a^i \mid i \geq n \},0) > \frac{n-t-d}{2n}.
	\]
	Since $\phi(A,a^{t+(k_0-1)d-1}, \{a^i \mid i \geq n \},0) \le \phi$ it follows that
	$n-t-d < 2 \phi n$ and therefore $(1-2\phi) n < t+d \le |Q|$.
\end{proof}

\subsubsection{Construction of the witness strings}

In this section, we show the existence of certain witness strings 
that allow to construct from a DFA for a regular language, which does
not belongs to $\langle \LB, \PF, \SF, \Len \rangle$ 
a threshold counter with small failure ratio.
We start with a folklore fact on prefix-free languages:

\begin{lemma}
	\label{lem:pf-prop}
	Let $K,L$ be languages.
	\begin{enumerate}
		\item If $L$ is prefix-free and $K \subseteq L$, then $K$ is prefix-free as well. 
		\item If $K$ and $L$ are prefix-free, then $K  L$ is prefix-free as well.
	\end{enumerate}
\end{lemma}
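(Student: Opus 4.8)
The plan is to verify both statements directly from the definitions, since being prefix-free only involves comparisons of words under the prefix relation.

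For part (1) I would argue by contradiction. Suppose $K$ is not prefix-free, so there are distinct $x, y \in K$ such that $x$ is a prefix of $y$. Since $K \subseteq L$, both $x$ and $y$ lie in $L$, contradicting the prefix-freeness of $L$. Hence $K$ is prefix-free.

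For part (2) the key observation is the elementary fact that if two words are both prefixes of one common word, then the shorter of the two is a prefix of the longer one. Now take $u_1 v_1, u_2 v_2 \in K L$ with $u_1, u_2 \in K$ and $v_1, v_2 \in L$, and assume that $u_1 v_1$ is a prefix of $u_2 v_2$; the goal is to conclude $u_1 v_1 = u_2 v_2$. Since $u_1$ is a prefix of $u_1 v_1$ and $u_1 v_1$ is a prefix of $u_2 v_2$, the word $u_1$ is a prefix of $u_2 v_2$; likewise $u_2$ is a prefix of $u_2 v_2$. By the elementary fact, one of $u_1, u_2$ is a prefix of the other, and since both lie in $K$ and $K$ is prefix-free, this forces $u_1 = u_2$. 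Cancelling the common prefix $u_1 = u_2$ from the relation ``$u_1 v_1$ is a prefix of $u_2 v_2$'' shows that $v_1$ is a prefix of $v_2$; as $v_1, v_2 \in L$ and $L$ is prefix-free, $v_1 = v_2$. Therefore $u_1 v_1 = u_2 v_2$, which proves that $K L$ is prefix-free.

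I do not expect a genuine obstacle here; the only point needing a moment of care is the bookkeeping in part (2), namely ensuring the length comparison between $u_1$ and $u_2$ is handled uniformly, which the ``common prefix'' fact does cleanly without splitting into cases.
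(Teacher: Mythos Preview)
Your proof is correct and follows essentially the same approach as the paper: for part~(2) both arguments use the fact that the prefixes of a fixed word are linearly ordered to force $u_1 = u_2$ from prefix-freeness of $K$, and then cancel to conclude $v_1 = v_2$ from prefix-freeness of $L$. Your write-up is slightly more explicit in justifying why $u_1$ and $u_2$ are comparable, but the idea is identical.
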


\begin{proof}
	Point (1) is clear. For point (2) notice that the set of all prefixes of a fixed word
	is linearly ordered by the prefix relation.
	Consider $u,v \in K$, $x,y \in L$ and assume that $ux$ is a prefix of $vy$.
	Since either $u$ is a prefix of $v$, or vice versa,
	we must have $u = v$.
	Hence $ux$ is a prefix of $vy = uy$.
	Therefore $x$ is a prefix of $y$ and hence $x=y$.
	This proves $ux = vy$.
\end{proof}

\begin{lemma}
	\label{lem:concat-closure}
	Let $K \in \PF \cap \SF$ and $L \in \langle \PT, \PF, \Len \rangle$.
	Then $K L$ belongs to $\langle \RB, \PF, \Len \rangle$.
\end{lemma}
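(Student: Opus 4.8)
The plan is to fix the bifix-free language $K$ once and for all and to study the class
\[
\mathcal{C} = \{\, M \subseteq \Sigma^* : KM \in \langle \RB, \PF, \Len \rangle \,\};
\]
proving the lemma amounts to showing $L \in \mathcal{C}$. First I would record the one structural fact about $K$ that drives everything: since $K$ is prefix-free, no word of $\Sigma^*$ has two distinct prefixes in $K$. Concatenation with an arbitrary language does not commute with Boolean operations, but concatenation with a prefix-free language does: one checks $K(M_1 \cap M_2) = KM_1 \cap KM_2$ and $K(\Sigma^* \setminus M) = K\Sigma^* \setminus KM$ directly from the unique-prefix property (and $K(M_1 \cup M_2) = KM_1 \cup KM_2$ always holds). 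Since moreover $K\Sigma^* \in \RB$ — indeed, by definition $\RB$ is the class of languages $N\Sigma^*$ with $N$ regular and bifix-free, and $K$ is such an $N$ — these identities immediately show that $\mathcal{C}$ is Boolean closed. Hence it suffices to prove $\PT \cup \PF \cup \Len \subseteq \mathcal{C}$, because then $\mathcal{C}$ contains the Boolean closure $\langle \PT, \PF, \Len \rangle$.

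The prefix-free case is the shortest: for $M \in \PF$ the language $KM$ is regular and, by Lemma~\ref{lem:pf-prop}(2), prefix-free, so $KM \in \PF \subseteq \langle \RB, \PF, \Len \rangle$. For the prefix-testable case, recall that every $M \in \PT$ is a Boolean combination of languages of the form $v\Sigma^*$ ($v \in \Sigma^*$); since $\mathcal{C}$ is already known to be Boolean closed, it is enough to show $v\Sigma^* \in \mathcal{C}$ for each $v$. This is clear: $K(v\Sigma^*) = (Kv)\Sigma^*$, and $Kv$ is regular and bifix-free — prefix-free by Lemma~\ref{lem:pf-prop}(2) applied to $K$ and $\{v\}$, and suffix-free because its reversal $v^\rev K^\rev$ is prefix-free, again by Lemma~\ref{lem:pf-prop}(2), using that $K^\rev$ is prefix-free — so $(Kv)\Sigma^* \in \RB$.

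The length case needs the only genuine argument. Every length language is a finite Boolean combination of the languages $\Sigma^{\ge c}$ and $M_{r,p} := \{ w \in \Sigma^* : |w| \equiv r \pmod p \}$ (eventual periodicity of the length set), so by Boolean closedness of $\mathcal{C}$ it suffices to handle these two families. For $\Sigma^{\ge c}$ we have $K\Sigma^{\ge c} = (K\Sigma^c)\Sigma^*$ with $K\Sigma^c$ regular and bifix-free (prefix-free since $K$ and $\Sigma^c$ are; suffix-free since the reversal $\Sigma^c K^\rev$ is prefix-free), hence in $\RB$. For $M_{r,p}$ the point to watch is that $K$ may contain words of unboundedly many lengths, so one cannot simply split $K$ into its length classes; instead I would split $K$ into the finitely many regular bifix-free pieces $K^{(s)} = K \cap M_{s,p}$ for $0 \le s < p$ (a subset of a bifix-free language is bifix-free, by Lemma~\ref{lem:pf-prop}(1) and its mirror image). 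Using the unique-prefix property once more, a word $z$ lies in $K^{(s)}\Sigma^*$ iff its unique $K$-prefix has length $\equiv s \pmod p$, in which case the remaining suffix has length $\equiv |z| - s \pmod p$; this yields
\[
KM_{r,p} = \bigcup_{s=0}^{p-1} \Bigl( K^{(s)}\Sigma^* \cap M_{(r+s)\bmod p,\;p} \Bigr) \in \langle \RB, \Len \rangle .
\]
This establishes $\Len \subseteq \mathcal{C}$, completing the proof that $\langle \PT, \PF, \Len \rangle \subseteq \mathcal{C}$ and hence $KL \in \langle \RB, \PF, \Len \rangle$. The residue-class splitting in the last step is the only place where anything beyond the distributivity of prefix-free concatenation and Lemma~\ref{lem:pf-prop} is needed, and is the part I would expect to require the most care in a full write-up.
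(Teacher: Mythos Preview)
Your proof is correct and follows essentially the same route as the paper's. The paper argues by structural induction on the Boolean representation of $L$, with the complement case $K(\Sigma^*\setminus M)=(\Sigma^*\setminus KM)\cap K\Sigma^*$ and the length case $K\,\mathrm{MOD}_{q,r}=\bigcup_{0\le i<q}\bigl((K\cap\mathrm{MOD}_{q,i})\Sigma^*\cap\mathrm{MOD}_{q,i+r}\bigr)$; you have simply abstracted the inductive step into the statement that $\mathcal{C}$ is Boolean closed (using the same prefix-free distributivity identities) and then handled the same generators, with your residue-class splitting of $K$ being identical to the paper's.
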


\begin{proof}
	Consider a representation of $L$ as a Boolean combination of languages in
	$\PT$, $\PF$, and $\Len$.
	We prove the statement by structural induction on this representation. Since every
	prefix-testable testable (resp., regular length language) is a Boolean combination
	of languages of the form $w \Sigma^*$ for some $w \in \Sigma^*$ (resp., 
	$\mathrm{MOD}_{q,r} = \Sigma^r (\Sigma^q)^*$ for some $q,r \ge 0$), it suffices
	to consider the three base cases in 1--3 below.
	
	\medskip
	\noindent
	{\em Case 1.} $L = w \Sigma^*$ for some $w \in \Sigma^*$.
	Since $\{w\} \in \PF \cap \SF$,  Lemma~\ref{lem:pf-prop} implies $K w \in \PF \cap \SF$,
	and hence $KL = K w \Sigma^* \in \RB$.
	
	\medskip
	\noindent
	{\em Case 2.} $L$ is prefix-free. Then $K L$ is also prefix-free by Lemma~\ref{lem:pf-prop}.
	
	\medskip
	\noindent
	{\em Case 3.} $L=\mathrm{MOD}_{q,r} = \Sigma^r (\Sigma^q)^*$ for some $q,r \ge 0$. 
	 If $q=0$ then $L = \Sigma^r$ is prefix-fee and we can go to Case~2.
	If $q > 0$, then we claim that
	\begin{equation}
		\label{eq:bifix-len}
		K L = \bigcup_{0 \le i < q} \big((K \cap \mathrm{MOD}_{q,i}) \Sigma^* \cap \mathrm{MOD}_{q,i+r}\big) .
	\end{equation}
	If $x \in K$ and $y \in L$, then let $i = |x| \bmod q$. We have $x \in K \cap \mathrm{MOD}_{q,i}$
	and $y \in \mathrm{MOD}_{q,r}$. Therefore $xy$ is contained on the right-hand side of \eqref{eq:bifix-len}.
	Conversely, if $w$ is contained on the right-hand side, then one can factorize $w = xy$ such that
	$x \in K \cap \mathrm{MOD}_{q,i}$. Since $|w| \equiv i+r \pmod{q}$ and $|x| \equiv i \pmod{q}$
	we must have $|y| \equiv r \pmod{q}$ which proves $y \in \mathrm{MOD}_{q,r}=L$.
	
	Now \eqref{eq:bifix-len} yields the desired Boolean combination: By Lemma~\ref{lem:pf-prop} $K \cap \mathrm{MOD}_{q,i}$
	is bifix-free and hence $(K \cap \mathrm{MOD}_{q,i})\Sigma^*$ belongs to $\RB$.
	Furthermore $\mathrm{MOD}_{q,i+r}$ belongs to $\Len$.
	
	\medskip
	\noindent
	{\em Case 4.} $L = L_1 \cup L_2$ with $L_1, L_2 \in \langle \PT, \PF, \Len \rangle$. Then
	\[
		K\, (L_1 \cup L_2) = K L_1 \cup K L_2
	\]
	and both $K L_1$ and $K L_2$ belong to $\langle \RB, \PF, \Len \rangle$ by induction.
	
	\medskip
	\noindent
	{\em Case 5.} 
	$L = \Sigma^* \setminus M$ and $M \in \langle \PT, \PF, \Len \rangle$. We claim that
	\[
		K \, (\Sigma^* \setminus M) = (\Sigma^* \setminus K M) \cap K \Sigma^*.
	\]
	If $w \in K \,(\Sigma^* \setminus M)$, then there exists a factorization $w=xy$ with $x \in K$.
	Since $K$ is prefix-free, this factorization is unique.
	Furthermore, we know $y \notin M$. Hence $w = xy \notin K M$ and $xy \in K \Sigma^*$.
	Conversely, if $w$ belongs to the right-hand side, then there exists a unique factorization $w=xy$ with $x \in K$.
	Since $xy \notin K M$, we know $y \notin M$ and therefore $xy \in K\,  (\Sigma^* \setminus M)$.
\end{proof}

\begin{theorem}
	\label{thm:id-sync-pair}
	Let $A = (Q,\Sigma,q_0,\delta,F)$ be a DFA for a language $L \subseteq \Sigma^*$.
	Suppose that every synchronized pair
	which is reachable from a positively idempotent state is $F$-consistent.
	Then $L$ belongs to $\langle \RB, \PF, \SF, \Len \rangle$.
\end{theorem}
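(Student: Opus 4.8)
The plan is to split the accepting set. Let $F_{\mathrm{id}}\subseteq F$ be the set of final states that are reachable from some positively idempotent state, and write $L=L(B)\cup L(C)$ with $B=(Q,\Sigma,q_0,\delta,F_{\mathrm{id}})$ and $C=(Q,\Sigma,q_0,\delta,F\setminus F_{\mathrm{id}})$. Every final state of $C$ fails to be reachable from a positively idempotent state, so Lemma~\ref{lem:before-idempotent} gives at once $L(C)\in\langle\SF\rangle$. The remaining task is to show $L(B)\in\langle\RB,\PF,\Len\rangle$; since $\PT\subseteq\langle\RB\rangle$ (because $w\Sigma^*=\{w\}\Sigma^*$ and $\{w\}$ is bifix-free), this yields $L=L(B)\cup L(C)\in\langle\RB,\PF,\SF,\Len\rangle$.

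For $L(B)$ let $R$ denote the set of states reachable from a positively idempotent state. Then $R$ is closed under $\delta$ and $F\cap R=F_{\mathrm{id}}$, so for each $r\in R$ the restriction $A_r=(R,\Sigma,r,\delta|_R,F\cap R)$ is a DFA with $L(A_r)=\{x:\delta(r,x)\in F\}$. I would first note that every synchronized pair of $A_r$ reachable from $r$ is $(F\cap R)$-consistent: its first component is reachable from $r\in R$, hence from a positively idempotent state, so the pair itself is reachable from a positively idempotent state and the hypothesis makes it $F$-consistent; as both components lie in $R$, this coincides with $(F\cap R)$-consistency. Hence Lemma~\ref{lem:sync-pair} gives $L(A_r)\in\langle\PT,\PF,\Len\rangle$ for every $r\in R$. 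If $q_0\in R$ then $R$ is all of the reachable part, $F_{\mathrm{id}}=F$ and $L(B)=L(A_{q_0})$, so we are done; hence assume $q_0\notin R$.

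Now decompose $L(B)$ by the first instant at which a run enters $R$: for $r\in R$ let $E_r$ be the set of words $v$ with $\delta(q_0,v)=r$ all of whose proper prefixes are mapped outside $R$. Since $R$ is $\delta$-closed, $L(B)=\bigcup_{r\in R}E_r\cdot L(A_r)$, a finite union. Each $E_r$ is prefix-free (a member of $E_r$ already lands in $R$, hence cannot be a proper prefix of another member). The decisive point is that $E_r$ is suffix-free as well, hence bifix-free: were $v,uv\in E_r$ with $u\in\Sigma^+$, then $\delta(q_0,u)\notin R$, and pumping $u$ to an idempotent power of its transition produces a genuine positively idempotent state that forces a contradiction with the hypothesis. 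Granting this, Lemma~\ref{lem:concat-closure} applied with $K=E_r\in\PF\cap\SF$ and $L(A_r)\in\langle\PT,\PF,\Len\rangle$ yields $E_r\cdot L(A_r)\in\langle\RB,\PF,\Len\rangle$, whence $L(B)\in\langle\RB,\PF,\Len\rangle$ as desired.

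The main obstacle is the suffix-freeness of the entry sets $E_r$, together with the underlying combinatorial fact about which non-trivial states are reachable from positively idempotent states (a non-trivial state reachable from $q_0$ by arbitrarily long words need not itself be positively idempotent, so the transition monoid has to be brought in). Everything else is an assembly of Lemmas~\ref{lem:before-idempotent}, \ref{lem:sync-pair} and \ref{lem:concat-closure}.
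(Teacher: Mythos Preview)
Your overall architecture matches the paper's: split off $L(C)$ via Lemma~\ref{lem:before-idempotent}, treat the case $q_0\in R$ by Lemma~\ref{lem:sync-pair}, and for $q_0\notin R$ write $L(B)=\bigcup_r E_r\cdot L(A_r)$ and invoke Lemma~\ref{lem:concat-closure}. The problem is the ``decisive point'': the entry language $E_r$ is \emph{not} suffix-free in general, so Lemma~\ref{lem:concat-closure} cannot be applied with $K=E_r$.

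Here is a counterexample. Take $Q=\{q_0,q_1,r\}$ over $\Sigma=\{a,b\}$ with $\delta(q_0,a)=q_1$, $\delta(q_0,b)=r$, $\delta(q_1,a)=\delta(q_1,b)=r$, and $r$ absorbing. The only positively idempotent state is $r$ (neither $q_0$ nor $q_1$ lies on a cycle through itself), so $R=\{r\}$ and the hypothesis of the theorem is vacuously satisfied for any $F$. But $E_r=\{b,aa,ab\}$, and $b$ is a proper suffix of $ab$. Your pumping sketch breaks down precisely here: with $u=a$ and $v=b$ you do produce a positively idempotent state $\delta(q_0,a^k)=r$, but this yields no contradiction---it just says $r\in R$, which you already knew. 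The point is that $u^k$ need not be a proper prefix of anything in $E_r$, so you cannot leverage the ``proper prefixes stay outside $R$'' condition.

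The paper repairs exactly this step by a second use of Lemma~\ref{lem:before-idempotent}. One refines the decomposition by the last transition into $R$: for each boundary transition $(p,a,r)$ with $p\notin R$, the language $L(B_p)=\{w:\delta(q_0,w)=p\}$ has all its final states unreachable from positively idempotent states, hence is a finite union of regular suffix-free languages $L_1,\ldots,L_k$. Each $L_i\,a$ is then suffix-free, and the whole set $L(B_p)\,a$ is prefix-free (because once the run enters $R$ it cannot return to $p$), so each $L_i\,a$ is bifix-free and Lemma~\ref{lem:concat-closure} applies to $L_i\,a\cdot L(A_r)$. Your $E_r$ is the union $\bigcup_{(p,a,r)} L(B_p)\,a$; it is the further splitting into the suffix-free pieces $L_i$ that you are missing.
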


\begin{proof}
	Let $Q_{\mathrm{id}} \subseteq Q$ be the set of states which are reachable from some positively idempotent state.
	First we decompose $L$ into
	\[
		L = L(A, F \cap Q_{\mathrm{id}})  \cup L(A, F \setminus Q_{\mathrm{id}}).
	\]
	By Lemma~\ref{lem:before-idempotent} $L(A, F \setminus Q_{\mathrm{id}})$ belongs to $\langle \SF \rangle$.
	If $q_0 \in Q_{\mathrm{id}}$, then by the assumption in the lemma every synchronized pair
	which is reachable from $q_0$ is $F$-consistent. By Lemma~\ref{lem:sync-pair},
	$L$ belongs to $\langle \PT, \PF, \Len \rangle$.
	Therefore $L$ is also contained in $\langle \RB, \PF, \Len \rangle$
	by Lemma~\ref{lem:concat-closure} (take $K=\{\varepsilon\}$).
	Now let us assume that $q_0 \notin Q_{\mathrm{id}}$, i.e., $q_0$ is not reachable from a positively idempotent state
	and thus $\eps \notin L(A, F \cap Q_{\mathrm{id}})$.
	The idea is to factorize a word $w \in L(A, F \cap Q_{\mathrm{id}})$ into the form $w = uav$
	where $u,v \in \Sigma^*$, $a \in \Sigma$ and $ua$ is the minimal prefix such that $\delta(q_0,ua) \in Q_{\mathrm{id}}$.
	To do so define the set $\Delta$ of transitions
	which lead from $Q \setminus Q_{\mathrm{id}}$ to $Q_{\mathrm{id}}$:
	\[
		\Delta = \{ (p,a,q) : p \in Q \setminus Q_{\mathrm{id}}, \, q \in Q_{\mathrm{id}}, \, a \in \Sigma, \, \delta(p,a) = q  \}
	\]
	Consider a triple $(p,a,q) \in \Delta$.
	Define the language
	\[
		L_q = \{ x \in \Sigma^* : \delta(q,x) \in F \},
	\]
	which belongs to $\langle \PT, \PF, \Len \rangle$ by Lemma~\ref{lem:sync-pair}.
	Further, we define the DFA $B_p$ which is obtained from $A$
	by defining $p$ to be the only final state and making all states from $Q_{\mathrm{id}}$ looping.
	Note that set of the positively idempotent states in $B_p$ is exactly $Q_{\mathrm{id}}$.
	We then have
	\[
		L(A, F \cap Q_{\mathrm{id}}) = \bigcup_{(p,a,q) \in \Delta} L(B_p) \, a \, L_q.
	\]
	Consider a triple $(p,a,q) \in \Delta$.
	Again by Lemma~\ref{lem:before-idempotent}, the language $L(B_p)$ is a finite union of regular suffix-free languages, say
	\[
		L(B_p) = \bigcup_{i=1}^k L_i,
	\]
	and we have
	\[
		L(B_p) \, a \, L_q = \bigcup_{i=1}^k (L_i \, a \, L_q).
	\]
	Since each $L_i$ and $\{a\}$ are suffix-free, $L_i  a$ is also suffix-free.
	
	We claim that  $L(B_p) a$ is also prefix-free. 
	Assume that there exist $x,y \in L(B_p)$ such that $xa$ is a proper prefix of $ya$.
	Hence $xa$ is a prefix of $y$. In $B_p$, $x$ leads from the initial state to the final state $p$. 
	Since $\delta(p,a) = q$ and $p \not\in Q_{\mathrm{id}}$, $xa$ leads in $B_p$ to $q$.
	Since $q \in Q_{\mathrm{id}}$ is looping in $B_p$, also $y$ leads to $q$. But this
	contradicts the fact that $y$ is accepted by $B_p$. 
	
	Since $L(B_p)a$ is prefix-free, also every subset $L_i a$ is prefix-free and hence
	bifix-free. By Lemma~\ref{lem:concat-closure} this implies that $L_i  a  L_q$ belongs to
	$\langle \RB, \PF, \Len \rangle$.
	This concludes the proof.
\end{proof}

	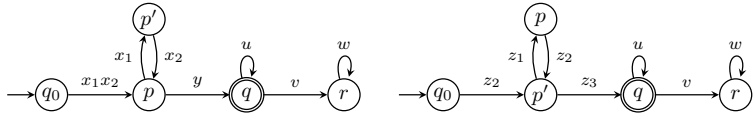
\begin{figure}
			\centering
			
			\scalebox{0.8}{
			\begin{tikzpicture}[semithick,->,>=stealth]
			\node[state, initial left, initial text={}] (q0) {$q_0$};
			\node[state, right = 30pt of q0] (p) {$p$};
			\node[state, above = 20pt of p] (p') {$p'$};
			\node[state, double, right = 30pt of p] (q) {$q$};
			\node[state, right = 30pt of q] (r) {$r$};
			
			\draw (q0) edge node[above] {\footnotesize $x_1x_2$} (p);
			\draw (p) edge[bend left = 15] node[left] {\footnotesize $x_1$} (p');
			\draw (p') edge[bend left = 15] node[right] {\footnotesize $x_2$} (p);
			\draw (p) edge node[above] {\footnotesize $y$} (q);
			\draw [loop above] (q) edge node[above] {\footnotesize $u$} (q);
			\draw (q) edge node[above] {\footnotesize $v$} (r);
			\draw [loop above] (r) edge node[above] {\footnotesize $w$} (r);
			\end{tikzpicture}
			}
			\scalebox{0.8}{
			\begin{tikzpicture}[semithick,->,>=stealth]
			\node[state, initial left, initial text={}] (q0) {$q_0$};
			\node[state, right = 30pt of q0] (p) {$p'$};
			\node[state, above = 20pt of p] (p') {$p$};
			\node[state, double, right = 30pt of p] (q) {$q$};
			\node[state, right = 30pt of q] (r) {$r$};
			
			\draw (q0) edge node[above] {\footnotesize $z_2$} (p);
			\draw (p) edge[bend left = 15] node[left] {\footnotesize $z_1$} (p');
			\draw (p') edge[bend left = 15] node[right] {\footnotesize $z_2$} (p);
			\draw (p) edge node[above] {\footnotesize $z_3$} (q);
			\draw [loop above] (q) edge node[above] {\footnotesize $u$} (q);
			\draw (q) edge node[above] {\footnotesize $v$} (r);
			\draw [loop above] (r) edge node[above] {\footnotesize $w$} (r);
			\end{tikzpicture}
			}
		\caption{Case 1. from the proof of Lemma~\ref{lem:rb-pf-sf-len}
		(there is a symmetric case, where $p' \in F$ and $q \notin F$).}
		\label{fig:rb-pf-sf-len-proof}
	\end{figure}

\begin{lemma}
	\label{lem:rb-pf-sf-len}
	If $L \in \Reg \setminus \langle \RB, \PF, \SF, \Len \rangle$,
	then there exist words $u,v,w,x \in \Sigma^*$ with
	$|uv| = |w| = |x| \ge 1$ and
	\begin{itemize}
		\item $v(uv)^* \subseteq L$ and $v(uv)^*wx^* \cap L = \emptyset$, or
		\item $v(uv)^* \cap L = \emptyset$ and $v(uv)^*wx^* \subseteq L$
	\end{itemize}
\end{lemma}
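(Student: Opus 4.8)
\emph{Plan.} The plan is to read the forbidden pattern off the contrapositive of Theorem~\ref{thm:id-sync-pair}. Fix a DFA $A=(Q,\Sigma,q_0,\delta,F)$ for $L$. Since $L\notin\langle\RB,\PF,\SF,\Len\rangle$, Theorem~\ref{thm:id-sync-pair} tells us that $A$ has a synchronized pair $(s_1,s_2)$ which is reachable from a positively idempotent state $p$ and which is not $F$-consistent. Unfolding the definitions (and Lemma~\ref{lem:factorial-length}), there are words $e\in\Sigma^+$, $g\in\Sigma^*$ and $a_0,b_0,c_0\in\Sigma^+$ with $|a_0|=|b_0|=|c_0|=:k$ such that $\delta(q_0,e)=p=\delta(p,e)$, $\delta(p,g)=s_1$, $\delta(s_1,a_0)=s_1$, $\delta(s_1,b_0)=s_2$, $\delta(s_2,c_0)=s_2$, and exactly one of $s_1,s_2$ is final. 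By the symmetry between the two alternatives of the lemma I would treat only the case $s_1\in F$, $s_2\notin F$; the other case yields the second alternative by the same argument.

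The escape part of the pattern is easy: set $\bar w:=b_0c_0^{\,t-1}$ and $\bar x:=c_0^{\,t}$ for a parameter $t\ge 1$, so that $\delta(s_1,\bar w)=s_2\notin F$, $\delta(s_2,\bar x)=s_2\notin F$ and $|\bar w|=|\bar x|=tk$. For the positive part $\bar v(\bar u\bar v)^*$ I want words $\bar v,\bar u$ with $\delta(q_0,\bar v)=s_1$ and $\delta(s_1,\bar u\bar v)=s_1$: then every word of $\bar v(\bar u\bar v)^*$ drives $q_0$ to $s_1\in F$ while every word of $\bar v(\bar u\bar v)^*\bar w\bar x^*$ drives $q_0$ to $s_2\notin F$, which is exactly the pattern. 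The delicate point is that $\bar v$ has to play two roles at once: an entry word read from $q_0$, and the tail of the loop $\bar u\bar v$ read from $\delta(s_1,\bar u)$; this is where the positively idempotent state enters. Because $\delta(q_0,e^{|Q|!})=p=\delta(p,e^{|Q|!})$, the word $\bar v:=e^{|Q|!}g\,a_0^{\,m}$ sends both $q_0$ and $p$ to $s_1$, so it suffices to pick $\bar u$ steering $s_1$ into a state that $e^{|Q|!}$ maps to $p$ — e.g.\ $\bar u:=h$ with $\delta(s_1,h)=p$, which closes the loop via $\delta(s_1,h\,e^{|Q|!}g\,a_0^{m})=\delta(p,g\,a_0^{m})=s_1$. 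The hard part will be precisely the availability of such an $\bar u$, i.e.\ that the strongly connected component of $p$ is reachable back from $s_1$; I expect this to follow from the synchronization hypothesis, but if it does not hold one re-selects a positively idempotent state inside the component of $s_1$, or disposes of the remaining subcase via Lemma~\ref{lem:before-idempotent} as in the proof of Theorem~\ref{thm:id-sync-pair} — this is the case distinction suggested by Figure~\ref{fig:rb-pf-sf-len-proof}.

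Finally I would enforce the length constraint $|\bar u\bar v|=|\bar w|=|\bar x|$ by bookkeeping. First replace $e$ by $e^{k}$ so that the ``$e$-block'' of $\bar v$ has length divisible by $k$. Since $gh$ is a loop at $p$ ($\delta(p,gh)=p$) and $a_0$ is a loop at $s_1$, one may pad $\bar u$ with copies of $gh$ and $\bar v$ with copies of $a_0$ (i.e.\ increase $m$) without changing the behaviour checked above; a greatest common divisor argument shows that in this way $|\bar u\bar v|$ can be made an arbitrarily large multiple of $k$, and then one chooses $t$ and $m$ with $tk=|\bar u\bar v|$. The resulting words then satisfy $|\bar u\bar v|=|\bar w|=|\bar x|\ge1$ together with $\bar v(\bar u\bar v)^*\subseteq L$ and $\bar v(\bar u\bar v)^*\bar w\bar x^*\cap L=\emptyset$, as required. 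The main obstacle, as noted, remains the middle step: making the loop $\bar u\bar v$ genuinely close up at $s_1$ while keeping $\bar v$ usable as an entry word.
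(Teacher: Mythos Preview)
You correctly locate the starting point (the contrapositive of Theorem~\ref{thm:id-sync-pair}) and you correctly isolate the real difficulty: a single word $\bar v$ that both serves as the entry word from $q_0$ and as the tail of a loop at $s_1$. But the resolution you sketch does not go through. There is no reason for $p$ to be reachable from $s_1$, so the word $h$ with $\delta(s_1,h)=p$ need not exist. Your fallback of ``re-selecting a positively idempotent state inside the component of $s_1$'' fails too: take $\Sigma=\{a,b\}$ with $\delta(q_0,a)=p$, $\delta(p,a)=p$, $\delta(p,b)=s_1$, $\delta(s_1,a)=s_1$, $\delta(s_1,b)=s_2$, $\delta(s_2,a)=\delta(s_2,b)=s_2$, and $\delta(q_0,b)$ leading to a sink. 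Then $p$ is positively idempotent and $(s_1,s_2)$ is a reachable synchronized pair, but $s_1$ is \emph{not} positively idempotent (any word taking $q_0$ to $s_1$ must contain a $b$, and any word containing a $b$ takes $s_1$ to $s_2$), and $p$ is not reachable from $s_1$. Lemma~\ref{lem:before-idempotent} does not help here either, since its hypothesis is precisely violated.

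The paper's move is different and avoids this obstruction entirely: it puts the $v(uv)^*$-loop \emph{inside the idempotent cycle of $p$}, not at $s_1$. After arranging $|x|\ge|y|$ and $|x|=|u|=|v|=|w|$ (in the paper's local naming), one factors $x=x_1x_2$ with $|x_1|=|y|$ and sets $p':=\delta(p,x_1)$. Then with $z_1:=x_2$ and $z_2:=x_1x_2x_1$ one has $\delta(q_0,z_2)=p'$ and $\delta(p',z_1z_2)=p'$, so $z_2(z_1z_2)^*$ lands at $p'$ using only the idempotent word $x$ --- no return path from $s_1$ to $p$ is needed. The escape is $z_3:=x_2y$, which carries $p'$ to $q=s_1$, after which one appends either the $q$-loop or the $r$-loop depending on which of $(p',q)$, $(p',r)$ is not $F$-consistent (this is the case split in Figure~\ref{fig:rb-pf-sf-len-proof}, not the case split you guessed). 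The length equalities $|z_1z_2|=|z_3u|=|uu|$ (resp.\ $|z_3v|=|ww|$) then come for free from the prior normalization. The point you are missing is precisely this relocation of the loop anchor from $s_1$ to a state on the idempotent cycle.
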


\begin{proof}
	Let $A = (Q,\Sigma,q_0,\delta,F)$ be a DFA for $L$.
	By Theorem~\ref{thm:id-sync-pair} there exists a positively idempotent state $p \in Q$
	and a synchronized state pair $(q,r)$ reachable from $p$ which is not $F$-consistent.
	This means that
	\begin{itemize}
		\item $|\{q,r\} \cap F| = 1$,
		\item there exists a word $x \in \Sigma^+$ with $\delta(q_0,x) = p$ and $\delta(p,x) = p$,
		\item there exists a word $y \in \Sigma^*$ with $\delta(p,y) = q$,
		\item there exist words $u,v,w \in \Sigma^+$ of the same length such that
		$\delta(q,u) = q$, $\delta(q,v) = r$ and $\delta(r,w) = r$.
	\end{itemize}
	We emphasize that these are not the words from the lemma.
	We can ensure that $|x| \ge |y|$ by replacing $x$ by $x^{|y|}$.
	Furthermore, we can ensure that $|x|=|u|=|v|=|w|$ by replacing $x$ by $x^{|u|}$,
	$u$ by $u^{|x|}$, $v$ by $u^{|x|-1}v$ and $w$ by $w^{|x|}$.

	Let $x = x_1 x_2$ such that $|x_1| = |y|$, or equivalently $|x| = |x_2y|$.
	Define the state $p' = \delta(p,x_1)$ and the words $z_1 = x_2$, $z_2 = x_1x_2x_1$ and $z_3 = x_2 y$.
	The situation is depicted in Figure~\ref{fig:rb-pf-sf-len-proof}.
	Since $(q,r)$ is not $F$-consistent, either $(p',q)$ or $(p',r)$ is not $F$-consistent.
	
	\medskip
	\noindent
	{\em Case 1.} Assume that $(p',q)$ is not $F$-consistent.
	Then we can take the words $z_1, z_2, z_3u, uu$ because
	$|z_1z_2| = |z_3u| = |uu|$ and for all $i \ge 0$ we have:
	\begin{itemize}
		\item $\delta(q_0,z_2(z_1z_2)^i) = p'$ and
		\item $\delta(p',z_3u(uu)^i) = q$.
	\end{itemize}

	\medskip
	\noindent	
	{\em Case 2.} Assume that $(p',r)$ is not $F$-consistent.
	Then we can take the words $z_1, z_2, z_3v, ww$ because
	$|z_1z_2| = |z_3v| = |ww|$ and for all $i \ge 0$ we have:
	\begin{itemize}
		\item $\delta(q_0,z_2(z_1z_2)^i) = p'$ and
		\item $\delta(p',z_3v(ww)^i) = r$.
	\end{itemize}
	This concludes the proof of the lemma.
	The final forbidden pattern is shown in Figure~\ref{fig:rb-pf-sf-len}.
\end{proof}

	\begin{figure}
			\centering
			\begin{tikzpicture}[semithick,->,>=stealth]
			\node[state, initial left, initial text={}] (q0) {};
			\node[state, double, right = 30pt of q0] (p) {\footnotesize $q_1$};
			\node[state, above = 20pt of p] (r) {};
			\node[state, right = 30pt of p] (q) {\footnotesize $q_2$};
			
			\draw (q0) edge node[above] {\footnotesize $v$} (p);
			\draw (p) edge[bend left = 15] node[left] {\footnotesize $u$} (r);
			\draw (r) edge[bend left = 15] node[right] {\footnotesize $v$} (p);
			\draw (p) edge node[above] {\footnotesize $w$} (q);
			\draw [loop above] (q) edge node[above] {\footnotesize $x$} (q);
			\end{tikzpicture}
		\caption{Forbidden pattern for $\langle \RB, \PF, \SF, \Len \rangle$ where $|uv| = |w| = |x| \ge 1$
		(there is a symmetric case, where $q_1 \notin F$ and $q_2 \in F$).}
		\label{fig:rb-pf-sf-len}
	\end{figure}
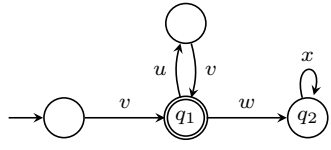

Lemma~\ref{lem:rb-pf-sf-len} applied to 
the language $L^{\rev}$ yields:

\begin{lemma}
	\label{lem:lb-pf-sf-len}
	If $L \in \Reg \setminus \langle \LB, \PF, \SF, \Len \rangle$,
	then there exist words $u,v,w,x \in \Sigma^*$ with
	$|uv| = |w| = |x| \ge 1$ and
	\begin{itemize}
		\item $(vu)^*v \subseteq L$ and $x^*w(vu)^*v \cap L = \emptyset$, or
		\item $(vu)^*v \cap L = \emptyset$ and $x^*w(vu)^*v\subseteq L$.
	\end{itemize}
\end{lemma}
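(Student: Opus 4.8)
The plan is to obtain this lemma from Lemma~\ref{lem:rb-pf-sf-len} by the reversal duality suggested in the text. The first step is to record that the map $K \mapsto K^\rev$ is an involution on subsets of $\Sigma^*$ that commutes with Boolean operations, i.e.\ $(K \cup L)^\rev = K^\rev \cup L^\rev$ and $(\Sigma^* \setminus K)^\rev = \Sigma^* \setminus K^\rev$, that it preserves regularity, and that it exchanges the four generating classes appropriately: $\RB^\rev = \LB$ (this is the definition of $\RB$), $\PF^\rev = \SF$ and $\SF^\rev = \PF$ (reversing a prefix-free language yields a suffix-free one, and vice versa, by the usual prefix/suffix exchange), and $\Len^\rev = \Len$ (a length language is determined by a set of lengths, and $\Sigma^n$ is closed under reversal). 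Consequently $\langle \RB, \PF, \SF, \Len \rangle = \{ K^\rev : K \in \langle \LB, \PF, \SF, \Len \rangle \}$, so from $L \in \Reg \setminus \langle \LB, \PF, \SF, \Len \rangle$ we conclude $L^\rev \in \Reg \setminus \langle \RB, \PF, \SF, \Len \rangle$.

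Next I would apply Lemma~\ref{lem:rb-pf-sf-len} to $L^\rev$. This yields words $u',v',w',x' \in \Sigma^*$ with $|u'v'| = |w'| = |x'| \ge 1$ such that either $v'(u'v')^* \subseteq L^\rev$ and $v'(u'v')^* w' x'^* \cap L^\rev = \emptyset$, or the same two statements with inclusion and disjointness interchanged. Set $u = (u')^\rev$, $v = (v')^\rev$, $w = (w')^\rev$ and $x = (x')^\rev$. Using $(AB)^\rev = B^\rev A^\rev$, $(A^*)^\rev = (A^\rev)^*$, and the fact that reversal preserves inclusions, intersections and emptiness, one computes $\big(v'(u'v')^*\big)^\rev = (vu)^* v$ and $\big(v'(u'v')^* w' x'^*\big)^\rev = x^* w (vu)^* v$; concretely, for all $k,m \ge 0$ one has $\big(v'(u'v')^k w' (x')^m\big)^\rev = \big((x')^\rev\big)^m (w')^\rev \big((v')^\rev (u')^\rev\big)^k (v')^\rev = x^m w (vu)^k v$. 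Applying $(\cdot)^\rev$ to the two conditions delivered by Lemma~\ref{lem:rb-pf-sf-len} then produces exactly the two alternatives of the statement, now for $L$ itself. Finally, $|uv| = |u'| + |v'| = |u'v'| = |w'| = |x'| = |w| = |x| \ge 1$, so the length condition carries over.

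Since the argument is nothing but a translation through the reversal map, there is no genuine obstacle; the only point demanding care is the bookkeeping — checking that the concatenation and Kleene-star reversal identities are aligned so that $v(uv)^*$ reverses to $(vu)^* v$ (and not, say, $v(vu)^*$) and that the equalities among the lengths $|uv|$, $|w|$, $|x|$ survive the renaming of the five words.
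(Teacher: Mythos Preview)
Your proposal is correct and follows exactly the paper's approach: the paper simply states that this lemma is Lemma~\ref{lem:rb-pf-sf-len} applied to $L^{\rev}$, and you have carried out that reversal duality in full detail. Your bookkeeping on the reversal identities and the length equalities is accurate.
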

Using the witness strings from Lemma~\ref{lem:lb-pf-sf-len}
we can now prove point~\ref{lower-O(log n)-failure-det}
from Theorem~\ref{thm:trichotomy}.

\begin{theorem}
	If $L \in \Reg \setminus \langle \LB, \PF, \SF, \Len \rangle$
	then there exists a failure ratio $0 < \phi < 1$
	such that every \SWA\ $\R$ for $L$ with failure ratio $\phi$
	satisfies $f(\R,n) \notin o(\log n)$.
\end{theorem}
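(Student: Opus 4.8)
The plan is to reduce to the threshold-counting lower bound (Lemma~\ref{lem:counter}), fed with the witness strings supplied by Lemma~\ref{lem:lb-pf-sf-len}. So I would fix words $u,v,w,x \in \Sigma^*$ as in Lemma~\ref{lem:lb-pf-sf-len} and set $k := |uv| = |w| = |x| \ge 1$. Since $\langle \LB, \PF, \SF, \Len \rangle$ is Boolean closed and negating all outputs turns a deterministic \SWA\ for $L$ with failure ratio $\phi$ into one for $\Sigma^* \setminus L$ with the same space and failure ratio, I may assume the first of the two alternatives, i.e.\ $(vu)^* v \subseteq L$ and $x^* w (vu)^* v \cap L = \emptyset$. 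I then take the failure ratio in the statement to be $\phi := \tfrac{1}{8k}$, a constant depending only on $L$, and show that every deterministic \SWA\ $\R = (R_n)_{n\ge 0}$ for $L$ with failure ratio $\phi$ satisfies $f(\R,n) \notin o(\log n)$.

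Fix a parameter $m \ge 1$ and set the window length $N := (m+1)k + |v| \in \Theta(m)$. For $M \ge m+1$ consider the stream $\sigma := x^m w (vu)^M v$. The computational heart of the proof is the identity
\[
\last_N\!\big(x^m w (vu)^i v\big)=\begin{cases} x^{\,m-i}\, w\, (vu)^i v, & 0 \le i \le m,\\ (vu)^{m+1} v, & i \ge m+1,\end{cases}
\]
obtained by discarding the leftmost $ik$ symbols of $x^m w (vu)^i v$ (a word of length $(m+1+i)k+|v|$). Hence $\last_N(x^m w (vu)^i v) \in L$ if and only if $i \ge m+1$: the first line lies in $x^* w (vu)^* v$ because $m-i \ge 0$, and it is precisely the prefix $x^m$ that prevents the below-threshold windows from acquiring padding symbols $\square$ about which membership cannot be controlled. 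I then compile $R_N$ into a deterministic streaming algorithm $A_m$ over the unary alphabet $\{a\}$: $A_m$ has the states of $R_N$, its start state is the state of $R_N$ after reading $x^m w v$, reading one letter $a$ means feeding $R_N$ the word $uv$ (well defined since $R_N$ is deterministic), and $A_m$ accepts whenever $R_N$ does. After $a^i$ the algorithm $A_m$ sits in the state of $R_N$ on $x^m w v (uv)^i = x^m w (vu)^i v$, so by the displayed identity $A_m$ is a counter with threshold $m+1$, and $A_m$ misclassifies $a^i$ against $\{a^j : j \ge m+1\}$ exactly when $R_N$ errs at time instant $(m+1+i)k+|v|$ of $\sigma$.

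It then remains to bound the failure ratio of $A_m$ and apply Lemma~\ref{lem:counter}. Since $\R$ has failure ratio $\phi$ for $L$ and $|\sigma| \ge N$, the number of time instants of $\sigma$ at which $R_N$ errs is at most $\phi(|\sigma|+1)$, so $A_m$ run on $a^M$ errs at at most $\phi(|\sigma|+1) = \phi\big((m+1+M)k+|v|+1\big)$ steps; dividing by $M+1$ and using $|v|\le k$ and $k\ge 1$ yields a failure ratio of at most $2k\phi = \tfrac14$ uniformly over all $M \ge m+1$. Thus $A_m$ is a counter with threshold $m+1$ and failure ratio $\tfrac14 < \tfrac12$, and Lemma~\ref{lem:counter} provides at least $(1-2\cdot\tfrac14)(m+1) = \tfrac12(m+1)$ distinct states on the run of $A_m$; these are reachable states of $R_N$, encoded by distinct bit strings of length at most $f(\R,N)$, whence $\tfrac12(m+1) < 2^{f(\R,N)+1}$ and therefore $f(\R,N) \ge \log_2(m+1) - 2 \in \Omega(\log N)$. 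Since $N = N(m) \to \infty$ as $m\to\infty$, this rules out $f(\R,n)\in o(\log n)$.

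The step I expect to be the main obstacle is getting the $\last_N$ identity and the unary compilation exactly right, so that $A_m$ is a genuine counter \emph{at every time instant} and \emph{for every stream length $M \ge m+1$} (not merely at block boundaries); in particular one must check that discarding $ik$ symbols from the left of $x^m w (vu)^i v$ behaves as claimed in each of the ranges $i\le m$ and $i\ge m+1$, and that $x^m$ has been chosen long enough that the below-threshold windows never involve the padding symbol $\square$. The remaining pieces — the failure-ratio estimate and the invocation of Lemma~\ref{lem:counter} — are routine, and the constant $\phi = \tfrac{1}{8k}$ is chosen precisely so that the counter's failure ratio stays below $\tfrac12$ by a fixed margin.
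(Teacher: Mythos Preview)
Your proposal is correct and follows essentially the same approach as the paper's proof: both invoke Lemma~\ref{lem:lb-pf-sf-len} for witness words, compile the deterministic \SWA\ into a unary automaton by stepping through blocks of $uv$, verify the $\last_N$ identity, bound the resulting counter's failure ratio by a quantity of the form $2\phi|x|$, and finish with Lemma~\ref{lem:counter}. The only differences are cosmetic---you use the rewriting $v(uv)^i=(vu)^iv$ and take threshold $m+1$ with prefix $x^m$ where the paper takes threshold $m$ with prefix $x^{m-1}$, and you fix $\phi=\tfrac{1}{8k}$ where the paper only requires $\phi<\tfrac{1}{4|x|}$.
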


\begin{proof}
	Let $u,v,w,x$ be the words from Lemma~\ref{lem:lb-pf-sf-len}.
	Without loss of generality we assume the first case from the lemma,
	i.e., $(vu)^*v \subseteq L$ and $x^*w(vu)^*v \cap L = \emptyset$.
	Let $\R = (R_n)_{n \ge 0}$ be a \SWA\ for $L$ with failure ratio $\phi$, which is chosen later.
	Let $m \ge 0$ be a natural number and $n = m \cdot |x| + |v| = \Theta(m)$.
	Define $y_{m,i} = x^{m-1}wv(uv)^i$, which has length $(m+i) \cdot |x| + |v|$.
	Observe that
	\[
		 \last_{n}(x^{m-1}wv(uv)^i) = \begin{cases}
			x^{m-1-i}wv(uv)^i, & \text{if } i < m, \\
			v(uv)^m, & \text{if } i \ge m,
		\end{cases}
	\]
	and thus
	\[
		\last_n(y_{m,i}) \in L \iff i \ge m.
	\]
	Now consider the streaming algorithm $R_n = (Q,\Sigma,q_0,\delta,F)$ for window length $n$.
	It suffices to show that $|Q| \in \Omega(m)$.
	Consider the (infinite) run of $R_n$ on $x^{m-1}wv(uv)(uv)(uv)\cdots$:
	\[
		q_0 \xrightarrow{x^{m-1}wv} p_0 \xrightarrow{uv} p_1 \xrightarrow{uv} p_2 \xrightarrow{uv} \cdots
	\]
	Define the DFA $C = (Q,\{a\}, p_0, \mu, F)$ over $\{a\}$ where
	$\mu(q,a) = \delta(q,uv)$ for all $q \in Q$,
	which is a counter with threshold $m$ and a certain failure ratio.
	For $k \ge m$ let $e_k$
	be the number of prefixes of the form $y_{m,i}$ of $y_{m,k}$ on which $R_n$ errs.
	Then the failure ratio of $C$ is bounded by $\sup_{k \ge m} e_k/(k+1)$.
	For all $k \ge m$ we have:
	\[
		\phi(R_n,y_{m,k-1},L_n,0) \ge \frac{e_k}{(m+k)|x|+|v|+1}
	\]
	Since $|v| \le |x| > 0$, $m \le k$, and
	$\phi(R_n,y_{m,k},L_n,0) \le \phi$ it follows that
	\[
	\phi \ge \frac{e_k}{(m+k+1)|x|+1} \ge \frac{e_k}{(2k+1)|x|+1} \ge \frac{e_k}{2(k+1)|x|}
	\]
	and therefore $e_k/(k+1) \le 2 \phi |x|$.
	By choosing $\phi < 1/(4|x|)$, we obtain a counter $C$
	with threshold $m$ and failure ratio $< \frac{1}{2}$.
	By Lemma~\ref{lem:counter} we know that $C$ has $\Omega(m)$ many states,
	which concludes the proof.
\end{proof}

\section{One-sided error} \label{sec-one-sided}

So far, we have only considered randomized \SWAs\ with a two sided error (analogously to the 
complexity class {\sf BPP}). Randomized \SWAs\ with a one-sided error (analogously to the 
class {\sf RP})  as defined below
can be motivated by applications, where all ``yes'' outputs have to be correct. 
Formally, a randomized \SWA\ $\R = (R_n)_{n \geq 0}$
has {\em one sided error} for $L \subseteq \Sigma^*$ if the following holds
for all $n \geq 0$ and words $w \in \Sigma^*$:
\begin{itemize}
\item If $w \notin L_n$ then $\eps(R_n,w,L_n) = 0$.
\item If $w \in L_n$ then $\eps(R_n,w,L_n) \leq 1/2$.
\end{itemize}
In other words: If $w \notin L_n$ then $R_n$ rejects $w$ with probability $1$ and
if $w \in L$ then $R_n$ accepts $w$ with probability at least $1/2$. The choice
of $1/2$ is arbitrary. Using a standard probability amplification argument, we can 
replace $1/2$ by any non-zero probability.

\begin{theorem} \label{thm:trichotomy-one-sided}
	Let $L \subseteq \Sigma^*$ be a regular language.
	\begin{enumerate}
	\item  If $L \in \langle \ST, \Len \rangle$,
	then $L$ has a randomized \SWA\ with one-sided error and $f(\R,n) = \mathcal{O}(1)$.
	\item If $L \notin \langle \ST, \Len \rangle$, then $f(\R,n) \notin o(\log n)$ for every randomized \SWA\ for $L$ with one-sided error.
	\item If $L \in \langle \LI, \Len \rangle$,
	then $L$ has a randomized \SWA\ with one-sided error and $f(\R,n) = \mathcal{O}(\log n)$.
	\item If $L \notin \langle \LI, \Len \rangle$, then $f(\R,n) \notin o(n)$ for every randomized \SWA\ for $L$ with one-sided error.
	\end{enumerate}
\end{theorem}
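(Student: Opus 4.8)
The plan is to show that one-sided error buys nothing over the deterministic setting, so the statement reduces to the known deterministic trichotomy together with one-sided communication lower bounds. Points~(1) and~(3) are immediate: by Theorem~\ref{thm:quatrochotomy}\eqref{point-O(1)} (resp.\ \eqref{point-O(log)}) every $L\in\langle\ST,\Len\rangle$ (resp.\ $L\in\langle\LI,\Len\rangle$) has a deterministic \SWA\ of constant (resp.\ logarithmic) space, and a deterministic \SWA\ is a randomized \SWA\ with one-sided error (error probability $0$). So only the lower bounds~(2) and~(4) require work, and for those I would re-run the communication-complexity reductions of Section~5 almost verbatim, using the witness words of Lemma~\ref{lemma:linearprop} (for~(4)) and Lemmas~\ref{lemma:logprop}/\ref{lemma:loglogprop} (for~(2)) to turn a one-sided \SWA\ $\R=(R_n)$ for $L$ into a one-way protocol for $\mathrm{IDX}_m$, $\mathrm{GT}_m$, or $\mathrm{EQ}_m$ with $n=\Theta(m)$ and cost $f(\R,n)$. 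The one crucial extra observation is that when $\R$ has one-sided error the resulting protocol \emph{also} has one-sided error: in every reduction, whenever the active window $\last_n(\cdot)$ of the constructed stream is not in $L$, $R_n$ rejects with probability $1$, so Bob never outputs a false positive; thus the protocol is a one-sided-error one-way protocol with constant acceptance probability on the ``yes''-instances.

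It then remains to lower-bound the one-sided-error one-way communication complexity of these functions. For $\mathrm{GT}_m$ nothing changes, since even the deterministic one-way complexity is $\Omega(\log m)$ (this already gives the $\Omega(\log n)$ bound for the cases handled as in Theorem~\ref{lowerlognrandom}). The new input is that the ``equality-flavoured'' cases are no longer cheap: I claim the one-sided-error one-way complexity of $\mathrm{EQ}_m$ (with any constant acceptance probability on the diagonal) is $\Omega(\log m)$, in sharp contrast with the $\Theta(\log\log m)$ two-sided bound of Theorem~\ref{thm:coco}\eqref{eq}. Fixing the random bits of a cost-$s$ one-sided protocol partitions Alice's domain into at most $2^s$ classes, and for each class the set of Bob-inputs on which Bob may answer $1$ forms, together with that class, a $1$-monochromatic one-way rectangle; since no two diagonal pairs $(k,k)$ and $(k',k')$ lie in a common $1$-monochromatic rectangle, each class certifies at most one diagonal instance, so at most $2^s$ of the $m$ diagonal instances get certified. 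Averaging over the random bits and using that every diagonal instance must be accepted with constant probability forces $2^s=\Omega(m)$. Applying this (and, for the remaining cases, the $\mathrm{GT}_m$-reduction of Theorem~\ref{lowerlognrandom}) to all cases of Lemma~\ref{lemma:loglogprop} yields~(2).

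For~(4) I would reuse the $\mathrm{IDX}_m$-reduction of the linear-space lower bound of Section~5, once more observing that one-sidedness of $\R$ makes the protocol one-sided, and then show that the one-sided-error one-way complexity of $\mathrm{IDX}_m$ is $\Omega(m)$, matching the deterministic and two-sided bounds. Here a counting argument via Shannon entropy does the job: after fixing random bits, a cost-$s$ one-sided protocol induces a partition $\{C_\mu\}$ of $\{0,1\}^m$ with $C_\mu\subseteq\bigcap_{i\in J_\mu}\{\alpha:\alpha_i=1\}$ (no false positives), hence $|C_\mu|\le 2^{m-|J_\mu|}$, so the number of certified $1$-instances is $\sum_\mu|C_\mu|\,|J_\mu|\le\sum_\mu|C_\mu|\bigl(m-\log_2|C_\mu|\bigr)=2^m\cdot H\le 2^m\cdot s$, where $H\le s$ is the entropy of the class-size distribution. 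Since there are $m\cdot 2^{m-1}$ many $1$-instances and each must be certified with constant probability, averaging over the random bits gives $s=\Omega(m)$ and hence $f(\R,n)=\Omega(n)$. The main obstacle is exactly this combination of ingredients: proving the $\Omega(m)$ one-sided bound for $\mathrm{IDX}_m$ through the entropy estimate (this is the only place where one-sided error is genuinely exploited, since over two-sided protocols equality-type functions become exponentially cheaper), and doing the bookkeeping so that every case produced by the witness-word lemmas lands on a function whose one-sided-error complexity is large enough — which may require choosing the witnesses so as to realize an equality-type rather than an inequality-type pattern, or invoking a promise version of $\mathrm{IDX}_m$ in the spirit of Lemma~\ref{lem:promise-idx}.
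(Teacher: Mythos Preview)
Your reduction scheme for~(2) has a real gap in one of the four cases of Lemma~\ref{lemma:loglogprop}. When the third case holds, i.e.\ $x^*u\subseteq L$, $yx^*u\cap L=\emptyset$ and $z^+yx^*u\subseteq L$, the active window in the $\mathrm{EQ}$-reduction lies in $L$ precisely when $i\neq j$, so a one-sided-error \SWA\ yields a one-sided-error one-way protocol for \emph{inequality} $\mathrm{NEQ}_m$ (never outputs $1$ on the diagonal), not for $\mathrm{EQ}_m$. But one-sided $\mathrm{NEQ}_m$ is cheap: Alice sends $i\bmod p$ for a random small prime $p$ together with $p$, Bob outputs $1$ iff $j\bmod p$ differs; this costs $O(\log\log m)$ bits and never false-accepts. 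Hence in this case your argument only recovers an $\Omega(\log\log n)$ bound, which is exactly the two-sided bound you were trying to beat. The same asymmetry threatens the $\mathrm{GT}$ cases in principle, though there you are saved because the two-sided one-way complexity of $\mathrm{GT}_m$ (and of its complement) is already $\Omega(\log m)$; it is specifically the equality/inequality dichotomy that breaks the plan, and you cannot simply ``choose the witnesses'' to land in case~4 rather than case~3, since which case occurs is dictated by the structure of $L$ (e.g.\ for $L=\Sigma^*\setminus ab^*$ only case~3 is available).

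The paper sidesteps this via a different mechanism. It observes that a one-sided-error randomized \SWA\ for $L$, with probabilities forgotten, \emph{is} an NFA for $L_n$: no false accepts means every accepting run witnesses membership, and acceptance probability $\ge 1/2$ on yes-instances guarantees some accepting run exists. Point~(2) then reduces to an NFA state lower bound (Proposition~\ref{prop-lower-sqrt}), proved via Lemma~\ref{lem:pt-len} applied to $L^\rev$, which has only two cases. In the case analogous to your problematic case~3 (namely $x^*u\subseteq L$ and $yx^*u\cap L=\emptyset$), one extracts from the NFA for $L_n$ a \emph{unary} NFA over $\{a\}$ whose language omits $a^m$ but contains all $a^k$ with $k>m$; a Chrobak-normal-form argument (Lemma~\ref{lemma-NFA-lower}) then forces $\Omega(\sqrt{n})$ states, hence $\Omega(\log n)$ space. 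This NFA/Chrobak step is the missing idea in your sketch. For point~(4), incidentally, no new work is needed at all: a one-sided-error \SWA\ is in particular a two-sided-error \SWA\ (after one round of amplification), so Theorem~\ref{thm:quatrochotomy}\eqref{lower-lin} already gives the linear lower bound; your entropy argument for one-sided $\mathrm{IDX}_m$ is correct but superfluous.
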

 We only have to prove point (2) of the theorem: The upper bounds in (1) and (3) already hold for deterministic \SWAs\ 
 \cite{GHKLM18,GHL16}.
 Moreover, the lower bound in (4) follows from point \eqref{point-O(log)} in Theorem~\ref{thm:quatrochotomy}.
 
 In order to show point (2) from Theorem~\ref{thm:trichotomy-one-sided} we prove a stronger statement.
 Note that if $\R = (R_n)_{n \geq 0}$ is a randomized \SWA\ for $L$ with one-sided error, then every $R_n$
 can be seen as an NFA (non-deterministic finite automaton) for $L_n$ by forgetting probabilities. Hence, it suffices to show:
 
 \begin{proposition} \label{prop-lower-sqrt}
Let $L \in \Reg \setminus \langle \ST, \Len \rangle$, $n \geq 0$,  and let $A$ be an NFA for $L_n$. 
Then, $A$ has $\Omega(\sqrt{n})$ many states. 
\end{proposition}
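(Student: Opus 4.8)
The plan is to combine the forbidden‑pattern words of Lemma~\ref{lemma:loglogprop} with a reduction to unary NFAs. Applying Lemma~\ref{lemma:loglogprop} to $L$ gives words $u,x,y,z$ with $k:=|x|=|y|=|z|\ge 1$ falling into one of the four listed cases. I first treat window lengths $n$ with $n\equiv |u|\pmod k$ and put $c=(n-|u|)/k=\Theta(n)$; this suffices, since the bound for this infinite set of window lengths already yields $f(\R,n)\notin o(\log n)$ in Theorem~\ref{thm:trichotomy-one-sided}(2). For $b\ge 0$ set $w_b=z^{c}\,y\,x^{b}\,u$. Since $|w_b|>n$, no padding occurs, and a short computation gives
\[
 \last_n(w_b)=\begin{cases}
   x^{c}u, & b\ge c,\\
   y\,x^{c-1}u, & b=c-1,\\
   z^{\,c-1-b}\,y\,x^{b}u, & b\le c-2 .
 \end{cases}
\]
Feeding these into the three inclusions/exclusions of each case of Lemma~\ref{lemma:loglogprop}, the condition ``$w_b\in L_n$'' collapses to a one‑variable predicate in $b$: it is $b\ge c$ in the first case, $b\le c-1$ in the second, $b\neq c-1$ in the third, and $b=c-1$ in the fourth.

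Next I pass to a unary automaton. Let $A=(Q,\Sigma,\dots)$ be an NFA for $L_n$ with $s=|Q|$; the goal is $s=\Omega(\sqrt n)$. Let $P\subseteq Q$ be the set of states $A$ can reach from an initial state on the fixed word $z^{c}y$, and let $G\subseteq Q$ be the set of states from which $A$ can reach a final state on the fixed word $u$. Then $w_b\in L_n$ iff $\delta(P,x^{b})\cap G\neq\emptyset$, so the unary automaton $B$ with state set $Q$, initial states $P$, final states $G$ and the single letter $x$ (acting via $\delta(\cdot,x)$) has at most $s$ states and accepts exactly $\{x^{b}:b\in S\}$, where $S$ is $[c,\infty)$, $[0,c-1]$, $\mathbb N\setminus\{c-1\}$, or $\{c-1\}$ according to the case.

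It remains to lower‑bound the size of a unary NFA for such an $S$. In the first, second and fourth cases $S$ has a sharp ``boundary'' at $c=\Theta(n)$, and a pumping argument already gives $\Omega(n)$: pick an accepting path of $B$ realizing the extremal accepted length near that boundary (namely $x^{c}$ in case~1, $x^{c-1}$ in cases~2 and~4); if the path repeated a state it would contain a cycle, and deleting or inserting that cycle would produce an accepted word whose length is, respectively, $<c$, $>c-1$, or $\neq c-1$, contradicting the shape of $S$. Hence the path has no repeated state, so $s\ge c=\Omega(n)$, which is in particular $\Omega(\sqrt n)$.

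The third case, $S=\mathbb N\setminus\{c-1\}$, is the real obstacle: here $x^{c-2}$ and $x^{c}$ are both accepted and only $x^{c-1}$ is rejected, so no single accepting path can be pumped onto the forbidden length $c-1$ and the elementary argument breaks down. Instead I use the classical fact that the set of word lengths accepted by an $s$‑state unary NFA is ultimately periodic with threshold $O(s^{2})$ (a cycle‑decomposition pumping argument, underlying Chrobak's normal form for unary NFAs). Since $\mathbb N\setminus\{c-1\}$ is periodic on $[c,\infty)$ but not on $[c-1,\infty)$, its threshold equals $c$, whence $c=O(s^{2})$ and $s=\Omega(\sqrt c)=\Omega(\sqrt n)$; this square‑root loss in case~3 is genuine, which is why the proposition only claims $\Omega(\sqrt n)$. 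Combining the four cases finishes the proof.
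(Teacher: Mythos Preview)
Your proof is correct and reaches the same $\Omega(\sqrt n)$ bound as the paper, but the route is organized differently. The paper does not invoke Lemma~\ref{lemma:loglogprop}; instead it applies the simpler two-case pattern of Lemma~\ref{lem:pt-len} to $L^\rev$, obtaining words $u,x,y$ with $|x|=|y|\ge1$ such that either (i) $x^*u\cap L=\emptyset$ and $yx^*u\subseteq L$, or (ii) $x^*u\subseteq L$ and $yx^*u\cap L=\emptyset$. Case~(i) is then dispatched by a fooling-set argument on the pairs $(yx^i,\,x^{m-i}u)$, giving $\Omega(n)$ states directly for the NFA for $L_n$; case~(ii) is reduced to a unary NFA and handled via Chrobak normal form and Lemma~\ref{lemma-NFA-lower}, exactly as you do in your case~3. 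By using the finer four-case pattern of Lemma~\ref{lemma:loglogprop} you avoid the fooling-set trick altogether and treat three of the four cases by an elementary single-path pumping argument on the derived unary NFA, reserving Chrobak only for the genuine cofinite-minus-one-point case. Your organization thus isolates more sharply where the square-root loss is unavoidable (your case~3), at the cost of a longer case split; the paper trades that precision for brevity by lumping your cases~1 and~3 together under a uniform Chrobak argument. Both proofs, incidentally, establish the bound only along the arithmetic progression $n\equiv|u|\pmod{k}$, which indeed suffices for the intended application to Theorem~\ref{thm:trichotomy-one-sided}(2).
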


Let us first fix the notation concerning NFAs. An NFA is a tuple $A = (Q,\Sigma, I, \Delta, F)$, where
$Q$ is the finite set of state, $\Sigma$ is the input alphabet, $I \subseteq Q$ is the set of initial states,
$\Delta \subseteq Q \times \Sigma \times Q$ is the set of transitions, and $F \subseteq Q$ is the set of 
final states.  We define $\Delta^* \subseteq Q \times \Sigma^* \times Q$ as the smallest relation such that
(i) $(q, \varepsilon, q) \in \Delta^*$ for all states $q \in Q$ and (ii) $(p,w,q) \in \Delta^*$ and $(q,a,r) \in \Delta$
implies $(p,wa,r) \in \Delta^*$. The language accepted by $A$ is $L(A) = \{ w \in \Sigma^* \colon \exists p \in I, q \in F 
\colon (p,w,q) \in \Delta^* \}$.
For the proof of Proposition~\ref{prop-lower-sqrt} we need the following lemma.

\begin{lemma} \label{lemma-NFA-lower}
Let $L \subseteq a^*$ such that there exists an $n$ with $a^n \notin L$ and $a^k \in L$ for all $k > n$.
Then, every NFA for $L$ has at least $\sqrt{n}$ many states.
\end{lemma}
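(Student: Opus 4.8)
The plan is to analyze the structure of an NFA $A=(Q,a,I,\Delta,F)$ accepting $L$, where $L\subseteq a^*$ is cofinite with $a^n\notin L$ but $a^k\in L$ for all $k>n$. Since the alphabet is unary, an accepting computation on $a^k$ is just a walk of length $k$ in the transition graph from some initial state to some final state. First I would reduce to a connected picture: there must be some initial state $p_0\in I$, some final state $q_0\in F$, and cycle lengths that together allow every length $k>n$ to be realized but not $k=n$. The key combinatorial fact is the classical one about unary NFAs and the Frobenius/Chicken-McNugget phenomenon: a unary NFA with $s$ states accepting a cofinite language must already accept all words of length at least roughly $s^2$ (more precisely, the ``threshold'' of the accepted cofinite language is $O(s^2)$). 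Since here the threshold is at least $n+1$ (because $a^n\notin L$), this forces $s^2=\Omega(n)$, i.e. $s\ge\sqrt n$ up to constants; the lemma asserts the clean bound $s\ge\sqrt n$.

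Concretely, the key steps I would carry out are as follows. Fix an accepting run on $a^{k}$ for some large $k$; it uses at most $s=|Q|$ states, so it must traverse a cycle, and after removing cycle iterations we get a ``simple'' accepting path whose length $d$ satisfies $d\le s-1$, together with some simple cycle $C$ on that path of length $c\le s$ reachable from an initial state and co-reachable to a final state. Then $a^{d+tc}\in L$ for all $t\ge 0$. More generally, collecting over all initial/final pairs, the set $\{k : a^k\in L\}$ is a finite union of arithmetic-progression-like sets $\{d_i + t c_i : t\ge 0\}$ together with a finite set, where all $d_i\le s-1$ and $c_i\le s$. Since $L$ is cofinite, these progressions must cover all sufficiently large integers. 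The crucial quantitative claim is: if a cofinite set is covered by finitely many progressions each with offset $<s$ and period $\le s$, then it contains every integer $\ge (s-1)+s\cdot? $ — but that's not quite tight; the right statement to prove is that the least $n$ with $a^n\notin L$ satisfies $n < s^2$ (or $n\le s(s-1)$, depending on the exact bookkeeping), which I would prove by a pigeonhole/Chrobak-normal-form argument: put $A$ into Chrobak normal form (a ``tail'' of length $O(s)$ feeding into a disjoint union of cycles of total size $\le s$), observe that on each cycle the accepted residues stabilize after one full turn, so beyond the tail everything is periodic with a period dividing $\mathrm{lcm}$ of the cycle lengths; but to get a quadratic and not exponential bound we instead argue directly: a non-accepted length $n$ beyond the tail would have to avoid every cycle's accepting set, and since each cycle has length $\le s$ and after $n\ge s$ steps each cycle is already ``saturated'' in the sense that if it accepts arbitrarily long lengths it accepts all lengths $\ge$ its entry time $+$ its length, we conclude that such an $n$ must be less than (entry time) $+$ (cycle length) $\le s + s = 2s$ in the simplest packaging, or $\le (s-1)^2$ after the most careful packaging that the paper presumably wants. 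I would present whichever bookkeeping yields exactly $n < s^2$, hence $s > \sqrt n$, hence $s \ge \sqrt n$.

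The main obstacle, and the place I would spend the most care, is exactly this quantitative step: going from ``$L$ is cofinite, recognized by an $s$-state unary NFA'' to the bound ``the threshold is $< s^2$'' without accidentally proving only an exponential bound. The naive approach via determinization gives an exponential blow-up and is useless here; the naive approach via a single cycle gives only a linear bound. The correct approach is to use the Chrobak normal form: every unary NFA with $s$ states is equivalent to one consisting of a simple path (tail) of length at most $s^2$ (in fact one can take the tail length $\le s^2-s$ or so) leading into a nondeterministic choice among disjoint simple cycles whose lengths sum to at most $s$. For a cofinite language, at least one cycle, say of length $c$, must be entirely accepting once entered — otherwise beyond the tail we'd miss infinitely many residues. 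Then every length beyond (tail length) is accepted, so the threshold $n$ is at most the tail length, which is $O(s^2)$. I would need to double-check the exact constant in Chrobak's bound to land on $n<s^2$ precisely; if the standard bound is $n \le s^2 - 2$ or $n \le (s-1)^2$, the conclusion $s\ge\sqrt n$ (or $s\ge\sqrt{n}$ up to rounding) still follows, and I would state it in the form the later proof of Proposition~\ref{prop-lower-sqrt} actually consumes. A secondary, more routine obstacle is handling the ``at least one cycle is fully accepting'' deduction cleanly — this uses that a finite union of non-full residue classes modulo the respective cycle lengths cannot be cofinite, which is a short argument but must be stated carefully when the cycles have different lengths.
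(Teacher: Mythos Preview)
Your proposal correctly identifies Chrobak normal form as the right tool and the target inequality $n < s^2$, matching the paper's approach. However, the step you single out as the crux --- ``for a cofinite language, at least one cycle must be entirely accepting, otherwise beyond the tail we'd miss infinitely many residues'' --- is false. Take two cycles, each of length $2$, with cycle~1 accepting only at position~$0$ and cycle~2 accepting only at position~$1$. Neither cycle is fully accepting, yet every length past the tail is accepted. More generally, the claim ``a finite union of non-full residue classes modulo the respective cycle lengths cannot be cofinite'' fails whenever the moduli are not pairwise coprime, so your ``short argument'' cannot be completed.

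The paper's argument sidesteps this entirely and is in fact simpler than what you attempt. It does not try to show some cycle is fully accepting; instead it uses the specific hypothesis $a^n \notin L$ directly. If $n$ is at least the tail length $m^2 - m$, then after reading $a^n$ every reachable state lies on some cycle, and since $a^n$ is rejected, all these states are non-final. Letting $d$ be the product (or lcm) of the cycle lengths, $a^{n+d}$ reaches exactly the same set of states, hence is also rejected --- contradicting $a^k \in L$ for all $k > n$. Therefore $n < m^2 - m < m^2$, so $m > \sqrt{n}$. You should replace your cofinite-coverage argument with this periodicity argument.
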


\begin{proof}
The proof is essentially the same as for \cite[Lemma~6]{JiraskovaM14}, where the statement of the lemma is shown for $L = a^* \setminus \{a^n\}$.
Let us give the proof for completeness. 
It is known that every unary NFA has an equivalent NFA in so called Chrobak normal form.  A unary NFA in Chrobak normal
form consists of path starting in the unique initial state. From  the last state of the path, edges go to a collection of 
disjoint cycles. In \cite{Gawrychowski11} it is shown that an $m$-state unary NFA has an equivalent NFA in Chrobak normal form whose
initial path consists of $m^2 - m$ states. Now assume that $L$ is accepted by an NFA with $m$ states and let $A$ 
be the equivalent Chrobak normal form NFA, whose initial path consists of $m^2-m$ states. If $n \geq m^2-m$ then all states
that are reached in $A$ from the initial state via $a^n$ belong to a cycle and every cycle contains such a state. Since
$a^n \notin L$, all these states are rejecting. Hence, $a^{n+ x \cdot d} \notin L$ for all $x \geq 0$, where $d$ is the product
of all cycle lengths. This contradicts the fact that $a^k \in L$ for all $k > n$. Hence, we must have $n < m^2-m$ and therefore
$m > \sqrt{n}$.
\end{proof}

\begin{proof}[Proof of Proposition~\ref{prop-lower-sqrt}]
Since $L \in \Reg \setminus \langle \ST, \Len \rangle$, we can apply Lemma~\ref{lem:pt-len} to the language $L^\rev$
and obtain words $u,x,y$ such that $|x|=|y| \geq 1$ and on the two following cases holds:
\begin{enumerate}[(i)]
\item $x^*u \cap L = \emptyset$ and $y x^* u \subseteq L$
\item $x^*u \subseteq L$ and $y x^* u \cap L = \emptyset$
\end{enumerate}
Note that we must have $x \neq y$.

Fix an $m \geq 0$ and consider the window size $n  =  (m+1) |x| + |u|$.
Let us first assume that (i) holds. 
Consider the words $x_i = y x^i$ and $y_i = x^{m-i} u$ for $0 \leq i \leq m$.
We have $x_i y_i = y x^m u  \in L_n$ and $x_i y_j = y x^{m-i+j} u \notin L_n$ for all
$i < j$. The fooling set technique from \cite[Lemma~1]{Birget92} implies that every NFA for $L_n$
has at least $m+1 \in \Omega(n)$ many states.

Now assume that (ii) holds. Assume that $A = (Q,\Sigma,I,\Delta,F)$ is an NFA for $L_n$.
We define an NFA $A'$ over the unary alphabet $\{a\}$ as follows:
\begin{itemize}
\item The state set of $A'$ is $Q$.
\item The set of initial states of $A'$ is $\{ q \in Q \colon \exists p \in I \colon (p,y,q) \in \Delta^* \}$.
\item The set of final states of $A'$  is $\{ p \in Q \colon \exists q \in F \colon (p,u,q) \in \Delta^* \}$.
\item The set of transitions of $A'$ is $\{ (p,a,q) \colon (p,x,q) \in \Delta^* \}$.
\end{itemize}
We then have the following two properties:
\begin{itemize}
\item If $k > m$, then $y x^k u \in L_n = L(A)$, which implies $a^k \in L(A')$.
\item $y x^m u \notin L(A)$, which implies $a^m \notin L(A')$.
\end{itemize}
By Lemma~\ref{lemma-NFA-lower}, $A'$ (and thus $A$) has at least $\sqrt{m} \in \Omega(\sqrt{n})$ many
states.
\end{proof}

\section{Strict error probability} \label{sec-strict}

Let $\Pi \subseteq \Sigma^* \times \Omega$ be an approximation problem and let $\R = (R_n)_{n \ge 0}$ be a randomized \SWA\ which is strictly $\eps$-correct for $\Pi$, where $0 \leq \eps < 1$.
In this section we will prove that one can extract a deterministic \SWA\ $\D = (D_n)_{n \ge 0}$
for $\Pi$ from $\R$ such that $f(\D,n) \le f(\R,n)$ for all $n \geq 0$. Since we deal with the worst case 
space complexity of $\R$, we can assume that every $R_n$ has a finite state set. 

Fix a window size $n \ge 0$ and let $R_n = (Q,\Sigma,\iota,\rho,\omega)$.
Consider a run 
\[ \pi : q_0 \xrightarrow{a_1} q_1 \xrightarrow{a_2} \cdots \xrightarrow{a_m} q_m
\]
in $R_n$.
A {\em subrun} of $\pi$ is a run of the form 
\[ q_i \xrightarrow{a_{i+1}} q_{i+1} \xrightarrow{a_{i+2}} \cdots q_{j-1} \xrightarrow{a_j} q_j .
\]
The run $\pi$ is {\em simple} if $q_i \neq q_j$ for $0 \leq i < j \leq m$.
Consider a nonempty subset $S \subseteq Q$ and a function $\delta \colon Q \times \Sigma \to Q$
such that $S$ is closed under $\delta$, i.e., $\delta(S \times \Sigma) \subseteq S$.
We say that the run $\pi$ is {\em $\delta$-conform}
if $\delta(q_{i-1},a_i) = q_i$ for all $1 \le i \le m$.
We say that $\pi$ is {\em $(S,\delta)$-universal} if for all $q \in S$ and $x \in \Sigma^n$
there exists a $\delta$-conform subrun $\pi' : q \xrightarrow{x} q'$ of $\pi$. 
Finally, $\pi$ is {\em $\delta$-universal} if it is $(S,\delta)$-universal
for some nonempty subset $S \subseteq Q$ which is closed under $\delta$. 

\begin{lemma}
	\label{lem:extract-det}
	Let $\pi$ be a strictly correct run in $R_n$ for $\Pi$,
	let $S \subseteq Q$ be a nonempty subset and let $\delta \colon Q \times \Sigma \to Q$ be a function
	such that $S$ is closed under $\delta$.
	If $\pi$ is $(S,\delta)$-universal, then there exists $q_0 \in S$ such that
	$D_n = (Q,\Sigma,q_0,\delta,\omega)$ is a deterministic streaming algorithm for $\Pi_n$.
\end{lemma}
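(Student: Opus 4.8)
The plan is to take the initial state of $D_n$ to be the state that $\delta$ reaches, from some fixed state of $S$, after reading the initial window $\square^n$, and then to certify correctness of every output of $D_n$ by locating it inside the universal run $\pi$. First I would fix an arbitrary $s\in S$ (which exists since $S\neq\emptyset$), extend $\delta$ to words in the usual way, and set $q_0:=\delta(s,\square^n)$. Since $S$ is closed under $\delta$ we have $\delta(s',x)\in S$ for all $s'\in S$ and $x\in\Sigma^*$, so in particular $q_0\in S$, as the lemma requires.

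The key preliminary step is a \emph{window-factorization} identity: for every $w\in\Sigma^*$ one has $\delta(q_0,w)=\delta(s,\square^n w)$, and since $|\square^n w|\ge n$ while prepending $\square$'s does not affect $\last_n$ (i.e.\ $\last_n(\square^n w)=\last_n(w)$), we may write $\square^n w=u\cdot\last_n(w)$ for some $u\in\Sigma^*$, whence $\delta(q_0,w)=\delta(s_w,\last_n(w))$ with $s_w:=\delta(s,u)\in S$. Thus reading any $w$ from $q_0$ lands in a state that is also reached, \emph{via $\delta$}, from some state of $S$ by reading the length-$n$ word $\last_n(w)$.

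Next I would verify that $D_n=(Q,\Sigma,q_0,\delta,\omega)$ is a deterministic streaming algorithm for $\Pi_n$, which amounts to showing $(\last_n(w),\omega(\delta(q_0,w)))\in\Pi$ for every $w\in\Sigma^*$ (every word is a prefix of some word in $\Sigma^{\ge n}$, so this is the same as $D_n$ being strictly correct for $\Pi_n$). Fixing $w$, set $v:=\last_n(w)\in\Sigma^n$ and take $s_w\in S$ as above. By $(S,\delta)$-universality of $\pi$ there is a $\delta$-conform subrun $\pi'\colon s_w\xrightarrow{v}s_w'$ of $\pi$, and $\delta$-conformity forces $s_w'=\delta(s_w,v)=\delta(q_0,w)$. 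Writing $\pi\colon p_0\xrightarrow{a_1}p_1\xrightarrow{a_2}\cdots\xrightarrow{a_m}p_m$, the subrun $\pi'$ occupies positions $i,i+1,\dots,j$ with $j-i=|v|=n$, $p_i=s_w$, $p_j=s_w'$ and $a_{i+1}\cdots a_j=v$, so $\last_n(a_1\cdots a_j)=a_{i+1}\cdots a_j=v$. Since $\pi$ is strictly correct, the output at position $j$ is correct, i.e.\ $(a_1\cdots a_j,\omega(p_j))\in\Pi_n$, which unfolds to $(v,\omega(\delta(q_0,w)))\in\Pi$; as $w$ was arbitrary, this is exactly the claim.

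The only genuinely delicate point is the treatment of \emph{short} inputs ($|w|<n$): it is precisely to handle these that $q_0$ must be ``primed'' with $\square^n$ rather than chosen as an arbitrary state of $S$. Once that choice is in place, the single identity $\delta(q_0,w)=\delta(s,\square^n w)$ together with $\last_n(\square^n w)=\last_n(w)$ makes the would-be case split between short and long inputs disappear, and everything else is a routine unfolding of $(S,\delta)$-universality and strict correctness of $\pi$.
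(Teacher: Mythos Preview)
Your proposal is correct and follows essentially the same approach as the paper: choose $q_0=\delta(s,\square^n)$ for an arbitrary $s\in S$, factor $\square^n w$ as a prefix times $\last_n(w)$, use $(S,\delta)$-universality to locate the corresponding $\delta$-conform subrun inside $\pi$, and conclude via strict correctness of $\pi$. Your write-up is in fact somewhat more explicit than the paper's (you spell out the positions $i,j$ in $\pi$ and the identity $\last_n(a_1\cdots a_j)=v$), and your closing remark on why the $\square^n$-priming is needed for short inputs is a nice clarification.
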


\begin{proof}
	Let $q_0 = \delta(p,\square^n) \in S$ for some arbitrary state $p \in S$
	and define $D_n = (Q,\Sigma,q_0,\delta,\omega)$.
	Let $w \in \Sigma^*$ and consider the run $\sigma: p \xrightarrow{\square^n} q_0 \xrightarrow{w} q$ in $D_n$
	of length $\ge n$. We have to show that $(\last_n(w),\omega(q)) \in \Pi$. We can write
	$\square^n w = x \, \last_n(w)$ for some $x \in \Sigma^*$. Thus, we can rewrite the run $\sigma$ as
	$\sigma : p \xrightarrow{x} q' \xrightarrow{\last_n(w)} q$.
	We know that $q' \in S$ because $S$ is closed under $\delta$.
	Since $\pi$ is $(S,\delta)$-universal, it contains a subrun $q' \xrightarrow{\last_n(w)} q$.
	By strict correctness of $\pi$ we obtain $(\last_n(w),\omega(q)) \in \Pi$.
\end{proof}

For the rest of this section we fix an arbitrary function $\delta \colon Q \times \Sigma \to Q$
such that for all $q \in Q$, $a \in \Sigma$,
\[
\rho(q,a,\delta(q,a))  = \max \{ \rho(q,a,p) \colon p \in Q \} .
\]
Note that 
\[
\rho(q,a,\delta(q,a)) \geq \frac{1}{|Q|} .
\]
 for all $q \in Q$, $a \in \Sigma$.
Furthermore, let $D_n = (Q,\Sigma,q_0,\delta,\omega)$ where the initial state $q_0$ will be defined later.
We define for each $i \ge 1$ a state $p_i$, a run $\pi^*_i \in \Runs(D_n,p_i,w_i)$ in $D_n$
and a set $S_i \subseteq Q$.
We abbreviate $\Runs(R_n,w_1 \cdots w_m)$ by $R_m$.
For $1 \leq i \leq m$ let $H_i$ denote the event that for a random run
$\pi = \pi_1 \cdots \pi_m \in R_m$, where each $\pi_j$ is a run on $w_j$,
the subrun $\pi_i$ is $(S_i,\delta)$-universal.
Notice that $H_i$ is independent of $m \ge i$.

First, we choose for $p_1$ a state that maximizes
\[
	\Pr_{\pi \in R_{i-1}}[\pi \text{ ends in } p_i \mid \forall j \leq i-1 : \overline{H_j}  ],
\]
which is at least $1/|Q|$. Note that $p_1$ is a state such that $\iota(p_1)$ is maximal, since $R_0$ only
consists of empty runs $(q)$. For $S_i$ we take any maximal SCC of $D_n$ which is reachable from $p_i$.
Finally, we define the run $\pi^*_i$. It  starts in $p_i$.
Then, for each state $q \in S_i$ and each word $x \in \Sigma^n$ the run $\pi^*_i$ leads from the current state to $q$
via a simple run and reads the word $x$ from $q$. Since $S_i$ is a maximal SCC of $D_n$ such a run exists.
Hence, $\pi^*_i$ is a run on a word of the form
\[
	w_i = \prod_{q \in S_i} \prod_{x \in \Sigma^n} y_{q,x} \, x.
\]
Since we choose the runs on the words $y_{q,x}$ to be simple,
the lengths of the words $w_i$ are bounded independently of $i$.
More precisely, we have $|w_i| \leq |Q| \cdot |\Sigma|^n \cdot (|Q|+n)$.
Let us define
\[
\mu = \frac{1}{|Q|^{|Q| \cdot |\Sigma|^n \cdot (|Q|+n) + 1}} .
\]

\begin{lemma}
	\label{lem:mu}
	For all $m \ge 0$ we have
	\[
		\Pr_{\pi \in R_m}[H_m \mid \forall i\leq m-1: \overline{H_i}] \geq \mu .
	\]
\end{lemma}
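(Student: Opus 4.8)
The plan is to lower-bound the conditional probability by the chance that in the last block $w_m$ the run of $R_n$ follows exactly the prescribed run $\pi^*_m$, starting from the designated state $p_m$. Fix $m \ge 1$ and abbreviate $E := \{\pi \in R_m : \overline{H_i}\text{ holds for all } i \le m-1\}$ (for $m=1$ this is all of $R_m$); I may assume $\Pr[E] > 0$, as otherwise the claimed conditional bound is vacuous. The first observation I would record is that $\pi^*_m$ is itself $(S_m,\delta)$-universal: by its very construction it contains, for every $q \in S_m$ and $x \in \Sigma^n$, a subrun that first reaches $q$ and then reads $x$, and this subrun is $\delta$-conform because $\pi^*_m$ is a run of the deterministic automaton $D_n$. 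Consequently, every run $\pi = \pi_1 \cdots \pi_m \in R_m$ with $\pi_m = \pi^*_m$ already satisfies $H_m$, so
\[
	\Pr_{\pi \in R_m}[H_m \mid E] \ \ge\ \Pr_{\pi \in R_m}[\pi_m = \pi^*_m \mid E].
\]

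Next I would split off the last block using the factorisation of $\rho_\iota$. Both $E$ and the state $r$ in which the prefix $\pi_1 \cdots \pi_{m-1}$ ends depend only on $\pi_1, \dots, \pi_{m-1}$ (each $H_i$ is an event on the first $i$ blocks), and, conditioned on these data, the last segment $\pi_m$ is distributed according to $\rho$ on $\Runs(R_n,r,w_m)$. Since $\pi^*_m$ begins in $p_m$, the event $\pi_m = \pi^*_m$ forces $r = p_m$ and then has probability $\rho(\pi^*_m)$; marginalising out the last block also identifies the probability that $\pi_1 \cdots \pi_{m-1}$ ends in $p_m$ given $E$ with $\Pr_{\pi \in R_{m-1}}[\pi\text{ ends in }p_m \mid \overline{H_1},\dots,\overline{H_{m-1}}]$. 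This yields
\[
	\Pr_{\pi \in R_m}[\pi_m = \pi^*_m \mid E] \ =\ \Pr_{\pi \in R_{m-1}}[\pi\text{ ends in }p_m \mid \overline{H_1},\dots,\overline{H_{m-1}}]\cdot \rho(\pi^*_m).
\]

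It then remains to bound the two factors. The first is at least $1/|Q|$ by the defining maximality of $p_m$ among the at most $|Q|$ states (for $m=1$ this is just $\iota(p_1) \ge 1/|Q|$). For the second, every transition taken along $\pi^*_m$ has the form $(q,a,\delta(q,a))$, hence probability $\rho(q,a,\delta(q,a)) \ge 1/|Q|$, and $|w_m| \le |Q| \cdot |\Sigma|^n \cdot (|Q|+n)$ by the simplicity of the connecting runs, so $\rho(\pi^*_m) \ge |Q|^{-|Q| \cdot |\Sigma|^n \cdot (|Q|+n)}$. Multiplying gives $\Pr_{\pi \in R_m}[H_m \mid E] \ge |Q|^{-(|Q| \cdot |\Sigma|^n \cdot (|Q|+n)+1)} = \mu$. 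I expect the only genuinely delicate step to be the conditional factorisation in the second display — making precise that conditioning on the past (including the events $\overline{H_i}$) leaves $\pi_m$ with its Markovian $\rho$-distribution — while the rest is elementary manipulation of quantities already fixed in the construction.
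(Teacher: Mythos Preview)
Your proposal is correct and follows essentially the same approach as the paper: both lower-bound $\Pr[H_m \mid E]$ by $\Pr[\pi_m=\pi^*_m \mid E]$, use the Markov structure (conditional independence of the last block given the endpoint $p_m$) to factor this into $\Pr[\text{end in }p_m \mid E]\cdot \rho(\pi^*_m)$, and then bound the two factors by $1/|Q|$ and $|Q|^{-|w_m|}$ respectively. Your write-up is slightly more explicit about why $\pi_m=\pi^*_m$ implies $H_m$ and about the Markov factorisation, but the argument is the same.
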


\begin{proof}
         In the following, let $\pi$ be a random run from $R_m$ and let $\pi_i$ be the subrun on $w_i$.
	Notice that under the assumption that the event $[\pi_{m-1}$ ends in $p_m]$ holds, the events $[\pi_m = \pi^*_m]$ and 
	$[\forall i \leq m-1: \overline{H_i}]$
	are conditionally independent.\footnote{Two events $A$ and $B$ are conditionally independent assuming event $C$ if 
	$\Pr[A \wedge B \mid C] = \Pr[A \mid C] \cdot \Pr[B \mid C]$, which is equivalent to 
	$\Pr[A  \mid B \wedge C] = \Pr[A  \mid  C]$.} Thus, we have
	\begin{eqnarray*} && \Pr_{\pi \in R_m}[\pi_m = \pi^*_m  \mid \pi_{m-1} \text{ ends in } p_m \wedge \forall i \leq m-1: \overline{H_i}] \\
	&=& \Pr_{\pi \in R_m}[\pi_m = \pi^*_m  \mid \pi_{m-1} \text{ ends in } p_m] .
	\end{eqnarray*}
	 Since the event $[\pi_m = \pi^*_m]$ implies the event 
	 $[\pi_{m-1}$ ends in $p_m]$, 
	 we obtain:
	\begin{eqnarray*}
		&& \Pr_{\pi \in R_m}[H_m \mid \forall i \leq m-1: \overline{H_i}] \\
		&\ge& \Pr_{\pi \in R_m}[\pi_m = \pi_m^* \mid \forall i \leq m-1: \overline{H_i}] \\
		&=& \Pr_{\pi \in R_m}[\pi_m = \pi_m^* \wedge \pi_{m-1} \text{ ends in } p_m \mid \forall i \leq m-1: \overline{H_i}] \\
		&=& \Pr_{\pi \in R_m}[\pi_m = \pi_m^*  \mid  \pi_{m-1} \text{ ends in } p_m \wedge \forall i \leq m-1: \overline{H_i}] \cdot \\
		 &&   \Pr_{\pi \in R_m}[\pi_{m-1} \text{ ends in } p_m \mid \forall i \leq m-1: \overline{H_i}]  \\
		&=& \Pr_{\pi \in R_m}[\pi_m = \pi^*_m \mid \pi_{m-1} \text{ ends in } p_m]
		\cdot \\
		&& \Pr_{\pi \in R_m}[\pi_{m-1} \text{ ends in } p_m \mid \forall i \leq m-1: \overline{H_i}] \\
		&\ge & \Pr_{\pi_m \in \mathrm{Runs}(p_m,w_m)} [\pi_m = \pi^*_m] \cdot \frac{1}{|Q|}\\
		& \ge & \frac{1}{|Q|^{|w_m|+1}} \geq \mu
	\end{eqnarray*}
	This proves the lemma.
\end{proof}

\begin{lemma}
	\label{lem:almost-surely-universal}
	$\Pr_{\pi \in R_m}[\pi \text{ is $\delta$-universal}] \ge \Pr_{\pi \in R_m}[\exists i \leq m: H_i] \ge 1 - (1-\mu)^m$. 
\end{lemma}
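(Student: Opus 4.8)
The plan is to establish the two inequalities in turn, the first by a pointwise (event‑inclusion) argument and the second by the chain rule together with Lemma~\ref{lem:mu}.

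\medskip\noindent\textbf{First inequality.} I would show the event inclusion $[\exists i \le m : H_i] \subseteq [\pi \text{ is } \delta\text{-universal}]$ and then pass to probabilities. So fix a run $\pi = \pi_1 \cdots \pi_m \in R_m$ (with $\pi_j$ the subrun on $w_j$) for which $H_i$ holds for some $i \le m$. By definition $\pi_i$ is an $(S_i,\delta)$-universal subrun of $\pi$, where $S_i$ is a maximal SCC of $D_n$, hence a nonempty subset of $Q$ closed under $\delta$. Since the relation ``is a $\delta$-conform subrun of'' is transitive, every $\delta$-conform subrun of $\pi_i$ is also a $\delta$-conform subrun of $\pi$; therefore $\pi$ is itself $(S_i,\delta)$-universal, and consequently $\delta$-universal. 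This gives $\Pr_{\pi \in R_m}[\pi \text{ is }\delta\text{-universal}] \ge \Pr_{\pi \in R_m}[\exists i \le m : H_i]$.

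\medskip\noindent\textbf{Second inequality.} Passing to the complementary event, it suffices to prove $\Pr_{\pi \in R_m}\big[\bigcap_{i=1}^m \overline{H_i}\big] \le (1-\mu)^m$. If $\Pr_{\pi \in R_m}\big[\bigcap_{i=1}^{j} \overline{H_i}\big] = 0$ for some $j \le m$, then $\Pr_{\pi \in R_m}\big[\bigcap_{i=1}^{m} \overline{H_i}\big] = 0$ and the bound is trivial. Otherwise all of the conditional probabilities occurring below are well defined, and the chain rule yields
\[
	\Pr_{\pi \in R_m}\Big[\bigcap_{i=1}^m \overline{H_i}\Big] = \prod_{j=1}^m \Pr_{\pi \in R_m}\Big[\overline{H_j} \;\Big|\; \bigcap_{i=1}^{j-1} \overline{H_i}\Big].
\]
For each $j$ I invoke Lemma~\ref{lem:mu} with window parameter $j$ in place of $m$, obtaining $\Pr_{\pi \in R_j}\big[H_j \mid \bigcap_{i<j}\overline{H_i}\big] \ge \mu$; since each of the events $H_1,\dots,H_j$ depends only on the first $j$ subruns of a run, the remark that $H_i$ does not depend on which $R_m$ (with $m \ge i$) we sample from lets us rewrite this as $\Pr_{\pi \in R_m}\big[H_j \mid \bigcap_{i<j}\overline{H_i}\big] \ge \mu$, hence $\Pr_{\pi \in R_m}\big[\overline{H_j} \mid \bigcap_{i<j}\overline{H_i}\big] \le 1-\mu$. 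Multiplying over $j=1,\dots,m$ gives $\Pr_{\pi \in R_m}\big[\bigcap_{i=1}^m \overline{H_i}\big] \le (1-\mu)^m$, and therefore $\Pr_{\pi \in R_m}[\exists i \le m : H_i] \ge 1 - (1-\mu)^m$.

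\medskip\noindent\textbf{Main obstacle.} The only delicate point is the measure‑theoretic bookkeeping: making sure the event $H_j$ and the conditioning event $\bigcap_{i<j}\overline{H_i}$ are read inside the same probability space $R_m$ (which is exactly what the ``independence of $m$'' remark is for, reconciling our use with the statement of Lemma~\ref{lem:mu}, which is phrased in terms of $H_m$), and separately handling the degenerate case in which some prefix intersection already has probability zero before the chain rule is applied. Once these are dispatched, the estimate is an immediate consequence of Lemma~\ref{lem:mu} and the definitions of $(S,\delta)$‑universality and of the sets $S_i$.
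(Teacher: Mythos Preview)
Your proof is correct and follows essentially the same approach as the paper: the first inequality by event inclusion (which the paper leaves implicit), and the second by combining Lemma~\ref{lem:mu} with the multiplicative structure of the complementary event. The only cosmetic difference is that the paper phrases the second step as an inductive recurrence $r_m \ge (1-\mu)r_{m-1} + \mu$ with $r_m = \Pr_{\pi \in R_m}[\exists i \le m: H_i]$, whereas you unfold this directly via the chain rule on the complement; the two are equivalent.
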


\begin{proof}
	The first inequality follows from the definition of the event $H_i$. Moreover, 
	we have
	\begin{eqnarray*}
		\Pr_{\pi \in R_m}[\exists i \leq m : H_i]
		&=& \Pr_{\pi \in R_m}[\exists i \leq m-1: H_i]
		+  \\
		&&\Pr_{\pi \in R_m}[H_m \mid \forall i \leq m-1: \overline{H_i}] \cdot
		\Pr_{\pi \in R_m}[\forall i \leq m-1: \overline{H_i}] \\
		&=&
		\Pr_{\pi \in R_{m-1}}[\exists i \leq m-1: H_i]
		+ \\
		&&\Pr_{\pi \in R_m}[H_m \mid \forall i \leq m-1: \overline{H_i}] \cdot
		\Pr_{\pi \in R_{m-1}}[\forall i \leq m-1: \overline{H_i}] \\
		&\ge&
		\Pr_{\pi \in R_{m-1}}[\exists i \leq m-1: H_i]
		+ \mu \cdot
		\Pr_{\pi \in R_{m-1}}[\forall i \leq m-1: \overline{H_i}].
	\end{eqnarray*}
	Define $r_m = \Pr_{\pi \in R_m}[\exists i \leq m : H_i]$. We get
	\begin{equation*} 
           r_m \ge r_{m-1} + \mu \cdot (1-r_{m-1}) = (1-\mu) \cdot r_{m-1} + \mu.
         \end{equation*}
         Since $r_0 = 0$, we get $r_m \ge 1 - (1-\mu)^m$ by induction.
\end{proof}

\begin{theorem} \label{thm-strict}
	There exists $q_0 \in Q$ such that $D_n = (Q,\Sigma,q_0,\delta,\omega)$
	is a deterministic streaming algorithm for $\Pi_n$.
\end{theorem}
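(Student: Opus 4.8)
The plan is to run a simple probabilistic (union bound) argument on the fixed input word $w = w_1 w_2 \cdots w_m$, combining Lemma~\ref{lem:almost-surely-universal} with the strict $\eps$-correctness hypothesis, and then to invoke Lemma~\ref{lem:extract-det}. Working inside the probability space $R_m = \Runs(R_n,w)$ equipped with $\rho_\iota$, I would first observe that $|w| \ge n$ (each factor $w_i$ already reads at least one block $x \in \Sigma^n$), so that strict $\eps$-correctness of $\R$ for $\Pi$ applies to $w$: the runs in $R_m$ that fail to be strictly correct for $\Pi_n$ carry probability at most $\eps$, hence the strictly correct runs carry probability at least $1-\eps$.

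Next, since $\mu > 0$ and $0 \le \eps < 1$, the quantity $(1-\mu)^m$ tends to $0$, so I would fix $m$ large enough that $(1-\mu)^m < 1-\eps$. By Lemma~\ref{lem:almost-surely-universal} the $\delta$-universal runs in $R_m$ then carry probability at least $1-(1-\mu)^m$. Now the event ``$\pi$ is strictly correct for $\Pi_n$'' has probability $\ge 1-\eps$ and the event ``$\pi$ is $\delta$-universal'' has probability $\ge 1-(1-\mu)^m$, so by inclusion--exclusion the event that both hold simultaneously has probability at least $(1-\eps) + (1-(1-\mu)^m) - 1 = 1 - \eps - (1-\mu)^m > 0$. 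Consequently there exists at least one run $\pi \in R_m$ that is both strictly correct for $\Pi_n$ and $\delta$-universal.

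Fix such a $\pi$. By the definition of $\delta$-universality there is a nonempty subset $S \subseteq Q$ closed under $\delta$ for which $\pi$ is $(S,\delta)$-universal. Applying Lemma~\ref{lem:extract-det} to this run $\pi$, this set $S$, and the fixed map $\delta$, we obtain a state $q_0 \in S \subseteq Q$ such that $D_n = (Q,\Sigma,q_0,\delta,\omega)$ is a deterministic streaming algorithm for $\Pi_n$, which is precisely the assertion of the theorem. I do not expect a genuine obstacle here, since all the technical weight has already been invested in the construction of the words $w_i$, the runs $\pi_i^*$, and the sets $S_i$, and in Lemmas~\ref{lem:mu} and~\ref{lem:almost-surely-universal}; the only points needing care are verifying $w \in \Sigma^{\ge n}$ so that the strict-correctness bound is available on $w$, and choosing $m$ so that the two events of probability close to $1$ are forced to overlap.
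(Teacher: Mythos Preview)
Your proposal is correct and follows essentially the same approach as the paper's own proof: both apply the probabilistic method (union bound / inclusion--exclusion) to obtain $\Pr[\text{strictly correct and $\delta$-universal}] \ge 1 - \eps - (1-\mu)^m > 0$ for $m$ large enough, and then invoke Lemma~\ref{lem:extract-det}. Your explicit verification that $|w| \ge n$ (so that the strict $\eps$-correctness hypothesis is applicable to $w$) is a detail the paper leaves implicit.
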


\begin{proof}
	We use the probabilistic method. With Lemma~\ref{lem:almost-surely-universal} we get
	\begin{eqnarray*}
		&& \Pr_{\pi \in R_m}[\pi \text{ is strictly correct for $\Pi$ and $\delta$-universal}] \\
		& = & 1 - \Pr_{\pi \in R_m}[\pi \text{ is not strictly correct for $\Pi$ or is not $\delta$-universal}] \\
		& \ge & 1 - \Pr_{\pi \in R_m}[\pi \text{ is not strictly correct for $\Pi$}] - \Pr_{\pi \in R_m}[\pi \text{ is not $\delta$-universal}] \\
		&\ge & \Pr_{\pi \in R_m}[\pi \text{ is $\delta$-universal}] - \eps \\
		&\ge & 1 - (1-\mu)^m - \eps .
	\end{eqnarray*}
	We have $1 - (1-\mu)^m - \eps > 0$ for $m > \log(1-\eps) / \log(1-\mu)$ (note that $\eps < 1$ and $0 < \mu < 1$). 
	Hence there exists an $m \ge 0$ and a strictly correct run $\pi \in R_m$ which is $\delta$-universal.
	The statement follows directly from Lemma~\ref{lem:extract-det}.
\end{proof}

\begin{corollary}
	There exists a deterministic sliding window algorithm $\D$
	for $\Pi$ such that $f(\D,n) \le f(\R,n)$ for all $n \geq 0$.
\end{corollary}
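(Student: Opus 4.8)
The plan is to apply Theorem~\ref{thm-strict} separately for each window size $n \ge 0$. Since only the worst-case space complexity $f(\R,n)$ matters, we may assume that each $R_n = (Q_n,\Sigma,\iota_n,\rho_n,\omega_n)$ has a finite state set, and we let $\delta_n \colon Q_n \times \Sigma \to Q_n$ be the greedy transition function fixed before Theorem~\ref{thm-strict} (so $\rho_n(q,a,\delta_n(q,a)) \ge 1/|Q_n|$ for all $q \in Q_n$ and $a \in \Sigma$). Theorem~\ref{thm-strict} supplies for each $n$ an initial state $q_0^{(n)} \in Q_n$ such that $D_n = (Q_n,\Sigma,q_0^{(n)},\delta_n,\omega_n)$ is a deterministic streaming algorithm for $\Pi_n$. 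Setting $\D = (D_n)_{n \ge 0}$ then yields a deterministic \SWA\ for $\Pi$, and it remains only to verify the space bound.

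First I would equip each $D_n$ with the same state encoding as $R_n$. Then the one thing to check is that every state of $Q_n$ visited by $D_n$ on some input stream is reachable in $R_n$ by a run of positive probability; this immediately gives $\mathrm{space}(D_n,u) \le f(\R,n)$ for all $u \in \Sigma^*$, and taking the supremum over $u$ yields $f(\D,n) \le f(\R,n)$. For the reachability claim, recall that in the construction underlying Theorem~\ref{thm-strict} the state $q_0^{(n)}$ has the form $\delta_n(p,\square^n)$ for some $p$ lying in a maximal SCC $S_i$ of $D_n$ that is reachable (via a $\delta_n$-path) from a state $p_i$, and $p_i$ is the endpoint of a prefix of a run chosen by the probabilistic method, hence a run of positive probability in $R_n$. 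Since every greedy transition has probability at least $1/|Q_n| > 0$, the $\delta_n$-conform path from $p_i$ into $S_i$, the path inside $S_i$ that reaches $p$, the block $\square^n$, and any subsequent input $u$ are all traversed with positive probability; hence every state reachable in $D_n$ is reachable with positive probability in $R_n$, as required.

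There is essentially no obstacle here: all the real work is done by Theorem~\ref{thm-strict}, and the remaining argument is the bookkeeping observation that reusing $R_n$'s state set keeps the deterministic algorithm within the space budget of the randomized one. The only point that needs a moment's care is the positive-probability reachability just discussed; equivalently, one could instead restrict $D_n$ to the part reachable from $q_0^{(n)}$, but one still needs that observation to conclude that the resulting space does not exceed $f(\R,n)$.
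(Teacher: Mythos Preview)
Your proof is correct and is essentially what the paper intends, though the paper in fact states the corollary without any proof at all, treating it as immediate from Theorem~\ref{thm-strict}. You are simply spelling out the bookkeeping that the paper leaves implicit: apply Theorem~\ref{thm-strict} for each $n$, reuse the state set and encoding of $R_n$, and check that every state visited by $D_n$ is reachable with positive probability in $R_n$ so that $f(\D,n) \le f(\R,n)$.

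One minor remark on your reachability paragraph: the cleanest way to justify it is not via the states $p_i$ from the construction of the words $w_i$, but directly from the run $\pi$ produced by the probabilistic method. That run satisfies $\rho_\iota(\pi) > 0$, and since $\pi$ is $(S,\delta)$-universal, every state of $S$ actually occurs on $\pi$ and is therefore reachable with positive probability in $R_n$. From any $p \in S$ the greedy transitions (each of probability at least $1/|Q_n|$) then carry positive probability forward to $q_0^{(n)} = \delta_n(p,\square^n)$ and to all subsequent states of $D_n$. This is what your argument is getting at; your route through $p_i$ and $S_i$ works too but is slightly more indirect.
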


The word $w_1 w_2 \cdots w_m$ (with $m > \log(1-\eps) / \log(1-\mu)$), for which there exists
a strictly correct and $\delta$-universal run  has a length that is exponential in the window size $n$.
In other words: We need words of length exponential in $n$ in order to transform a strictly $\eps$-correct
randomized \SWA\ into an equivalent deterministic \SWA. We remark that this is unavoidable: if 
we restrict to inputs of length $\mathrm{poly}(n)$ then strictly $\eps$-correct \SWAs\ can yield a proper 
space improvement over deterministic \SWAs.

Take the language $K_{\mathrm{pal}} = \{ ww^\rev : w \in \{a,b\}^n \}$ of all palindromes of even length,
which belongs to the class $\mathbf{DLIN}$ of deterministic linear context-free languages,
and let $L = \$ K_{\mathrm{pal}}$.

\begin{lemma}
	If $\D$ is a deterministic SWA for $L$, then $f(\D,2n+1) = \Omega(n)$.
\end{lemma}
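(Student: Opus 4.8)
The plan is a direct fooling‑set (Myhill–Nerode) argument. Let $\D = (D_m)_{m \ge 0}$ be a deterministic \SWA\ for $L = \$ K_{\mathrm{pal}}$ and write $D := D_{2n+1}$; since $\D$ is a deterministic \SWA\ for $L$, the DFA $D$ outputs, after reading any stream $u \in \Sigma^*$, the correct bit telling whether $\last_{2n+1}(u) \in L$. For $w \in \{a,b\}^n$ let $q_w$ denote the state of $D$ reached after reading the stream $\$ w$. The combinatorial heart of the proof is the observation that for any $w,w' \in \{a,b\}^n$ the stream $\$ w (w')^\rev$ has length exactly $2n+1$, so its active window is the whole word, and $\$ w (w')^\rev \in L$ iff $w (w')^\rev$ is an even‑length palindrome $u u^\rev$; comparing lengths forces $|u| = n$, hence $u = w$ and $u = w'$, i.e.\ $w = w'$.

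First I would prove that $w \mapsto q_w$ is injective on $\{a,b\}^n$. Assume $q_w = q_{w'}$ for some $w \neq w'$, and continue the run of $D$ from this common state on input $w^\rev$. On the one hand $D$ has then processed the stream $\$ w w^\rev$, whose active window $\$ w w^\rev$ lies in $L$ (take $u = w$), so $D$ outputs $1$; on the other hand it has processed $\$ w' w^\rev$, whose active window $\$ w' w^\rev$ does not lie in $L$ by the fact above (since $w \neq w'$), so $D$ outputs $0$. This is impossible, as the single state reached after $w^\rev$ is the same in both cases and hence carries a single output value. Therefore the $2^n$ states $q_w$ with $w \in \{a,b\}^n$ are pairwise distinct.

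Consequently $D_{2n+1}$ has at least $2^n$ states. For every $w \in \{a,b\}^n$ the (unique, deterministic) run of $D_{2n+1}$ on $\$ w$ visits $q_w$, so $f(\D,2n+1) \ge \mathrm{space}(D_{2n+1},\$ w) \ge |\mathrm{enc}(q_w)|$; since there are only $2^n-1$ bit strings of length $<n$ and the encoding is injective, at least one $q_w$ satisfies $|\mathrm{enc}(q_w)| \ge n$, whence $f(\D,2n+1) \ge n = \Omega(n)$. The whole argument uses only the definitions; the single point that requires care is the parity remark which makes $\{\$ w : w \in \{a,b\}^n\}$ a fooling set, and this is exactly the place where the even‑length palindrome structure of $K_{\mathrm{pal}}$ enters — I do not foresee any genuine obstacle beyond this bookkeeping.
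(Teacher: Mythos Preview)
Your proof is correct and follows essentially the same fooling-set argument as the paper: both show that the $2^n$ streams $\$w$ with $w \in \{a,b\}^n$ lead $D_{2n+1}$ to pairwise distinct states, using the distinguishing suffix $w^\rev$. Your write-up is slightly more explicit in converting the state count to a lower bound on the encoding length, but the idea is identical.
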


\begin{proof}
	Take two distinct words $\$ x$ and $\$ y$ where $x,y \in \{a,b\}^n$.
	Since $D_{2n+1}$ accepts $\$ x x^\rev$ and rejects $\$ y x^\rev$,
	the automaton $D_n$ reaches two different states on the inputs $\$ x$ and $\$ y$.
	Therefore, $D_{2n+1}$ must have at least $|\{a,b\}^n| = 2^n$ states.
\end{proof}
Let us now fix a polynomial $p(n)$.

\begin{lemma}
There is a randomized \SWA\ $\R = (R_n)_{n \geq 0}$ such that
(i) $f(\R,n) \in \mathcal{O}(\log n)$ and (ii) $\eps_*(R_n, w, L_n) \leq 1/e$ 
for all input words $w \in \Sigma^*$ with $|w| \leq p(n)$.
\end{lemma}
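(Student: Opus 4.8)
The plan is to use Karp--Rabin fingerprinting together with the observation that in $L=\$K_{\mathrm{pal}}$ the symbol $\$$ is a \emph{landmark}: a window $v$ lies in $L_n$ if and only if its first letter is $\$$ and the remaining $n-1$ letters form an even-length palindrome over $\{a,b\}$. In particular $L_n=\emptyset$ unless $n$ is odd, say $n=2m+1$; for even $n$ we let $R_n$ be a single rejecting state, and for the finitely many odd $n$ below a fixed constant we take the (finite) minimal DFA for $L_n$. The crucial point is that, since the window must begin with the \emph{unique} $\$$ of the word, we never need a genuine sliding-window fingerprint; instead we restart a fixed-length palindrome test at every occurrence of $\$$.

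So fix an odd $n=2m+1$ above that constant. Choose a prime $p$ with $p-1\ge e\cdot(p(n)+1)\cdot n$; by Bertrand's postulate $p$ may be taken polynomial in $n$, so $\lfloor\log_2 p\rfloor=\mathcal O(\log n)$. Work over $\mathbb F_p$ and encode $a,b$ by distinct elements $c(a),c(b)\in\mathbb F_p$. The automaton $R_n$ uses its initial distribution to pick $r\in\mathbb F_p\setminus\{0\}$ uniformly at random and is deterministic afterwards. Besides $r$, a state stores a counter $j\in\{0,\dots,2m\}\cup\{\bot\}$ equal to the number of letters read since the last $\$$ ($\bot$ meaning ``no relevant $\$$''), a constant number of powers of $r$, and two fingerprints
\[
H_{\mathrm{fwd}}=\sum_{i=1}^{j}c(u_i)\,r^{i},\qquad
H_{\mathrm{bwd}}=\sum_{i=1}^{j}c(u_i)\,r^{\,2m+1-i},
\]
where $u_1\cdots u_j$ are the letters read since the last $\$$. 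Reading $\$$ resets everything ($j\gets 0$, fingerprints $\gets 0$); reading $a\in\{a,b\}$ with $j\ne\bot$ increments $j$, adds $c(a)r^{j}$ to $H_{\mathrm{fwd}}$ and $c(a)r^{2m+1-j}$ to $H_{\mathrm{bwd}}$ (with the new value of $j$), updates the stored powers of $r$ accordingly, and sets $j\gets\bot$ once $j$ exceeds $2m$. A state is accepting iff $j=2m$ and $H_{\mathrm{fwd}}=H_{\mathrm{bwd}}$. The initial state is the ``$\bot$'' state if $\square\ne\$$, and the state reached by a virtual $\$$ at position $0$ (i.e.\ with $j=0$) if $\square=\$$; this is only needed to treat the single boundary window $\square\,a_1\cdots a_{n-1}$ correctly. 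Each state has a constant number of components, each taking at most $p=\mathrm{poly}(n)$ values, so $R_n$ has $\mathrm{poly}(n)$ states and $f(\R,n)=\mathcal O(\log n)$.

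For correctness fix a stream and a time $t$, and let $v=\last_n(a_1\cdots a_t)$. If $v\in L_n$ then $v=\$u$ with $u$ an even palindrome over $\{a,b\}$; then the most recent $\$$ sits exactly at the start of the window, no $\$$ follows, so at time $t$ the automaton has $j=2m$ and, using $u_i=u_{2m+1-i}$, $H_{\mathrm{fwd}}=H_{\mathrm{bwd}}$; thus it accepts with probability $1$. If $v\notin L_n$ but the automaton still reaches a state with $j=2m$ at time $t$, then necessarily $v=\$u$ with $u\in\{a,b\}^{2m}$ \emph{not} a palindrome, and
\[
H_{\mathrm{fwd}}-H_{\mathrm{bwd}}=\sum_{i=1}^{2m}\bigl(c(u_i)-c(u_{2m+1-i})\bigr)r^{i}
\]
is a nonzero polynomial in $r$ of degree at most $2m\le n$ (it has a nonzero coefficient at any $i$ with $u_i\ne u_{2m+1-i}$), hence vanishes for at most $n$ of the $p-1$ admissible values of $r$; so the automaton errs at time $t$ with probability at most $n/(p-1)$. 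In every remaining case ($j\in\{0,\dots,2m-1\}\cup\{\bot\}$) the automaton rejects, which is correct. Summing over the at most $p(n)+1$ time instants of a run on a word $w$ with $|w|\le p(n)$ gives $\eps_*(R_n,w,L_n)\le(p(n)+1)\cdot n/(p-1)\le 1/e$.

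The main difficulty is purely conceptual: a true sliding-window fingerprint cannot be maintained in $\mathcal O(\log n)$ space, since an update would require the symbol that just left the window. The argument therefore rests entirely on the special role of $\$$, which lets us replace the sliding test by a fixed-length palindrome test re-initialized at each $\$$; after that, choosing $p$ polynomially large, the elementary bound on the number of roots of a nonzero univariate polynomial, and the (routine but slightly fiddly) bookkeeping for the padding symbols and the first $n$ time instants are all standard.
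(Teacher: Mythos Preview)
Your proof is correct and follows essentially the same strategy as the paper: exploit the landmark role of $\$$ to restart a randomized palindrome test at each occurrence, accept only when the counter since the last $\$$ hits $n-1$, and control the strict error by a union bound over the time instants of the stream. The paper proceeds slightly more abstractly, invoking a black-box $\mathcal{O}(\log n)$-space streaming algorithm of Babu, Limaye, Radhakrishnan and Varma for $\mathbf{DLIN}$ rather than spelling out the Karp--Rabin fingerprints, and it additionally observes that an error can only occur when the counter equals $n-1$, which happens at most once every $n$ steps; this lets it work with per-step error $1/n^d$ and conclude $\eps_* \le 1/n$. You skip that refinement and instead simply take the prime large enough that the cruder union bound over all $|w|+1 \le p(n)+1$ instants already yields $\eps_* \le 1/e$. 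Both routes give the same $\mathcal{O}(\log n)$ space, and your version has the advantage of being fully self-contained.
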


\begin{proof}

Babu et al. \cite{BabuLRV13} have shown that for every language $K \in \mathbf{DLIN}$ there exists
a randomized streaming algorithm using space $\mathcal{O}(\log n)$ which, given an input $w$ of length $n$,
\begin{itemize}
\item accepts with probability 1 if $w \in K$,
\item and rejects with probability at least $1 - 1/n$ if $w \notin K$.
\end{itemize}
We remark that the algorithm needs to know the length of $w$ in advance.
To stay consistent with our definition,
 we view the algorithm above as a family $(S_n)_{n \ge 0}$ of randomized streaming algorithms $S_n$.
Furthermore, it is easy to see that the error probability $1/n$ can be further reduced to $1/n^d$ 
where $p(n) \le n^d$ for sufficiently large $n$ (by picking random primes of size $\Theta(n^{d+1})$ in the proof from \cite{BabuLRV13}).

Now we prove our claim for $L = \$ K_{\mathrm{pal}}$.
The streaming algorithm $R_n$ for window size $n$ works as follows:
After reading a $\$$-symbol, the algorithm $S_{n-1}$ from above is simulated
on the longest factor from $\{a,b\}^*$ that follows.
Simultaneously we maintain the length $\ell$ of the maximal suffix over $\{a,b\}$, up to $n$, using $\mathcal{O}(\log n)$ bits.
If $\ell$ reaches $n-1$, then  $R_n$ accepts if and only if $S_{n-1}$ accepts.
Notice that $R_n$ only errs if the stored length is $n-1$ (with probability $1/n^d$), which happens at most once
in every $n$ steps.
Therefore the number of time instants where $R_n$ errs on $w$ is bounded by $|w|/n \le n^d / n = n^{d-1}$.
By the union bound we have for every stream $w \in \{\$,a,b\}^{\le p(n)}$:
\[
	\eps_*(R_n,w,L_n) \le n^{d-1} \cdot \frac{1}{n^d} = \frac{1}{n}.
\]
This concludes the proof.
\end{proof}

\bibliographystyle{plain}
\bibliography{bib}

\end{document}